\newtheorem{theorem}{Theorem}[section]
\newtheorem{example}[theorem]{Example}
\newtheorem{proposition}[theorem]{Proposition}
\newtheorem{lemma}[theorem]{Lemma}
\newtheorem{corollary}[theorem]{Corollary}
\newtheorem{conjecture}[theorem]{Conjecture}
\newcommand{\exa}[1]{\begin{example} #1 \end{example}}
\newcommand{\prop}[1]{\begin{proposition} #1 \end{proposition}}
\newcommand{\lemm}[1]{\begin{lemma} #1 \end{lemma}}
\theoremstyle{definition}
\newtheorem{definition}[theorem]{Definition}
\newcommand{\defn}[1]{\begin{definition} #1 \end{definition}}
\theoremstyle{remark}
\newtheorem{remark}{Remark}[section]
\newcommand{\rem}[1]{\begin{remark} #1 \end{remark}}
\newcommand{\myaddress}%
{\parbox{3.3in}{\footnotesize \begin{center} 
			Department of Pure Mathematics, \\ University of Leeds, 
			Leeds LS2 9JT, UK.\end{center}}}
\newcommand{\ignore}[1]{}
\newcommand{\ali}[1]{\begin{align*} #1 \end{align*}}
\newcommand{\GG}{{\mathcal G}}
\newcommand{\HH}{{\mathcal H}}
\newcommand{\CC}{{\mathcal C}}
\newcommand{\DD}{{\mathcal D}}
\newcommand{\V}{\mathcal{V}}
\newcommand{\Topo}{{\mathbf{Top}}}  
\newcommand{\TOPO}{{\mathbf{TOP}}}
\newcommand{\Set}{{\mathbf{Set}}}   
\newcommand{\Cat}{{\mathbf{Cat}}}
\newcommand{\Grpd}{{\mathbf{Grpd}}}
\newcommand{\Vect}{{\mathbf{Vect}}}
\newcommand{\VectC}{\mathbf{Vect}_\mathbb{C}}
\newcommand{\Vectk}{{\mathbf{Vect}}_{\mathbbm{k}}}
\newcommand{\II}{{\mathbb I}}  
\newcommand{\Vectotimes}{\otimes_{\mathbbm{k}}}
\newcommand{\lan}{\langle}
\newcommand{\ran}{\rangle}
\newcommand{\Power}{{\mathcal P}}
\newcommand{\C}{\mathbb C}
\newcommand{\N}{\mathbb N}  
\newcommand{\Z}{\mathbb Z}      
\newcommand{\R}{{\mathbb R}}
\newcommand{\push}[2]{{ \sqcup_{#2} }}
\newcommand{\Grpdforget}{\mathrm{U}_{\GG}}
\newcommand{\Topoforget}{\mathrm{U_T}}
 \newcommand{\simp}{\stackrel{p}{\sim}}   
 \newcommand{\classp}[1]{[#1]_{\!\mbox{\tiny p}}}
\newcommand{\PsiOps}{\Gamma_{\!\frac{1}{2}}}
\newcommand{\id}{\mathrm{id}}
\newcommand{\rev}{\mathrm{rev}}
\newcommand{\graphh}[2]{\xymatrix{#1 \ar@{=>}[r]^s_t & #2}}
\newcommand{\graphhh}[2]{\xymatrix{#1 \ar@<-.5ex>[r]_t \ar@<.5ex>[r]^s & #2}} 
\newcommand{\mM}{{\mathsf M}}
\newcommand{\MAG}{{\mathbf{Mag}}}
\newcommand{\tTopo}{{\mathsf{Top}}}  
\newcommand{\sSet}{{\mathsf{Set}}}  
\newcommand{\vVectk}{{\mathsf{Vect}_\mathbbm{k}}}  
\newcommand{\vVectC}{{\mathsf{Vect}_\mathbbm{C}}}
\newcommand{\Path}{{\mathfrak P}}
\newcommand{\colim}
{{\mathrm{colim}}}
\newcommand{\copr}[2]{\langle #1,#2 \rangle} 
\newcommand{\HomCob}{{\mathrm{HomCob}}}
\newcommand{\cHomCob}{\mathsf{HomCob}}
\newcommand{\CofCsp}{\mathrm{CofCos}}
\newcommand{\cCofCsp}{\mathsf{CofCos}}
\newcommand{\bHomCob}{\mathsf{bHomCob}}
\newcommand{\ccc}{concrete cofibrant cospan}
\newcommand{\cc}{cofibrant cospan}
\newcommand\sbullet[1][.5]{\mathbin{\vcenter{\hbox{\scalebox{#1}{$\bullet$}}}}}
\newcommand{\homfin}{homotopically $1$-finitely generated}
\newcommand{\homcob}{homotopy cobordism}
\newcommand{\bhomcob}{based homotopy cobordism}
\newcommand{\chomcob}{concrete homotopy cobordism}
\newcommand{\cbhomcob}{concrete based homotopy cobordism}
\newcommand{\FinSetX}{\mathbf{FinSet^*}(X)}
\newcommand*\lon{%
	\nobreak
	\mskip6mu plus1mu
	\mathpunct{}%
	\nonscript
	\mkern-\thinmuskip
	{:}%
	\mskip2mu
	\relax
}
\newcommand{\chip}{{\pmb \chi}} 
\newcommand{\cmor}[5]{#2\colon #3 \to #1 \leftarrow #5\lon #4}
\newcommand{\cbmor}[5]{#2\colon (#3,#3_0) \to (#1,#1_0) \leftarrow (#5,#5_0)\lon #4} 
\newcommand{\scbmor}[8]{#3\colon (#4,#5) \to (#1,#2) \leftarrow (#7,#8)\lon #6} 
\newcommand{\cmortikz}[5]{\scalebox{1}{\begin{tikzcd}[ampersand replacement= \&, cramped , sep=tiny]
			#3 \ar[dr,"#2"' near start] \& \& #5 \ar[dl,"#4" near start] \\
			\& #1 \& \\
	\end{tikzcd}}
}
\newcommand{\cbmortikz}[5]{\scalebox{0.6}{\begin{tikzcd}[ampersand replacement= \&, cramped ,sep=tiny]
			(#3, #3_0) \ar[dr,"#2"' near start] \& \& \makebox*{MM}{$(#5,#5_0)$} \ar[dl,"#4" near start] \\
			\& (#1,#1_0) \& \\
		\end{tikzcd}
		\hspace*{5pt}}
}
\newcommand{\lprime}{\scalebox{0.4}{$\prime$}}
\newcommand{\che}{cospan homotopy equivalence}
\newcommand{\simche}{\stackrel{ch}{\sim}}
\newcommand{\classche}[1]{\left[#1\right]_{\!\mbox{\tiny ch}}}
\newcommand{\Cobcat}[1]{\mathbf{Cob}_{#1}}
\newcommand{\Cob}[1]{\mathrm{Cob}_{#1}} 
\newcommand{\FG}{\mathsf{Z}_G}
\newcommand{\bFG}{\mathsf{Z}_G^{!}}
\newcommand{\bbFG}{\mathsf{Z}_G^{!!}}
\newcommand{\tqft}{\mathsf{Z}_{G}}
\newcommand{\VV}{\mathcal{V}}
\newcommand{\Motion}{\mathrm{Mot}}
\newcommand{\Mot}[2]{\Motion_{#1}^{#2}}
\newcommand{\mcg}[2]{\mathrm{MCG}_{#1}^{#2}}
\newcommand{\hfMot}[2]{\mathrm{hf}\Mot{#1}{#2}}
\newcommand{\hfmcg}[2]{\mathrm{hf}\mcg{#1}{#2}}
\newcommand{\simrp}{\stackrel{\scriptscriptstyle{rp}}{\sim}}  
\newcommand{\classrp}[1]{[#1]_{\!\mbox{\tiny rp}}} 
\newcommand{\simi}{\stackrel{i}{\sim}}  
\newcommand{\classi}[1]{[#1]_{\!\mbox{\tiny i}}} 
\newcommand{\sh}[1]{\mathfrak{#1}}
\newcommand{\shmor}[4]{\mathfrak{#1}^{#2}\colon #3 \stackrel{\lcurvearrowright}{} #4}
\newcommand{\shmot}[4]{#1
	\colon #3 \stackrel{\lcurvearrowright}{} #4}
\newcommand{\mcgfunctor}[2]{\mathcal{MCG}_{#1}^{A}}
\newcommand{\Motfunctor}[2]{\mathcal{MOT}_{#1}^{A}}
\newcommand{\premo}[2]{\mathrm{Flow}_{#1}^{#2}}
\newcommand{\axiomM}{manifold}
\newcommand{\too}{\rcurvearrowright}    
\newcommand{\Mtcmag}{\mathrm{Mt}_M}
\newcommand{\mot}[4]{#1
	\colon #3\too #4}
\newcommand{\phomotopy}[3]{\Topo(\II^2, #1)\!(\! #2 (\!\!\! ) #3 ) }
\newcommand{\pathphisotop}[6]{\Topo(\II, \TOPO_{#6}^h(#1,#1))\!(\! {\tiny #2}^{#4}_{#3}\!\!  {\tiny #5} ) }
\newcommand{\rel}[4]{(#1^{#3}_{#2}\!#4)} 
\newcommand{\Id}{{\mathrm{Id}}}  
\newcommand{\W}{\mathbf{W}}
\newcommand{\Wc}{\mathbf{W}'}
\newcommand{\classm}[1]{[#1]_{\!\mbox{\tiny m}}}
\newcommand{\mott}[1]{\iota_{{#1}_t}}
\newcommand{\motzero}[1]{\iota_{{#1}_0}}
\newcommand{\motone}[1]{\iota_{{#1}_1}}
\newcommand{\pomotzero}[1]{\tilde{\iota}_{{#1}_0}}
\newcommand{\pomotone}[1]{\tilde{\iota}_{{#1}_1}}
\newcommand{\mcgim}[1]{\sh{#1}_{{\tiny \mathrm{mc}}}}
\newcommand{\mcgmotim}[1]{{#1}_{{\tiny \mathrm{mc}}}}
\newcommand{\F}{\mathsf{F}} 
\newcommand{\gammaf}{\sh{f}}
\newcommand{\gammaff}{\sh{g}}
\author{Fiona Torzewska\footnote{f.m.torzewska@leeds.ac.uk}
	\\ \myaddress
}
\date{}
\title{Topological quantum field theories and homotopy cobordisms}
\begin{document} \maketitle
	
\begin{abstract}	
	We construct a category $\HomCob$ whose objects are {\it \homfin } topological spaces, and whose morphisms are {\it cofibrant cospans}.
	Given a manifold submanifold pair $(M,A)$, we prove that there exists functors into $\HomCob$ from the full subgroupoid of the mapping class groupoid $\mcg{M}{A}$, and from the full subgroupoid of the motion groupoid $\Mot{M}{A}$, whose objects are \homfin{}.

	We also construct a family of functors $\FG\colon \HomCob\to \Vect$, one for each finite group $G$.
	 These generalise topological quantum field theories previously constructed by Yetter, and an untwisted version of Dijkgraaf-Witten. 
	 Given a space $X$, we prove that $\FG(X)$ can be expressed as the $\C$-vector space with basis natural transformation classes of maps $\{\pi(X,X_0)\to G\} $ for some finite representative set of points $X_0\subset X$, demonstrating that $\FG$ is explicitly calculable.
\end{abstract}

\medskip 

\noindent \textbf{Acknowledgements:} {FT was funded by a University of Leeds PhD Scholarship and is now funded by EPSRC.
	FT thanks Paul Martin and Jo\~{a}o Faria Martins for useful discussions. 
}

\tableofcontents

\section{Introduction}

This motivating aim of this work is to describe certain condensed matter systems known as topological phases.
This in turn has applications to quantum computation:
topological quantum computing (TQC) is a proposed framework for carrying out computation by controlling the movement of emergent particles in topological phases \cite{rowellwang, wen, nayak, Wilczek1982}.
A topological phase may be described by assigning a space of states to each possible particle configuration, and assigning a linear operator to each allowed particle trajectory. 
Several categories are potentially of interest for the modelling of particle configurations and trajectories in topological phases: motion groupoids \cite{motion,qiu}, tangle categories (\cite{Kassel} for example), generalised tangle categories \cite{BaezDolan}, defect cobordism categories \cite{Carqueville}, and embedded cobordism catgeories \cite{Picken,witten} for example.
The objective then is to find representations of these categories, in other words functors which eventually map to $\Vect_{\mathbb{C}}$.

Here we aim to construct representations of particle trajectories which are invariant up to a notion of homotopy equivalence of the complement of the particle trajectory.
Note that, in general complements of particle trajectories in the aforementioned categories will not be compact manifolds.
Field theories arising from homotopical properties of manifolds are common in the literature.
The topological quantum field theories (TQFTs) of \cite{yetter, Kitaevfault} and an untwisted version of \cite{dijkgraaf} can all be shown to assign to a manifold $\Sigma$ the vector space with basis natural transformation classes of maps 
$\pi(\Sigma)\to G$.
In fact, in $(3+1)$-dimensional it appears that all bosonic topological phases where all quasiparticles are bosons, are described by Dijkgraaf-Witten theories \cite{wen}. 
Each of these examples are generalised by \cite{quinn}, which gives a class of TQFTs constructed using the `homotopy content'.
Further, the approach of looking at the homotopy of the complement is taken in certain invariants of knots \cite{Crowell}, Artin's representation of braids \cite{Birman} and its lift to loop braids \cite{damiani21}, for example.
It is also possible to consider the complement whilst retaining extra structure, as is the situation with the knot quandle, which completely classifies a knot \cite{joyce}. 

Such a functor into $\VectC$ may factor through other categories.
In some cases, these categories will be more convenient to work with.
Our first main result (Theorem~\ref{th:HomCob}) is the construction of a symmetric, monoidal category $\HomCob$ whose objects are topological spaces and whose morphisms are equivalence classes of {\it cofibrant cospans}. Roughly, functors into the category $\HomCob$, from the aforementioned categories modelling topological phases, may be constructed by taking the complement of particle trajectories.
We prove that there exists functors into $\HomCob$ from the motion groupoid of a manifold, as constructed in \cite{motion}, as well as from $\Cobcat{n}$, the category of cobordisms as given in \cite{lurie}. 
The approach of constructing representations by going via categories of cospans has its origins in the work of Grandis, and Morton \cite{morton,grandis}; and these can each be seen as an example of the groupoidification explained in \cite{BaezHoffnung}.

In the second half of the paper, we construct a family of functors from $\HomCob$ into $\VectC$ based on a finite group $G$. Our construction follows the approach of \cite{yetter}. This allows us to give an interpretation of our functor in terms of a choice of a finite set of basepoints in each of the spaces involved, and hence a method for explicit calculation. We expect it to be possible to vastly generalise the functors constructed here, to functors which take a finite $n$-type as input (cf. \cite{yetter93}).

\medskip

We now introduce our construction in more detail.

Our first objective is the construction of a new category of cospans of topological spaces.
The `natural' formalism for such constructions depends on one's
perspective, i.e. upon one's aims.
For example we have the categorical/`join' perspective following Benabou
\cite{Benabou67}.
	One of Benabou's archetypes is the bicategory $\mathrm{Sp}(V)$ of spans over a category
$V$ with pullbacks and a distinguished choice of pullback for each span.
And a `dual', $\mathrm{Cosp}(V)$ of cospans over a category with pushouts and choices.
But this comes at a cost of inducing categories with properties 
that are undesirable in our setting.
Precisely, a manifold cobordism maps naturally to a cospan, but the image of the usual representative of the identity cobordism is of the form $M\to M\times [0,1] \xleftarrow{} M$.
In contrast the identity in $\mathrm{Cosp}(V)$ is of the form $ 1_M\colon M\to M \xleftarrow{} M\lon 1_M$, hence the mapping does not extend to a functor.
 General pushouts of topological spaces also fail to preserve homotopy.
If one follows this line then a fix ({\em the} fix, essentially tautologically)
is some form of `collaring'.
The approach of \cite{morton,grandis} is to explicitly construct such collars.
These can be compared with \cite{fong}, for example, whose decorated cospans do not include a collaring.
Here we proceed by instead insisting that all maps are cofibrations.

Physically this collaring can be thought of 
as conditions on the way we are allowed to `cut' spacetime into finite time slices; we require that sufficient information is retained at the cut to be able to reconstruct the full theory from evaluations on each slice. The existence of such a formalism is implied by the path integral formalism of quantum field theory \cite{Hibbs}. 
The appropriate conditions will depend on the particular field theory.

We show that there is a category (Theorem~\ref{th:CofCsp})
\[
\CofCsp=\left(Ob(\Topo),\CofCsp(X,Y),\; \sbullet \;, \; \classche{\cmortikz{X\times \II}{\iota_{0}^X}{X}{\iota_1^X}{X}}\;\right)
\]
whose objects are all topological spaces, elements in  $\CofCsp(X,Y)$ are equivalence classes of \textit{concrete cofibrant cospans} -- diagrams $\cmor{M}{i}{X}{j}{Y}$ with the condition that $\lan i, j\ran\colon X\sqcup Y\to M$, the map obtained from the universal property of the coproduct, is a closed cofibration \footnote{Note these are cofibrations in the Str\o{}m model structure on topological spaces (see \cite{dwyer}), and many of our results could alternatively be proved using tools from model categories.}. 
A large class of examples of cofibrant cospans come from cospans of CW complexes and subcomplexes, which are essentially used to construct the TQFT in \cite{quinn}.
Composition is via pushout.
We then obtain the category $\HomCob$ as a subcategory of $\HomCob$ with all spaces {\homfin{}} (Theorem~\ref{th:HomCob}).
Pushouts of cofibrations behave well with respect to homotopy, this is formalised by a version of the van Kampen theorem due to Brown \cite{brownt+g}, here it is Theorem~\ref{th:vk}. This is why there exists maps from $\HomCob$ defined using homotopical properties of the spaces involved which preserve composition.
The equivalence relation is a notion of homotopy equivalence which commutes with the cospans.
The category $\HomCob$ can be thought of as a generalisation of the usual category of cobordisms \cite{lurie};
precisely we show in Proposition~\ref{pr:fun_cob} that there is a functor into $\Cob{n}\colon \Cobcat{n} \to \HomCob$, thus functors $\HomCob \to \Vect$ restrict to ordinary TQFTs.
(An alternative approach is to work with cobordisms with corners, see for example \cite{morton}.)
We also show that $\Cob{n}$ is a symmetric monoidal functor, where the monoidal product on objects in $\HomCob$ is given by disjoint union.
 
In Section~\ref{sec:funcHomcob}, we prove that there exist functors 
\[\Motfunctor{M}{A}\colon \hfMot{M}{A} \to \HomCob \textrm{ and } \mcgfunctor{M}{A}\colon \hfmcg{M}{A}\to \HomCob,\] from respectively the motion groupoid and the mapping class groupoid of a manifold $M$ with fixed submanifold $A\subset M$, each with a finiteness condition on the objects of the categories. For reference, the motion group of a subset $X\subset M$ is the automorphism group $\Mot{M}{\emptyset}(X,X)$.
 In the case $M=\R^2$, and $X$ a finite subset, this is a group isomorphic to the braid group on $|X|$ strands, and in the case $M=\R^3$ and $X$ is some configuration of unknotted, unlinked loops, this is a group isomorphic to the loop braid group in a number of loops equal to the number of connected components of $X$. 
 Hence our results lead to representations of motion groupoids, in all dimensions and for all subsets.  Existing results on representations of the loop braid group are limited, and focus on representations obtained by going via finite dimensional quotients, and extending braid representations \cite{kadar,damianiloophecke, damiani21}.

In the second half of this paper we construct a family of symmetric monoidal functors $\FG\colon \HomCob \to \Vect$, one associated to each finite group $G$. 
The construction of our functor is largely based on the approach taken in \cite{yetter},  
although Yetter constructs the functor with triangulations of manifolds, whereas we work entirely with the fundamental groupoid, following \cite{morton}.
The novel part of our construction with respect to \cite{yetter} is our generalised source category.
Another key aspect of our construction is that we present our functor in an explicitly calculable way, in terms of a choice of basepoints.

We construct the map on $X\in Ob(\HomCob)$ as a colimit over all sets of groupoid maps  $\{f\colon \pi(X,X_0)\to G\}$ for each choice of finite representative set $X_0$.
In general the colimit construction is a global equivalence relation on an uncountably infinite set.
We prove that for each space $X\in \chi$ we can fix a choice of finite representative subset $X_0$, and then
	\[ \FG(X)=\C((\Grpd(\pi(X,X_0),G)/\cong),\]  where $\cong$ denotes taking maps up to natural transformation and $\pi(X,X_0)$ is the full subgroupoid of the fundamental groupoid with objects $X_0$ (Theorem~\ref{natiso_to_colim}). 
	This gives $\FG$ in terms of a local equivalence relation on a finite set, making explicit calculation possible.
	We also prove that, for $X\in Ob(\HomCob)$, $\FG(X)$ is isomorphic to the vector space with basis $\left\{\pi(X)\to G\right\}/\cong$ (Theorem~\ref{natiso_to_colim}).
	
	The equivalence class of the cospan $\cmor{M}{i}{X}{j}{Y}$ is mapped to the matrix where, for basis elements  $[f]\in \FG(X)$ and $[g]\in \FG(Y)$ the element in the column corresponding to $[f]$ and the row corresponding to $[g]$ is
	 	\begin{multline*}\label{eq:FG_sim}
		\langle [g] | \FG(M) | [f] \rangle \hspace*{-0.2em}=\hspace*{-0.2em} 
		|G|^{-(|M_0| - |X_0|)} \left\vert\left\{  h:\pi(M,M_0) \to G  \,|\, h\circ \pi(i)(\pi(X,X_0))=f \right. \right. \\ \left.\left. \wedge  h\circ\pi(j)(\pi(Y,Y_0))\sim g \right\}\right\vert
	\end{multline*}
and $M_0\subset M$ is finite subset with some conditions, $\pi(i)$ is the map of fundamental groupoids induced by $i$, and $\sim$ indicates equivalence up to natural transformation. This is (Lemma~\ref{le:FG_sum} and Remark~\ref{FG_equivalence}).

\subsection{Paper Overview}
In Section~\ref{sec:prelims} we have preliminaries, fixing notation and recalling results that we will make use of.
First we have magmoids, category like structures without axioms that we will find a useful starting point for our constructions.
Then in Section~\ref{sec:path} we have the fundamental groupoid, and a partial product-hom adjunction in the category of topological spaces.
In Section~\ref{sec:colim} we fix representative colimits in the categories we will need.
In Section~\ref{sec:moncats} we recall the definition of symmetric monoidal categories, and functors between them. 
Finally in Section~\ref{sec:cofibs+vk} we recall the definition of a cofibration, as well as giving some key properties. In particular, we have Corollary~\ref{co:vk}, which is a corollary of a version of the van Kampen theorem using cofibrations, due to \cite{brownt+g}.

In Section~\ref{sec:homcobs}, we begin by constructing a magmoid whose objects are topological spaces and whose morphisms are \textit{\ccc}s.
We then quotient by a congruence in terms of homotopy equivalences to obtain a category $\CofCsp$ which has \textit{cofibrant cospans} as morphisms, this is Theorem~\ref{th:CofCsp}.
We then have Theorem~\ref{th:CofCsp_symm} which proves there is a monoidal structure on $\CofCsp$ with monoidal product which, on objects, is given by disjoint union.
Next we have the category $\HomCob$ (Theorem~\ref{HomCob}) which is a subcategory of $\CofCsp$ with a finiteness condition on spaces. 
In Theorem~\ref{th:HomCob} we have a version with the monoidal structure from $\CofCsp$.

In Section~\ref{sec:funcHomcob} we prove that we have a functor $\Motfunctor{M}{A}\colon \hfMot{M}{A} \to \HomCob$ (Theorem~\ref{th:functorMot_HomCob}), and $\mcgfunctor{M}{A}\colon \mcg{M}{A} \to \HomCob$ (Lemma\ref{le:functormcg_HomCob}) for $M$ a manifold and $A\subset M$ a subset.
We also prove that we have $\Motfunctor{M}{A}=\F\circ \mcgfunctor{M}{A}$ where $\F\colon \Mot{M}{A}\to \mcg{M}{A}$ is as constructed in \cite[Sec.7]{motion}, this is Theorem~\ref{th:comp_mcgfunctorF}.

We begin Section~\ref{sec:tqft} by constructing, in Lemma~\ref{le:comp_bhomcob}, another magmoid which has as morphisms cospans of pairs of a topological space and a subset of basepoints, we call this $\bHomCob$.
We then construct a magmoid morphism $\bFG\colon \bHomCob \to \VectC$ (Lemma~\ref{le:tqft_bFGmagmor}), which depends on a finite group $G$.
Under $\bFG$, pairs $(X,X_0)$ are mapped to the vector space with basis the set of maps $\pi(X,X_0)$ to $G$.
We then take a colimit over a diagram whose vertices are indexed by all allowed sets of basepoints. This leads to a map $\FG\colon Ob(\HomCob)\to Ob(\VectC)$, given in Definition~\ref{de:FG_obj}.
We then extend this to morphisms so we have, in Lemma~\ref{le:tqft_FGmagmor}, a magmoid morphism $\FG\colon \HomCob \to \VectC$.
In Theorem~\ref{th:FG_functor} we prove equivalence is preserved and thus we have a functor $\FG\colon \HomCob \to \VectC$. 
Theorem~\ref{natiso_to_colim} gives the alternative interpretation of $\FG(X)$, as the vector space with basis $\left\{f\colon \pi(X,X_0)\to G\right\}/\cong$ for some choice of basepoints $X_0$, where $\cong$ denotes functors up to natural transformation.
Finally we have Theorem~\ref{natiso_to_colim} which says, for $X\in Ob(\HomCob)$, $\FG(X)$ is isomorphic to the vector space with basis $\left\{\pi(X)\to G\right\}/\cong$.

\section{Preliminaries}
\label{sec:prelims}

We spend some time on this section for number of reasons. One is that this paper draws from diverse areas of mathematics, and recalling the relevant results makes this work accessible to a wide audience.
Another is to take the opportunity to fix some non-standard definitions/ notation that will be helpful.
In each section we give references directing the reader to more complete approaches. 

We start, in Section~\ref{sec:magmoids}, with magmoids, and magmoid congruences which may lead to categories.
In Section~\ref{sec:path} we have the fundamental groupoid and a product-hom adjunction in the category of topological spaces.
In Section~\ref{sec:colim} we recall the definition of a colimit as well as fixing choices of specific colimits in the categories of sets, topological spaces and groupoids. We then have monoidal categories in Section~\ref{sec:moncats}.
Finally we define and recall some useful properties of cofibrations in Section~\ref{sec:cofibs+vk}, as well as giving examples which will demonstrate the flexibility of the subsequent construction.

\subsection{Magmoids, categories and groupoids}
\label{sec:magmoids}

In this work constructions of categories are a recurrent theme, although these are neither a convenient starting point, nor representative of the underlying physics. 
Such constructions will often start from something concrete with a composition. Equivalence classes of these concrete things eventually become the morphisms of the constructed category.
It will be useful to have a general machinery for studying such constructions. We can think of the underlying idea of a category as sets of objects, morphisms and a non-associative composition - a {\it magmoid}.
We can then study congruences on these magmoids, some of which will lead to categories.

We are unaware of a reference to magmoids in the literature although the construction is a straightforward extension of the use of magmas for the underlying structure of a group.
We assume familiarity with category theory such as product categories, adjunctions and natural transformations as well as with groupoids; more complete introductions can be found in e.g. \cite{maclane,riehl,adamekjoy}.

\defn{
	A {\em magmoid} ${\mM}$ 
	is a triple 
	$$
	{\mM} \; = \;  (Ob(\mM),\mM(-,-),\Delta_{\mM})
	$$ 
	consisting of
	\begin{itemize}
		\item[(I)] a collection $Ob(\mM)$ of \textit{objects},
		\item[(II)] for each pair $X,Y\in Ob(\mM)$ a collection $\mM(X,Y)$ of \textit{morphisms from $X$ to $Y$}, and 
		\item[(III)] for each triple $X,Y,Z \in Ob(\mM)$ a {\textit composition}
		\[
		\Delta_\mM\colon \mM(X,Y)\times \mM(Y,Z)\to \mM(X,Z).
		\]
	\end{itemize}
We use $f\colon X\to Y$ to indicate that $f$ is a morphism from $X$ to $Y$ and $f\in \mM$ to indicate there exists a pair of objects $X,Y\in Ob(M)$ such that $f\in \mM(X,Y)$. \\
Where convenient we will replace instances of $-$ in the triple with generic symbols.
}

\exa{\label{ex:SetTopVect_Mags}
		The following are magmoids. In each case we give the objects and morphisms, the composition is then the usual composition of maps of each structure.
		\begin{enumerate}[noitemsep, label={},topsep=0pt,label=(\roman*)]
			\item	$\sSet$: Objects are sets and morphisms from $X$ to $Y$ are all functions $f\colon X\to Y$.
			\item	$\vVectk$: Objects are $\mathbbm{k}-$vector spaces and morphisms from $V$ to $W$ are $\mathbbm{k}$-linear maps $f\colon V\to W$.
	\end{enumerate}}

\exa{\label{ex:Top_Mag}
	There is a magmoid 
	\[\tTopo=(Ob(\tTopo),\tTopo(-,-),\Delta_T)\]
	where $Ob(\tTopo)$ is the set of all topological spaces, for $X,Y\in Ob(\tTopo)$, $\tTopo(X,Y)$ is the set of continuous maps from $X$ to $Y$, and $\Delta_T$ is given by the composition of the underlying functions in $\sSet$.
}

\defn{
	A magmoid $\mM$ is called {\em small} if $Ob(M)$ is a set and for each pair $X,Y\in Ob(\mM)$, $\mM(X,Y)$ is a set.}

\defn{Let $\mM$ and $\mM'$ be magmoids. A {\em magmoid morphism} 
	$F\colon \mM\to \mM'$
	is a map sending each object $X\in Ob(\mM)$ to an object $F(X)\in Ob(\mM')$ and each morphism $f\colon X\to Y$ in $\mM$ to a morphism $F(f)\colon F(X)\to F(Y)$ in $\mM'$ such that for any pair of composable morphisms $f,g \in \mM$ 
	\[ F(\Delta_{\mM}( f,g))= \Delta_{\mM'}( F(f),F(g)).
	\]
}

Let $\mM=(Ob(\mM),\mM(-,-),*_{\mM})$ be a magmoid. We will find it convenient to have an alternative notation for composition, for which we use function order. For composable morphisms $f$ and $g$ in $\mM$, 
\[
g*_\mM f\coloneq *_\mM(f,g) .
\]

It is straightforward to check that for a magmoid  ${\mM}=(Ob(\mM),\mM(-,-),*_{\mM})$, there exists a magmoid $\mM^{op}=(Ob(\mM),\mM^{op}(-,-),*_{\mM^{op}})$ where for $X,Y\in Ob(\mM)$, $\mM^{op}(X,Y)=\mM(Y,X)$ and for composable morphisms $f,g\in \mM^{op}$ we have $g*_{\mM^{op}}f=f*_\mM g$.

\defn{A magmoid ${\mM} =  (Ob(\mM),\mM(-,-),\Delta_{\mM})$ is called {\em reversible} if there exists a
	constructible family of bijections (possibly of proper classes)
	\[
	\rev_{N,N'}\colon \mM(N,N')\to \mM(N',N)
	\]
	for all pairs $N,N'\in Ob(\mM)$.}

We now give the familiar definitions of a category and a functor in terms of  magmoids.

\defn{
	A {\em category} is a quadruple 
	\[{\CC}=(Ob(\CC),\CC(-,-),*_{\CC},1_{-})\]
	 consisting of a magmoid $(Ob(\CC),\CC(-,-),*_{\CC})$ and
	\begin{itemize}
		\item[(IV)] for each $X\in Ob(\CC)$
		a distinguished morphism $1_X\in \CC(X,X)$
		called the \textit{identity},
	\end{itemize}
	such that the following axioms are satisfied.
	\begin{itemize}
		\item[$(\CC1)$] \textbf{Identity:} for any morphism $f\colon X \to Y$, we have $1_Y*_\CC f = f = f*_\CC 1_Y$.
		\item[$(\CC2)$] \textbf{Associativity:} for any triple of morphisms $ f\colon X\to Y$, $g\colon Y \to Z$ and $h\colon Z\to W$ we have $h*_\CC(g*_\CC f)=(h*_\CC g)*_\CC f$.
	\end{itemize}
We refer to $(Ob(\CC),\CC(-,-),*_{\CC})$ as the underlying magmoid of $\CC$.
By abuse of notation we refer also to the underlying magmoid as $\CC$. 
}
\defn{Let $\CC$ and $\CC'$ be categories. A {\em functor} 
	$F\colon \CC\to \CC'$
	is a magmoid morphism such that for any $X\in Ob(\CC)$
	\[ 1_{F(X)}=F(1_{X}).\]
}

It is straightforward to show that there exists a category with underlying magmoid $\sSet$ and the usual identity functions, we denote this $\Set$.
Analogously $\Vectk$ has underlying magmoid $\vVectk$, and $\Topo$ has underlying magmoid $\tTopo$.
We denote the identity map on $X\in Ob(\Topo)$ by $\id_X\colon X\to X$.

\medskip
It is also straightforward to check that there is an associative partial composition of magmoid morphisms which sends a pair of magmoid morphisms $F\colon \mM\to \mM'$ and $F'\colon \mM'\to \mM''$ to $F'\circ F\colon \mM\to \mM''$ defined in the obvious way using the composition in $\Set$ of the underlying maps. It is interesting to note that this composition, leads to  category 
$\MAG=(Ob(\MAG),\MAG(-,-),\circ,1_{-})$
whose objects are all small magmoids,  and whose identities are morphisms comprising of identity maps in $\Set$. Thus magmoid representation theory can be framed in terms of magmoid morphisms in this category.

It can similarly be checked that the partial composition of magmoid morphisms extends to a partial composition of functors, we denote this also with $\circ$.
We denote by $\Cat$ the analogously defined category of small categories. 

Where we feel a more concise notation is helpful we may use the null composition symbol (i.e. just juxtaposition) for composition of functors and magmoid morphisms.

\begin{definition}
	A category $\CC$ is called {\em finitely generated} if there exists a finite set $X$ of morphisms (including identities) in $\CC$ such that every morphism in $\CC$ can be obtained by composing morphisms in $X$. 
\end{definition}

We give the familiar definition of a groupoid in order to fix notation. Recall that a category is called {\em small} if the collection of all morphisms is a set.

\defn{
	A {\em groupoid} ${\GG}$ is a pentuple 
	\[
	\GG=(Ob(\GG),\GG(-,-),*_\GG,1_-,(-)\mapsto (-)^{-1})
	\]
	consisting of a small category $(Ob(\GG),\GG(-,-),*_\GG,1_{-})$, and 
	\begin{itemize}
		\item[(V)] for each pair $(X,Y)\in Ob(\GG)\times Ob(\GG)$ a function
		\ali{
			(-)^{-1}\colon \GG(X,Y)&\to \GG(Y,X)\\
			f&\mapsto f^{-1}
		}
		called the \textit{inverse assigning} function;
	\end{itemize}
	such that the following is satisfied.
	\begin{itemize}
		\item [$(\GG1)$] \textbf{Inverse:}
		for any morphism $f\colon X\to Y$, we have $f^{-1}*_\GG f=1_X$ and $f*_\GG f^{-1}=1_Y$.
	\end{itemize}
}

In other words, a groupoid is precisely a small category in which all morphisms are isomorphisms.

\rem{Let $\GG$ be a groupoid. By abuse of notation we will refer also to the underlying magmoid as $\GG$. 
	Note that the identities and inverses of $\GG$ are uniquely determined from the underlying magmoid of $\GG$.}

\rem{Notice that the underlying magmoid of a groupoid is necessarily reversible, although a reversible magmoid $\mM$ does not imply the existence of a groupoid with underlying magmoid $\mM$.}

A magmoid morphism of underlying magmoids always lifts to functor preserving the groupoid structure. The proof is straightforward.

\begin{proposition}\label{pr:grpd_fun_comp}
	Let $\GG$ and $\GG'$ be groupoids and $F\colon \GG\to \GG'$ a magmoid morphism.
	Then we have 
	\begin{itemize}
		\item for any $X\in Ob(\GG)$, $1_F(X)=F(1_X)$, and 
		\item for any morphism $f\in \GG$, $F(f^{-1})=(F(f))^{-1}$.\qed
	\end{itemize}
\end{proposition}

\rem{The previous result cannot be replicated for categories: a magmoid morphism between the underlying magmoids of a pair of categories can fail to be a functor.}

It is straightforward to see that there is a full subcategory of $\Cat$ whose objects are groupoids. We denote this $\Grpd$.

\subsubsection{Magmoid congruence}
\label{sec:congruences}

Often the magmoids we construct are too large to be interesting objects of study themselves. Here we introduce congruences and quotient magmoids, our main tool for obtaining a category from a magmoid.
Congruences are families of relations on the morphism sets of magmoids. Note in particular that the object set is always fixed. Allowing equivalence relations on objects in magmoids potentially leads to extra morphisms, and so is not really a quotient in the usual sense.

As with the previous section we are unaware of a reference that explicitly discusses congruences of structures which are not yet categories.
However,
if we take a category and quotient the underlying magmoid by a congruence, the quotient magmoid can also be given a categorical structure in a canonical way (Proposition~\ref{pr:cat_cong}), in this case we obtain the same quotient category as in \cite[Ch.2]{maclane}.

\defn{A {\em congruence} $C$ on a magmoid $\mM=(Ob(\mM),\mM(-,-),\Delta_{\mM})$ consists of, for each pair $X,Y\in Ob(\mM)$ an equivalence relation $R_{X,Y}$ on $\mM(X,Y)$, such that 
	$f'\in[f]$ and $g'\in[g]$ implies $\Delta_{\mM}(f',g')\in [\Delta_{\mM}(f,g)]$ where defined.
}

\defn{
	Let $\mM$ be a magmoid and $R=\{R_{X,Y}\}_{X,Y\in Ob(\mM) }$ a 
	collection of relations on the sets $\mM(X,Y)$.
	Then let $\bar{R}$ be the closure of $R$ to a congruence, this means we take the reflexive, symmetric, transitive closure of each $R_{X,Y}$ and insist that for any composition $\Delta_\mM(f,g)\sim \Delta_\mM(f',g')$ if $f\sim f'$ and $g\sim g'$. 
}

\defn{Let $\mM=(Ob(\mM),\mM(-,-),\Delta_{\mM})$ be a magmoid and $C$ a congruence on $\mM$. The {\em quotient magmoid} of $\mM$ by $C$ is $\mM/C=(Ob(\mM),\mM(X,Y)/R_{X,Y},\Delta_{\mM/C})$ where for each triple $X,Y,Z\in Ob(\mM)$ 
		\ali{
			\Delta_{\mM/C}\colon\mM/C(X,Y)\times \mM/C(Y,Z)&\to \mM/C(X,Z)\\
			([f],[g])&\mapsto [\Delta_{\mM}(f,g)].	
		}
		(That the composition is well defined follows directly from the definition of a congruence.)
}
In practice we will use the notation for the composition in $\mM$ to also denote the composition $\mM/C$.

\medskip

It is straightforward to prove the following:
\prop{\label{pr:cat_cong}
	Suppose $\CC=(Ob(\CC),\CC(X,Y),*_\CC,1_{-})$ is a category. For any congruence $C$ on $(Ob(\CC),\CC(X,Y),*_\CC)$, we have that $\CC/C=(Ob(\CC),\CC(X,Y)/R_{X,Y},*_{\CC/C},[1_{-}])$
	is a category.\\
	We call this $\CC/C$ the {\em quotient category} of $\CC$ by $C$.
	\qed}

\subsection{
	A product-hom adjunction in \texorpdfstring{$\Topo$}{Top} and the fundamental groupoid}
\label{sec:path} \label{sec:tensor_hom}
We begin by recalling a partial lift of the classical product-hom adjunction in $\Set$ to $\Topo$. We will make heavy use of this result in Section~\ref{sec:funcHomcob}.

We then recall the construction of the fundamental groupoid (Proposition \ref{pr:fundamentalgroupoid}).
Spanier \cite{spanier} and Brown \cite{brownt+g} were among the first to consider fundamental groupoids, and careful constructions can be found in \cite{dieck} and \cite{brownt+g} for example.
Here we also discuss the relationship between fundamental groupoids obtained by varying a finite number of basepoints, which will be necessary for the construction in Section~\ref{sec:tqft} (Lemmas~\ref{le:extension} and \ref{le:add_bps}).

Throughout the rest of this paper we will encounter several equivalence relations so we also introduce some careful notation here. 

\medskip

For spaces $X,Y$, we use $\TOPO(X,Y)$ to denote the set $\Topo(X,Y)$ together with the compact open topology. For the precise definition see \cite[p.285]{munkres2}; a useful characterisation is given by the fact that if $Y$ is a metric space, the compact-open topology coincides with the topology obtained from the sup-norm metric on $\Topo(X,Y)$ --  the distance between two maps $f,g$ is given by the least upper bound of the distance between all pairs $f(x), g(x)$.

There exists a functor  
$-\times Y\colon\Topo \to \Topo$ called the \textit{ product functor} which sends a space $X$ to the product space $X\times Y$, and a continuous map $f\colon X\to X'$ to the map 
$f\times \id_Y \colon X\times Y\to X'\times Y$,
$(x,y)\mapsto (f(x),y)$.
There also exists a functor $\TOPO(Y,-)\colon\Topo \to \Topo$  called the \textit{hom functor} which sends a space $Z$ to the space $ \TOPO(Y,Z)$, and which sends a continuous map $f\colon Z\to Z'$ to
$f\circ -\colon \TOPO(Y,Z)\to \TOPO(Y,Z')$,
$g\mapsto f\circ g$.

We will make extensive use of the following classical result in Section~\ref{sec:funcHomcob}. For full details see for example Section~2.2 of \cite{motion}.
\lemm{\label{le:producthom}
	Let $Y$ be a locally compact Hausdorff topological space.
	The product functor 
	$-\times Y $
	is left adjoint to the hom functor $\TOPO(Y,-)$.
	In particular,
	for objects $X,Y,Z\in\Topo$,
	this gives a set map
	\ali{
		\Phi\colon\Topo(X, \TOPO(Y,Z)) &\to \Topo(X\times Y, Z) \\
		f &\mapsto ((x,y)\mapsto f(x)(y))
	}
	that is a bijection{, natural in the variables $X$ and $Z$}.
}

\subsubsection*{The fundamental groupoid}
	\defn{We define the topological space $\II$ as $[0,1]\subset \R$ with the subset topology.}

\defn{ \label{de:pathspace}
	Let $X$ be a topological space.
	An element of $\Topo(\II,X)$ is called a {\em path} in $X$.\\
	We will use $\gamma_t$ for $\gamma(t)$, and we say $\gamma$ is a path from $x$ to $x'$,
	denoted $\gamma \colon x \to x'$, 
	when $\gamma_0 = x$ and $\gamma_1=x'$.
	For $x,x'\in X$, let 
	\[\Path X(x,x')=\{\gamma \colon \II \to X \; \vert \; \gamma\in \Topo(\II,X), \, \gamma_0=x, \, \gamma_1=x'\}.
	\]
}

Our convention for path composition is as follows:

\prop{ \label{pr:path_comp}
	Let $X$ be a topological space.
	For any $x,x',x''\in X$, there exists a composition 
	\ali{
		\PsiOps \colon \Path X(x,x')\times \Path X(x',x'') &\to \Path X(x,x'')\\
		(\gamma, \gamma')&\mapsto \gamma'\gamma
	}
	with 
	\begin{align*} 
		\label{eq:pathcomp} 
		(\gamma'\gamma)_t= \begin{cases}
			\gamma_{2t} & 0\leq t\leq 1/2, \\
			\gamma'_{2(t-1/2)} & 1/2\leq t \leq 1.
		\end{cases}
	\end{align*}	
}
\begin{proof}
	Straightforward.
\end{proof}

Thus $\Path X  \;=\; (X,\Path X(-,-), \PsiOps ) $ is a magmoid. In particular, it is a reversible magmoid, which can be seen by mapping $\gamma \mapsto \gamma^{rev}$ where $\gamma^{rev}_t=\gamma_{1-t}$.

\medskip

Let $X$ be a space. 
Paths 
$\gamma,\gamma'\in \Topo(\II,X)$
are {\it path homotopic}
if $ \phomotopy{X}{\gamma}{\gamma'} \neq \emptyset$ where
$$ 
\phomotopy{X}{\gamma}{\gamma'} 
\coloneq \{ H\in \Topo(\II^2,X) \;|\; H(-,0)=\gamma, H(-,1)=\gamma', H(0,-)=\gamma_0, H(1,-)=\gamma_1  
\}.
$$

It is straightforward to show that path homotopy is an equivalence relation on each $\Path X(-,-)$, and leads to a congruence on $\Path X$.  

If $\gamma$ and $\gamma'$ are path homotopic, we write $\gamma \simp \gamma'$, and we use $\classp{\gamma}$ for the path-equivalence class of $\gamma$.
Where we feel it simplifies the exposition, we may also use $\gamma$ for the path-equivalence class of $\gamma$. 

\medskip

It is straightforward to construct explicit homotopies proving that the quotient magmoid  $\Path X / \simp$  is moreover a groupoid:

\prop{ \label{pr:fundamentalgroupoid}
	Let $X$ be a topological space. 
	There exists a groupoid
	\[
	\pi(X) \;\coloneq\; (X,\Path X(-,-)/\simp, \PsiOps ,\classp{e_x},\classp{\gamma}\mapsto \classp{\gamma^{\rev}}) .
	\]
	Here the identity morphism $\classp{e_x}$ for $x\in X$ is the path-equivalence class of the constant path $\gamma_t=x$ for all $t\in \II$.
	This is the {\em fundamental groupoid of $X$}. \qed
}

\rem{
	Notice that, by Lemma~\ref{le:producthom}, we have $\gamma\simp \gamma'$ if and only if there is a path
$\tilde{H}\colon \II\to \TOPO(\II,X)$ such that 
$\tilde{H}(0)=\gamma$, $\tilde{H}(1)=\gamma'$ and for all $t\in \II$, $\tilde{H}(t)\in \Path X(x,x')$. }

It is straightforward to check that the map sending a space to its fundamental groupoid extends to a functor.

\begin{lemma}
	There is a functor $\pi\colon \mathbf{Top} \to \mathbf{Grpd}$ which sends a space $X$ to the fundamental groupoid $\pi(X)$ and is defined on morphisms as follows.
	Let $f\colon X \to Y$ be a continuous map, $\pi(f)\colon \pi(X) \to \pi (Y)  $ defined by $x\mapsto f(x)$ for a point $x\in X=Ob(\pi(X))$ and by $\classp{\gamma}\mapsto \classp{ f \circ \gamma }$ for a path $\gamma$ in $X$. \qed
\end{lemma}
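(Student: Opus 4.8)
The plan is to verify the three conditions making $\pi$ a functor: that $\pi(f)$ is a genuine morphism of $\mathbf{Grpd}$ for each continuous $f$, that $\pi$ respects composition, and that $\pi$ carries identities to identities. Since each $\pi(X)$ and $\pi(Y)$ is a groupoid, I would lean on Proposition~\ref{pr:grpd_fun_comp}: it suffices to show that each $\pi(f)$ is a magmoid morphism between the underlying magmoids of $\pi(X)$ and $\pi(Y)$, after which preservation of identities and inverses---and hence the fact that $\pi(f)$ is a functor of groupoids---comes for free.

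First I would check that $\pi(f)$ is well defined on morphisms, i.e. that it descends to path-homotopy classes. Given $\gamma \simp \gamma'$ with witnessing homotopy $H \in \phomotopy{X}{\gamma}{\gamma'}$, the composite $f\circ H\colon \II^2 \to Y$ is continuous, satisfies $(f\circ H)(-,0)=f\circ\gamma$ and $(f\circ H)(-,1)=f\circ\gamma'$, and has $(f\circ H)(0,-)=f(\gamma_0)$ and $(f\circ H)(1,-)=f(\gamma_1)$ constant. Hence $f\circ H \in \phomotopy{Y}{f\circ\gamma}{f\circ\gamma'}$, so $f\circ\gamma \simp f\circ\gamma'$ and the assignment $\classp{\gamma}\mapsto \classp{f\circ\gamma}$ is well defined.

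Next, for the magmoid-morphism property, I would compare $f\circ(\gamma'\gamma)$ with $(f\circ\gamma')(f\circ\gamma)$ using the explicit piecewise formula of Proposition~\ref{pr:path_comp}. Both equal $t\mapsto f(\gamma_{2t})$ on $[0,1/2]$ and $t\mapsto f(\gamma'_{2(t-1/2)})$ on $[1/2,1]$, so they agree on the nose as paths in $Y$; passing to classes gives $\pi(f)(\classp{\gamma'}\,\classp{\gamma}) = \pi(f)(\classp{\gamma'})\,\pi(f)(\classp{\gamma})$. Together with the evident action on objects, this shows $\pi(f)$ is a magmoid morphism, and Proposition~\ref{pr:grpd_fun_comp} upgrades it to a morphism of $\mathbf{Grpd}$.

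Finally, functoriality of $\pi$ itself follows from associativity and unitality of ordinary function composition. On objects, $\pi(g\circ f)(x) = g(f(x)) = \pi(g)(\pi(f)(x))$ and $\pi(\id_X)(x)=x$; on morphism classes, $\pi(g\circ f)(\classp{\gamma}) = \classp{(g\circ f)\circ\gamma} = \classp{g\circ(f\circ\gamma)} = \pi(g)(\pi(f)(\classp{\gamma}))$ and $\pi(\id_X)(\classp{\gamma})=\classp{\gamma}$, whence $\pi(\id_X)=1_{\pi(X)}$. The only step carrying any real content is the well-definedness check, where one must confirm that post-composition with $f$ preserves the boundary constraints defining a path homotopy; everything else is formal, which is why the result is stated as straightforward.
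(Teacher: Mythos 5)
Your proof is correct, and since the paper marks this lemma \qed{} as a straightforward check, it supplies exactly the routine verification the paper omits: well-definedness on path-homotopy classes via $f\circ H$, strict preservation of the piecewise concatenation, and functoriality of $\pi$ from associativity and unitality of map composition. Your one genuinely nice touch---invoking Proposition~\ref{pr:grpd_fun_comp} so that preservation of identities and inverses by each $\pi(f)$ comes for free from the magmoid-morphism property---is entirely in the spirit of the machinery the paper sets up for exactly this purpose.
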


\defn{\label{de:Afundgrpd}
	Let $X$ be a topological space and $A\subseteq X$ a subset.
	The {\em fundamental groupoid of $X$ with respect to $A$} is the full subgroupoid of $\pi(X)$ with object set $A$, denoted $\pi (X,A)$.\\
	We refer to $A$ as the set of basepoints.
}

We have $\pi(X,X)=\pi(X)$.
Let $X$ be a path-connected topological space and $x\in X$ be a point, then
$ \pi(X)(x,x)
$
is the fundamental group based at $x \in X$.
Notice that for any $A'\subseteq A$, there is an inclusion $\iota\colon\pi(X,A')\to \pi(X,A)$.

\begin{definition}\label{de:representative}
	For topological spaces $X$ and $A\subseteq X$, then $A$ is called  \textit{representative} in $X$ if $A$ contains a point in every path-component of $X$.
	(The nomenclature $(X,A)$ is a $0$-connected pair is also used.)
\end{definition}

\begin{lemma}\label{le:surjection_of_representative_set}
	Suppose $f\colon X\to Y$ is a surjection and $A$ is a representative subset of $X$,
	then $f(A)$ is representative in $Y$.
\end{lemma}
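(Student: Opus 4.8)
The plan is to argue directly from Definition~\ref{de:representative}, which characterises a representative subset as one that meets every path-component. Accordingly I would fix an arbitrary path-component $C$ of $Y$ and show that $f(A)\cap C\neq\emptyset$; since $C$ is arbitrary, this yields the claim that $f(A)$ is representative in $Y$.

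First I would choose any point $y\in C$. As $f$ is a surjection there is some $x\in X$ with $f(x)=y$. This $x$ lies in some path-component of $X$, and because $A$ is representative in $X$ there is a point $a\in A$ in the same path-component as $x$; that is, there is a path $\gamma\in\Path X(x,a)$.

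The one substantive step is to push this path forward along $f$. Since $f$ is continuous, $f\circ\gamma$ is a path in $Y$ from $f(x)=y$ to $f(a)$ (equivalently, the induced functor $\pi(f)$ of Lemma, sending $\classp{\gamma}$ to $\classp{f\circ\gamma}$, provides a morphism of $\pi(Y)$ connecting $f(x)$ and $f(a)$). Hence $y$ and $f(a)$ lie in the same path-component of $Y$, namely $C$. Since $a\in A$ we have $f(a)\in f(A)$, and therefore $f(a)\in f(A)\cap C$, so this intersection is nonempty.

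I expect no real obstacle here: the only content is that continuous maps carry paths to paths, so path-components map into path-components, and the surjectivity hypothesis is exactly what guarantees that every $y\in Y$ has a preimage from which such a path can be started. The argument is essentially complete once these two observations are combined.
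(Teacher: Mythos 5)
Your proof is correct and is essentially the paper's own argument: take a preimage of an arbitrary $y\in Y$ via surjectivity, connect it by a path to a point of $A$ using representativity, and push that path forward along $f$. The only difference is presentational -- you phrase it through an arbitrary path-component $C$, while the paper directly constructs a path from $y$ to a point of $f(A)$, which amounts to the same thing.
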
	
\begin{proof}
	Let $y\in Y$ be any point. We must construct a path from $y$ to an element of $f(A)$.
	Let $y'\in f^{-1}(y)$ be any preimage,
	then there exists a path $\gamma$ from $y'$ to a point in $A$ and $f\circ \gamma$ is a path from $y$ to an element of $f(A)$. 
\end{proof}

We will need the following results about the fundamental groupoid with finite sets of basepoints in Section~\ref{sec:tqft}. 
In what follows we use the same labels for paths, and their equivalence classes to keep the notation readable. The meaning will be clear from context.
\begin{lemma}\label{le:extension}
	Let $\GG$ be a groupoid, $X$ a topological space, $X_0\subseteq X$ a finite subset and $y \in X \setminus X_0$ any point.
	Given a groupoid map $f\colon\pi(X,X_0) \to \GG$, a path $\gamma\colon x \to y$ where $x\in X_0$ and a morphism $g\colon f(x)\to \mathsf{g}$ in $\GG$ 
	there exists a unique $F\colon \pi(X,X_0 \cup \{y\}) \to \GG$ extending $f$ such that
	\begin{itemize}
		\item the diagram 
		\begin{eqnarray}
			\begin{tikzcd}
				\& \pi(X, X_0 \cup \{y\}) \ar[dr,"F"] \\
				\pi(X,X_0) \ar[ur,"\iota"] \ar[rr,"f"] \& \& \GG
			\end{tikzcd}
		\end{eqnarray}
		commutes, where $\iota$ is the inclusion map, and 
		\item $F(\gamma) = g $.
	\end{itemize}
\end{lemma}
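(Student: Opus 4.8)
The plan is to prove uniqueness and existence separately, the common engine being the observation that the morphisms of $\pi(X,X_0\cup\{y\})$ are generated by the image under $\iota$ of the morphisms of $\pi(X,X_0)$ together with the single class $\gamma$. Indeed, since $\gamma\colon x\to y$ is an isomorphism in $\pi(X,X_0\cup\{y\})$ and $y$ is the only object outside $X_0$, any morphism $\delta\colon a\to b$ (with $a,b\in X_0\cup\{y\}$) can be rewritten, by inserting $\gamma^{-1}\gamma$ at each endpoint equal to $y$, as a composite of copies of $\gamma^{\pm1}$ with a morphism whose source and target both lie in $X_0$. Because $\iota$ is the inclusion of a full subgroupoid, the latter is the $\iota$-image of a unique morphism of $\pi(X,X_0)$.

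For uniqueness, suppose $F$ satisfies the two conditions. Then $F$ agrees with $f$ on the $\iota$-image and sends $\gamma\mapsto g$; moreover, being a magmoid morphism of groupoids, $F$ preserves composition and, by Proposition~\ref{pr:grpd_fun_comp}, identities and inverses, so in particular $F(\gamma^{-1})=g^{-1}$. By the generation statement above, the value of $F$ on every morphism is then forced, which gives uniqueness.

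For existence, I would construct $F$ by transport along $\gamma$. Define $F$ on objects by $F|_{X_0}=f$ and $F(y)=\mathsf{g}$. For each object $a\in X_0\cup\{y\}$ fix the morphism $c_a$ of $\pi(X,X_0\cup\{y\})$ given by $c_a=e_a$ for $a\in X_0$ and $c_y=\gamma^{-1}\colon y\to x$, together with the element $d_a\in\GG$ given by $d_a=1_{f(a)}$ for $a\in X_0$ and $d_y=g^{-1}$. For a morphism $\delta\colon a\to b$ the composite $c_b\,\delta\,c_a^{-1}$ has source and target in $X_0$, hence is the $\iota$-image of a morphism of $\pi(X,X_0)$ to which $f$ may be applied; set
\[
F(\delta)\;=\;d_b^{-1}\,f\!\left(c_b\,\delta\,c_a^{-1}\right)d_a .
\]

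It then remains to verify the required properties. Taking $a,b\in X_0$ gives $c_a=e_a$, $c_b=e_b$, $d_a=1_{f(a)}$ and $d_b=1_{f(b)}$, so $F(\delta)=f(\delta)$ and the triangle commutes; taking $\delta=\gamma$ gives $F(\gamma)=(g^{-1})^{-1}f(\gamma^{-1}\gamma)\,1_{f(x)}=g$. That $F$ preserves composition is a short computation: for $\delta\colon a\to b$ and $\delta'\colon b\to c$ one inserts $c_b^{-1}c_b$ inside the argument of $f$ and uses functoriality of $f$ to split off the middle factor, then inserts $d_b^{-1}d_b$ to recognise the product $F(\delta')F(\delta)$. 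Hence $F$ is a magmoid morphism, and by Proposition~\ref{pr:grpd_fun_comp} a groupoid homomorphism, completing existence. The only genuinely delicate point is the reduction lemma of the first paragraph --- that every morphism touching $y$ equals $\gamma^{\pm1}$ composed with an $X_0$-morphism --- which rests precisely on $\gamma$ exhibiting $y$ as isomorphic to the basepoint $x\in X_0$; everything else is formal.
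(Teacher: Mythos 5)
Your proof is correct and takes essentially the same route as the paper's: your uniform formula $F(\delta)=d_b^{-1}\,f(c_b\,\delta\,c_a^{-1})\,d_a$ specialises exactly to the paper's four-case definition ($f(\phi)$, $gf(\gamma^{-1}\phi)$, $f(\phi\gamma)g^{-1}$, $gf(\gamma^{-1}\phi\gamma)g^{-1}$), and your generation argument for uniqueness is the same in substance as the paper's observation that the values are forced by the stated conditions and functoriality. The only difference is presentational: the $c_a,d_a$ bookkeeping lets you verify functoriality in a single uniform computation where the paper checks cases.
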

\begin{proof}
	First we construct such an $F$. 
	On objects we have,
	\begin{equation*}
		F(a)= 
		\begin{cases}
			\mathsf{g}, &  \text{if } a=y \\
			f(a), &  \text{otherwise} . 
		\end{cases}
	\end{equation*}
	For a path $\phi\colon a\to y$ with $a \in X_0$ we must have
	\begin{align*}
		F(\phi)=F(\gamma\gamma^{-1}\phi ) = F(\gamma)F(\gamma^{-1}\phi)=gf(\gamma^{-1}\phi)
	\end{align*}
	Arguing similarly for all cases we have that for a morphism $\phi\colon a\to b$, 
	\begin{align*}
		F(\phi)=\begin{cases}
			f(\phi), & \text{if } a,b \in X_0 \\
			gf(\gamma^{-1}\phi), & \text{if } a \in X_0, \, b=y  \\
			f(\phi \gamma)g^{-1}, & \text{if } a=y, \, b\in X_0 \\
			gf(\gamma^{-1}\phi\gamma)g^{-1}, & \text{if } a=y, \, b=y.
		\end{cases}
	\end{align*}
	Notice that in each case $F$ is inferred from the conditions set out in the theorem and by functoriality.
	This gives uniqueness. Now it remains to check that functoriality is always preserved, i.e. for any two composable paths $\phi,\phi' \in \pi(X,X_0 \cup \{y\})$ we have $F(\phi') F(\phi)=F(\phi' \phi)$. This can be checked case by case, we give two examples, the other cases are checked similarly. 
	\begin{itemize}
		\item[(I)] If we have $\phi\colon y \to y$, $ \phi' \colon y \to y$, then
		\begin{multline*}
		F(\phi')F(\phi) = gf(\gamma^{-1}\phi'\gamma)g^{-1}gf(\gamma^{-1}\phi\gamma)g^{-1}=gf(\gamma^{-1}\phi'\gamma\gamma^{-1}\phi\gamma) g^{-1} \\ = gf(\gamma^{-1}\phi'\phi\gamma)g^{-1} =F(\phi' \phi). 
		\end{multline*}
		\item[(II)] If we have $\phi: a \to y, a \in X_0$, $ \phi' : y \to b$, $b\in X_0$, then 
		\begin{align*}
			F(\phi')F(\phi)&= f(\phi' \gamma)g^{-1}g f(\gamma^{-1}\phi) = f(\phi' \gamma\gamma^{-1}\phi) = f(\phi'\phi)=F(\phi'\phi). \qedhere
		\end{align*}
	\end{itemize}
\end{proof}
Given a group $G$, there is a groupoid $\GG_G=(\{*\},\GG_G(*,*),\circ_G,e_G,g\mapsto g^{-1})$, where $\GG_G(*,*)$ is the underlying set of $G$.
\begin{lemma}\label{le:add_bps}
	Let $X$ be a topological space, $G$ a group,
	$X_0\subseteq X$ a finite representative subset and $y\in X$ a point with with $y\notin X_0$.
	There is a non-canonical bijection of sets
	\ali{
		\Theta_{\gamma}\colon \Grpd(\pi(X,X_0),\GG_G)\times G &\to \Grpd (\pi(X,X_0\cup\{y\}),\GG_G)\\
		(f,g)&\mapsto F
	}
	where $\gamma$ is a choice of a path from some $x\in X_0$ to $y$ and $F$ is the extension along $\gamma$ and $g$ as described in Lemma~\ref{le:extension}. 
	
\end{lemma}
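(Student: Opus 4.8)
The plan is to recognise this lemma as an almost immediate consequence of Lemma~\ref{le:extension}, the only genuinely new ingredients being the identification of the second factor $G$ with the choice-of-image data in that lemma, and the verification that a suitable path $\gamma$ exists at all. First I would record the latter: since $X_0$ is representative (Definition~\ref{de:representative}) and $y\in X$ lies in some path-component, that component contains a point $x\in X_0$, so there is indeed a path $\gamma\colon x\to y$ with $x\in X_0$; fix one such $\gamma$, noting that it is precisely this choice that makes the bijection non-canonical.

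Next I would unwind what Lemma~\ref{le:extension} says when the target groupoid is $\GG_G$. Since $\GG_G$ has a single object $*$ with $\GG_G(*,*)$ the underlying set of $G$, any $f\colon\pi(X,X_0)\to\GG_G$ sends $x$ to $*$, and a morphism $g\colon f(x)\to\mathsf{g}$ in $\GG_G$ is nothing but an element $g\in G$ (with $\mathsf{g}=*$ forced). Thus for each pair $(f,g)\in\Grpd(\pi(X,X_0),\GG_G)\times G$, Lemma~\ref{le:extension} furnishes a unique $F\colon\pi(X,X_0\cup\{y\})\to\GG_G$ extending $f$ along $\iota$ with $F(\gamma)=g$. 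This is exactly $\Theta_{\gamma}(f,g)$, so $\Theta_{\gamma}$ is well-defined.

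To show $\Theta_{\gamma}$ is a bijection I would exhibit the two-sided inverse
\[
\Psi\colon\Grpd(\pi(X,X_0\cup\{y\}),\GG_G)\to\Grpd(\pi(X,X_0),\GG_G)\times G,\qquad F\mapsto (F\circ\iota,\ F(\gamma)).
\]
That $\Psi\circ\Theta_{\gamma}=\mathrm{id}$ is read straight off the two defining properties of the extension: if $F=\Theta_{\gamma}(f,g)$ then the commuting triangle gives $F\circ\iota=f$ and the second bullet gives $F(\gamma)=g$, so $\Psi(F)=(f,g)$. For $\Theta_{\gamma}\circ\Psi=\mathrm{id}$, given $F$ I set $f=F\circ\iota$ and $g=F(\gamma)$; then $F$ is itself an extension of $f$ along $\iota$ with $F(\gamma)=g$, so by the uniqueness clause of Lemma~\ref{le:extension} it must coincide with $\Theta_{\gamma}(f,g)=\Theta_{\gamma}(\Psi(F))$.

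The work is therefore almost entirely bookkeeping, and there is no serious obstacle once Lemma~\ref{le:extension} is in hand. The one point that deserves care is the uniqueness step underpinning $\Theta_{\gamma}\circ\Psi=\mathrm{id}$: I must check that $F\circ\iota$ really is a groupoid morphism $\pi(X,X_0)\to\GG_G$ (immediate, as $\iota$ is the inclusion of a full subgroupoid and $F$ is a morphism) and that $F$ genuinely satisfies both hypotheses of Lemma~\ref{le:extension}, so that the uniqueness clause applies rather than merely the existence of some extension. Given that, the identification $F=\Theta_{\gamma}(\Psi(F))$ is forced, and the bijection follows.
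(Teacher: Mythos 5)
Your proof is correct and takes essentially the same route as the paper: the paper likewise observes that well-definedness follows because any $g\in G$ is a morphism $f(x)\to f(x)$ in the one-object groupoid $\GG_G$, and then exhibits the same inverse $F\mapsto (F\circ\iota,\,F(\gamma))$, with both composite identities resting on the defining properties and the uniqueness clause of Lemma~\ref{le:extension} exactly as you argue. Your additional remarks (existence of $\gamma$ via representativity, and the explicit check that uniqueness applies) merely spell out details the paper leaves implicit.
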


\begin{proof}
	First notice that for any $g\in G$, $g\colon f(x)\to f(x)$ is a morphism in $\GG_G$, so the map is well-defined.
	
	The map $\Theta_{\gamma}$ has inverse which sends a map $f'\in \Grpd (\pi(X,X_0\cup\{y\}),G)$ to the pair $(f'\circ \iota,f'(\gamma))$ where $\iota\colon \pi(X,X_0)\to \pi(X,X_0\cup\{y\})$ is the inclusion.
\end{proof}

\subsection{Colimits}
\label{sec:colim}
Colimits will play an integral role in the construction in Section~\ref{sec:tqft} so we use this section to recall the definition and review some key properties that we will use.
We also fix representative colimits in the categories we will work with throughout the paper.

The topics covered here can be found, for example, in \cite[Ch.3]{perrone}.

\begin{definition}
	Let $\CC$ be a category and $\mathbf{I}$ a small category. 
	A functor $D\colon \mathbf{I} \to \CC$ is called a {\em diagram} in $\CC$ of shape $\mathbf{I}$.
\end{definition}

Let $\mathbf{P} = \bullet \leftarrow \bullet \to \bullet$ be a category with three objects and two non identity morphisms as shown.
Then a functor $\mathbf{P} \to \CC$ for some category $\CC$ is uniquely specified by drawing a diagram
\[
\begin{tikzcd}[ampersand replacement=\&]
	C_1 \& C_0 \ar[l,"f_1"']\ar[r,"f_2"] \& C_2
\end{tikzcd}
\] 
in $\CC$ of the same shape as $\mathbf{P}$, hence the nomenclature.

\begin{definition}
	Let $\CC$ be a category, $\mathbf{I}$ a small category and $D\colon \mathbf{I} \to \CC$ a diagram in $\CC$.
	A {\em cocone} is an object $C \in Ob(\CC)$ together with a family of morphisms
	\[
	\phi=\left( \phi_i\colon D(i) \to C\right)_{i\in Ob(\mathbf{I})}
	\]
	indexed by the objects in $\mathbf{I}$,
	such that for all morphisms 
	$f\colon i \to j$ in $\mathbf{I}$
	the following triangle commutes.
	\[
	\begin{tikzcd}[ampersand replacement=\&,column sep =small ]
		D(i) \ar[rr,"D(f)"]\ar[dr,"\phi_i"'] \&
		\& D(j) \ar[dl,"\phi_j"]\\
		\& C \&
	\end{tikzcd}
	\]
	A {\em colimit} of $D$ is a cocone $(C,\phi)$ with the universal property that for any other cocone $(C',\psi)$ there exists a unique morphism $C\to C'$ making the following diagram commute for all morphisms $f\colon i \to j$ in $\mathbf{I}$.
	\[
	\begin{tikzcd}[ampersand replacement=\&]
		D(i)\ar[rr,"D(f)"]\ar[dr,"\phi_i"']\ar[bend right,ddr,"\psi_i"']\& \& D(j)\ar[dl,"\phi_j"]\ar[bend left,ddl,"\psi_j"]\\
		\& C\ar[dashed, d,"\exists !"] \& \\
		\& C' \&
	\end{tikzcd}
	\]
	We will refer to the object $C$ as $\colim(D)$.
\end{definition}

\begin{definition}
	A colimit of a diagram of shape $\mathbf{P}$ is a {\em pushout}.\end{definition}
\begin{definition}	
	Let $\mathbf{T}$ be the category with two objects and no non identity morphisms, then a colimit of a diagram of shape $\mathbf{T}$ is a {\em coproduct.}
	\end{definition}

\begin{definition}
	Let $\mathbf{E}=
	\begin{tikzcd}
		\bullet\ar[r,shift left]\ar[r,shift right] \& \bullet
	\end{tikzcd}$
	be the category with two objects and two non identity morphisms as shown.
	A colimit of a diagram of shape $E$ is called a {\em coequaliser}.
\end{definition}

In general colimits do not exist. 
However it is straightforward to show that, where colimits do exist, they are unique up to isomorphism.
Thus, where colimits do exist, we are free to choose a
representative element of each isomorphism class
to work with. 
Indeed we will {\it fix} representative elements for various colimits so a coproduct is uniquely defined by giving the appropriate diagram.
For example, in $\Set$ we will  
fix the
representative coproduct of a pair $X,Y\in Ob(\Set)$, to be the disjoint union
$$
X\sqcup Y \; := \left( X\times \{1\} \right) \cup 
\left( Y \times \{2\} \right) ,
$$
with the natural inclusions (i.e. those given by
$\iota_i(x) \;\coloneq(x,i)$).

\medskip
We explain our convention for pushouts in $\Set$.

Consider morphisms 
$f\colon Z \to X$ and $g\colon Z\to Y$ in $\Set$. 
Then the diagram below:
\[
\begin{tikzcd}
	Z \ar[r,"f"]\ar[d,"g"'] \& X \ar[d,"p_X"]  \\
	Y \ar[r,"p_Y"'] \& X \sqcup_{Z} Y
\end{tikzcd}
\]
is a pushout in $\Set$.
Here 
\[
X \sqcup_Z Y \; 
\coloneq (X\sqcup Y)/\sim,
\]
where $\sim$ is the reflexive, symmetric, transitive closure of the relation
\[
\big\{\big (\iota_1(f(z)),\iota_2(g(z))\big) \mid  z \in Z \big\},
\]
on $X\sqcup Y$, 
$p_X(x)$ is the equivalence class of $\iota_1(x)$ in $X\sqcup Y/\sim$
and $p_Y(y)$ is the equivalence class of $\iota_2(y)$ in $X\sqcup Y/\sim$.

We will fix the following representative general colimit in $\Set$. 

	Let $D\colon \mathbf{I}\to \Set$ be a diagram. Denote by $\sqcup_{i\in Ob(\mathbf{I})}D(i)$ the set of all pairs $(x,i)$, $i\in Ob(\mathbf{I})$, $x\in D(i)$. This is the {\em disjoint union} of the $D(i)$, and is a colimit of $D$ if $D$ has no non-identity morphisms.

For $i\in Ob(\mathbf{I})$, let $\iota_i\colon D(i) \to \sqcup_{i\in Ob(\mathbf{I})}D(i)$ denote the map $x\mapsto (x,i)$. 
Consider the relation 
\[
R=\left\{(\iota_i(x), \iota_j (D(f)(x))
\; \vert \;f\colon i\to j \in \mathbf{I}\right\}
\]
on $\sqcup_{i\in\mathbf{I}}D(i)$.
The colimit of $D$ is given by
\[
\colim(D)=\nicefrac{\sqcup_{i\in\mathbf{I}}D(i)}{\bar{R}}
\]
with maps $\phi_i\colon D(i)\to \nicefrac{\sqcup_{i\in\mathbf{I}}D(i)}{\bar{R}}$ which send $x$ to the equivalence class of $\iota_i(x)$. 

\medskip

The following theorem says that adjunctions interact nicely with colimits.

\begin{theorem}\label{th:lapcl}
	(\cite[Thm.~4.5.3]{riehl}) 
	Left adjoints preserve colimits. This means for any left adjoint $F\colon \CC\to \DD$, any diagram $D\colon \mathbf{I}\to \CC$ then 
\[
F(\colim(D))=\colim(F\circ D).
\] 
\end{theorem}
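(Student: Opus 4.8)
The plan is to prove the statement directly from the universal property of the colimit, using the fact that an adjunction supplies a bijection of hom-sets which is natural in both variables. Write $F \dashv G$, so that $G\colon \DD \to \CC$ is the right adjoint and there is a natural bijection
\[
\DD(F(A), B) \;\cong\; \CC(A, G(B)),
\]
natural in $A \in Ob(\CC)$ and $B \in Ob(\DD)$; I write $\overline{h}$ for the transpose of a morphism $h$ across this bijection (in either direction, the intended direction being clear from the source and target). Let $(C, \phi)$ be a colimit of $D$ in $\CC$, so $\phi = (\phi_i \colon D(i) \to C)_{i}$ and $C = \colim(D)$. The goal is to exhibit $(F(C), (F\phi_i)_i)$ as a colimit of $F \circ D$ in $\DD$.

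First I would observe that $(F(C), (F\phi_i)_i)$ is at least a cocone over $F \circ D$: since $F$ is a functor it preserves the commuting triangles $\phi_j \circ D(f) = \phi_i$ defining the cocone $(C, \phi)$, giving $F\phi_j \circ F(D(f)) = F\phi_i$ for every $f\colon i \to j$ in $\mathbf{I}$. To verify the universal property, let $(Y, \psi)$ be an arbitrary cocone over $F \circ D$, with components $\psi_i \colon F(D(i)) \to Y$. Transposing each component across the adjunction yields morphisms $\overline{\psi_i} \colon D(i) \to G(Y)$ in $\CC$.

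The key step is the claim that $(G(Y), (\overline{\psi_i})_i)$ is a cocone over $D$. This is exactly where naturality of the adjunction bijection in the first variable enters: for $f\colon i \to j$ the cocone condition $\psi_j \circ F(D(f)) = \psi_i$ transposes to $\overline{\psi_j} \circ D(f) = \overline{\psi_i}$, which is precisely the cocone condition in $\CC$. Granting this, the universal property of $(C, \phi)$ produces a unique $u \colon C \to G(Y)$ with $u \circ \phi_i = \overline{\psi_i}$ for all $i$. Transposing $u$ back gives $\overline{u}\colon F(C) \to Y$, and transposing the equalities $u \circ \phi_i = \overline{\psi_i}$ (again by naturality) yields $\overline{u} \circ F\phi_i = \psi_i$, so $\overline{u}$ is a morphism of cocones. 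For uniqueness, any $v\colon F(C) \to Y$ with $v \circ F\phi_i = \psi_i$ transposes to $\overline{v}\colon C \to G(Y)$ with $\overline{v} \circ \phi_i = \overline{\psi_i}$; the uniqueness of $u$ forces $\overline{v} = u$, hence $v = \overline{u}$. Thus $(F(C), (F\phi_i)_i)$ satisfies the universal property, and $F(\colim(D)) = \colim(F \circ D)$.

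The only real obstacle is the bookkeeping in the key step: one must check that transposition carries the cocone-compatibility triangles for $\psi$ to those for $\overline{\psi}$, and likewise carries the mediating morphism in one category to the mediating morphism in the other. This is entirely formal once the naturality squares of the adjunction are written out, so the content of the argument is concentrated in correctly identifying which variable's naturality is invoked at each transposition; no genuine difficulty arises beyond this. (In the paper's intended application, $F = -\times Y$ of Lemma~\ref{le:producthom}, whose explicit transpose $\Phi$ makes each of these naturality checks completely concrete.)
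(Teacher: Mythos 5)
Your proof is correct and complete. Note, though, that the paper itself offers no proof of this statement: it is quoted verbatim from the literature (Riehl, Theorem~4.5.3), so there is no internal argument to compare against. Measured against the cited source, your route is the direct universal-property verification, whereas Riehl's proof is the hom-set version of the same idea: one writes $\DD(F(\colim D),Y)\cong \CC(\colim D,G(Y))\cong \lim_i \CC(D(i),G(Y))\cong \lim_i \DD(F(D(i)),Y)$, naturally in $Y$, and concludes by Yoneda (formally, by proving that right adjoints preserve limits and dualising). The two arguments are equivalent — your transposition steps are exactly the naturality of those hom-set bijections unwound component by component — and you correctly identify that naturality in the \emph{first} variable, i.e.\ $\overline{h\circ Fg}=\overline{h}\circ g$, is what carries the cocone conditions and the mediating morphism back and forth; your uniqueness argument via bijectivity of transposition is also right. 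One small point of precision: what you actually establish is that $(F(C),(F\phi_i)_i)$ \emph{is a} colimit cocone for $F\circ D$, so the displayed equality $F(\colim(D))=\colim(F\circ D)$ holds only up to the canonical isomorphism; this is consistent with the paper's stated convention of fixing representative colimits in each category (as in Section~\ref{sec:colim}), under which the equality is a mild and deliberate abuse of notation, and it is in this up-to-isomorphism form that the theorem is applied, e.g.\ with $F=-\times Y$ from Lemma~\ref{le:producthom} or the forgetful functors to $\Set$.
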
	

\subsubsection{Colimits in \texorpdfstring{$\Topo$}{Top}}
\label{sec:TopColims}

The following well known adjunction allows us to write colimits in $\Topo$ in terms of colimits in $\Set$. 
	\exa{\label{ex:functor_TopSet}
	There is a forgetful functor $\Topoforget\colon \Topo \to \Set$ which sends a space to its underlying set and a continuous map to its underlying function.}

\lemm{\label{le:right_adjoint_Topoforget}
	$(I)$ There is a functor $\mathrm{G_T}\colon \Set\to \Topo$ which sends a set $X$ to the space with underlying set $X$ and the indiscrete topology, and which sends a function to the map which has the same action on the underlying sets. \\
	$(II)$ The functor $\mathrm{G_T}$ is right adjoint to $U_T$.
}

\begin{proof}
	(I) For sets $X$ and $Y$, let $f\colon X\to Y$ be a function. Then $f^{-1}(Y)=X$ and $f^{-1}(\emptyset)=\emptyset$ so $f$ defines a continuous function from $\mathrm{G_T}(X)$ to $\mathrm{G_T}(Y)$.
	Thus $\mathrm{G_T}$ is well defined.
	Clearly $\mathrm{G_T}$ sends identities to identities. Preservation of composition follows immediately from the definition.\\
	(II) The required family of set bijections send functions to continuous maps which act in the same way on the underlying set. It is straightforward to check the image of a function is continuous, and that the bijection is natural. 
\end{proof}

Since $\Topoforget\colon \Topo \to \Set$ is a left adjoint, by Theorem~\ref{th:lapcl}, $\Topoforget$ preserves colimits.
This means coproducts and pushouts of diagrams in $\Topo$ have the same underlying set as the coproducts and pushouts of their images in $\Set$.

Let $X$ and $Y$ be spaces.
Then 
\[
\tau_{X\sqcup Y}:=
\left\{ U \subseteq X\sqcup Y \;\vert\;
 \iota_1^{-1}(U) \text{ is closed in } X  \text{ and } \iota_2^{-1}(U) \text{ is closed in }  Y \right\}\]
is a topology on $X\sqcup Y$. 
It is straightforward to prove that $(X\sqcup Y,\tau_{X\sqcup Y})$ is a coproduct in $\Topo$ (see for example (3.1.2) of \cite{brownt+g}).
We will use the notation indicated by the following diagram to refer to the map given by the universal property of the coproduct in $\Topo$.
\begin{align}\label{eq:coprod}
\begin{tikzcd}[ampersand replacement=\&]
X\ar[r,"\iota_1"] \ar[dr,bend right,"i"'] \& X\sqcup Y \ar[d,"{\exists ! \copr{i}{j}}", dashed]\& Y\ar[l,"\iota_2"'] \ar[dl,bend left, "j"] \\
\& M \& 
\end{tikzcd}
\end{align}
(If we have maps of spaces $i\colon X\to M$ and $j\colon Y\to N$, we will use $i\sqcup j$ for the obvious map $X\sqcup Y \to M\sqcup N$.)

Let $X,Y,Z$ be topological spaces. Consider continuous maps $f\colon Z \to X$ and $g\colon Z\to Y$. 
The topology on  $X \push{fg}{Z} Y$ which makes it
into a pushout in $\Topo$ is the following:
\[
\tau_{{X\sqcup_{Z}Y}}:=\{U\subseteq X\push{fg}{Z} Y \;\vert\; p_X^{-1}(U) \text{ is closed in $X$ 
and }
p_Y^{-1}(U) \text{ is closed in $Y$}\}.
\]
This topology can be equivalently defined as the finest topology on $X \push{fg}{Z} Y$ making $p_X$ and $p_Y$ continuous.

\subsubsection{Colimits in \texorpdfstring{$\Grpd$}{}}\label{sec:colims_grpd}
The construction of the category $\HomCob$ will rely on the fact that the pushout of two finitely generated groupoids is a finitely generated groupoid, which we prove in Theorem~\ref{th:grpd_pushoutfg}. 
We feel it adds clarity to reproduce the proof (which is present in \cite{higgins}) in our notation, although it is by no means required to understand the paper, thus we relegate the required results, and their proofs to the appendix.

The difficulty in constructing colimits of groupoids stems from the fact that the image of a functor of groupoids is often not a groupoid, as it is not closed.
More precisely, suppose $F \colon \GG \to \mathcal{H}$ is a functor of groupoids, and $g_1\colon w\to x$ and $g_2\colon y\to z$ are morphisms in $\GG$.
Then $g_2*g_1$ is defined if and only if $x=y$ and then we must have $F(g_2*g_1)=F(g_2)*F(g_1)$, meaning $F(g_2)*F(g_1)$ must be the image of a morphism in $\GG$.
Suppose, however, that $x\neq y$ but $F(x)=F(y)$, then $F(g_2)*F(g_1)$ is defined in $\mathcal{H}$ but will not be the image of any single element in $\GG$. 
A consequence is that it is possible for the coequaliser of finite groupoids to be infinite. This is illustrated by the following example.

Let $(\Z,+)$ denote the category with one object and morphisms labelled by elements of $\Z$, with composition given by addition in $\Z$. 

\exa{Let $\{*\}$ be the groupoid with one object and only the identity morphism, and let $\mathbf{I}$ be the groupoid with two objects $\{a,b\}$ and one non-identity morphism from $a$ to $b$.
	Let $\iota_{a}$ be the functor uniquely defined by $\iota_a(*)=a$, and $\iota_{b}$ the functor uniquely defined by $\iota_b(*)=b$.
	The following diagram is a coequaliser 
	\[
	\begin{tikzcd}[ampersand replacement = \&]
	\{*\} \ar[r,"\iota_a",shift left]\ar[r,"\iota_b"',shift right] \& \mathbf{I}  \ar[r,"p"] \& (\Z,+),
	\end{tikzcd}
	\]
	where $p$ is a functor which maps the only non-identity morphism to $1\in \Z$.
	(Note $p$ must send $\{a\}$ and $\{b\}$ to the only object in $(\Z,+)$.)
}

Let $\GG_1 ,\GG_2\in Ob(\Grpd)$ be groupoids. Then we fix a representative coproduct, denoted $\GG_1\sqcup \GG_2$, as follows.
The object set $Ob(\GG_1\sqcup \GG_2)=Ob(\GG_1)\sqcup Ob(\GG_2)$, the coproduct in $\Set$, and the morphism set in $\GG_1\sqcup \GG_2 $ is the coproduct in $\Set$ of the morphisms in $\GG_1$ and the morphisms in $\GG_2$, where for $f\colon w\to x$ in $\GG_1$ and $g\colon y\to z$ in $\GG_2$, $(f,g)$ is a morphism from $ (w,y)$ to $(x,z)$.

\begin{theorem}\label{th:grpd_pushoutfg}
	Let $\GG_0$, $\GG_1$ and $\GG_2$ be finitely generated groupoids and let $f\colon \GG_0 \to \GG_1$ and $g\colon \GG_0 \to \GG_2$ be functors.
	The pushout of $f$ and $g$ is a finitely generated groupoid.
\end{theorem}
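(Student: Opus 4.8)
The plan is to exhibit an explicit finite generating set for the pushout by pushing forward finite generating sets of $\GG_1$ and $\GG_2$. Write $P$ for the pushout, with canonical functors $j_1\colon \GG_1\to P$ and $j_2\colon \GG_2\to P$ satisfying $j_1\circ f=j_2\circ g$. I would first record two structural facts about $P$, both belonging to the groupoid-colimit material deferred to the appendix (following \cite{higgins}): that $Ob(P)$ is the pushout in $\Set$ of $Ob(\GG_1)\leftarrow Ob(\GG_0)\to Ob(\GG_2)$, so that every object of $P$ lies in the image of $j_1$ or of $j_2$ on objects, and that $P$ enjoys the usual universal property of a pushout in $\Grpd$. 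I would also note the elementary observation that a finitely generated groupoid has only finitely many objects: if an identity $1_a$ is a composite of generators then $a$ is the source of a generator, and there are finitely many of these; consequently $Ob(P)$, being a quotient of $Ob(\GG_1)\sqcup Ob(\GG_2)$, is finite.

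Next, pick finite generating sets $S_1$ for $\GG_1$ and $S_2$ for $\GG_2$ (each including the relevant identities), and set $F=j_1(S_1)\cup j_2(S_2)$, a finite set of morphisms of $P$. The whole theorem reduces to the claim that $F$ generates $P$ under composition. Since $j_1$ and $j_2$ are functors and $S_i$ generates $\GG_i$, the image of any morphism of $\GG_i$ is a composite of elements of $j_i(S_i)\subseteq F$; in particular, as each object of $P$ comes from $\GG_1$ or $\GG_2$, every object of $P$ carries an identity that is such a composite, and the inverse of each generator is again an $F$-composite because $S_i$ generates $\GG_i$ as a groupoid. Thus it suffices to show that the subgroupoid $Q\subseteq P$ generated by $F$ is all of $P$.

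For this I would run the standard closure argument against the universal property rather than appeal to a normal form. By the previous paragraph $j_1$ and $j_2$ both corestrict to functors $j_1'\colon \GG_1\to Q$ and $j_2'\colon \GG_2\to Q$, and since the relation $j_1\circ f=j_2\circ g$ holds in $P$ with both sides landing in $Q$, the triple $(Q,j_1',j_2')$ is a cocone over $f$ and $g$. The universal property of $P$ then yields a unique functor $u\colon P\to Q$ with $u\circ j_i=j_i'$. Writing $\iota\colon Q\hookrightarrow P$ for the inclusion, the composite $\iota\circ u$ satisfies $\iota\circ u\circ j_i=\iota\circ j_i'=j_i$, so by the uniqueness clause applied to the cocone $(P,j_1,j_2)$ we get $\iota\circ u=1_P$. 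Hence $\iota$ is surjective on objects and morphisms, i.e.\ $Q=P$, and the finite set $F$ generates $P$.

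I expect the genuine work to be concentrated in the two structural facts invoked at the start, namely the identification of $Ob(P)$ with a pushout of sets and the construction of $P$ together with its universal property; these are precisely the groupoid-colimit results the paper defers to the appendix. The subtlety flagged by the preceding coequaliser example --- that $P$ may have infinitely many morphisms --- does not obstruct the argument, since finite generation concerns a generating set and not the morphism set itself, and the universal-property step is insensitive to the size of the hom-sets $P(a,b)$. An alternative, more computational route would replace the last paragraph by a normal-form description of morphisms of $P$ as alternating composites of morphisms in $j_1(\GG_1)$ and $j_2(\GG_2)$, from which $F$-generation is immediate; the obstacle there is simply establishing that normal form, which is again the deferred appendix content.
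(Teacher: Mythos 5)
Your argument is correct, and it takes a genuinely different route from the paper's. The paper proves Theorem~\ref{th:grpd_pushoutfg} via the explicit construction of the colimit: it realises the pushout as the coequaliser of the two induced maps $\GG_0\to\GG_1\sqcup\GG_2$ (Lemma~\ref{le:pushout_coeq}), builds that coequaliser concretely through the universal morphism and the word groupoid $U_\sigma(\GG_1\sqcup\GG_2)$ (Lemmas~\ref{le:construct_universalgroupoid}, \ref{le:construct_universalfunctor} and \ref{le:grpd_coeq}), and then reads off finite generation from the fact that every morphism of the quotient is represented by a word in morphisms of $\GG_1\sqcup\GG_2$, hence is a composite of images of the finitely many generators --- this is exactly the ``normal form'' route you sketch as an alternative in your closing paragraph. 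Your main argument instead extracts finite generation from the universal property alone: the subgroupoid $Q$ generated by $F=j_1(S_1)\cup j_2(S_2)$ receives corestrictions $j_1'$, $j_2'$ (this step is sound: every object of $\GG_i$ is an endpoint of a generator, and the subcategory generated by $F$ is closed under inverses because each $j_i(s)^{-1}=j_i(s^{-1})$ is again an $F$-composite), the cocone $(Q,j_1',j_2')$ induces $u\colon P\to Q$, and uniqueness applied to the cocone $(P,j_1,j_2)$ forces $\iota\circ u=1_P$, whence $Q=P$. Note that this retraction argument even makes your preliminary $Ob$-preservation fact dispensable --- surjectivity of $\iota$ on objects already follows from $\iota\circ u=1_P$ --- so all you genuinely import from the appendix is the existence of the pushout with its universal property (your side remark that finitely generated groupoids have finitely many objects is correct under the paper's definition, but likewise unneeded). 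What the paper's proof buys is the concrete word description of morphisms in the pushout, which the appendix must develop anyway simply to show that $\Grpd$ has these colimits; what your proof buys is economy and generality: the same argument, verbatim, shows that a colimit of finitely generated groupoids over any finite diagram is finitely generated, and it is insensitive to the possible infinitude of the hom-sets of $P$ illustrated by the coequaliser example preceding the theorem.
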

\begin{proof}
	By Lemma~\ref{le:pushout_coeq} we can construct the pushout of $f$ and $g$ by finding the coequaliser of $\tilde{f}\colon \GG_0 \to \GG_1 \sqcup \GG_2$ and $\tilde{g}\colon \GG_0 \to \GG_1 \sqcup \GG_2$, where the tilde indicates composition with the maps into the coproduct.
	By Lemmas~\ref{le:construct_universalfunctor} and \ref{le:construct_universalgroupoid}, equivalence classes of morphisms in the coequaliser (given in Lemma~\ref{le:grpd_coeq}) are represented by words in $\GG_1 \sqcup\GG_2$.
	By construction $\GG_1 \sqcup\GG_2$ is finitely generated if $\GG_1$ and $\GG_2$ are, as it is generated by the disjoint union of the generators of $\GG_1$ and $\GG_2$.
	Thus the coequaliser will be finitely generated.
\end{proof}
\subsection{Monoidal categories}
Here we recall the definition of monoidal and symmetric monoidal categories, and of functors preserving this extra structure.
We also give examples that we will make use of later.
A good reference for this section is \cite{turaev}.	 
\label{sec:moncats}
\defn{
	A {\em monoidal category} is a pentuple 
	\[
	(\CC,\otimes, \mathbbm{1},\alpha_{-,-,-},\lambda_{-},\rho_{-})
	\]
	consisting of a category $\CC$ and, 
	\begin{itemize}
		\item a functor $\otimes \colon \CC \times \CC \to \CC$ called the {\it monoidal product};
		\item an object $\mathbbm{1}\in Ob(\CC)$ called the {\it monoidal unit};
		\item for each triple of objects $X,Y,Z\in Ob(\CC)$, an isomorphism
		\[
		\alpha_{X,Y,Z}\colon (X\otimes Y)\otimes Z\to X\otimes (Y\otimes Z)
		\]
		called an {\it associator};
		\item for each $X\in Ob(\CC)$ an isomorphism $\lambda_X\colon \mathbbm{1}\otimes X\to X$ called a {\it left unitor};
		\item for each $X\in Ob(\CC)$ an isomorphism $\rho_X\colon X\otimes \mathbbm{1}\to X$ called a {\it right unitor}.
	\end{itemize}
	These are subject to the following constraints:
	\begin{itemize}
		\item[(M1)] for all $W,X,Y,Z\in Ob(\CC)$ the following diagram
		commutes:
		\[
		\begin{tikzcd}
			\& (W\otimes X)\otimes(Y \otimes Z)\ar[dr,"\alpha_{W,X,Y\otimes Z}"] \& \\
			((W\otimes X)\otimes Y)\otimes Z\ar[d,"\alpha_{W,X,Y}\otimes 1_Z"']\ar[ur,"\alpha_{W\otimes X,Y,Z}"] \& \& W\otimes (X\otimes (Y\otimes Z)) \\
			(W\otimes (X\otimes Y))\otimes Z \ar[rr,"\alpha_{W,X\otimes Y,Z}"']\& \& 
			W\otimes ((X\otimes Y)\otimes Z),
			\ar[u,"1_W\otimes \alpha_{X,Y,Z}"']
		\end{tikzcd}
		\]
		\item[(M2)] for all $X,Y\in Ob(\CC)$ the following diagram 
		commutes:
		\[
		\begin{tikzcd}
			(X\otimes \mathbbm{1})\otimes Y \ar[rr,"\alpha_{X,\mathbbm{1},Y}"]\ar[dr,"\rho_X\otimes 1_Y"'] \& \& X\otimes (\mathbbm{1}\otimes Y)\ar[dl,"1_X\otimes \lambda_Y"] \\
			\& X\otimes Y,
		\end{tikzcd}
		\]
		\item[(M3)]  that is for each morphism $f\colon X\to X'$ in $\CC$, the following diagram commutes:
		\[
		\begin{tikzcd}
			\mathbbm{1}\otimes X\ar[r,"1_{\mathbbm{1}}\otimes f"]\ar[d,"\lambda_{X}"'] \&[+1ex] \mathbbm{1}\otimes X'\ar[d,"\lambda_{X'}"] \\
			X\ar[r,"f"'] \& X',
		\end{tikzcd}
		\]
		\[
		\begin{tikzcd}
			X\otimes \mathbbm{1} \ar[r,"f\otimes 1_{ \mathbbm{1}}"]\ar[d,"\rho_{X}"'] \&[+1ex] X'\otimes \mathbbm{1}\ar[d,"\rho_{X'}"] \\
			X\ar[r,"f"'] \& X',
		\end{tikzcd}
		\]
		\item[(M4)]
		 for all morphisms $f\colon X\to X'$, $g\colon Y\to Y'$ and $h\colon Z\to Z'$ in $\CC$, the following diagram commutes:
		\[
		\begin{tikzcd}
			(X\otimes Y)\otimes Z\ar[r,"(f\otimes g)\otimes h"]\ar[d,"\alpha_{X,Y,Z}"'] \&[+15pt] (X'\otimes Y')\otimes Z'\ar[d,"\alpha_{X',Y',Z'}"] \\
			X\otimes (Y\otimes Z)\ar[r,"f\otimes(g\otimes h)"'] \& X'\otimes (Y'\otimes Z').
		\end{tikzcd}
		\]
	\end{itemize}
}
Conditions (M3) and (M4) are equivalent to saying that all the associators and the left and right unitors are natural isomorphisms.

\defn{An {\em initial object} in a category $\CC$ is an object $I\in Ob(\CC)$ such that for any $X\in Ob(\CC)$, there exists a unique morphism $f\colon I \to X$.}

\exa{
	In $\Topo$ the space with underlying set $\emptyset$ is an initial object.
}

\prop{\label{pr:cocartesian} \cite[Sec.VII.1]{maclane}
	If $\CC$ is a category with all coproducts and an initial object, then $\CC$ has a monoidal structure constructed as follows.
	The monoidal product is the coproduct and monoidal unit the initial object. 
	The associators are obtained by applying the universal property of the coproduct twice.
	It can be shown	these are isomorphisms by constructing inverses in the same way.
	The unitors are obtained by applying the universal property of the coproduct to the pair of $1_X\colon X\to X$ and the unique map $\mathbbm{1}\to X$. By construction, the map into the coproduct $X\to X\otimes \mathbbm{1}$ (or $X\to \mathbbm{1}\otimes X$) composed with the relevant unitor must commute with the identity, thus these are isomorphisms.
	
	This is called the {\em cocartesian} monoidal structure.}

\begin{proof}The triangle and pentagon identities can be proved by noting that objects in the same isomorphism class as a coproduct are also coproducts of the same pair of objects, and the isomorphism connecting them is unique.
	It is straightforward to check the naturality diagrams.
	\end{proof}

The following is an example of cocartesian monoidal structure. It is also straightforward to check each of the identities directly.

\prop{\label{pr:Top_mon}
	(I) There exists a bifunctor
		\ali{
	\sqcup \colon \Topo \times \Topo &\to \Topo \\
	(f\colon W\to X, g\colon Y\to Z)&\mapsto f\sqcup g\colon W\sqcup Y \to X\sqcup Z
}
	where $f\sqcup g(w,y)=(f(w),g(y))$.
	\\
	(II)
	There exists a monoidal category
	\[
	(\Topo,\;\sqcup \;,\; \emptyset \;,\; \alpha_{X,Y,Z}^{T}\colon (X\sqcup Y)\sqcup Z \to X\sqcup (Y\sqcup Z)\;, \; \lambda_X^{T}\colon \emptyset \sqcup X\to X \;,\; \rho_X^{T} \colon X\sqcup \emptyset\to X)
	\]	
	where the associators and unitors are the  obvious isomorphisms. \qed
}

The following is a special case of the monoidal structure on the category of modules over a ring, which is one of the examples in \cite[Sec.VII.1]{maclane}.

\prop{\label{pr:Vectkmon}
(I) There exists a bifunctor
\[
\Vectotimes\colon \Vectk \times \Vectk \to \Vectk 
\]
defined as follows.
Let $V$ and $W$ be vector spaces, then $V\Vectotimes W=V\times X/\sim$ where $\sim$ is the closure to an equivalence of the relations: for all $k\in \mathbbm{k}$, $v,v'\in V$ and $w,w'\in W$
\begin{itemize}
	\item $(kv,w)\sim k(v,w) \sim (w,kw)$,
	\item $(v+v',w)\sim (v,w)+(v',w)$,
	\item $(v,w+w')\sim (v,w)+(v,w')$.
\end{itemize} 
Given any $v\in V$ and $w\in W$, we use $v\Vectotimes w$ to denote the equivalence class $[(v,w)]$.
For linear maps $S\colon V\to X$ and $T\colon W\to Y$ we define
\ali{
S\Vectotimes T\colon V\Vectotimes W &\to X\Vectotimes Y \\
(v\Vectotimes w) &\mapsto S(v)\Vectotimes T(w).
}
(II) There exists a monoidal category 
\[
(\Vectk,\Vectotimes,\mathbbm{k}, \alpha_{V,W,X}^\mathbbm{k},\lambda_{V}^\mathbbm{k},\rho_{V}^\mathbbm{k}).
\]
where for all $v\in V$, $w\in W$, $x\in X$ and $k\in \mathbbm{k}$, $\alpha_{V,W,X}^\mathbbm{k}((v\Vectotimes w)\Vectotimes x)= (v\Vectotimes (w\Vectotimes x))$, $\lambda_{V}^\mathbbm{k}(v\otimes k)=kv$ and $\rho_V^\mathbbm{k}(k\otimes v)=kv$. \qed
}

\rem{Using the relations in $V\Vectotimes W$, it is not hard to show that a basis for $V\Vectotimes W$ is given by elements of the form $v\Vectotimes w$ where $v\in V$ and $w\in W$ are basis elements. }

\defn{Let $(\CC,\otimes, \mathbbm{1},\alpha_{-,-,-},\lambda_{-},\rho_{-})$ and $(\DD,\otimes', \mathbbm{1}',\alpha'_{-,-,-},\lambda'_{-},\rho'_{-})$ be monoidal categories.
	 A {\em monoidal functor} is a functor $F\colon \CC\to \DD$ endowed with a morphism $F_0\colon \mathbbm{1'}\to F(\mathbbm{1})$ in $\DD$ and with a natural transformation
 \[
 F_2=\left\{F_2(X,Y)\colon F(X)\otimes' F(Y) \to F(X\otimes Y)\right\}_{X,Y\in Ob(\CC)}
 \]
 between the functors $F\otimes' F=\otimes' \circ (F\times F)\colon \CC\times \CC \to \DD $ and $F\circ \otimes \colon \CC \times \CC \to \DD$ such that for all $X,Y,Z\in Ob(\CC)$ the following three diagrams commute.
 \[
 \begin{tikzcd}[column sep=5.5em]
 (F(X)\otimes'F(Y))\otimes' F(Z) \ar[d,"{F_2(X,Y)\otimes'1_{F(Z)}}"'] \ar[r,"{\alpha'_{F(X),F(Y),F(Z)}}"] 
 \&
 F(X)\otimes'(F(Y)\otimes'F(Z)) \ar[d,"{1_{F(X)}\otimes'F_2(Y,Z)}"] 
 \\
 F(X\otimes Y)\otimes'F(Z) 
 \ar[d,"{F_2(X\otimes Y ,Z)}"']
 \& F(X) \otimes' F(Y\otimes Z) 
 \ar[d,"{F_2(X,Y \otimes Z)}"] 
 \\
 F((X\otimes Y)\otimes Z) \ar[r,"{F(\alpha_{X,Y,Z})}"'] 
 \& F(X\otimes (Y\otimes Z))
 \end{tikzcd}
 \]
 \[
  \begin{tikzcd}
  \mathbbm{1}'\otimes' F(X)\ar[r,"{\lambda'_{F(X)}}"]\ar[d,"{F_0\otimes' 1_{F(X)}}"'] \& F(X) \\
  F(\mathbbm{1})\otimes'F(X)\ar[r,"{F_2(\mathbbm{1},X)}"']\& F(\mathbbm{1}\otimes X)\ar[u,"{F(\lambda_X)}"']
  \end{tikzcd}
 \]
 \[
 \begin{tikzcd}
 F(X)\otimes' \mathbbm{1} \ar[d,"{1_{F(X)}\otimes' F_0}"']\ar[r,"{\rho_{F_{(X)}}'}"]\& F(X)\\
 F(X)\otimes F(\mathbbm{1})\ar[r,"{F_2(X,\mathbbm{1})}"'] \& F(X\otimes \mathbbm{1})\ar[u,"{F(\rho_{X})}"']
 \end{tikzcd}
 \]
}
\defn{A {\em strong monoidal functor} is a monoidal functor $F$ where $F_0$ and all maps in $F_2$ are isomorphisms.}

It is straightforward to prove that functor composition extends to a composition of monoidal functors.
\prop{There is an associative composition of monoidal functors which sends a pair $F\colon \CC \to \DD$ and $	G\colon \DD \to \mathcal{E}$ to the monoidal functor $G\circ F\colon \CC \to \mathcal{E}$ with
\ali{
(G\circ F)_0=G(F_0)G_0 \;\;\;\text{ and } \;\;\;  (G\circ F)_2(X,Y)=G(F_2(X,Y))\circ G_2(F(X),F(Y))
}
for all $X,Y\in Ob(\CC)$. \qed
}

\defn{A {\em braided monoidal category} is a six-tuple 
	\[
	(\CC ,\otimes, \mathbbm{1},\alpha_{-,-,-},\lambda_{-},\rho_{-},\beta_{-,-})
	\]
consisting of a monoidal category $(\CC ,\otimes, \mathbbm{1},\alpha_{-,-,-},\lambda_{-},\rho_{-})$ and, 
for each pair $X,Y\in Ob(\CC)$, a family of natural isomorphisms 
\[
\beta_{X,Y}\colon X\otimes Y\to Y \otimes X
\]
 such that
\ali{
\beta_{X,Y\otimes Z}& =(1_Y\otimes \beta_{X,Z})*_\CC (\beta_{X,Y}\otimes 1_{Z}) \\
\beta_{X\otimes Y,Z}&=(\beta_{X,Z}\otimes 1_Y)*_\CC (1_X\otimes \beta_{Y,Z}) .
}
Naturality means that for any morphisms $f\colon X\to X'$ and $g\colon Y\to Y'$ the following diagram commutes.
\[
\begin{tikzcd}
X\otimes Y \ar[r,"f\otimes g"] \ar[d,"\beta_{X,Y}"'] \& X'\otimes Y'\ar[d,"\beta_{X',Y'}"] \\
Y\otimes X \ar[r,"g\otimes f"'] \& Y'\otimes X'
\end{tikzcd}
\]
Such a family of natural isomorphisms is called a \textit{braiding} on $(\CC ,\otimes, \mathbbm{1},\alpha_{-,-,-},\lambda_{-},\rho_{-})$.\\
A braiding $\beta$ on a monoidal category $(\CC,\otimes, \mathbbm{1},\alpha_{-,-,-},\lambda_{-},\rho_{-})$ is called {\em symmetric} if, for all pairs $X,Y \in Ob(\CC)$, we have
\[
\beta_{Y,X}*_\CC\beta_{X,Y}= 1_{X\otimes Y}\colon X\otimes Y \to X\otimes Y.
\] 
A {\em symmetric monoidal category} is a braided monoidal category $(\CC ,\otimes, \mathbbm{1},\alpha_{-,-,-},\lambda_{-},\rho_{-},\beta_{-,-})$ such that $\beta$ is symmetric.
}

When speaking about (braided) monoidal categories we may drop entries of the tuple corresponding to the natural isomorphisms, or even refer to a braided monoidal category as just $\CC$ where $\CC$ is the notation of the underlying category.
We note however, that there will often be many (braided) monoidal categories with the same underlying category, monoidal product and monoidal unit.

\prop{\label{pr:Top_braid}\label{pr:Vect_braid}
	It is straightforward to check that the monoidal structures on $\Topo$ and $\Vect_{\mathbbm{k}}$ given in Propositions~\ref{pr:Top_mon} and \ref{pr:Vectkmon} extend to braided monoidal categories with the braidings given by the morphisms $\beta_{X,Y}\colon X\sqcup Y\to Y\sqcup X$, $(x,y)\mapsto (y,x)$ in $\Topo$ and $\beta_{V,W}\colon V\otimes_\mathbbm{k} W\to W\otimes_\mathbbm{k} V$, $v\otimes_\mathbbm{k} w\mapsto w\otimes_\mathbbm{k} v$ in $\Vect_{\mathbbm{k}}$.\qed
}

\defn{A {\em monoidal subcategory} of a monoidal category $(\CC,\otimes, \mathbbm{1},\alpha_{-,-,-},\lambda_{-},\rho_{-})$ is a pentuple $(\DD,\otimes, \mathbbm{1},\alpha_{-,-,-},\lambda_{-},\rho_{-})$ such that 
	\begin{itemize}
		\item $\DD$ is a subcategory of $\CC$,
		\item $\otimes$ restricts to a closed composition on $\DD$,
		\item  $\mathbbm{1}\in \DD$, and
		\item for all $X,Y,Z\in Ob(\DD)$ we have $\alpha_{X,Y,Z},\lambda_{X},\rho_{X}$ are in $\DD$.
\end{itemize}
	\sloppypar{A {\em braided monoidal subcategory} of a braided monoidal category is defined analogously with the extra condition that 
 for all $X,Y\in Ob(\DD)$, $\beta_{X,Y}\in \DD$.}}

\defn{A {\em braided monoidal functor} between braided categories $(\CC,\beta_{-,-})$ and $(\CC',\beta'_{-,-})$ is a monoidal functor $F\colon \CC\to \CC'$ such that for all $X,Y\in Ob(\CC)$,
	\[
	F_2(Y,X)*_{\CC'}\beta'_{F(X),F(Y)}=F(\beta_{X,Y})*_{\CC'} F_2(X,Y).
	\] 
	A {\em symmetric monoidal functor} is a braided monoidal functor between symmetric monoidal categories. }
\subsection{Cofibrations in \texorpdfstring{$\Topo$}{Topo} 
}
\label{sec:cofibs+vk}

In Section~\ref{sec:homcobs} we define a magmoid whose morphisms are {\it cofibrant cospans}, and quotient by a congruence defined in terms of \textit{cofibre homotopy equivalence}.
 Then, in Section~\ref{sec:tqft}, our TQFT construction  will rely on a version of the van Kampen Theorem for cofibrations. 
Here we recall the results we will need. 
 More detail on cofibrations can be found in \cite[Ch.~5]{dieck} or \cite[Ch.~7]{brownt+g}, definitions and results on cofibre homotopy equivalence follow \cite[Ch.~6]{may}, and the version of the van Kampen theorem reproduced here can be found in \cite[Thm.~9.1.2]{brownt+g}.

\subsubsection{Cofibrations}

A cofibration can be thought of as a homotopically well behaved embedding. 
Specifically, an embedding $i\colon A \to X$ is a cofibration if we have both that
there is an open neighbourhood of the image which strongly deformation retracts onto $i(A)$, and that this retraction can be extended to a homotopy on the whole of $X$.
In other words there is an open neighbourhood of the image which, up to a homotopy of $X$, is equivalent to the image. This characterisation of a cofibration is not immediately obvious from the below definition but we will see it is equivalent in Theorem~\ref{th:hom_for_closed_cofibred_pair}.

One can think of the aforementioned characterisation of a cofibration as a version of the Collar Neighbourhood Theorem  of a boundary of a manifold (\cite{brown62}).
To construct cobordism categories (see \cite{milnor}) the collar neighbourhood is required to prove the identity axiom. 
The cofibration condition we impose will play a similar role in our category construction, we will see this in Theorem~\ref{th:CofCsp}.

\medskip
The following definition is from \cite[Sec.~5.1]{dieck}.

\begin{definition}\label{de:homotopy_extension}
	Let $A$ and $X$ be spaces.
	A map $i\colon A \to X$ has the \emph{homotopy extension property}, with respect to the space $Y$, if for any pair of a homotopy $h\colon A \times \II \to Y$ and a map $f\colon X \to Y$ satisfying $(f\circ i)(a)=h(a,0)$, there exists a homotopy $H:X \times \II \to Y$, extending $h$, with $H(x,0)=f(x)$ and $H(i(a),t)=h(a,t)$.
	This is illustrated by the following diagram.
	\[
	\begin{tikzcd}[cramped,column sep=small, ampersand replacement=\&] 
		\& X\ar[dr,"\iota_0^X"']\ar[drr, bend left=20,"f" ] \&[-0.7em]  \&[1.2em] \\
		A \ar[ur,"i"] \ar[dr,"\iota_0^A"'] \& \&  X\times \II\ar[r,dashrightarrow,"\exists H"] \&  Y \\
		\& A\times \II \ar[ur,"i\times \id_\II"]\ar[urr,bend right=20,"h"']\& \&
	\end{tikzcd}
	\]
	(Where for any space $X$, $\iota^X_0\colon X \to X \times \II$ is the map $x\mapsto (x,0)$.)
\end{definition}

\begin{definition}Let $A$ and $X$ be spaces.
We say that $i\colon A \to X$ is a \emph{cofibration} if $i$ satisfies the homotopy extension property for all spaces $Y$. A \emph{closed cofibration} is a cofibration with image a closed set. 
If $A\subseteq X$ is a subspace and the inclusion $\iota \colon A\to X$ is a cofibration, we say $(X,A)$ is a \emph{cofibred pair}.
\end{definition}

The following useful well known results are completely straightforward to prove. 

 \begin{lemma}\label{le:cofib_comp}
The composition of two cofibrations is a cofibration. \qed
\end{lemma}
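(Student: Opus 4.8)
The plan is to verify the homotopy extension property (HEP) directly for the composite, by chaining the two extensions. Suppose $i\colon A\to X$ and $j\colon X\to Z$ are cofibrations; I want to show $j\circ i\colon A\to Z$ is a cofibration. So let $Y$ be an arbitrary space, and suppose I am given a homotopy $h\colon A\times\II\to Y$ and a map $f\colon Z\to Y$ with $(f\circ(j\circ i))(a)=h(a,0)$ for all $a\in A$. I must produce a homotopy $H\colon Z\times\II\to Y$ extending $h$ along $j\circ i$, with $H(z,0)=f(z)$ and $H((j\circ i)(a),t)=h(a,t)$.

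First I would use the cofibration $j\colon X\to Z$ together with the restriction of $f$ to get a homotopy on $X$. Precisely, set $f'=f\circ j\colon X\to Y$; I would like to feed this a homotopy $h'\colon X\times\II\to Y$ starting at $f'$ and agreeing with $h$ on $A$, then extend across $j$. But the natural intermediate homotopy on $X$ is not given to me directly, so the cleaner order is to apply the cofibration property of $i$ first. Applying HEP for $i\colon A\to X$ to the map $f\circ j\colon X\to Y$ and the homotopy $h$ (noting $(f\circ j\circ i)(a)=h(a,0)$), I obtain a homotopy $G\colon X\times\II\to Y$ with $G(x,0)=(f\circ j)(x)$ and $G(i(a),t)=h(a,t)$. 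Now I apply HEP for $j\colon X\to Z$ to the original map $f\colon Z\to Y$ and the homotopy $G$ (which satisfies $(f)(j(x))=G(x,0)$ by construction), obtaining $H\colon Z\times\II\to Y$ with $H(z,0)=f(z)$ and $H(j(x),t)=G(x,t)$.

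It then remains to check $H$ has the required properties as an extension along $j\circ i$. The condition $H(z,0)=f(z)$ holds by the second application. For the boundary condition, I compute $H((j\circ i)(a),t)=H(j(i(a)),t)=G(i(a),t)=h(a,t)$, using first the defining property of $H$ over $j$ and then the defining property of $G$ over $i$. This is exactly the needed equality, so $H$ witnesses the HEP for $j\circ i$ with respect to $Y$. Since $Y$ was arbitrary, $j\circ i$ is a cofibration.

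There is no real obstacle here — the argument is a routine two-step chaining of homotopy extensions, which is why the statement is flagged as straightforward. The only point requiring the slightest care is the ordering of the two applications (extend over $i$ first, then over $j$) and tracking that the compatibility hypothesis $(f\circ j\circ i)(a)=h(a,0)$ feeds correctly into the first application; getting the order backwards would leave a mismatched basepoint condition. If one wants the closed-cofibration variant as well, I would additionally note that the image $(j\circ i)(A)=j(i(A))$ is closed: $i(A)$ is closed in $X$ and, since a closed cofibration $j$ has closed image and (being an embedding) is a closed map onto its image, $j(i(A))$ is closed in $j(X)$ hence in $Z$. But the lemma as stated concerns only cofibrations, so the basic chaining argument suffices.
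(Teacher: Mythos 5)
Your proof is correct and is exactly the routine two-step chaining of the homotopy extension property (extend over $i$ first, then over $j$) that the paper has in mind when it marks Lemma~\ref{le:cofib_comp} as completely straightforward and omits the argument. The bookkeeping of the compatibility conditions $(f\circ j\circ i)(a)=h(a,0)$ and $(f\circ j)(x)=G(x,0)$ is handled properly, and your closing remark on closedness of $(j\circ i)(A)$ is a correct (if not required) bonus.
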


\lemm{\label{le:homeo_cofib}
Every homeomorphism is a cofibration. \qed
}

\begin{lemma} \label{cofib_copr}
	Let $X$ and $Y$ be topological spaces. The map $\iota_1 \colon X \to X\sqcup Y$, $x\mapsto (x,1)$ is a cofibration. \qed
\end{lemma}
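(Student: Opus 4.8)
The plan is to verify the homotopy extension property directly by exhibiting the required extension. To avoid a clash with the names $X,Y$ appearing in the coproduct, fix an arbitrary test space $Z$; I must show that for every homotopy $h\colon X\times\II\to Z$ and every map $f\colon X\sqcup Y\to Z$ with $(f\circ\iota_1)(x)=h(x,0)$, there is a homotopy $H\colon (X\sqcup Y)\times\II\to Z$ with $H(w,0)=f(w)$ and $H(\iota_1(x),t)=h(x,t)$.

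The one substantive observation is that $(X\sqcup Y)\times\II$ is itself a coproduct, namely of $X\times\II$ and $Y\times\II$, with structure maps $\iota_1\times\id_\II$ and $\iota_2\times\id_\II$. This follows because the product functor $-\times\II$ is a left adjoint: $\II=[0,1]$ is compact Hausdorff, hence locally compact Hausdorff, so Lemma~\ref{le:producthom} applies, and left adjoints preserve colimits by Theorem~\ref{th:lapcl}; since the coproduct is a colimit, $-\times\II$ carries the coproduct $X\sqcup Y$ to the coproduct $(X\times\II)\sqcup(Y\times\II)$.

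Granting this, I define $H$ using the universal property of the coproduct $(X\times\II)\sqcup(Y\times\II)$: on the $X$-part I use the given homotopy $h$, and on the $Y$-part I use the constant homotopy $(y,t)\mapsto f(\iota_2(y))$ (which is continuous, being $f\circ\iota_2$ precomposed with the projection to $Y$). Concretely,
\[
H(w,t)=
\begin{cases}
h(x,t) & \text{if } w=\iota_1(x),\\
f(\iota_2(y)) & \text{if } w=\iota_2(y),
\end{cases}
\]
which is continuous by the universal property. It remains to check the two boundary conditions: $H(\iota_1(x),t)=h(x,t)$ holds by definition, so $H$ extends $h$; and $H(w,0)=f(w)$ holds on the $X$-part by the compatibility hypothesis $h(x,0)=(f\circ\iota_1)(x)$ and on the $Y$-part by definition, since $H(\iota_2(y),0)=f(\iota_2(y))$.

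There is no real obstacle here; the only point that is not entirely formal is the identification of $(X\sqcup Y)\times\II$ as a coproduct in $\Topo$, and even that is immediate from the adjunction already recorded in the excerpt. (One could alternatively verify continuity of $H$ by hand against the coproduct topology $\tau_{X\sqcup Y}$, but routing through left-exactness of the adjoint is cleaner.) This confirms that $\iota_1$ satisfies the homotopy extension property for every $Z$, and is therefore a cofibration.
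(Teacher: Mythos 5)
Your proof is correct. The paper actually states this lemma without proof (it is flagged as one of several ``completely straightforward'' facts and closed with a \qed), so there is nothing to diverge from; your argument is precisely the one the paper implicitly has in mind, and indeed it reuses the paper's own standard device --- identifying $(X\sqcup Y)\times \II$ with $(X\times\II)\sqcup(Y\times\II)$ via Lemma~\ref{le:producthom} and Theorem~\ref{th:lapcl} and then invoking the universal property of the coproduct --- which appears verbatim in the proofs of Proposition~\ref{pr:identity_ccc} and Lemma~\ref{le:comp_ccc}. One terminological nit: a left adjoint preserving colimits is cocontinuity, not ``left-exactness'' (which usually means preservation of finite limits); the parenthetical should say the former.
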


\begin{proposition}\label{pr:cofibdpair}
	Let $A$ and $X$ be spaces and $i\colon A \to X$ a map which is a homeomorphism onto its image. Then $i$ is a cofibration if and only if $(X,i(A))$ is a cofibred pair.
\end{proposition}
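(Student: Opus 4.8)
The plan is to exploit the factorisation of $i$ through its image and reduce everything to the two elementary lemmas already established, namely Lemma~\ref{le:cofib_comp} (the composition of cofibrations is a cofibration) and Lemma~\ref{le:homeo_cofib} (every homeomorphism is a cofibration). Since $i\colon A \to X$ is a homeomorphism onto its image, we may write $i = \iota \circ \bar{\imath}$, where $\bar{\imath}\colon A \to i(A)$ is the corestriction of $i$ (a homeomorphism by hypothesis) and $\iota\colon i(A) \to X$ is the subspace inclusion. The entire statement then follows by chasing cofibrations along this factorisation in both directions.

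For the ``if'' direction, I would assume $(X, i(A))$ is a cofibred pair, so that $\iota\colon i(A)\to X$ is a cofibration by definition. Since $\bar{\imath}$ is a homeomorphism, Lemma~\ref{le:homeo_cofib} gives that $\bar{\imath}$ is a cofibration, and then Lemma~\ref{le:cofib_comp} applied to $i = \iota\circ\bar{\imath}$ shows $i$ is a cofibration.

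For the ``only if'' direction, I would assume $i\colon A\to X$ is a cofibration. The corestriction $\bar{\imath}\colon A\to i(A)$ is a homeomorphism, so its inverse $\bar{\imath}^{-1}\colon i(A)\to A$ is also a homeomorphism, hence a cofibration by Lemma~\ref{le:homeo_cofib}. Writing $\iota = i\circ\bar{\imath}^{-1}$ and invoking Lemma~\ref{le:cofib_comp} once more shows $\iota$ is a cofibration, i.e.\ $(X,i(A))$ is a cofibred pair.

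I expect no genuine obstacle here: the result is essentially a formal consequence of closure of cofibrations under composition together with the fact that homeomorphisms are cofibrations, and the only thing requiring any care is being explicit about the corestriction $\bar{\imath}$ and that the subspace topology on $i(A)$ is the one making $\bar{\imath}$ a homeomorphism, so that $i(A)$ is a genuine subspace and $\iota$ is the inclusion referenced in the definition of a cofibred pair.
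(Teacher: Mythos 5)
Your proof is correct, and half of it coincides with the paper's, but your ``only if'' direction takes a genuinely different route, so a comparison is worthwhile. Your ``if'' direction is exactly the paper's argument: factor $i=\iota\circ\bar{\imath}$ through the image and combine Lemma~\ref{le:homeo_cofib} with Lemma~\ref{le:cofib_comp}. For the ``only if'' direction, however, the paper works with the homotopy extension property directly: given any space $K$, a map $f\colon X\to K$ and a homotopy $h\colon i(A)\times\II\to K$ agreeing with $f$ at time $0$, it applies the HEP of $i$ to the pair $f$ and $h\circ(i\times\id)$, and then observes that the resulting extension $H\colon X\times\II\to K$ already extends $h$ itself, since every point of $i(A)$ has the form $i(a)$. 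You instead run the factorisation in reverse, writing $\iota=i\circ\bar{\imath}^{-1}$ and invoking Lemmas~\ref{le:homeo_cofib} and \ref{le:cofib_comp} once more; this is valid, because $\bar{\imath}^{-1}\colon i(A)\to A$ is a homeomorphism precisely by the hypothesis that $i$ is a homeomorphism onto its image with the subspace topology --- a point you rightly flag, since the definition of a cofibred pair concerns the subspace inclusion $\iota$. Your version makes the whole proposition a purely formal consequence of closure under composition plus homeomorphisms being cofibrations, and it is pleasingly symmetric between the two directions; the paper's version is a hands-on chase through Definition~\ref{de:homotopy_extension}, which is slightly longer but makes explicit how the HEP of $i$ transfers to $\iota$ without any further input. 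There is no circularity in your shortcut: the HEP is a property of arbitrary maps, not of subspace inclusions, which is exactly how the paper defines cofibrations, so both arguments are equally rigorous.
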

\begin{proof} 
	Suppose $i\colon A \to X$ is a cofibration and $K$ is any space.
	Consider a map $f\colon X\to K$ and a homotopy $h\colon i(A)\times \II\to K$ such that for all $x\in i(A)$,
	$h(x,0)=f(x)$.
	Applying the homotopy extension property to $f$ and $h\circ (i\times \id)$ gives a homotopy $H\colon X\times \II \to K$. 
	This same $H$ is also a homotopy extending $h$, and hence $(X,i(A))$ is a cofibred pair.

Suppose $i\colon A\to X$ a map which is a homeomorphism onto its image and  $(X,i(A))$ is a cofibred pair.
From Lemma~\ref{le:homeo_cofib} we have that the homeomorphism $ A\to i(A)$, $a\mapsto i(a)$ is a cofibration and 
the inclusion $\iota\colon i(A)\to X$ is a cofibration by assumption.
Hence $i\colon A\to X$ is a composition of cofibrations, and so a cofibration by Lemma~\ref{le:cofib_comp}.
\end{proof}	

\begin{theorem}\label{th:homeo_onto_im}
	(See for example \cite[Th.~1]{strom1})
	Let $A$ and $X$ be spaces. If $i\colon A \to X$ is a cofibration then $i$ is a homeomorphism onto $i(A)$ with the subspace topology (i.e. $i$ is an embedding).
	\qed
\end{theorem}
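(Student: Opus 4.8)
Theorem~\ref{th:homeo_onto_im} asserts that if $i\colon A \to X$ is a cofibration, then $i$ is a homeomorphism onto its image $i(A)$ equipped with the subspace topology. Since continuity and surjectivity onto $i(A)$ are immediate, the content is exactly that $i$ is injective and that the inverse $i(A)\to A$ is continuous, i.e. $i$ is a topological embedding.

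**The approach.** The cleanest route is via the standard \emph{Str\o m characterisation} of cofibrations in terms of a retraction. The plan is to exploit the homotopy extension property (Definition~\ref{de:homotopy_extension}) with a judiciously chosen target space $Y$ built out of the mapping cylinder. Recall that the mapping cylinder of $i$ is $Z_i = (A\times\II)\sqcup X$ modulo the identification $(a,0)\sim i(a)$. Applying the homotopy extension property to the test space $Y = Z_i$, with $f\colon X\to Z_i$ the canonical inclusion of $X$ into $Z_i$ and $h\colon A\times\II\to Z_i$ the canonical map sending $(a,t)$ to the class of $(a,t)$, produces a retraction-like map $H\colon X\times\II \to Z_i$. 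The existence of such a map $H$ is precisely equivalent (this is the content of \cite[Th.~1]{strom1}, cited in the statement) to $i(A)$ being a \emph{strong neighbourhood deformation retract} inside $X$: there is a map $\varphi\colon X\to\II$ and a homotopy $D\colon X\times\II\to X$ with $D(-,0)=\id_X$, $D(i(a),t)=i(a)$ for all $t$, and $D(x,1)\in i(A)$ whenever $\varphi(x)<1$.

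**Key steps in order.** First I would set up the mapping-cylinder test space and feed the two canonical maps into the homotopy extension property to extract the Str\o m retraction data $(\varphi, D, H)$. Second, I would use the retraction $r = H(-,1)\colon X \to i(A)$ (or its analogue) to recover a one-sided inverse: since $H$ fixes $i(A)$ pointwise throughout the homotopy, the composite $r\circ i$ agrees with $i$ on $A$, which already forces $i$ to be injective (two points of $A$ with the same image would be indistinguishable to $r\circ i$, contradicting that $r$ recovers the cylinder coordinate). Third, and this is the crux, I would verify that the inverse map $i(A)\to A$ is continuous by factoring it through the continuous retraction onto the $A\times\{1\}$ end of the cylinder composed with the projection $A\times\II\to A$; continuity of $i^{-1}$ on the subspace $i(A)$ then follows from continuity of $H$ together with the subspace-topology universal property.

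**The main obstacle.** The genuine difficulty is the continuity of the inverse, i.e. the embedding claim as opposed to mere injectivity. Injectivity is a soft consequence of having any retraction; but to see that the subspace topology on $i(A)$ is exactly the pushed-forward topology from $A$, one must produce a continuous left inverse defined on all of $X$ (or at least on a neighbourhood of $i(A)$) and then restrict. This is where the full strength of the homotopy extension property — extending a homotopy to all of $X$, not merely near the image — is essential, and where the mapping-cylinder choice of target pays off, since the cylinder's topology encodes the $A$-coordinate that the retraction must read off continuously. I would expect the formal write-up to lean on the cited Str\o m result \cite[Th.~1]{strom1} to package the retraction data, rather than re-deriving it from scratch, since the paper explicitly flags this as a known fact with ``See for example''.
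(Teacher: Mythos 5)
The paper itself gives no proof of this statement---it is quoted from \cite[Th.~1]{strom1} with a \textit{qed}---so the relevant comparison is with the standard proof in the literature, and your mapping-cylinder strategy is indeed that standard proof. However, several of your middle steps fail as written. After applying the homotopy extension property (Definition~\ref{de:homotopy_extension}) to $f\colon X\to Z_i$ (inclusion of $X$) and $h\colon A\times \II\to Z_i$, $h(a,t)=[(a,t)]$, the resulting $H\colon X\times\II\to Z_i$ does \emph{not} fix $i(A)$ pointwise, and $H(-,1)$ is \emph{not} a retraction $X\to i(A)$: by construction $H(i(a),t)=[(a,t)]$ slides the image of $A$ up the cylinder, and for a general $x$ the point $H(x,1)$ is just some point of $Z_i$, with no reason to lie in (the image of) $i(A)$. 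You appear to have conflated $H$ with the deformation $D$ of the NDR data, and as a result your injectivity argument (``two points of $A$ with the same image would be indistinguishable to $r\circ i$'') does not parse. The correct and much shorter argument is: set $k=H(-,1)\colon X\to Z_i$; then $k\circ i=e$, where $e\colon A\to Z_i$, $e(a)=[(a,1)]$, is injective, hence $i$ is injective; and $i^{-1}=e^{-1}\circ k|_{i(A)}$ is continuous \emph{provided} $e$ is an embedding.

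That proviso is the genuinely missing verification in your write-up: you assert that ``the cylinder's topology encodes the $A$-coordinate,'' but this is precisely what must be proved, namely that $a\mapsto[(a,1)]$ is a homeomorphism onto its image in $Z_i$. It holds because $A\times\{1\}$ is a closed, saturated subset of $(A\times\II)\sqcup X$, the restriction of a quotient map to a saturated closed set is again a quotient map, and an injective quotient map is a homeomorphism onto its image. Separately, your identification of \cite[Th.~1]{strom1} with the strong-NDR equivalence is a misattribution---that is \cite[Th.~2]{strom1} for closed pairs (the paper's Theorem~\ref{th:hom_for_closed_cofibred_pair}), or \cite[Lem.~4]{strom2} without closedness---and routing the proof through it is circular in any case: the NDR data is formulated for a subspace pair $(X,i(A))$ and all of its maps take values in $X$, so it can never detect whether $A\to i(A)$ is a homeomorphism. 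Only the mapping cylinder, which remembers the topology of $A$, can do that. Dropping the NDR detour and inserting the two corrections above turns your sketch into a complete proof.
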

To prove specific pairs are cofibred we have the following two classical results.
 \begin{proposition} \label{retract}
(See for example \cite[Prop.~5.1.2]{dieck}) Let $A$ be a closed subspace of $X$. The pair $(X,A)$ is cofibred if and only if $X\times \{0\}\, \cup \, A \times \II$ is a retract of $X \times \II$.\\
(Recall $N\subset M$ is a retract of $M$ if there is a continuous map $r\colon M\to N$ such that $r(n)=(n)$ for all $n\in N$.) \qed
 \end{proposition}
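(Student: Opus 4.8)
The plan is to recognise the subspace $W := X\times\{0\}\,\cup\, A\times\II$ of $X\times\II$ as the universal test object for the homotopy extension property of Definition~\ref{de:homotopy_extension}. The key observation is that, because $A$ is closed in $X$, specifying a continuous map $W\to Y$ is the same as specifying a map $f\colon X\to Y$ together with a homotopy $h\colon A\times\II\to Y$ satisfying the compatibility $(f\circ\iota)(a)=h(a,0)$, where $\iota\colon A\to X$ is the inclusion. Granting this, both implications drop out by plugging the correct data into the definitions.

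For the forward implication, suppose $(X,A)$ is cofibred, i.e. $\iota$ satisfies the homotopy extension property for every space. I would test it against $Y=W$ itself, with $f\colon X\to W$ the map $x\mapsto(x,0)$ and $h\colon A\times\II\to W$ the inclusion $(a,t)\mapsto(a,t)$; these agree on $A\times\{0\}$, so the property yields a homotopy $H\colon X\times\II\to W$ with $H\circ\iota_0^X = f$ and $H(\iota(a),t)=h(a,t)=(a,t)$. By construction $H$ fixes every point of $W$, so it is the sought retraction $X\times\II\to W$.

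For the converse, suppose $r\colon X\times\II\to W$ is a retraction, and let $f\colon X\to Y$ and $h\colon A\times\II\to Y$ be any data with $(f\circ\iota)(a)=h(a,0)$. Using the observation above I would glue $f$ and $h$ into a single continuous map $g\colon W\to Y$ and then define $H:=g\circ r\colon X\times\II\to Y$. Since $r$ is a retraction it is the identity on $W$, giving $H(x,0)=g(x,0)=f(x)$ and $H(\iota(a),t)=g(a,t)=h(a,t)$; thus $H$ is the required extension and $\iota$ is a cofibration.

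The only step that is not purely formal, and the precise place where closedness of $A$ enters, is the continuity of the glued map $g$ in the converse. Here I would invoke the pasting lemma for closed sets: since $A$ is closed in $X$, both $X\times\{0\}$ and $A\times\II$ are closed in $X\times\II$ and hence in $W$, their union is all of $W$, the restrictions $g|_{X\times\{0\}}=f$ and $g|_{A\times\II}=h$ are continuous, and they agree on the overlap $A\times\{0\}$ by the compatibility hypothesis. This is the only subtlety I anticipate; the remaining verifications are immediate from the retraction identity.
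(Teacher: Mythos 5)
Your proof is correct, and it is exactly the canonical argument: the paper itself gives no proof here, citing tom Dieck (Prop.~5.1.2), and your argument is essentially the one found there --- testing the homotopy extension property against $W = X\times\{0\}\cup A\times\II$ itself to produce the retraction, and conversely composing a retraction with the glued map $\langle f,h\rangle\colon W\to Y$ to verify the HEP for arbitrary $Y$. You also correctly isolate the one genuine subtlety, namely that closedness of $A$ is needed only in the converse, so that the pasting lemma for closed sets makes the glued map $g\colon W\to Y$ continuous (the forward direction holds without it).
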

 The following lemma characterises cofibrations as inclusions such that there is a neighbourhood of the image that deformation retracts onto, and hence is homotopy equivalent to the image.
 This is reminiscent of the Collar Neighbourhood Theorem for manifolds.

\begin{theorem}\label{th:hom_for_closed_cofibred_pair} \cite[Th.~2]{strom1} 
	Let $A$ be a closed subspace of space $X$. Then $(X,A)$ is a cofibred pair if and only if there exists 
	\begin{enumerate}[label={\textit{\roman*})}]
		\item a neighbourhood 
		 $U\subseteq X$ of $A$ (not necessarily open) and a homotopy $H \colon U\times \II \to X$  such that for all $t\in \II$, $x\in U$ and $a\in A$, we have $H(x,0)=x$, $H(a,t)=a$ and $H(x,1)\in A$, and
		\item a map $\phi\colon X \to \II $ such that $A=\phi^{-1}(0)$ and $\phi(x)=1$ for all $x\in X-U$. \qed
	\end{enumerate}
\end{theorem}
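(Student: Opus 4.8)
The theorem is Ström's characterization: for $A$ a closed subspace of $X$, the pair $(X,A)$ is cofibred if and only if there exists a neighbourhood $U$ of $A$ with a deformation-type homotopy $H$ retracting $U$ into $A$ rel $A$, together with a "size function" $\phi\colon X\to\II$ with $A=\phi^{-1}(0)$ and $\phi\equiv 1$ off $U$. Let me think about how I'd prove each direction.

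Let me set up what each direction requires and what tools are available.

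For the **($\Leftarrow$) direction** (the data $U,H,\phi$ imply cofibred), I'd use Proposition~\ref{retract}: it suffices to construct a retraction $r\colon X\times\II \to (X\times\{0\})\cup(A\times\II)$. The idea is to use $\phi$ to control "how far into $\II$" a point can travel, and $H$ to push points of $U$ toward $A$. The intended construction is the standard Ström retraction: for $(x,t)\in X\times\II$, compare $t$ against $\phi(x)$. Roughly, if $t \le \phi(x)$ one stays in the $X\times\{0\}$ portion (projecting via the first coordinate appropriately), and if $t > \phi(x)$ one uses $H$ to send $x$ partway into $A$ while recording the excess in the $\II$-coordinate. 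The delicate point is continuity at the interface $t=\phi(x)$ and at $\phi(x)=0$ (i.e.\ on $A$), which is exactly why $A=\phi^{-1}(0)$ and $\phi\equiv1$ off $U$ are imposed — these guarantee the two cases agree and glue continuously, and that points outside $U$ (where $H$ is undefined) never invoke $H$.

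For the **($\Rightarrow$) direction** (cofibred implies the data exist), I would run Proposition~\ref{retract} in reverse: cofibredness gives a retraction $r\colon X\times\II\to (X\times\{0\})\cup(A\times\II)$. Write $r(x,t)=(r_1(x,t),r_2(x,t))$ with $r_1\in X$ and $r_2\in\II$. Then I would extract $\phi$ and $H$ from $r$. The natural choice is to define
\[
\phi(x)=\sup_{t\in\II}\bigl(t-r_2(x,t)\bigr),
\]
which measures the "defect" of $r$ from the identity in the $\II$-direction; one checks $\phi(x)=0$ exactly when $x\in A$ (since on $A$ the retraction fixes $A\times\II$). The neighbourhood is then $U=\{x : \phi(x)<1\}$, and the homotopy $H$ on $U$ is built from the first coordinate $r_1$, reparametrized using $\phi$ so that each point of $U$ is pushed into $A$ by time $1$ while $A$ is held fixed. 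Verifying $\phi$ is continuous (it is an upper envelope over a compact parameter, so I'd invoke continuity of $r$ and compactness of $\II$), that $H$ lands in $A$ at $t=1$, and that $H$ fixes $A$ throughout, are the routine but necessary checks.

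The **main obstacle** I anticipate is the continuity bookkeeping in both directions at the gluing loci — specifically, continuity of the piecewise-defined retraction in ($\Leftarrow$) along $\{t=\phi(x)\}$ and near $A$ where $\phi\to 0$, and continuity of $\phi$ (as a supremum) together with continuity of the reparametrized $H$ near $\partial U$ in ($\Rightarrow$). Since this is a classical result of Ström, my actual plan is to cite \cite{strom1} and Proposition~\ref{retract}, sketching the retraction-to-$(\phi,H)$ correspondence rather than grinding through the $\varepsilon$-level continuity estimates; the conceptual content is entirely that a cofibration retraction and a Ström "numerable/structure" datum $(U,H,\phi)$ are interchangeable packagings of the same geometry.
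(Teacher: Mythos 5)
The paper offers no proof of this statement at all: it is imported verbatim from Str\o{}m, cited as \cite[Th.~2]{strom1}, and closed with an immediate \qed{}. So your bottom-line plan --- cite \cite{strom1}, with Proposition~\ref{retract} as the bridge between cofibredness and retractions --- coincides exactly with what the paper does, and at that level there is nothing to object to.

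Since you presented the sketch as the conceptual content, two cautions if you ever expand it. The ($\Rightarrow$) direction is essentially Str\o{}m's actual argument and is sound: with $\phi(x)=\sup_{t\in\II}\bigl(t-r_2(x,t)\bigr)$, $U=\phi^{-1}([0,1))$ and $H=r_1|_{U\times\II}$ (no reparametrisation is needed --- check directly that $\phi(x)<1$ forces $r_2(x,1)>0$, hence $r_1(x,1)\in A$), everything works; but note your parenthetical only justifies $A\subseteq\phi^{-1}(0)$, whereas the inclusion $\phi^{-1}(0)\subseteq A$ is precisely where closedness of $A$ enters: $\phi(x)=0$ gives $r_2(x,t)\geq t>0$, so $r_1(x,t)\in A$ for all $t>0$, and letting $t\to 0^{+}$ yields $x\in\overline{A}=A$. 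In the ($\Leftarrow$) direction you have misplaced the delicate locus. Continuity along $\{t=\phi(x)\}$ and near $A$ is indeed routine; the genuine obstruction is at $\partial U$. Because $H$ is defined only on $U\times\II$, any piecewise formula that evaluates $H$ at points $x_\alpha\in U$ converging to some $x_0\in\partial U\setminus U$ --- even with homotopy times $s_\alpha\to 0$ --- is not automatically continuous: $(x_0,0)\notin U\times\II$, so nothing forces $H(x_\alpha,s_\alpha)\to x_0$. The hypothesis $\phi\equiv 1$ off $U$ does not rescue this by ensuring $H$ is ``never invoked'' outside $U$, since points inside $U$ arbitrarily close to the boundary still invoke it. The standard repair is a globalisation step your sketch omits: using that $\{\phi<1\}$ is open and contained in $U$, set $\widetilde{H}(x,t)=H\bigl(x,\,t\,\lambda(\phi(x))\bigr)$ there, where $\lambda\colon\II\to\II$ is $1$ on $[0,\tfrac12]$ and $0$ on $[\tfrac34,1]$, and extend by $\widetilde{H}(x,t)=x$ on the open set $\{\phi>\tfrac34\}$; the two formulas agree on the overlap because the homotopy time is \emph{identically} zero there, which is what makes the glueing continuous. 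Replacing $\phi$ by $\psi=\min(1,2\phi)$ (so that $\widetilde{H}(x,1)\in A$ whenever $\psi(x)<1$, and still $\psi^{-1}(0)=A$), the two-case retraction $r(x,t)=\bigl(\widetilde{H}(x,t/\psi(x)),0\bigr)$ for $t\leq\psi(x)$ and $r(x,t)=\bigl(\widetilde{H}(x,1),\,t-\psi(x)\bigr)$ for $t\geq\psi(x)$ is then genuinely routine to check. As a sketch deferring to \cite{strom1} your proposal is acceptable; as a written-out proof it would fail at $\partial U$ without this extra step.
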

There is also more general version of the previous theorem (see \cite[Lem.~4]{strom2}) which characterises cofibred pairs in a similar way without restricting to closed subspaces. 
 	However we will only need the case of closed subspaces and this one will be easier to work with.

\subsubsection*{Examples of cofibrations}
These key examples will be useful to demonstrate the flexibility of our construction in Section~\ref{sec:homcobs}.
\exa{
For any space $X$, the pairs $(X,X)$ and $(X,\emptyset)$ are cofibred. 
}
\begin{example}\label{ex:I,{0,1}}
The pair $(\II,\{0,1\})$ is cofibred. \\
Consider $\II\times \II$ as a subset of $\mathbb{R}^2$ and let $z=(\frac{1}{2},\frac{3}{2})\in \mathbb{R}^2$.
For any $x\in \II\times \II$, let $x'$ be the unique point of $\II \times 0 \, \cup\, \{0,1\} \times \II$ such that $z,x,x'$ are colinear.
Then $\rho \colon x \mapsto x'$ is a retraction 
$\II\times \II$ to $\II \times 0 \,\cup\, \{0,1\} \times \II$.
\end{example}

Define topological spaces $S^n=\{x\in \R^{n+1} \;\vert \; |x|=1\}$ and $D^n=\{x\in \R^n \;\vert \; |x|\leq 1\}$, each topologised with the subset topology.
\begin{example}The pair $(D^{n},S^{n-1})$ is cofibred. \\
A retraction $r\colon D^{n} \to S^{n-1} \times \II \; \cup \; D^{n} \times 0$ can be constructed in 
a similar way to the previous example, see
\cite[Ex.~2.3.5]{dieck}.
\end{example}
The previous two examples are special cases of the following proposition for manifolds.
\begin{proposition}\label{pr:mfld_bdy}
 Let $M$ be a smooth compact manifold with boundary. The inclusion $i\colon \partial M \to M$ is a cofibration.
\end{proposition}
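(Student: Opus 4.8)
The plan is to show that the boundary inclusion $i\colon \partial M \to M$ of a smooth compact manifold with boundary is a cofibration by reducing to the characterisation results already available in the excerpt. Since $\partial M$ is a closed subspace of $M$, by Proposition~\ref{retract} it suffices to exhibit $M\times\{0\}\,\cup\,\partial M\times\II$ as a retract of $M\times\II$. The crucial geometric input is the Collar Neighbourhood Theorem (\cite{brown62}, alluded to in the preamble to this section): there exists an open neighbourhood $U$ of $\partial M$ together with a homeomorphism (a \emph{collar}) $c\colon \partial M\times[0,1)\to U$ such that $c(a,0)=a$ for all $a\in\partial M$. This collar is exactly the tool that lets us construct the required retraction.

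\emph{First} I would invoke the collar to identify a neighbourhood of $\partial M$ with a product $\partial M\times[0,1)$, which makes $M$ look locally like the model pair $\big([0,1),\{0\}\big)$ near the boundary. \emph{Next}, on the collar region I would build the deformation pushing points toward $\partial M$ along the collar coordinate, precisely as in the interval example (Example~\ref{ex:I,{0,1}}) and the disc example $(D^n,S^{n-1})$, which the statement explicitly flags as special cases. Concretely, one uses a bump function $\phi\colon M\to\II$ that is supported in the collar, equals $0$ on $\partial M$, and equals $1$ outside $U$; this is the map of part \emph{ii)} of Theorem~\ref{th:hom_for_closed_cofibred_pair}. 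Together with a homotopy $H\colon U\times\II\to M$ that slides collar points toward the boundary while fixing $\partial M$ pointwise, this verifies the two conditions of Theorem~\ref{th:hom_for_closed_cofibred_pair}, which (for the closed subspace $\partial M\subseteq M$) is equivalent to $(M,\partial M)$ being a cofibred pair. Since $i$ is the inclusion of this subspace, Proposition~\ref{pr:cofibdpair} then gives that $i$ is a cofibration.

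\emph{The main obstacle} I anticipate is not the homotopy-theoretic bookkeeping — the two conditions of Theorem~\ref{th:hom_for_closed_cofibred_pair} are routine once a collar is in hand — but rather the invocation of the collar neighbourhood theorem itself and the need to glue the local collar construction to the rest of $M$ in a continuous way. The bump function $\phi$ must interpolate smoothly (or at least continuously) between the value $0$ on $\partial M$ and the value $1$ away from the collar, and the homotopy $H$, defined on the collar by the coordinate-sliding formula, must agree with the constant homotopy on the overlap so that it extends continuously across $M$ (or one restricts $H$ to the neighbourhood $U$ as Theorem~\ref{th:hom_for_closed_cofibred_pair} permits). Because the theorem is stated for a neighbourhood $U$ rather than all of $X$, I expect this gluing to be clean: the homotopy only needs to be defined on $U\times\II$ and move points into $\partial M$, while $\phi$ handles the separation condition globally. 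The smoothness and compactness hypotheses on $M$ are what guarantee the collar exists, so no additional hypotheses should be needed.
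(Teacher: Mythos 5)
Your proposal is correct and takes essentially the same route as the paper: both invoke the Collar Neighbourhood Theorem to identify a closed sub-collar $\partial M\times\II$ inside the open collar, and then verify the two conditions of Theorem~\ref{th:hom_for_closed_cofibred_pair}, the paper taking $H((n,s),t)=(n,s(1-t))$ on the collar and $\phi(n,s)=s$ there, $\phi=1$ outside, glued continuously exactly as you anticipate. The only point needing care is that condition (ii) requires $\partial M=\phi^{-1}(0)$ \emph{exactly}, so the ``bump function'' must be taken to be the collar coordinate itself (as in the paper) rather than an arbitrary function merely vanishing on $\partial M$.
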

\begin{proof}
	The Collar Neighbourhood Theorem for smooth manifolds 
	\cite[Cor.~3.5]{milnor} 
	says that there exists an open neighbourhood $N'$ of $\partial M$ such that there exists a diffeomorphism $f\colon N' \to \partial M \times [0,1)$ satisfying $f(\partial M)=\partial M \times \{0\}$.
	It follows that we can choose a closed neighbourhood $N$ of $\partial M$ such that the we can identify $N$ with $\partial M \times \II$, where $\partial M \cong \partial M \times \{0\}$. 
	Then the function
	\ali{
		H\colon (\partial M\times [0,1])\times \II &\to M\\
		((n,s),t)&\mapsto 
		(n,s(1-t))}
	is a homotopy 
	satisfying condition (i) of Theorem~\ref{th:hom_for_closed_cofibred_pair}.
	Define a map $\phi \colon M \to \II$ as follows.
	\[
	\phi(m) =
	\begin{cases}
	s& 
	\text{if $m=(n,s) \in \partial M \times [0,1]$} \\
	1& 
	\text{if $m\in M\setminus (\partial M \times [0,1))$}
	\end{cases}
	\]
	Notice that these definitions agree on the overlap so $\phi$ is continuous.
	Hence by Theorem~\ref{th:hom_for_closed_cofibred_pair}  the inclusion $i\colon \partial M\to M$ is a cofibration.
\end{proof}

Recall that a smooth submanifold of $M$ is a  subset $N\subseteq M$ such that the identity inclusion $\iota\colon N \to M$ is a diffeomorphism onto its image and a topological embedding, as in \cite[Ch.~5]{Lee}.
A submanifold $N\subseteq M$ is {\em neatly embedded} if $\partial N\subseteq \partial M$.

We have the following stronger proposition. This is a slight generalisation of \cite[Ex.2.3.11]{Munson}.
\begin{proposition}\label{pr:sbmfld_cofib}
Let $N$ be a closed smooth submanifold of a smooth manifold $M$ which is neatly embedded. 
Then the inclusion $\iota\colon N \to M$ is a cofibration.
\end{proposition}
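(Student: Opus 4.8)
The plan is to verify the criterion for a closed cofibred pair given in Theorem~\ref{th:hom_for_closed_cofibred_pair}, producing the required homotopy and test map from a tubular neighbourhood of $N$. Since $N$ is closed in $M$ and $\iota$ is the literal inclusion, establishing that $(M,N)$ is a cofibred pair is exactly establishing that $\iota$ is a cofibration, so no appeal to Proposition~\ref{pr:cofibdpair} is needed.

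First I would invoke the tubular neighbourhood theorem in the form appropriate for neat submanifolds. Fixing a Riemannian metric on $M$ that is a product with respect to a chosen collar of $\partial M$, the normal bundle $\nu = (TN)^{\perp} \subseteq TM|_N$ is well defined, and the exponential map restricts to a diffeomorphism from a closed disk subbundle $D(\nu)$ onto a closed neighbourhood $U$ of $N$, carrying the zero section to $N$. The neatness hypothesis $\partial N \subseteq \partial M$, together with the product choice of metric near the collar, ensures that this tube meets the boundary cleanly: $U \cap \partial M$ is itself a tube around $\partial N$ in $\partial M$, and the fibres lie inside $M$. This is the essential geometric input.

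Given such a tube $U \cong D(\nu)$, the homotopy required by part \textit{i)} of Theorem~\ref{th:hom_for_closed_cofibred_pair} is the fibrewise contraction: in the fibre coordinates of $D(\nu)$, set $H(v,t) = (1-t)\,v$, transported back to $U \subseteq M$ via the diffeomorphism. Then $H(-,0)$ is the identity on $U$, $H$ fixes the zero section $N$ for all $t$, and $H(-,1)$ maps into $N$, exactly as needed. For part \textit{ii)} I would take $\phi\colon M \to \II$ to be $x \mapsto |v(x)|$ on $U$ (the fibre norm, normalised so that $D(\nu)$ is the unit disk bundle) and $\phi \equiv 1$ on $M \setminus U$. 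These definitions agree on the overlap, the unit sphere bundle where the norm equals $1$, so $\phi$ is continuous, and $\phi^{-1}(0) = N$. This mirrors the construction already carried out for the boundary inclusion in Proposition~\ref{pr:mfld_bdy}. Applying Theorem~\ref{th:hom_for_closed_cofibred_pair} then shows that $(M,N)$ is a cofibred pair, i.e.\ that $\iota$ is a cofibration.

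The main obstacle is the first step: the existence of a tubular neighbourhood compatible with the boundary for a neatly embedded submanifold. Away from $\partial M$ this is the classical tubular neighbourhood theorem, but near $\partial M$ one must arrange the normal directions of $N$ and the collar of $\partial M$ to be compatible, so that the tube stays inside $M$ and restricts to a tube of $\partial N$ in $\partial M$; this is precisely where the neatness hypothesis is used and where some care (the product metric near the collar, or a partition-of-unity patching of fibre charts) is genuinely required. Everything after the tube is produced is routine and parallels Proposition~\ref{pr:mfld_bdy}.
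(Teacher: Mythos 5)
Your proof is correct and follows essentially the same route as the paper: a tubular neighbourhood identified with a disk subbundle of the normal bundle, the fibrewise contraction $H((n,v),t)=(n,(1-t)v)$, the fibre-norm function $\phi$ extended by $1$ off the tube, and an application of Theorem~\ref{th:hom_for_closed_cofibred_pair}. The only difference is presentational: the care you take with boundary compatibility of the tube near $\partial M$ (product metric on a collar, the tube restricting to a tube of $\partial N$ in $\partial M$) is precisely what the paper outsources to its citation of the tubular neighbourhood theorem in \cite{hirsch}.
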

\begin{proof}
	By the tubular neighbourhood theorem \cite[Ch.4,Th.6.3]{hirsch}, there exists 
an open neighbourhood $U\subset M$ of $N$  which can be identified with the normal bundle of $\mathrm{v}(N,M)$ of $N$ in $M$, such that $N$ is identified with the zero section of the normal bundle.
This allows us to identify points in $u\in U$ with pairs $(n,v)$ where $n\in N$ and $v$ is a vector in the tangent space.
Further, a choice of Riemannian metric on $M$ allows us to define $V=\mathrm{D}_1(\mathrm{v}(N,M))$, which has as underlying set pairs $(n,v)$ such that $||v||\leq 1$. Label the corresponding subset in $M$ by $\bar{V}$.
Similarly let $V$ be the open set obtained from taking the set with pairs $(n,v)$ where $||v||<1$. 

Now, as in the previous proposition, we can define  a homotopy 
\ali{
H \colon \bar{V}\times \II &\to M \\
((n,v),t)&\mapsto
(n,(1-t)v) 
}
and a map $\phi\colon M \to \II$ with
\begin{align*}
\phi(m) =
\begin{cases}
 {\vert \vert v \vert \vert} & \text{ if }m=(n,v)\in \bar{V} \\
 1 & \text{ if }m\in M\setminus V. 
\end{cases}
& \qedhere
\end{align*}
\end{proof}

In many cases 
it will be easier to find a CW complex structure than to prove we have a smooth manifold submanifold pair.
In this case we have the following proposition.
\begin{proposition}\label{pr:CWcofib}
Let $X$ be a CW complex and let $A$ be a subcomplex of $X$. Then the inclusion $i\colon A \to X$ is a closed cofibration.
\end{proposition}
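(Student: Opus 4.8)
The plan is to apply the retract criterion of Proposition~\ref{retract}. A subcomplex $A$ is closed in $X$, so it suffices to construct a retraction $r\colon X\times\II \to (X\times\{0\})\cup(A\times\II)$; the resulting cofibration is then automatically closed. I would build $r$ by induction over the skeleta of $X$ relative to $A$. Write $Y_{-1}=A$ and $Y_n = A\cup X^n$, so that $Y_n$ is obtained from $Y_{n-1}$ by attaching the $n$-cells of $X$ not lying in $A$; concretely $Y_n$ is the pushout of the diagram $\coprod_\alpha D^n \leftarrow \coprod_\alpha S^{n-1} \to Y_{n-1}$, where $\varphi_\alpha$ and $\Phi_\alpha$ denote the attaching and characteristic maps of the cells. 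The target of every retraction will be $Z\coloneq(X\times\{0\})\cup(A\times\II)$.

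For the base case, $r_{-1}=\id$ on $Y_{-1}\times\II=A\times\II\subseteq Z$. For the inductive step, suppose a retraction $r_{n-1}\colon Y_{n-1}\times\II\to Z$ fixing $(Y_{n-1}\times\{0\})\cup(A\times\II)$ has been constructed. The example that $(D^n,S^{n-1})$ is cofibred supplies, via Proposition~\ref{retract}, a retraction $\rho\colon D^n\times\II\to (D^n\times\{0\})\cup(S^{n-1}\times\II)$. For each $n$-cell I would define a map $D^n\times\II\to Z$ by first applying $\rho$ and then sending the part $D^n\times\{0\}$ in by $\Phi_\alpha\times\{0\}$ and the part $S^{n-1}\times\II$ in by $r_{n-1}\circ(\varphi_\alpha\times\id_\II)$. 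The two prescriptions agree on the overlap $S^{n-1}\times\{0\}$, since both send $(s,0)$ to $(\Phi_\alpha(s),0)$, so they glue to a continuous map on $D^n\times\II$ that is compatible with $r_{n-1}$ on $S^{n-1}\times\II$; assembling these over all $\alpha$ yields $r_n\colon Y_n\times\II\to Z$.

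The one genuinely topological point — and the step I expect to be the main obstacle — is continuity: both the quotient defining each $Y_n$ and the weak topology assembling $X$ from its skeleta must be compatible with taking the product with $\II$. This is exactly where the paper's machinery enters. Since $\II=[0,1]$ is locally compact Hausdorff, the functor $-\times\II$ is a left adjoint by Lemma~\ref{le:producthom}, hence preserves all colimits by Theorem~\ref{th:lapcl}. Consequently $Y_n\times\II$ really is the pushout of the $n$-cell diagram multiplied by $\II$, which makes each $r_n$ continuous by the universal property, and $X\times\II$ is the colimit of the $Y_n\times\II$, so the compatible family $(r_n)$ glues to a single continuous map $r\colon X\times\II\to Z$.

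Finally, by construction $r$ restricts to the identity on $Z=(X\times\{0\})\cup(A\times\II)$, so $r$ is the required retraction and Proposition~\ref{retract} gives that $(X,A)$ is a cofibred pair. Since $A$ is closed in $X$, the inclusion $i\colon A\to X$ is a closed cofibration. One could alternatively verify conditions (i) and (ii) of Theorem~\ref{th:hom_for_closed_cofibred_pair} directly, but the retract formulation keeps the bookkeeping for the cell-by-cell extension cleanest.
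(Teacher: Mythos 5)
Your proof is correct, but it takes a more self-contained route than the paper, whose entire proof is a one-line appeal to \cite[Prop.~0.16]{hatcher} (that $X\times \{0\}\cup A\times \II$ is a deformation retract of $X\times \II$ for a CW pair), implicitly combined with Proposition~\ref{retract}. What you have done is reconstruct the proof of that cited result inside the paper's own toolkit: the retraction $\rho$ for $(D^n,S^{n-1})$, the retract criterion of Proposition~\ref{retract}, and --- the step most often glossed over --- the fact that $-\times \II$ preserves pushouts and sequential colimits because $\II$ is locally compact Hausdorff (Lemma~\ref{le:producthom} with Theorem~\ref{th:lapcl}), which is exactly what makes each $r_n$, and then the colimit map $r$, continuous. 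Your overlap check on $S^{n-1}\times\{0\}$ is right (both prescriptions give $(\Phi_\alpha(s),0)$, using that $r_{n-1}$ fixes $Y_{n-1}\times\{0\}$), and since $\rho$ fixes $S^{n-1}\times\II$ pointwise, the cell maps are genuinely compatible with $r_{n-1}$ under the pushout, so the universal property applies; closedness of the subcomplex $A$ is what both licenses Proposition~\ref{retract} and upgrades the cofibration to a closed one. The trade-off is the expected one: the paper's citation is shorter, while your argument makes visible where the weak topology is actually used and why no point-set pathology arises. For full rigour you would also record the standard CW facts you quietly use: that a subcomplex $A$ is closed in $X$, that $Y_n=A\cup X^n$ is the pushout of the stated cell-attachment diagram, and that $X$ carries the colimit topology of the filtration $Y_{-1}\subseteq Y_0\subseteq Y_1\subseteq \cdots$; all are standard, but they are inputs to your argument rather than consequences of it.
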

\begin{proof}
If $(X,A)$ is a CW pair, then $X\times \{0\}\cup A\times \II$ is a deformation retract of $X\times \II$ \cite[Prop.0.16]{hatcher}.
\end{proof}

\subsubsection*{Cofibre homotopy equivalence}
To construct a category of {\it cofibrant cospans}, we will will require a notion of homotopy equivalence of spaces relative to maps from a shared space.

\begin{definition}
	A {\em space under $A$} is a map $i \colon A \to X$.
	A {\em map of spaces under $A$} from $i \colon A \to X$ to $j\colon A\to Y$ is a map $f\colon X\to Y$ such that we have a commuting diagram 
	\[
	\begin{tikzcd}[ampersand replacement = \&]
		\& A \ar[dl,"i"'] \ar[dr,"j"] \& \\
		X \ar[rr,"f"']\& \& Y.
	\end{tikzcd}
	\]
	Suppose $f,f'\colon X\to Y$ are maps under $A$ from $i \colon A \to X$ to $j\colon A\to Y$.
	A {\em homotopy under $A$} from $f$ to $f'$ is a continuous map $H\colon X\times \II \to Y$ such that
	\begin{itemize}
		\item for all $a \in A$ and $t \in \II$, $H(i(a),t)=j(a)$,
		\item for all $x\in X$,  $H(x,0)=f(x)$,
		\item for all $x\in X$,  $H(x,1)=f'(x)$.
	\end{itemize}
\end{definition}

The proof of the following Proposition proceeds exactly as for the usual notion of homotopy equivalence.
\prop{
	\label{pr:cofib_hom_equiv} Define a relation on spaces under $A$ as follows. We have $(i\colon A\to X)\sim (j\colon A\to Y)$ if there exists maps of spaces under $A$, $f\colon X\to Y$ from $i\colon A\to X$ to $j\colon A\to Y$,  and $f'\colon Y\to X$ from $j\colon A\to Y$ to $i\colon A\to X$, such that there exists a homotopy under $A$ from $f\circ f'$ to $\id_Y$ and from $f'\circ f$ to $\id_X$.
	Note that this is well defined since $f\circ f'\circ j(a)=f\circ i(a)=j(a)$, $f'\circ f\circ i(a)=f'\circ j(a) = i(a)$.\\
	This is an equivalence relation.}

\defn{Given a space $A$, the equivalence relation described in Proposition~\ref{pr:cofib_hom_equiv} is called {\em cofibre homotopy equivalence}.}

The following technical result, which justifies the choice of name, will be crucial for our construction. Proofs are present in \cite{may,brownt+g,strom}.

\begin{theorem}\label{th:cofib_equiv}
	(\cite[Thm.~7.2.8]{brownt+g} for example.)
	Let $A$, $X$ and $Y$ be spaces.
	Let $i\colon A \to X$ and $j\colon A \to Y$ be cofibrations and let $f\colon X \to Y$ be a map such that $f\circ i=j$.
	Suppose that $f$ is a homotopy equivalence from $X$ to $Y$, then $f$ is a cofibre homotopy equivalence
	$i\colon A \to X$ to $j\colon A \to Y$. \qed
\end{theorem}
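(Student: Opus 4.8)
The plan is to recognise this as the cofibration-dual of Dold's theorem on fibre homotopy equivalences, and to prove it directly by upgrading a plain homotopy inverse of $f$ to one compatible with $A$. Write $\simeq$ for plain homotopy and $\simeq_A$ for homotopy under $A$, and call a map under $A$ a \emph{cofibre homotopy equivalence} when it is invertible in the homotopy category of spaces under $A$ (inverse and all homotopies under $A$), which is exactly the conclusion we want for $f$.

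First I would fix a plain homotopy inverse $g_0\colon Y\to X$ of $f$ and correct it to a map under $A$. Since $g_0 f\simeq\id_X$, restricting a chosen homotopy along $i$ gives a homotopy $K\colon A\times\II\to X$ from $g_0 j = g_0 f i$ to $i$. As $j\colon A\to Y$ is a cofibration, the homotopy extension property (Definition~\ref{de:homotopy_extension}) extends $K$ to $\tilde K\colon Y\times\II\to X$ with $\tilde K(-,0)=g_0$ and $\tilde K\circ(j\times\id_\II)=K$; setting $g=\tilde K(-,1)$ yields a map with $g\circ j=i$, so $g$ is under $A$, and $g\simeq g_0$, so $g$ is still a homotopy inverse of $f$. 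Now both $f$ and $g$ are maps under $A$, the composite $gf\colon X\to X$ is under $A$ with $gf\simeq\id_X$, and dually $fg\colon Y\to Y$ is under $A$ with $fg\simeq\id_Y$.

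The remaining, and genuinely hard, step is the key lemma: if $i\colon A\to X$ is a cofibration and $\phi\colon X\to X$ is a map under $A$ with $\phi\simeq\id_X$, then $\phi$ is a cofibre homotopy equivalence (it admits a two-sided inverse under $A$, with all homotopies under $A$). Granting this, I would apply it to $\phi=gf$ and to $fg$: if $w$ is an under-$A$ inverse of $gf$ then $wg$ is a left under-$A$ homotopy inverse of $f$, since $(wg)f=w(gf)\simeq_A\id_X$; and if $w'$ is an under-$A$ inverse of $fg$ then $gw'$ is a right under-$A$ homotopy inverse of $f$, since $f(gw')=(fg)w'\simeq_A\id_Y$. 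A morphism of the homotopy category of spaces under $A$ possessing both a left and a right inverse is an isomorphism, and the two inverses agree, so $f$ is a cofibre homotopy equivalence, as required.

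The main obstacle is thus the key lemma, whose difficulty is precisely the conversion of a plain homotopy into one that is stationary on $A$. The issue is that a chosen homotopy $H\colon\phi\simeq\id_X$ restricts along $i$ to the track $t\mapsto H(i(a),t)$, a nonconstant path from $i(a)$ to $i(a)$ rather than the constant homotopy, so $H$ is not yet a homotopy under $A$. The resolution is to use the cofibration structure of $i$ to extend a correcting homotopy reversing this restricted track from $A$ over all of $X$ via the homotopy extension property, and to concatenate it with $H$; the result is a homotopy whose restriction to $A$ is null and which can then be rendered stationary on $A$. Isolating and carrying out this correction, together with the bookkeeping needed to assemble a genuine two-sided under-$A$ inverse, is the technical heart, and is exactly the content of \cite[Thm.~7.2.8]{brownt+g}. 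As the footnote on cofibrations suggests, a conceptually cleaner alternative is to note that in the Str\o m model structure the under-category $A\downarrow\Topo$ inherits a model structure whose cofibrant objects are precisely the cofibrations $i\colon A\to X$; then $f$ is a weak equivalence between cofibrant (and automatically fibrant) objects, so Whitehead's theorem in a model category gives at once that $f$ is a homotopy equivalence in $A\downarrow\Topo$, that is, a cofibre homotopy equivalence.
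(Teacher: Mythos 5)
The paper offers no proof of this statement at all: it is imported from the literature, with \cite[Thm.~7.2.8]{brownt+g} given as the source (the surrounding text also points to \cite{may} and \cite{strom}), so there is no in-paper argument for your attempt to diverge from. Your outline is, in fact, the standard argument behind that citation, and it is correct as far as it goes: the correction of $g_0$ to a map $g$ under $A$ via the homotopy extension property of $j$ (Definition~\ref{de:homotopy_extension}) is exactly right; the reduction to the key lemma (a self-map under $A$ that is plainly homotopic to the identity admits a two-sided under-$A$ homotopy inverse) is the same reduction used by May and Brown; and the final bookkeeping, including the verifications $(wg)\circ j=i$ and $(gw')\circ j=i$ and the left-inverse/right-inverse argument in the homotopy category of spaces under $A$, goes through.

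One caveat on the key lemma itself: your sketch of ``reverse the track on $A$, extend by the homotopy extension property, concatenate, then render stationary on $A$'' does not close on its own. After this straightening one obtains $\phi\simeq_A\psi$ for some map $\psi$ under $A$ that is only \emph{plainly} homotopic to $\id_X$, not under-$A$ homotopic to it, so a single correction does not finish; the standard proofs instead bootstrap, first producing a left inverse under $A$ together with an under-$A$ homotopy, and then applying the same construction to that left inverse to upgrade it to a two-sided one. Since you explicitly defer this step to the very theorem the paper cites, this is a fair black box rather than a gap — your proposal is, if anything, more explicit than the paper. Your model-category alternative (Str\o{}m structure on the under-category, cofibrations as cofibrant objects, every object fibrant, Whitehead's theorem) is also sound, modulo the routine identification of left homotopy through the reduced cylinder with homotopy under $A$, and it matches the paper's own footnote observing that such results can be obtained with model-categorical tools.
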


\subsubsection{A van Kampen Theorem for cofibrations}
Here we give a generalisation of the van Kampen Theorem, due to Brown \cite{brownt+g}, which says
that pushouts are preserved by the functor $\pi$ if at least one of the maps we take the pushout over is a cofibration.
Hence, in this case, we can obtain the fundamental groupoid of a pushout in $\Topo$ as a pushout of fundamental groupoids.

Suppose we have spaces $X_0$, $X_1$ and $X_2$ and maps 
$f\colon X_0 \to X_1$ and $g\colon X_0 \to X_2$.
Consider the pushout square: 
\begin{align}\label{po_vk}
	\begin{tikzcd}[ampersand replacement=\&]
		X_0 \ar[r,"f"]\ar[d,"g"'] \& X_1\ar[d,"p_1"] \\
		X_2\ar[r,"p_2"'] \& X_1\sqcup_{X_0} X_2
	\end{tikzcd}
\end{align}
Now let $A$, $B$ and $C$ be representative subsets of $X_0$, $X_1$ and $X_2$ respectively, 
with $f(A)= B\cap f(X_0)$ and $g(A) \subseteq C$. 
Let $D=B \sqcup_A C$, the pushout of $f|_A\colon A\to B$ and $g|_A\colon A\to C$.

\begin{lemma} 
	Under the above conditions conditions $D$ is representative (Def~\ref{de:representative}) in $ X_1\sqcup_{X_0} X_2$.
\end{lemma}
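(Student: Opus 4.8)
The plan is to realise $D$ as the image of an already-representative set under a surjection, and then quote Lemma~\ref{le:surjection_of_representative_set}. Recall from Section~\ref{sec:TopColims} that the pushout $X_1\sqcup_{X_0}X_2$ is built as a quotient of the coproduct $X_1\sqcup X_2$; write $\rho\colon X_1\sqcup X_2 \to X_1\sqcup_{X_0}X_2$ for the resulting quotient map, so that $\rho\circ\iota_1 = p_1$ and $\rho\circ\iota_2 = p_2$. By construction $\rho$ is continuous and surjective.

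First I would pin down how $D$ sits inside the pushout. The inclusions $B\hookrightarrow X_1$, $C\hookrightarrow X_2$ and $A\hookrightarrow X_0$ are compatible with $f$ and $g$, so the universal property of $D = B\sqcup_A C$ yields a canonical map $\theta\colon D \to X_1\sqcup_{X_0}X_2$ sending $q_1(b)\mapsto p_1(b)$ and $q_2(c)\mapsto p_2(c)$, where $q_1,q_2$ denote the structure maps of the pushout $D$. Unwinding the definitions, its image is $\theta(D) = \{p_1(b)\mid b\in B\}\cup\{p_2(c)\mid c\in C\} = \rho(B\sqcup C)$, and this is how we view $D$ as a subset of $X_1\sqcup_{X_0}X_2$.

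Next I would observe that $B\sqcup C$ is representative in $X_1\sqcup X_2$: the path-components of a coproduct are exactly the path-components of the two summands, and since $B$ is representative in $X_1$ and $C$ in $X_2$, the set $B\sqcup C$ meets each of them. Applying Lemma~\ref{le:surjection_of_representative_set} to the surjection $\rho$ then shows that $\rho(B\sqcup C)$ is representative in $X_1\sqcup_{X_0}X_2$, and by the previous paragraph this set is precisely $\theta(D)$, which finishes the proof.

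The only genuinely delicate bookkeeping is in the second step: verifying that the image of $D$ under the canonical map really is the point set $\{p_1(b)\}\cup\{p_2(c)\}$, and hence agrees with $\rho(B\sqcup C)$. Note that the hypotheses $f(A)=B\cap f(X_0)$ and $g(A)\subseteq C$ are not used for representativeness as such; they are presumably needed elsewhere (e.g.\ to guarantee that $\theta$ is injective, so that $D$ is genuinely a subspace), and for this lemma one may ignore them. If one prefers to avoid Lemma~\ref{le:surjection_of_representative_set} altogether, the conclusion also follows directly: any point of the pushout is $p_1(x_1)$ or $p_2(x_2)$, and a path in $X_1$ (resp.\ $X_2$) from $x_1$ (resp.\ $x_2$) to a point of $B$ (resp.\ $C$), pushed forward along $p_1$ (resp.\ $p_2$), lands in $\theta(D)$.
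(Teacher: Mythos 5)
Your proof is correct and follows essentially the same route as the paper: the paper likewise observes that $B\sqcup C$ is representative in $X_1\sqcup X_2$ and then applies Lemma~\ref{le:surjection_of_representative_set} to the surjection $\lan p_1,p_2\ran\colon X_1\sqcup X_2\to X_1\sqcup_{X_0}X_2$. Your additional bookkeeping — constructing $\theta\colon D\to X_1\sqcup_{X_0}X_2$ and checking $\theta(D)=\rho(B\sqcup C)$, and noting that the hypotheses $f(A)=B\cap f(X_0)$, $g(A)\subseteq C$ serve to make $D$ a genuine subset rather than to prove representativeness — merely makes explicit what the paper leaves implicit.
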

\begin{proof}
	It is clear that $B\sqcup C$ is representative in $X_1\sqcup X_2$.
	By Lemma~\ref{le:surjection_of_representative_set} surjections send representative subsets to representative subsets, hence we have the result by considering the surjection $\lan p_1,p_2\ran \colon X_1\sqcup X_2\to X_1\sqcup_{X_0}X_2$.
\end{proof}
\begin{theorem}\label{th:vk}
	(See \cite[Thm.~9.1.2]{brownt+g}.)
	Now suppose in addition to the above conditions we take $X_0\subseteq X_1$ and
	$f=\iota\colon X_0 \to X_1$ the inclusion map in \eqref{po_vk}.
	Then the following diagram is a pushout if $(X_1,X_0)$ is a cofibred pair.
	\[
	\begin{tikzcd}[ampersand replacement=\&]
	\pi(X_0,X_0\cap B) \ar[r,"\pi(\iota
	)"]\ar[d,"\pi(g)"'] \& \pi(X_1,B)\ar[d,"\pi(p_1)"] \\
	\pi(X_2,C)\ar[r,"\pi(p_2)"'] \& \pi(X_1\sqcup_{X_0} X_2,D)
	\end{tikzcd}\vspace*{-1.5\baselineskip}
	\]
	\qed
\end{theorem}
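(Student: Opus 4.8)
The plan is to realise the square as a homotopy pushout rather than attempting the strict universal property directly. Since pushouts in $\Grpd$ exist and are built as in Theorem~\ref{th:grpd_pushoutfg}, it suffices to exhibit $\pi(X_1\sqcup_{X_0}X_2,D)$, together with $\pi(p_1)$ and $\pi(p_2)$, as the pushout of $\pi(\iota)$ and $\pi(g)$. I would do this by comparing the adjunction space $Y\coloneq X_1\sqcup_{X_0}X_2$ with the double mapping cylinder $Z$ of the span $X_1\xleftarrow{\iota}X_0\xrightarrow{g}X_2$, that is, the space obtained by gluing $X_1$ and $X_2$ to the two ends of $X_0\times\II$ along $\iota$ and $g$. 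The point of passing to $Z$ is that it carries a genuine \emph{open} cover, to which the classical open-cover form of the groupoid van Kampen theorem \cite{brownt+g} applies without any cofibration hypothesis; the cofibration assumption on $\iota$ is then used only to compare $Z$ with $Y$. Throughout, the representativeness conditions on $A\coloneq X_0\cap B$, $B$, $C$ (with $f(A)=B\cap f(X_0)$ and $g(A)\subseteq C$) are exactly what guarantees the chosen base-point sets meet every path component of each cover piece and of the intersection; the unlabelled lemma preceding the statement already records that $D=B\sqcup_A C$ is representative (Definition~\ref{de:representative}) in $Y$.

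First I would set $Z_1=X_1\cup(X_0\times[0,1))$ and $Z_2=(X_0\times(0,1])\cup X_2$, which are open in $Z$, cover $Z$, and collapse the cylinder coordinate to give deformation retractions $Z_1\simeq X_1$, $Z_2\simeq X_2$ and $Z_1\cap Z_2=X_0\times(0,1)\simeq X_0$. Choosing base points $B\subseteq Z_1$, $C\subseteq Z_2$ and $A\times\{\tfrac12\}\subseteq Z_1\cap Z_2$, the compatibility conditions on $A,B,C$ make this base-point set representative in each piece and in the intersection, so the open-cover groupoid van Kampen theorem yields that $\pi(Z,B\cup(A\times\{\tfrac12\})\cup C)$ is the pushout of $\pi(X_1,B)\leftarrow\pi(X_0,A)\to\pi(X_2,C)$. (Here I transport each corner of the abstract cover pushout along the retraction homotopy equivalences to the slices $X_1$, $X_0$, $X_2$, using that $\pi$ sends homotopy equivalences to equivalences of groupoids.)

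It then remains to identify $\pi(Z)$ with $\pi(Y)$ compatibly with these base points. There is a canonical quotient map $q\colon Z\to Y$ collapsing the cylinder $X_0\times\II$ via $\iota$ and $g$, and it sends the chosen base points of $Z$ onto $D$. The essential use of the hypothesis is that, because $\iota\colon X_0\to X_1$ is a cofibration, $q$ is a homotopy equivalence: the inclusion $X_0\hookrightarrow X_1$ being a cofibration makes the comparison from the homotopy pushout to the strict pushout a homotopy equivalence, and by Theorem~\ref{th:cofib_equiv} this can be upgraded to a cofibre homotopy equivalence relative to the glued copy of $X_0$, so that the induced $\pi(q)$ is an isomorphism of fundamental groupoids carrying the base-point set of $Z$ onto $D$. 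Applying $\pi(q)$ therefore transports the pushout description of $\pi(Z,\dots)$ to the asserted pushout square for $\pi(Y,D)$; since pushouts are unique up to canonical isomorphism, the diagram in the statement is a pushout.

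The main obstacle is the comparison step $q\colon Z\to Y$: one must show $q$ is a homotopy equivalence from the single hypothesis that $\iota$ is a (closed) cofibration, and do so with enough control that the induced map of fundamental groupoids is an isomorphism on precisely the prescribed finite base-point sets rather than merely an equivalence of groupoids. This is where the halo/deformation-retraction characterisation of cofibrations (Theorem~\ref{th:hom_for_closed_cofibred_pair}) and the cofibre homotopy equivalence machinery (Proposition~\ref{pr:cofib_hom_equiv}, Theorem~\ref{th:cofib_equiv}) do the real work, and it is the reason the theorem is attributed to Brown rather than deduced from the open-cover version in a single line; the asymmetry of the hypothesis (only the $X_1$ side is cofibred) means the naive attempt to cover $Y$ itself by two well-behaved open sets fails, which is exactly what the passage through $Z$ circumvents.
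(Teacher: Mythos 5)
The paper itself gives no proof of this statement---it is quoted from \cite{brownt+g} with a \qed---so the benchmark is Brown's argument, and your skeleton (double mapping cylinder $Z$, open-cover groupoid van Kampen, cofibration-controlled comparison $q\colon Z\to Y$) is indeed the standard route. The first half is sound: $Z_1$ and $Z_2$ are open, deformation retract onto $X_1$ and $X_2$ with the retractions restricting to $\iota\circ\mathrm{pr}$ and $g\circ\mathrm{pr}$ on $Z_1\cap Z_2=X_0\times(0,1)$, and the representativity checks go through. The comparison $q$ is also a homotopy equivalence for essentially the right reasons, though your appeal to Theorem~\ref{th:cofib_equiv} is slightly misplaced as stated: the clean argument collapses the mapping cylinder $M_\iota\to X_1$, which is a homotopy equivalence under $X_0$ because both the free-end inclusion (a closed cofibration for any map) and $\iota$ (by hypothesis) are cofibrations---this is where Theorem~\ref{th:cofib_equiv} applies---and then glues with $\id_{X_2}$ along $g$, exactly the argument pattern of Lemma~\ref{le:comp_cc}.

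The genuine gap is in both transport steps. The open-cover theorem produces a pushout whose span is $\pi(Z_1,B\cup(A\times\{\tfrac12\}))\leftarrow\pi(Z_1\cap Z_2,A\times\{\tfrac12\})\to\pi(Z_2,C\cup(A\times\{\tfrac12\}))$; only the middle corner is honestly isomorphic to $\pi(X_0,A)$ (via the product decomposition of $\pi(X_0\times(0,1))$). The outer corners are merely \emph{equivalent} to $\pi(X_1,B)$ and $\pi(X_2,C)$: the retraction $Z_1\to X_1$ sends both $a\times\{\tfrac12\}$ and $\iota(a)\in B$ to $\iota(a)$, so the induced functor is surjective but not injective on objects. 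Likewise your claim that ``the induced $\pi(q)$ is an isomorphism carrying the base-point set of $Z$ onto $D$'' is false: on objects $\pi(q)$ merges $a\times\{\tfrac12\}$, $\iota(a)$ and $g(a)$ into one point of $D=B\sqcup_A C$, and no choice of basepoints repairs this (if $g|_A$ is not injective there is not even a representative transversal). This matters because strict pushouts in $\Grpd$ are \emph{not} invariant under equivalences: the indiscrete groupoid $E$ on two objects is the pushout of the span $E\leftarrow\emptyset\to\emptyset$, yet the collapse $E\to\{*\}$, a surjective equivalence commuting with the cocone, is not a pushout cocone (the functor $E\to(\Z,+)$ sending the generating isomorphism to $1$ does not factor through it). Nor can you fall back on a retract-of-pushout argument: a section of $\pi(q)$ commuting with both cocone legs would have to send the merged basepoint to $\iota(a)$ and to $g(a)$ simultaneously. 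The honest repair is to verify the universal property of $\pi(Y,D)$ directly---every cocone under the span automatically kills the cylinder isomorphisms $\iota(a)\to g(a)$ in $\pi(Z,B\cup(A\times\{\tfrac12\})\cup C)$, so the universal map out of the $Z$-pushout factors uniquely through the quotient identifying them, which one then identifies with $\pi(Y,D)$ using that $q$ is a homotopy equivalence---or to invoke Brown's deformation-retraction lemmas for basepoint sets, which exist for exactly this purpose. Note finally that the hypothesis $f(A)=B\cap f(X_0)$ does more than the representativity you assign to it: it is what makes the object-set pushout $B\sqcup_A C$ embed in $Y$ as $D$, i.e.\ makes the square correct already at the level of objects, and your proof must use it at precisely this transport step.
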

\begin{corollary}\label{co:vk}
	Now suppose in addition to the above conditions, we instead consider $f=i\colon X_0 \to X_1$ any cofibration in \eqref{po_vk}.
	Then the following square is a pushout.
	\[
	\begin{tikzcd}[ampersand replacement=\&]
	\pi(X_0,A) \ar[r,"\pi(i)"]\ar[d,"\pi(g)"'] \& \pi(X_1,B)\ar[d,"\pi(p_1)"] \\
	\pi(X_2,C)\ar[r,"\pi(p_2)"'] \& \pi(X_1\sqcup_{X_0} X_2,D)
	\end{tikzcd}
	\]
	\end{corollary}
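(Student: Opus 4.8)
The plan is to reduce the general cofibration case to the inclusion case already settled in Theorem~\ref{th:vk}, by factoring the cofibration through its image. Write $X_0' = i(X_0) \subseteq X_1$, and factor $i = \iota \circ h$, where $h\colon X_0 \to X_0'$, $x\mapsto i(x)$, and $\iota\colon X_0' \to X_1$ is the inclusion. By Theorem~\ref{th:homeo_onto_im} the cofibration $i$ is an embedding, so $h$ is a homeomorphism onto $X_0'$ with the subspace topology; and since $i$ is a cofibration which is a homeomorphism onto its image, Proposition~\ref{pr:cofibdpair} gives that $(X_1, X_0')$ is a cofibred pair. Thus $\iota$ is the inclusion of a cofibred pair, which is exactly the situation to which Theorem~\ref{th:vk} applies.

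Next I would check that passing from $i$ to $\iota$ leaves the pushout unchanged. Setting $g' = g\circ h^{-1}\colon X_0' \to X_2$, the relation defining $X_1\sqcup_{X_0'} X_2$ (the pushout of $\iota$ and $g'$) identifies $\iota(x')=x'$ with $g'(x')$ for $x'\in X_0'$; taking $x'=i(x)$ this reads $i(x)\sim g(x)$, which is precisely the relation defining $X_1\sqcup_{X_0} X_2$. Hence the two pushout spaces coincide, and likewise the two bottom-right maps $p_1,p_2$ agree. I would then choose the representative subset $A' = i(A)$ of $X_0'$ and verify the hypotheses of Theorem~\ref{th:vk} for the data $(\iota, g', A', B, C)$: the set $A'=i(A)=h(A)$ is representative in $X_0'=h(X_0)$ by Lemma~\ref{le:surjection_of_representative_set} applied to the surjection $h$; the required inclusion condition $A' = X_0'\cap B$ is exactly the corollary's hypothesis $i(A)=B\cap i(X_0)$; we have $g'(A')=g(A)\subseteq C$; and $D' = B\sqcup_{A'} C$ coincides with $D=B\sqcup_A C$ by the same identification-of-relations argument as above. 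Theorem~\ref{th:vk} then yields a pushout square with initial corner $\pi(X_0', i(A))$, top leg $\pi(\iota)$, left leg $\pi(g')$, and terminal corner $\pi(X_1\sqcup_{X_0} X_2, D)$.

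Finally I would transport this square along the groupoid isomorphism $\pi(h)\colon \pi(X_0, A) \to \pi(X_0', i(A))$ induced by $h$ (which restricts to the bijection $A\to i(A)$). Functoriality of $\pi$ gives $\pi(\iota)\circ\pi(h)=\pi(i)$ and $\pi(g')\circ\pi(h)=\pi(g)$, so the corollary's square is obtained from the pushout square above by precomposing the two legs out of the initial corner with the isomorphism $\pi(h)$. Precomposing the maps out of the initial object of a span with an isomorphism produces an isomorphic diagram of shape $\mathbf{P}$, hence with the same colimit; therefore the corollary's square is again a pushout.

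The only point needing genuine care is the bookkeeping in the second paragraph: that both the pushout \emph{space} $X_1\sqcup_{X_0} X_2$ and the chosen basepoint pushout $D$ are literally unchanged under the factorization, and that the hypothesis $i(A)=B\cap i(X_0)$ translates exactly into the inclusion condition $A'=X_0'\cap B$ demanded by Theorem~\ref{th:vk}. Everything else is formal, resting on functoriality of $\pi$ and the invariance of pushouts under isomorphism of the underlying diagram.
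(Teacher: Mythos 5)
Your proof is correct and follows essentially the same route as the paper: both factor the cofibration $i$ through its image as a homeomorphism followed by a cofibred inclusion (via Theorem~\ref{th:homeo_onto_im} and Proposition~\ref{pr:cofibdpair}), observe that the pushout and subset data are unchanged, apply Theorem~\ref{th:vk} with the subset $i(A)\subseteq i(X_0)$, and transport the resulting square back along the induced groupoid isomorphism using functoriality of $\pi$. Your write-up merely makes explicit the bookkeeping (the coincidence of the pushout relations, the verification of the representativity and intersection hypotheses, and the invariance of pushouts under precomposition with an isomorphism) that the paper's proof leaves implicit.
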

\begin{proof}
	Using Proposition \ref{pr:cofibdpair} and Theorem \ref{th:homeo_onto_im} we can separate the cofibration $i$ into two maps, a homeomorphism $\tilde{i}\colon X_0 \to i(X_0)$
	and a cofibred inclusion $\iota \colon i(X_0)\to X_1$.
	Consider the following commuting pushouts.
	\[ 
	\begin{tikzcd}[ampersand replacement=\&]
		i(X_0)\ar[rr,"\iota"]\ar[dd,"g\circ\tilde{i}^{-1}"']\&[-25pt] \& X_1\ar[dd,"p_1"] \\[-1.5ex]
		\& X_0\ar[dl,"g"]\ar[ur,"i"']\ar[ul,"\tilde{i}"] \& \\
		X_2\ar[rr,"p_2"'] \& \& X_1 \sqcup_{X_0} X_2
	\end{tikzcd}
	\]	
	Observe that the pushout of $g$ and $i$ is also the pushout of $g\circ \tilde{i}^{-1}$ and $\iota$, since $\tilde{i}$ is a homeomorphism. 
	Choosing the subset $i(A)$ in $i(X_0)$ and keeping all other subsets as in the statement of the corollary, the outer pushout is preserved by the fundamental groupoid functor by Theorem \ref{th:vk}.
	Hence, by functoriality, so is the inner pushout.
\end{proof}

\section{Homotopy cobordisms}\label{sec:homcobs}

In this section the main result is Theorem~\ref{th:HomCob} which says that we have a symmetric monoidal category, $\HomCob$, of \textit{homotopy cobordisms}.

We proceed by first constructing a magmoid whose morphisms are  \textit{concrete cofibrant cospans}, which compose via pushouts. 
We then quotient by a congruence, defined in terms of a homotopy equivalence, to obtain the category $\CofCsp$ (Theorem~\ref{th:CofCsp}) and show that there exists a symmetric monoidal structure on $\CofCsp$ (Theorem~\ref{th:CofCsp_symm}).
We add a finiteness condition to the topological spaces in the cospans to arrive at the category $\HomCob$ as a symmetric monoidal subcategory of $\CofCsp$ (Theorems~\ref{HomCob} and \ref{th:HomCob}).

We note that the choices we make in this section are made with the type of functor into $\VectC$ that we will construct already in mind. 
We have already said that we are interested in functors which depend on the homotopy of the spaces, and Corollary~\ref{co:vk} tells us that cofibrations are the maps that behave well with respect to the fundamental groupoid.
Further the congruence is defined in terms of a suitable version of homotopy equivalence --
the aim is that the category $\HomCob$ contains interesting subgroupoids which are finitely generated, and thus manageable structures to work with, but we don't want to make any morphisms equivalent that might have been mapped to different linear maps in $\Vect$.

We note that our cospan categories deviate from those of e.g. \cite{fong}, in our choice of identity.
For Fong, the category identity at an object $X$ is the equivalence class of the cospan $\cmor{X}{\id_X}{X}{\id_X}{X}$.
In a topological quantum field theory we require that any arbitrary time evolution of a state $X$ is evaluated as the identity if the state $X$ does not change. Hence we insist identities are the equivalence classes of cospans of the form $\cmor{X\times \II}{\iota_0^X}{X}{\iota_1^X}{X}$ (where $\iota^X_a\colon X \to X \times \II$ is the map $x\mapsto (x,a)$).
As a result more work is required to prove that this is, in fact, an identity; see Theorem~\ref{th:CofCsp}.

\subsection{Magmoid of concrete cofibrant cospans \texorpdfstring{$\cCofCsp$}{CofCsp}}
Here we define \textit{concrete cofibrant cospans}, construct a composition and organise them into a magmoid. 

\begin{definition}
Let $X$, $Y$ and $M$ be spaces.
  A {\em \ccc{}} from $X$ to $Y$ 
  is a diagram $\cmor{M}{i}{X}{j}{Y}$ such that $\langle i, j\rangle \colon X \sqcup Y \to M$ is a closed cofibration. (The map $\langle i,j\rangle$ is obtained via the universal property of the coproduct, see Diagram \eqref{eq:coprod}.) \\
  For spaces $X,Y\in \Topo$, we define the set of all concrete cofibrant cospans from $X$ to $Y$
  \[
  \cCofCsp(X,Y)=\left\{ \cmortikz{M}{i}{X}{j}{Y}\; \middle\vert \; \lan i , j \ran \text{ is a closed cofibration} \right\}.
  \]
\end{definition}

\rem{The previous definition forces the images of $i$ and $j$ to be disjoint since a cofibration is a homeomorphism onto its image (Theorem~\ref{th:homeo_onto_im}).}

\begin{figure}
	\centering
	\def\svgwidth{0.5\columnwidth}
\begingroup%
  \makeatletter%
  \providecommand\color[2][]{%
    \errmessage{(Inkscape) Color is used for the text in Inkscape, but the package 'color.sty' is not loaded}%
    \renewcommand\color[2][]{}%
  }%
  \providecommand\transparent[1]{%
    \errmessage{(Inkscape) Transparency is used (non-zero) for the text in Inkscape, but the package 'transparent.sty' is not loaded}%
    \renewcommand\transparent[1]{}%
  }%
  \providecommand\rotatebox[2]{#2}%
  \newcommand*\fsize{\dimexpr\f@size pt\relax}%
  \newcommand*\lineheight[1]{\fontsize{\fsize}{#1\fsize}\selectfont}%
  \ifx\svgwidth\undefined%
    \setlength{\unitlength}{488.21681074bp}%
    \ifx\svgscale\undefined%
      \relax%
    \else%
      \setlength{\unitlength}{\unitlength * \real{\svgscale}}%
    \fi%
  \else%
    \setlength{\unitlength}{\svgwidth}%
  \fi%
  \global\let\svgwidth\undefined%
  \global\let\svgscale\undefined%
  \makeatother%
  \begin{picture}(1,0.40483431)%
    \lineheight{1}%
    \setlength\tabcolsep{0pt}%
    \put(0,0){\includegraphics[width=\unitlength,page=1]{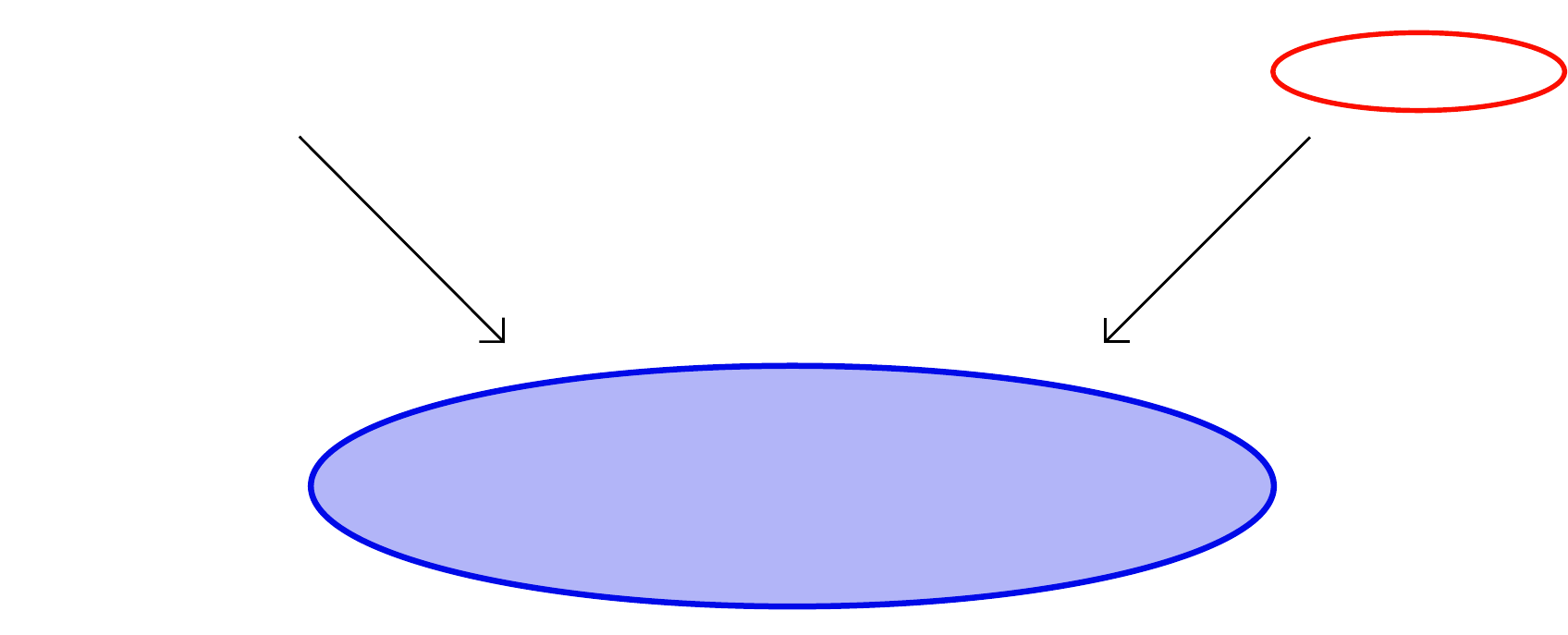}}%
    \put(0.93570791,0.29099505){\makebox(0,0)[lt]{\lineheight{1.25}\smash{\begin{tabular}[t]{l}$S^1$\end{tabular}}}}%
    \put(0.7206395,0.00240832){\makebox(0,0)[lt]{\lineheight{1.25}\smash{\begin{tabular}[t]{l}$D^2$\end{tabular}}}}%
    \put(0.78208758,0.21747674){\makebox(0,0)[lt]{\lineheight{1.25}\smash{\begin{tabular}[t]{l}$j$\end{tabular}}}}%
    \put(0.18296851,0.23283882){\makebox(0,0)[lt]{\lineheight{1.25}\smash{\begin{tabular}[t]{l}$i$\end{tabular}}}}%
    \put(0,0){\includegraphics[width=\unitlength,page=2]{ccc_S1,D2.pdf}}%
    \put(0.93570791,0.29099505){\makebox(0,0)[lt]{\lineheight{1.25}\smash{\begin{tabular}[t]{l}$S^1$\end{tabular}}}}%
    \put(0.93570791,0.29099505){\makebox(0,0)[lt]{\lineheight{1.25}\smash{\begin{tabular}[t]{l}$S^1$\end{tabular}}}}%
    \put(0.03426862,0.28633963){\makebox(0,0)[lt]{\lineheight{1.25}\smash{\begin{tabular}[t]{l}$S^1$\end{tabular}}}}%
  \end{picture}%
\endgroup%

	\caption[Example of a \ccc{} from $S^1$ to $S^1$]{Here $i$ is a diffeomorphism from $S^1$ to the boundary of $D^2$, and $j$ is a smooth embedding of $S^1$ into the interior of the disk $D^2$. We have that  $\cmor{D^2}{i}{S^1}{j}{S^1}$ is a \ccc{} (Proposition~\ref{disk}).} 
	\label{fig:ccc_S1,D2}
\end{figure}

\begin{example}
	Let $X$ be a space.
	The cospan $\cmor{X}{\id_X}{X}{\id_X}{X}$ is not a \ccc{}, unless $X=\emptyset$. 
	This is clear from the previous remark.
\end{example}

Physically we expect the identity cospan to be (the equivalence class of)
$\cmor{X\times \II}{\iota^X_0}{X}{\iota^X_1}{X}$.

\begin{proposition}\label{pr:identity_ccc}
	For $X$ a topological space, the cospan
	$\cmor{X\times \II}{\iota^X_0}{X}{\iota^X_1}{X}$ is a \ccc{} (where $\iota^X_a\colon X \to X \times \II$ is the map $x\mapsto (x,a)$).
\end{proposition}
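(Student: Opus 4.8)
The goal is to show that $\langle \iota^X_0, \iota^X_1\rangle\colon X \sqcup X \to X\times\II$ is a closed cofibration. The plan is to first identify its image, observe that it is an embedding, and then reduce the cofibration condition to a cofibred-pair statement that can be handled by the retraction criterion of Proposition~\ref{retract}.

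First I would observe that $\langle \iota^X_0, \iota^X_1\rangle$ has image $X\times\{0,1\}$, which is closed in $X\times\II$, and that the map is a homeomorphism onto this image: since $\{0,1\}$ carries the discrete topology as a subspace of $\II$, the subspace $X\times\{0,1\}$ is the topological disjoint union $X\times\{0\}\sqcup X\times\{1\}$, and $\langle \iota^X_0, \iota^X_1\rangle$ restricts to a homeomorphism on each summand (this is the embedding phenomenon of Theorem~\ref{th:homeo_onto_im}, though here it is immediate). By Proposition~\ref{pr:cofibdpair}, it then suffices to prove that $(X\times\II, X\times\{0,1\})$ is a cofibred pair; closedness of the image then upgrades the resulting cofibration to a \emph{closed} cofibration, as required by the definition of a \ccc{}.

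Since $X\times\{0,1\}$ is closed, Proposition~\ref{retract} reduces this to exhibiting a retraction of $(X\times\II)\times\II$ onto $(X\times\II)\times\{0\}\,\cup\,(X\times\{0,1\})\times\II$. Writing the two interval factors as $\II_s$ (the cospan direction) and $\II_t$ (the auxiliary homotopy direction), the coordinate-swapping homeomorphism $(X\times\II_s)\times\II_t\cong X\times(\II_s\times\II_t)$ carries this target set to $X\times\big(\II_s\times\{0\}\,\cup\,\{0,1\}\times\II_t\big)$. Now Example~\ref{ex:I,{0,1}} supplies a retraction $\rho\colon \II_s\times\II_t\to \II_s\times\{0\}\cup\{0,1\}\times\II_t$, and I would simply take $\id_X\times\rho$, which is a retraction onto the required subspace precisely because $\rho$ fixes its image pointwise.

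I expect no deep obstacle here: the only genuine content is the interval case, which is already Example~\ref{ex:I,{0,1}}, and the argument above is the standard ``multiply the retraction by $X$'' trick. The one place to be careful is the coordinate bookkeeping --- keeping the cospan interval $\II_s$ and the auxiliary homotopy interval $\II_t$ distinct, and verifying that the image of $\langle \iota^X_0,\iota^X_1\rangle$ is genuinely closed and that the map is an embedding --- but none of this requires more than the cited results. (Alternatively one could first prove, once and for all, that $\id_X\times i$ is a cofibration whenever $i$ is, and then specialise; the proof sketched here is essentially that statement for $i\colon\{0,1\}\to\II$.)
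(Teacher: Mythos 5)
Your proof is correct, but it routes through different machinery than the paper. The paper verifies the homotopy extension property directly against Definition~\ref{de:homotopy_extension}: given a test space $K$, it splits the homotopy $h\colon (X\sqcup X)\times \II\to K$ into two maps $h_0,h_1\colon X\times\II\to K$ (this needs the fact that $-\times\II$ preserves coproducts, via Lemma~\ref{le:producthom} and Theorem~\ref{th:lapcl}), glues them with $f$ to a map $g$ on $X\times L$ where $L=\{0,1\}\times\II\cup\II\times\{0\}$, and sets $H=g\circ(\id_X\times\Gamma)$ with $\Gamma$ the retraction of Example~\ref{ex:I,{0,1}}. You instead check once that $\lan\iota_0^X,\iota_1^X\ran$ is an embedding with closed image $X\times\{0,1\}$, reduce via Proposition~\ref{pr:cofibdpair} to the cofibred pair $(X\times\II,X\times\{0,1\})$, and then apply the retract criterion of Proposition~\ref{retract}, exhibiting the retraction as $\id_X\times\rho$ after a coordinate swap --- with $\rho$ the very same retraction from Example~\ref{ex:I,{0,1}}. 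So the two arguments share their only real geometric content (the projection of the square onto three sides of its boundary) but differ in packaging: your version delegates the quantification over all test spaces to Proposition~\ref{retract}, avoids the coproduct bookkeeping for $h$ entirely, and in passing establishes the reusable fact that crossing a closed cofibred pair with $\id_X$ preserves cofibredness; the paper's version stays at the level of the raw definition and never needs the embedding check, since it constructs the extension $H$ by hand. Your coordinate bookkeeping ($\II_s$ versus $\II_t$) is handled correctly, the closedness of the image is verified where needed, and your parenthetical citation of Theorem~\ref{th:homeo_onto_im} is harmless since you supply the direct argument (clopen decomposition of $X\times\{0,1\}$) rather than relying on it, which would have been circular.
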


\begin{proof}
	The complement of the image of $\langle \iota^X_0 , \iota^X_1 \rangle \colon X\sqcup X \to X\times \II$ is $X\times(0,1)$ which is open, so the image is a closed set.
	
	We now show $\langle \iota^X_0 , \iota^X_1 \rangle \colon X\sqcup X \to X\times \II$ is a cofibration. Let $K$ be any space and suppose we have a homotopy $h\colon (X\sqcup X)\times \II \to K$.
	By Lemma~\ref{le:producthom} and Theorem~\ref{th:lapcl}, the product with $\II$ preserves colimits, using this together with the universal property of the coproduct, the map $h\colon (X\sqcup X)\times \II \to K$ is uniquely defined by a pair of maps $h_0\colon X\times \II\to K$ and $h_1\colon X\times \II\to K$. 
	Now suppose we have a map $f\colon X \times \II \to K$
	such that for all $x\in X$ we have $h_0(x,0)=f(x,0)$ and $h_1(x,0)=f(x,1)$.
	(Notice this implies $h(\tilde{x},0)=f(\lan\iota_0^X,\iota_0^Y\ran(\tilde{x}))$ for $\tilde{x}\in X\sqcup X$.)
	
	We can construct a homotopy $H\colon (X \times \II) \times \II \to K$ which commutes with $h$ and $f$ as follows.
	Let $L = \left\{ 0,1 \right\} \times \II \cup \II \times \left\{ 0\right\}$ be the subset of the unit square consisting of the two vertical edges and the bottom horizontal edge.
	Let $\Gamma \colon \II \times \II \to L$ be a retraction sending the unit square to the subset $L$, see Example~\ref{ex:I,{0,1}}.
	We denote elements of $X\times L\subset (X\times \II)\times \II$ as triples $(x,s,t)$ and define $g\colon X\times L\to K$ as
	\[
	g(x,s,t)=\begin{cases}
		f(x,s) & t=0,\\
		h_0(x,t) & s=0,\\
		h_1(x,t) & s=1.
	\end{cases}
	\]
	By assumption these agree on the overlap and so $g$ is continuous.
	Now define $H\colon (X\times \II) \times \II \to K$
	by
	$g(x,\Gamma(s,t))$.
\end{proof}

\begin{proposition}\label{disk}
	(See Figure~\ref{fig:ccc_S1,D2}.)
	Consider $S^1$ and $D^2$ as smooth manifolds.
	There is a \ccc{}
	$\cmor{D^2}{i}{S^1}{j}{S^1}$
	where $i$ is any diffeomorphism sending $S^1$ to the boundary of $D^2$, and $j$ is any smooth embedding of $S^1$ into the interior of $D^2$.
\end{proposition}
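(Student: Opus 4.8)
The plan is to show that the image $\langle i,j\rangle(S^1\sqcup S^1)=\partial D^2\cup j(S^1)$ is a closed subset and that the inclusion of this image into $D^2$ is a cofibration; the result then follows from Proposition~\ref{pr:cofibdpair}, once I observe that $\langle i,j\rangle$ is a homeomorphism onto its image. Closedness is immediate: $i(S^1)=\partial D^2$ is closed, while $j(S^1)$ is the continuous image of the compact space $S^1$, hence compact and therefore closed in the Hausdorff space $D^2$; a union of two closed sets is closed. Moreover $\partial D^2$ and $j(S^1)$ are disjoint, since $j$ embeds $S^1$ into the interior, so each is clopen in their union $\partial D^2\cup j(S^1)$. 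Consequently the subspace topology on the union agrees with the coproduct topology, and $\langle i,j\rangle$ is indeed a homeomorphism onto its image.

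It remains to prove that $(D^2,\partial D^2\cup j(S^1))$ is a cofibred pair. Rather than invoke Propositions~\ref{pr:mfld_bdy} and \ref{pr:sbmfld_cofib} as black boxes (which would force me to shrink the neighbourhoods they produce), I would \emph{patch together} the constructions used in their proofs and verify the hypotheses of Theorem~\ref{th:hom_for_closed_cofibred_pair} directly. The key geometric input is that the two circles can be separated: since $j(S^1)$ is compact and contained in the open disk $\mathrm{int}(D^2)$, the distance $\delta:=\mathrm{dist}\bigl(j(S^1),\partial D^2\bigr)$ is strictly positive. Hence I may choose a collar neighbourhood $C\cong \partial D^2\times\II$ of the boundary (from the Collar Neighbourhood Theorem, as in Proposition~\ref{pr:mfld_bdy}) and a closed tubular neighbourhood $T\cong \bar V$ of $j(S^1)$ (from the Tubular Neighbourhood Theorem, as in Proposition~\ref{pr:sbmfld_cofib}) that are \emph{disjoint}, by taking both to lie inside the $\tfrac{\delta}{2}$-neighbourhoods of the respective circles.

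With $C$ and $T$ disjoint, I would set $U=C\cup T$ and define the required data piecewise. The homotopy $H\colon U\times\II\to D^2$ equals the collar retraction on $C$ and the tubular retraction on $T$, each deforming its neighbourhood onto the corresponding circle while fixing that circle, so that $H(x,0)=x$, $H(x,1)\in\partial D^2\cup j(S^1)$, and $H(a,t)=a$ for $a$ in either circle. The map $\phi\colon D^2\to\II$ is defined to be the collar coordinate $s$ on $C$, the radial coordinate $\lVert v\rVert$ on $T$, and $1$ on $D^2\setminus U$; because each coordinate attains the value $1$ at the outer edge of its neighbourhood and the two neighbourhoods are disjoint, $\phi$ is continuous, and $\phi^{-1}(0)=\partial D^2\cup j(S^1)$. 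Theorem~\ref{th:hom_for_closed_cofibred_pair} then yields that $(D^2,\partial D^2\cup j(S^1))$ is cofibred, and Proposition~\ref{pr:cofibdpair} completes the proof.

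The only delicate point — and where I expect to have to be careful — is the \emph{disjointness and continuity of the patched data}: one must check that $\delta>0$ really allows the collar and the tube to be shrunk to be disjoint, and that the piecewise definitions of $H$ and $\phi$ glue to continuous maps (which they do precisely because $\phi$ reaches $1$ at the boundary of each neighbourhood and the neighbourhoods do not meet). A single application of Proposition~\ref{pr:sbmfld_cofib} does not suffice, because $\partial D^2$ lies in $\partial(D^2)$ rather than in the interior, so its normal neighbourhood is one-sided (a collar); this is exactly why the boundary circle and the interior circle must be treated by the two different propositions before being combined.
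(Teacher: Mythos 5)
Your proof is correct, but it takes a genuinely more hands-on route than the paper's. The paper's proof is three lines: it factors $\langle i,j\rangle$ as the homeomorphism $S^1\sqcup S^1\to i(S^1)\sqcup j(S^1)$ followed by the inclusion $i(S^1)\sqcup j(S^1)\hookrightarrow D^2$, cites Proposition~\ref{pr:sbmfld_cofib} \emph{once} for the inclusion (treating the union of the two circles as a single closed submanifold, which is vacuously ``neatly embedded'' under the paper's definition $\partial N\subseteq\partial M$ since $\partial N=\emptyset$), and concludes by closure of cofibrations under composition (Lemma~\ref{le:cofib_comp}), leaving closedness of the image implicit (a compact subset of a Hausdorff space). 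You instead re-open the proofs of Propositions~\ref{pr:mfld_bdy} and \ref{pr:sbmfld_cofib}: using $\delta=\mathrm{dist}\bigl(j(S^1),\partial D^2\bigr)>0$ you shrink a collar of $\partial D^2$ and a closed tube around $j(S^1)$ until they are disjoint, and verify Str\o m's criterion (Theorem~\ref{th:hom_for_closed_cofibred_pair}) directly with piecewise $H$ and $\phi$, then finish via Proposition~\ref{pr:cofibdpair}. What your extra work buys is rigour at exactly the point where the paper is loose: the proof of Proposition~\ref{pr:sbmfld_cofib} invokes Hirsch's tubular neighbourhood theorem, which concerns neat submanifolds in the stronger sense $\partial N=N\cap\partial M$ --- a condition the component $\partial D^2$ fails, since $N\cap\partial D^2=\partial D^2\neq\emptyset=\partial N$; the neighbourhood of the boundary circle is a one-sided collar, exactly as you observe. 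So your patched construction is in effect the argument that makes the paper's one-shot citation legitimate, at the cost of redoing the gluing and continuity checks, which you carry out correctly (the only cosmetic point: with radii exactly $\delta/2$ disjointness needs the strict triangle inequality, so radii $\delta/3$ are tidier). Your remaining steps --- closedness, disjointness of the images, the clopen decomposition showing the subspace topology on $\partial D^2\cup j(S^1)$ agrees with the coproduct topology, and the reduction to a cofibred pair via Proposition~\ref{pr:cofibdpair} --- all sit squarely within the paper's framework and are sound.
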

\begin{proof} 
	The map $\lan i,j\ran\colon S^1\sqcup S^1\to D^2$ is the composition of a homeomorphism from $S^1\sqcup S^1$ to $i(S^1)\sqcup j(S^1)$, and an inclusion $\iota\colon i(S^1)\sqcup j(S^1) \to D^2$.
	Proposition~\ref{pr:sbmfld_cofib} gives that $\iota$ is a cofibration, by Proposition~\ref{pr:cofibdpair} the homeomorphism is a cofibration, and by Proposition~\ref{le:cofib_comp} the composition is a cofibration.
\end{proof}

The following definition can be found in e.g. \cite{lurie}, where it is referred to as a bordism.
\begin{definition}\label{de:cobord}
	An $n$-dimensional 
	{\em concrete cobordism} from an $(n-1)$-dimensional smooth oriented manifold $X$ to an $(n-1)$-dimensional smooth oriented manifold $Y$, is an $n$-dimensional smooth oriented manifold $M$ equipped with an orientation preserving  diffeomorphism 
	$\phi \colon \bar{X}\sqcup Y \to \partial M$ (where the bar denotes the opposite orientation).
\end{definition}

\begin{proposition}\label{pr:concob_to_concsp}
	There is a canonical way to map a concrete cofibration to a \ccc{}. Precisely,
	let $X$, $Y$ and $M$ be smooth oriented manifolds, and 
	let $M$ be a concrete cobordism from $X$ to $Y$. Hence there exists a diffeomorphism $\phi\colon \bar{X}\sqcup Y\to \partial M$.
	Define maps $i(x)=\phi(x,0)$ and $j(y)=\phi(y,1)$.
	Then, using $X$, $Y$ and $M$ to denote the underlying topological spaces, $\cmor{M}{i}{X}{j}{Y}$ is a \ccc{}.
\end{proposition}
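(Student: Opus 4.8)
The plan is to verify the single defining condition of a concrete cofibrant cospan: that $\langle i,j\rangle\colon X\sqcup Y\to M$ (the map from Diagram~\eqref{eq:coprod}) is a closed cofibration. The strategy mirrors the proof of Proposition~\ref{disk}, namely to factor $\langle i,j\rangle$ through the boundary $\partial M$ and exhibit it as a composite of two cofibrations with closed image. First I would observe that passing to underlying topological spaces, the opposite orientation is irrelevant, so $\bar{X}\sqcup Y$ and $X\sqcup Y$ are the same space and $\phi$ restricts to a homeomorphism $\tilde\phi\colon X\sqcup Y\to \partial M$. By the universal property defining $\langle i,j\rangle$, together with $i(x)=\phi(x,0)$ and $j(y)=\phi(y,1)$, we have the factorisation
\[
\langle i,j\rangle = \iota_{\partial M}\circ \tilde\phi,
\]
where $\iota_{\partial M}\colon \partial M\to M$ is the inclusion of the boundary.

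Next I would handle each factor. The homeomorphism $\tilde\phi$ is a cofibration by Lemma~\ref{le:homeo_cofib}. For the inclusion $\iota_{\partial M}$, note that $\partial M$ is a closed subset of $M$ and is a neatly embedded closed smooth submanifold (trivially, since $\partial(\partial M)=\emptyset\subseteq \partial M$), so Proposition~\ref{pr:sbmfld_cofib} gives that $\iota_{\partial M}$ is a cofibration; alternatively, when $M$ is compact this is immediate from Proposition~\ref{pr:mfld_bdy}. Applying Lemma~\ref{le:cofib_comp}, the composite $\langle i,j\rangle$ is then a cofibration. Finally, its image is $\partial M$, which is closed in $M$, so $\langle i,j\rangle$ is a \emph{closed} cofibration. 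This is exactly the condition required, so $\cmor{M}{i}{X}{j}{Y}$ is a concrete cofibrant cospan.

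The only real subtlety, and the step I expect to need the most care, is confirming that the inclusion of the boundary is a cofibration in the generality in which cobordisms are stated in Definition~\ref{de:cobord}: Proposition~\ref{pr:mfld_bdy} is phrased for compact $M$, whereas the definition does not explicitly demand compactness. I would resolve this by invoking Proposition~\ref{pr:sbmfld_cofib} instead, which applies to any closed neatly embedded submanifold and hence to $\partial M\subseteq M$ without a compactness hypothesis (its proof runs through the tubular/collar neighbourhood theorem, which is available for the boundary of an arbitrary smooth manifold). Everything else is a routine bookkeeping of the factorisation, so no lengthy computation is needed.
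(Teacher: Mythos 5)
Your proof is correct and structurally the same as the paper's: both factor $\langle i,j\rangle$ as the homeomorphism $X\sqcup Y\to\partial M$ induced by $\phi$ (orientation being invisible at the level of underlying spaces) followed by the boundary inclusion, deduce that the composite is a cofibration, and conclude closedness from the image being $\partial M$. The only organisational difference is that the paper routes through Proposition~\ref{pr:cofibdpair} (appealing to Proposition~\ref{pr:mfld_bdy} for the cofibred pair $(M,\partial M)$), whereas you compose Lemma~\ref{le:homeo_cofib} with Lemma~\ref{le:cofib_comp} directly --- which is exactly what the proof of Proposition~\ref{pr:cofibdpair} does anyway.

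Your compactness remark is a fair catch: Definition~\ref{de:cobord} does not state compactness while Proposition~\ref{pr:mfld_bdy} assumes it, so the paper's citation is slightly mismatched (though in the intended application, Proposition~\ref{pr:fun_cob}, cobordisms in the sense of \cite{lurie} are compact, so nothing actually breaks). However, the substitute lemma you choose is the one step I would push back on. The hypothesis of Proposition~\ref{pr:sbmfld_cofib} formally admits $N=\partial M$ only because the paper's notion of ``neatly embedded'' ($\partial N\subseteq\partial M$) is weaker than the standard one ($\partial N=N\cap\partial M$, with transversal intersection), and the proof of that proposition runs through the tubular neighbourhood theorem of \cite[Ch.4,Th.6.3]{hirsch}, which concerns neat submanifolds in the standard sense: for $N=\partial M$ the relevant normal structure is a half-space collar $\partial M\times[0,1)$, not the disc bundle $\mathrm{D}_1(\mathrm{v}(N,M))$ used there, so the cited argument does not literally cover the boundary. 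The cleaner repair in the noncompact case is to observe that the collar neighbourhood theorem holds for arbitrary manifolds with boundary (cf.\ \cite{brown62}), after which the proof of Proposition~\ref{pr:mfld_bdy} goes through verbatim without the compactness hypothesis.
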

\begin{proof}
	The pair $(M,\partial M)$ is cofibred by Proposition \ref{pr:mfld_bdy}.
	The map $\copr{i}{j}$ is a homeomorphism onto its image $\partial M$ as $\phi$ is a diffeomorphism, hence, using Proposition \ref{pr:cofibdpair}, $\copr{i}{j}$ is a cofibration.
	The boundary $\partial M$ is closed so $\copr{i}{j}$ a closed cofibration. 
\end{proof}

	\begin{figure}
	\centering
	\def\svgwidth{0.5\columnwidth}
\begingroup%
  \makeatletter%
  \providecommand\color[2][]{%
    \errmessage{(Inkscape) Color is used for the text in Inkscape, but the package 'color.sty' is not loaded}%
    \renewcommand\color[2][]{}%
  }%
  \providecommand\transparent[1]{%
    \errmessage{(Inkscape) Transparency is used (non-zero) for the text in Inkscape, but the package 'transparent.sty' is not loaded}%
    \renewcommand\transparent[1]{}%
  }%
  \providecommand\rotatebox[2]{#2}%
  \newcommand*\fsize{\dimexpr\f@size pt\relax}%
  \newcommand*\lineheight[1]{\fontsize{\fsize}{#1\fsize}\selectfont}%
  \ifx\svgwidth\undefined%
    \setlength{\unitlength}{533.05866408bp}%
    \ifx\svgscale\undefined%
      \relax%
    \else%
      \setlength{\unitlength}{\unitlength * \real{\svgscale}}%
    \fi%
  \else%
    \setlength{\unitlength}{\svgwidth}%
  \fi%
  \global\let\svgwidth\undefined%
  \global\let\svgscale\undefined%
  \makeatother%
  \begin{picture}(1,0.66574144)%
    \lineheight{1}%
    \setlength\tabcolsep{0pt}%
    \put(0,0){\includegraphics[width=\unitlength,page=1]{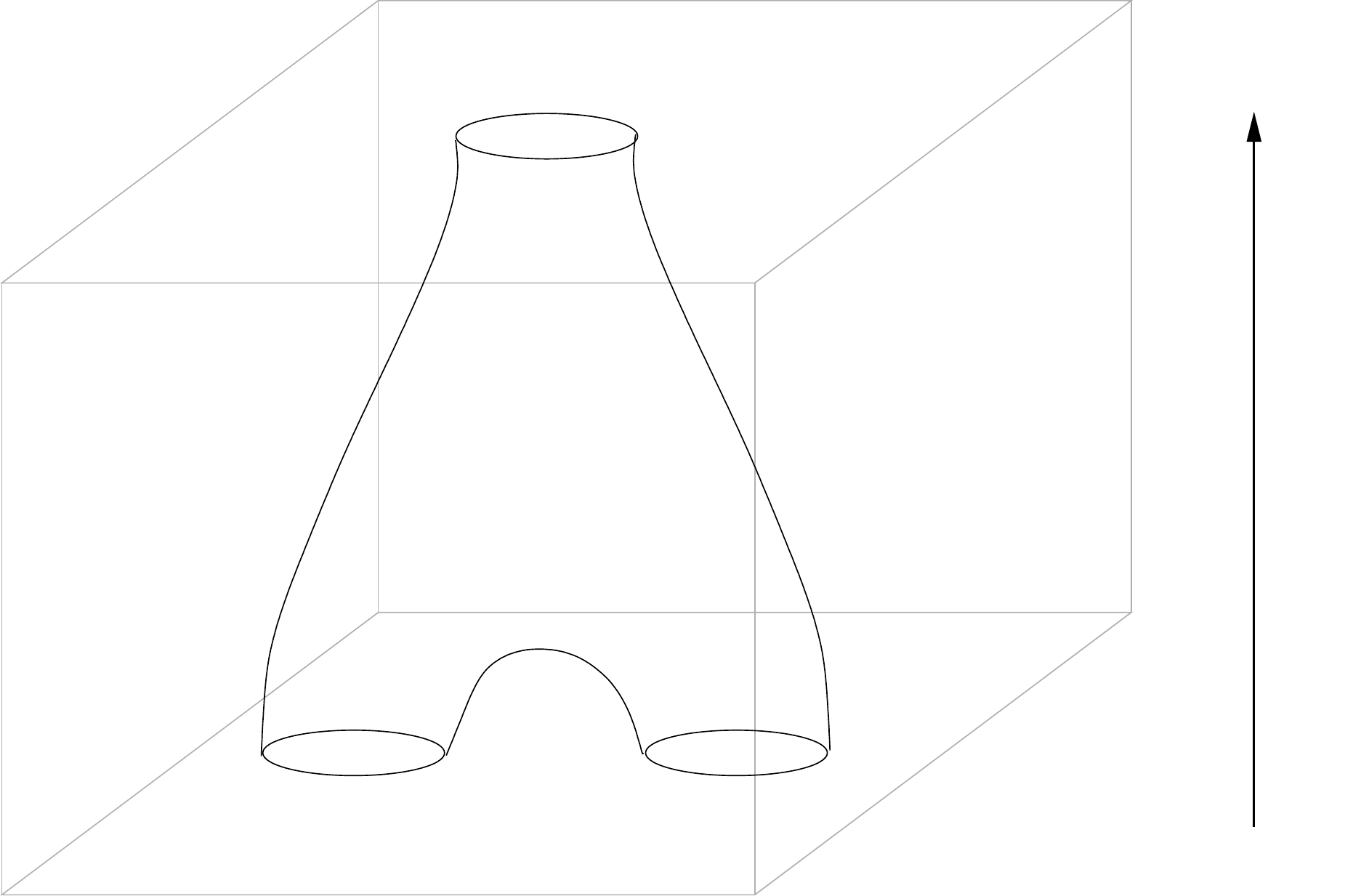}}%
    \put(0.95125213,0.03007156){\color[rgb]{0,0,0}\makebox(0,0)[lt]{\lineheight{1.25}\smash{\begin{tabular}[t]{l}$X$\end{tabular}}}}%
    \put(0.95125213,0.57068379){\color[rgb]{0,0,0}\makebox(0,0)[lt]{\lineheight{1.25}\smash{\begin{tabular}[t]{l}$Y$\end{tabular}}}}%
    \put(0.97006093,0.31062678){\color[rgb]{0,0,0}\makebox(0,0)[lt]{\lineheight{1.25}\smash{\begin{tabular}[t]{l}$M$\end{tabular}}}}%
  \end{picture}%
\endgroup%

	\caption[Example of a \ccc{} obtained from an embedded submanifold]{Here the grey lines represent the corners of the manifold $\II^3$, and the black lines represent an embedded submanifold $M'\subset \II^3$. 
		Let $X$, be the complement of $M'$ in the bottom boundary, $\II^2\times \{0\}$, $Y$ the complement in the top boundary, $\II^2\times \{1\}$, and $M$ the complement in $\II^3$.
		Then there is a \ccc{} $\cmor{M}{i}{X}{j}{Y}$ where $i$ and $j$ are subspace inclusions.} 
	\label{fig:embmer}
\end{figure}

\exa{\label{ex:embmer} Consider the manifold (with corners) $\II^3$, and let $M'$ be an embedded submanifold as illustrated by the black part of Figure~\ref{fig:embmer}. 
	Let $M=\II^3\setminus M'$, $X=(\II^2\times \{0\})\setminus (M\cap (\II^2\times \{0\}))$ and $Y=(\II^2\times \{1\})\setminus (M\cap (\II^2\times \{1\}))$, i.e. $X$ is the complement of $M'$ in bottom boundary in the figure and $Y$ the top boundary.
	There is a \ccc{} $\cmor{M}{i}{X}{j}{Y}$ where  $i$ and $j$ are subspace inclusions.
	We can see this by noticing that there are non-intersecting neighbourhoods of the top and bottom boundary of $M$ that are homeomorphic to $X\times [0,\epsilon]$ and $Y\times [0,\epsilon']$ with $\epsilon,\epsilon'\in \R^{+}$. Thus an $H$ and $\phi$ satisfying the conditions of Theorem~\ref{th:hom_for_closed_cofibred_pair} can be constructed as in the proof of Proposition~\ref{pr:mfld_bdy}. 
	}

\exa{\label{ex:mer}
	Figure~\ref{fig:mer} represents a \ccc{}. Proposition~\ref{pr:mfld_bdy} gives that $\lan i,j\ran$ is a cofibration. Notice also that the boundary is a closed subset of $M$.}

	\begin{figure}
	\centering
	\def\svgwidth{0.3\columnwidth}
\begingroup%
  \makeatletter%
  \providecommand\color[2][]{%
    \errmessage{(Inkscape) Color is used for the text in Inkscape, but the package 'color.sty' is not loaded}%
    \renewcommand\color[2][]{}%
  }%
  \providecommand\transparent[1]{%
    \errmessage{(Inkscape) Transparency is used (non-zero) for the text in Inkscape, but the package 'transparent.sty' is not loaded}%
    \renewcommand\transparent[1]{}%
  }%
  \providecommand\rotatebox[2]{#2}%
  \newcommand*\fsize{\dimexpr\f@size pt\relax}%
  \newcommand*\lineheight[1]{\fontsize{\fsize}{#1\fsize}\selectfont}%
  \ifx\svgwidth\undefined%
    \setlength{\unitlength}{295.76913441bp}%
    \ifx\svgscale\undefined%
      \relax%
    \else%
      \setlength{\unitlength}{\unitlength * \real{\svgscale}}%
    \fi%
  \else%
    \setlength{\unitlength}{\svgwidth}%
  \fi%
  \global\let\svgwidth\undefined%
  \global\let\svgscale\undefined%
  \makeatother%
  \begin{picture}(1,1.08463313)%
    \lineheight{1}%
    \setlength\tabcolsep{0pt}%
    \put(0,0){\includegraphics[width=\unitlength,page=1]{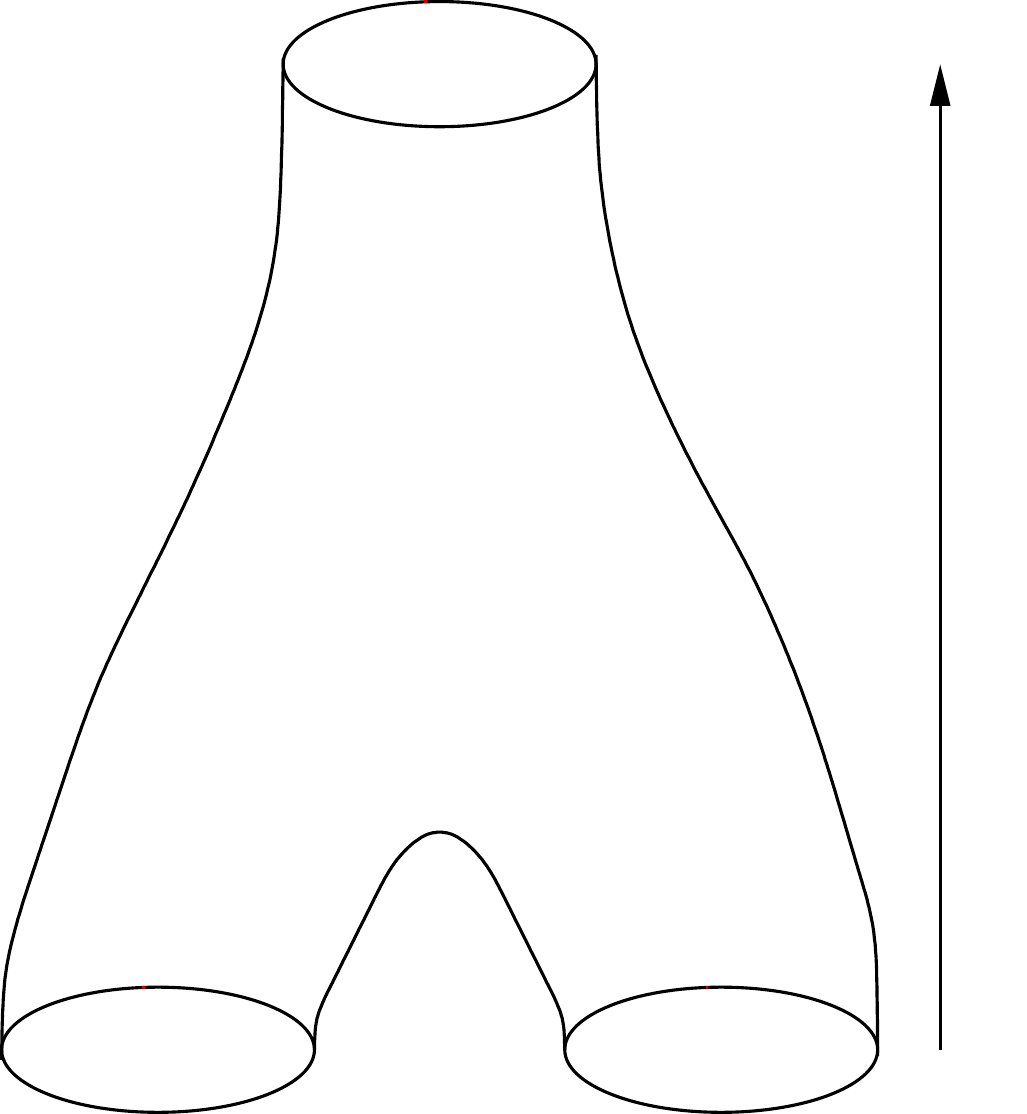}}%
    \put(0.94600726,0.06245781){\color[rgb]{0,0,0}\makebox(0,0)[lt]{\lineheight{1.25}\smash{\begin{tabular}[t]{l}$X$\end{tabular}}}}%
    \put(0.94600726,1.00694171){\color[rgb]{0,0,0}\makebox(0,0)[lt]{\lineheight{1.25}\smash{\begin{tabular}[t]{l}$Y$\end{tabular}}}}%
    \put(0.94600726,0.56516698){\color[rgb]{0,0,0}\makebox(0,0)[lt]{\lineheight{1.25}\smash{\begin{tabular}[t]{l}$M$\end{tabular}}}}%
  \end{picture}%
\endgroup%

	\caption[Example of a \ccc{} from $S^1\sqcup S^1$ to $S^1$]{Let $M$ be a manifold represented by the figure, with boundary homeomorphic to $S^1\cup S^1$ at the bottom, and $S^1$ at the top. Let $X$ be the bottom boundary and $Y$ the top boundary.
		Then there is a \ccc{} $\cmor{M}{i}{X}{j}{Y}$ where $i$ and $j$ are subspace inclusions.} 
	\label{fig:mer}
\end{figure}

\begin{lemma}\label{le:cofcsp_rev}
	For any pair $X,Y\in Ob(\Topo)$ there is a bijection
\ali{
\rev_{X,Y}\; \colon \; \;\cCofCsp(X,Y) \;\;\; \; &\to \;\;\;\; \cCofCsp(Y,X)\\
\cmortikz{M}{i}{X}{j}{Y} \;\;\;\;&\mapsto \;\;\;\; \cmortikz{M.}{j}{Y}{i}{X}}
\end{lemma}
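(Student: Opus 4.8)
The plan is to prove the lemma in two steps: first that $\rev_{X,Y}$ is well defined, i.e.\ that interchanging $i$ and $j$ sends a concrete cofibrant cospan to a concrete cofibrant cospan, and then that it is a bijection. The bijection part will be immediate, so essentially all the content lies in the well-definedness.

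For well-definedness I would show that $\lan j,i\ran\colon Y\sqcup X\to M$ is a closed cofibration whenever $\lan i,j\ran\colon X\sqcup Y\to M$ is. The key observation is that the two maps differ only by the coproduct braiding: using the swap homeomorphism $\beta_{Y,X}\colon Y\sqcup X\to X\sqcup Y$ of Proposition~\ref{pr:Top_braid}, one checks on the two coproduct inclusions that $\lan j,i\ran=\lan i,j\ran\circ\beta_{Y,X}$. This identity follows from the universal property of the coproduct, since both sides agree after precomposition with each inclusion $\iota_1$ and $\iota_2$.

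Given this factorisation, $\lan j,i\ran$ is a composite of the homeomorphism $\beta_{Y,X}$, which is a cofibration by Lemma~\ref{le:homeo_cofib}, with the closed cofibration $\lan i,j\ran$; hence it is a cofibration by Lemma~\ref{le:cofib_comp}. Since $\beta_{Y,X}$ is a bijection, the image of $\lan j,i\ran$ equals the image of $\lan i,j\ran$, which is closed by hypothesis, so $\lan j,i\ran$ is in fact a closed cofibration. Therefore $\cmor{M}{j}{Y}{i}{X}\in\cCofCsp(Y,X)$, and $\rev_{X,Y}$ is well defined.

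Finally, for the bijection I would observe that $\rev_{Y,X}$ is a two-sided inverse: swapping $i$ and $j$ and then swapping back returns the original diagram, so $\rev_{Y,X}\circ\rev_{X,Y}=\id$ and $\rev_{X,Y}\circ\rev_{Y,X}=\id$. There is no substantial obstacle in this argument; the only points requiring slight care are the verification of the factorisation $\lan j,i\ran=\lan i,j\ran\circ\beta_{Y,X}$ and the observation that closedness is preserved precisely because the two maps have the same image.
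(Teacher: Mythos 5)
Your proposal is correct, and it reaches the conclusion by a slightly different, more modular route than the paper. The paper's proof verifies the homotopy extension property for $\lan j,i\ran$ directly: given test data $h\colon (Y\sqcup X)\times \II\to K$ and $f\colon M\to K$ as in Definition~\ref{de:homotopy_extension}, it translates $h$ to the canonically corresponding map $h'\colon (X\sqcup Y)\times \II\to K$, applies the homotopy extension property of $\lan i,j\ran$ to $(f,h')$, and observes that the resulting $H\colon M\times \II\to K$ also extends the original data; closedness is handled exactly as you do, by noting the two maps have the same image. Your factorisation $\lan j,i\ran=\lan i,j\ran\circ\beta_{Y,X}$ (which is correct, and verified as you say on the two coproduct inclusions) packages that hand computation into the two cited lemmas: the paper's ``canonical'' translation of $h$ to $h'$ is precisely precomposition with the swap times $\id_\II$, so the paper is in effect reproving, in this special case, that a cofibration precomposed with a homeomorphism is a cofibration --- which is exactly what Lemmas~\ref{le:homeo_cofib} and \ref{le:cofib_comp} give you for free. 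Your version is shorter and arguably cleaner, and it matches the style the paper itself uses in nearby arguments (e.g.\ Proposition~\ref{disk}, Corollary~\ref{co:vk}, and Lemma~\ref{le:monoidal_isos} all factor a map through a homeomorphism and invoke the same two lemmas); the paper's unwound version buys nothing extra here beyond self-containedness. Your treatment of bijectivity via $\rev$ being its own inverse is identical to the paper's.
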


\begin{proof}
	We first check $\rev$ is well defined.
	The image of $\lan j,i\ran $ is the same as the image of $\lan i,j \ran $ so it is a closed.
	We show it is a cofibration.
	Suppose we have a space $K$, and maps $h\colon (Y\sqcup X)\times \II\to K$
	and $f\colon M\to K$ satisfying the commutativity conditions of Definition~\ref{de:homotopy_extension}.
	The map $h$ canonically determines a map $h'\colon (X\sqcup Y)\times \II\to K$.
	The map $\lan i,j\ran $ is a cofibration so we can apply the homotopy extension property to give a map $H\colon M\times \II\to K$ which extends $f$ and $h'$.
	This $H$ also commutes with $f$ and $h$.

	It is clear that $\rev$ is its own inverse, thus it is a bijection.
\end{proof}

\lemm{\label{ccc_implies_cofibs}
	If $\cmor{M}{i}{X}{j}{Y}$ is a \ccc{}, then $i\colon X\to M$ and $j\colon {Y}\to {M}$ are closed cofibrations.}
\begin{proof}
	The map
$i\colon X\to M$ is equal to the composition $X\xrightarrow{\iota_1} X\sqcup Y \xrightarrow{\lan i,j\ran} M$.
The map $\iota_1$ is a cofibration by Lemma~\ref{cofib_copr} and the composition of cofibrations is a cofibration by Lemma~\ref{le:cofib_comp}, hence $i$ is a cofibration.	

We now prove that the image of $X$ under the composition is closed in $M$. Here we use primes to denote images of $\lan i,j\ran$.
The map $\lan i,j\ran$ is an embedding by Theorem~\ref{th:homeo_onto_im}, hence a homeomorphism onto its image, and it is straightforward to see that $\iota_1(X)$ is closed in $X\sqcup Y$.
Hence
there exists an open $U\subseteq M$ with $U\cap (X\sqcup Y)'= (X\sqcup Y)'\setminus \iota_1(X)'$.
The image of $X\sqcup Y$ is closed since $\lan i,j\ran$ is a closed cofibration, so $M\setminus (X\sqcup Y)'$ is an open set.
Thus there is an open set $M\setminus (X\sqcup Y)'\cup U=M\setminus \iota_1(X)'$, hence the image of $X$ under $\lan i,j \ran \circ \iota_1$ is closed.

The same argument gives that $j$ is a closed cofibration.
\end{proof}

\lemm{\label{le:comp_ccc}
$(I)$ For any spaces $X,Y$ and $Z$ in $Ob(\Topo)$ there is a composition of \ccc{}s
\ali{
	\sbullet\;\colon \cCofCsp(X,Y)\times \cCofCsp(Y,Z)&\to \cCofCsp(X,Z) \\
\left({\cmortikz{M}{i}{X}{j}{Y}}\;\raisebox{-1em}{,}\;{\cmortikz{N}{k}{Y}{l}{Z}}\right)&\mapsto \cmortikz{M \sqcup_Y N}{\tilde{i}}{X}{\tilde{l}}{Z}
}
where $\tilde{i}=p_M\circ i$ and $\tilde{l}=p_N\circ l$ are obtained via the following diagram 
\[
\begin{tikzcd}[ampersand replacement= \&,column sep=tiny]
	\makebox*{$M\sqcup_Y N$}{$X$} \ar[dr,"i"']\& \& Y\ar[dl,"j"]\ar[dr,"k"'] \& \& \makebox*{$M\sqcup_Y N$}{$Z$}\ar[dl,"l"] \\
	\&  M\ar[dr,"p_M"'] \& \& N \ar[dl,"p_N"]\& \\
	\& \&  M\sqcup_{Y} N, \& \& 
\end{tikzcd}
\]
the middle square of which is the
pushout of $j\colon M \leftarrow Y \to N \colon k$ in \textbf{Top}.\\
$(II)$ Hence there is a magmoid 
\[
\cCofCsp=(Ob(\Topo),\cCofCsp(-,-),\sbullet).
\] \\
We will also use notation ${\cmortikz{N}{k}{Y}{l}{Z}} \; \sbullet \; {\cmortikz{M}{i}{X}{j}{Y}}= \; \sbullet \;\left({\cmortikz{M}{i}{X}{j}{Y}}\;\raisebox{-1em}{,}\;{\cmortikz{N}{k}{Y}{l}{Z}}\right)$ for composition.
}

\begin{proof}
We need to prove that $\langle \tilde{i}, \tilde{l} \rangle\; \colon X\sqcup Z \to M\push{jk}{Y}N$ is a closed cofibration.
We first check the map is closed.
The image of $\langle \tilde{i}, \tilde{l} \rangle$ is equal to 
$p_M(i(X))\cup p_N(l(Y))$.
Sets in $M\sqcup_{Y} N$ are closed if the preimage under $p_{M}$ and $p_N$ is closed in $M$ and $N$ respectively. 
By Proposition~\ref{pr:cofibdpair}, $\lan i,j\ran$ is a homeomorphism onto its image, hence 
we have $i(X)\cap j(Y)=\emptyset$. 
This implies $p_N^{-1}(p_M(i(X)))=\emptyset$, which is closed, and $p_M^{-1}(p_M(i(X)))=i(X)$ is closed by Lemma~\ref{ccc_implies_cofibs}. Hence $p_M(i(X))$ is closed in $M\sqcup_Y N$.
Similarly $p_N(l(Y))$ is closed.

We now check $\langle \tilde{i}, \tilde{l} \rangle$ is cofibration.
Define $J$ to be the map obtained by taking either route around the pushout square:
\[
\begin{tikzcd}[ampersand replacement= \&,column sep =tiny]
 \& Y \ar[dl,"j"']  \ar[dr,"k"] \ar[dd,"J"]\& \\
M \ar[dr,"p_M"'] \& \& N \ar[dl,"p_N"] \\
\& M\sqcup_Y N. \&
\end{tikzcd}
\]
We will prove that we have a cofibration 
$\big\lan\lan \tilde{i},J\ran,\tilde{l} \big\ran \colon (X\sqcup Y) \sqcup Z \to M \sqcup_Y N $, then by Lemma~\ref{le:cofib_comp} and a straightforward extension of Lemma~\ref{cofib_copr} we then have that the composition
\[
\begin{tikzcd}[column sep=large, ampersand replacement= \&]
X \sqcup Z \ar[r] \& (X \sqcup Y) \sqcup Z \ar[r,"{\langle \lan \tilde{i},J\ran ,\tilde{l}\rangle}"] \& M\sqcup_Y N 
\end{tikzcd},
\] 
which is equal to $\langle \tilde{i}, \tilde{l} \rangle$, is a cofibration.
Let $K$ be a space and suppose we have maps 
$f \colon M\sqcup_Y N \to K$ and $h \colon ((X\sqcup Y) \sqcup Z)\times \II \to  K$ satisfying the commutation conditions of Definition~\ref{de:homotopy_extension}. 
We construct a map 
$H \colon (M\sqcup_Y N) \times \II \to K$ extending $f$ and $h$ as follows.
First note that by Lemma~\ref{le:producthom} and Theorem~\ref{th:lapcl}, the product with $\II$ preserves coproducts and thus we have canonical isomorphisms, $((X\sqcup Y) \sqcup Z)\times \II \cong ((X \times \II) \sqcup (Y \times \II)) \sqcup (Z \times \II)$ and $(M\sqcup_Y N) \times \II \cong M \times \II \sqcup_{Y\times \II} N\times \II$.
Thus, by the universal property of the coproduct we have that the map $h$ is in one to one correspondence with a triple of maps 
$h_X \colon X \times \II \to K$,
$h_Y \colon Y \times \II \to K$ and
$h_Z \colon Z \times \II \to K$. 
Now using the homotopy extension property of $\lan i,j\ran$ on the maps $\langle h_X,h_Y \rangle$ and the restriction of $f$ to $M$, we obtain a map $\mathcal{H}_L \colon M \times \II \to K$.
Similarly we obtain a map $\mathcal{H}_R \colon N \times \II \to K$.
These two homotopies agree on the images of $Y \times \II$ by construction so we can use the universal property of the pushout to obtain a map
$\langle \mathcal{H}_L,\mathcal{H}_R \rangle \colon  M \times \II \sqcup_{Y\times \II} N\times \II \to K$ which, precomposed with the canonical isomorphism $(M\sqcup_Y N) \times \II \cong M \times \II \sqcup_{Y\times \II} N\times \II$,
is a homotopy extending $h$.
\end{proof}

\prop{The family of bijections $\rev_{X,Y}\colon \cCofCsp(X,Y)\to \cCofCsp(Y,X)$ defined in Proposition~\ref{le:cofcsp_rev} show that $\cCofCsp$ is reversible. \qed}

\subsection{Category of cofibrant cospans \texorpdfstring{$\CofCsp$}{CofCsp}}
Notice that the composition in $\cCofCsp$ is not strictly associative.
Here we impose a congruence on \ccc{}s such that we obtain a category.

One option would be \textit{cospan isomorphism}, by which we mean
$\cmor{M}{i}{X}{j}{Y}$ is equivalent to $\cmor{N}{i'}{X}{j'}{Y}$ if there exists a homeomorphism $M\to N$ which commutes with the cospans.
This is a direct analogue of the equivalence usually used for smooth manifold cobordisms in e.g. \cite{lurie}.
This equivalence would be sufficient to give an associative composition. However it will not be sufficient to ensure the cospan $\cmor{X\times \II}{\iota^X_0}{X}{\iota^X_1}{X}$ behaves as an identity. (This is the image of a representative of the smooth manifold cobordism identity under the map described in Proposition~\ref{pr:concob_to_concsp}.)
One way to see this is by thinking about the cospan in Example~\ref{disk}: 
taking a pushout over the map from $S^1$ into the interior of the disk, and the map into $S^1 \times \II$ will not give a space homeomorphic to the disk.
Hence we use a stronger equivalence relation.

\defn{For each pair $X,Y\in Ob(\cCofCsp)$,
we define a relation on $\cCofCsp(X,Y)$ by
\[
\left(\cmortikz{M}{i}{X}{j}{Y}\right)\simche \left(\cmortikz{N}{i'}{X}{j'}{Y}\right)
\] 
if there exists a commuting diagram 
\[
\begin{tikzcd}[ampersand replacement=\&, sep =scriptsize]
	\& M \ar[dd,"\psi"]\& \\
	X \ar[ur,"i"] \ar[dr,"i'"'] \&  \& Y\ar[ul,"j"']\ar[dl,"j'"] \\
	\& M' \&
\end{tikzcd}
\]
where $\psi$ is a homotopy equivalence.}

\lemm{\label{le:che}
The relation $\simche$ an equivalence relation.}

\begin{proof}
	Using the universal property of the coproduct we can rewrite the relation in terms of a homotopy equivalence $\psi\colon M\to M'$ which is a map of spaces under $X\sqcup Y$ from $\lan i,j\ran \colon X\sqcup Y\to M$ to $\lan i',j'\ran \colon X\sqcup Y\to M'$. 
Then since the maps $X\sqcup Y\to M$ are defined to be cofibrations, Theorem~\ref{th:cofib_equiv} gives that this relation is precisely cofibre homotopy equivalence of spaces under $X\sqcup Y$, thus is an equivalence relation by Proposition~\ref{pr:cofib_hom_equiv}.
\end{proof}

\rem{The fact that, by Theorem~\ref{th:cofib_equiv}, \che{} is equivalent to cofibre homotopy equivalence of spaces under the disjoint union of the objects, will be used to obtain a congruence from \che{}. We could instead have {\it defined} \che{} to be cofibre homotopy of spaces under the disjoint union of the objects. Then we would use Theorem~\ref{th:cofib_equiv} in the proof of the identity axiom instead.}

We call the map $\psi$ a {\em \che}, and refer to an equivalence class of \ccc{}s as just a {\em \cc{}}, denoted $\classche{\cmor{M}{i}{X}{j}{Y}}$. We have 
\[
\cCofCsp/ \simche (X,Y)=\left\{\classche{\cmortikz{M}{i}{X}{j}{Y}} \; \middle\vert \; \lan i,j \ran  \text{ is a closed cofibration } \right\}.
\]

\lemm{\label{le:comp_cc}
	For each pair $X,Y\in \Topo$ the relations $(\cCofCsp(X,Y),\simche)$  are a congruence on $\cCofCsp$ and hence we have a magmoid 
	\[
	\CofCsp=\cCofCsp/\simche=(Ob(\Topo),\cCofCsp/\simche\; ,\; \sbullet).
	\]
}
\begin{proof}
	We have from Lemma~\ref{le:che} that the $\simche$ are equivalence relations for each pair $X,Y\in \Topo$, thus we only need to check that the relations respect composition.
	
	Let  $\cmor{M}{i}{X}{j}{Y}$ and $\cmor{M'}{i'}{X}{j'}{Y}$ be two representatives of the same \cc{} from $X$ to $Y$ and similarly let 
	$\cmor{N}{k}{Y}{l}{Z}$ and $\cmor{N'}{k'}{Y}{l'}{Z}$ be representatives of the same \cc{} from $Y$ to $Z$. 
	
	Using Theorem~\ref{th:cofib_equiv}, and the universal property of the coproduct, we have the following commuting diagram where $\phi,\phi',\psi$ and $\psi'$ are cofibre homotopy equivalences between spaces under of $X,Y$ and $Z$ as shown.
	\[
	\begin{tikzcd}[ampersand replacement=\&, column sep=small]
	\& \& \makebox*{$M'$}{$M \sqcup_{Y} N$} \& \& \\
	\& M \ar[ur,"p_{M}"]\ar[dd,"\phi"',bend right] \& \& N \ar[ul,"p_{N}"']\ar[dd,"\psi"',bend right] \& \\
	\makebox*{$M\sqcup_{Y} N$}{$\hspace*{1em}X$}\ar[ur,"i"]\ar[dr,"i'"'] \& \& Y\ar[ul,"j"']\ar[ur,"k"]\ar[dl,"j'"]\ar[dr,"k'"'] \& \& \makebox*{$M\sqcup_{Y} N$}{$Z\hspace*{1em}$}\ar[ul,"l"']\ar[dl,"l'"] \\
	\& M' \ar[dr,"p_{M'}"']\ar[uu,"\phi'"',bend right]  \& \& N'\ar[dl,"p_{N'}"]\ar[uu,"\psi'"',bend right] \& \\
	\& \& M'\sqcup_{Y} N' \& \& 
	\end{tikzcd}
	\]
	This means there exists a homotopy under $X\sqcup Y$, say $H_{\phi}\colon M\times \II \to M$, from $\phi'\circ \phi$ to the identity and a homotopy under $Y\sqcup Z$, say  $H_{\psi}\colon N\times \II\to N$, from $\psi'\circ \psi$ to the identity.
	In particular, for all $y\in Y$, we have $H_\phi(j(y),t)=j(y)$  and $H_\psi(k(y),t)=k(y)$.
	
	By the universal property of the pushout, the commuting pair $p_{M'}\circ \phi$ and $p_{N'}\circ \psi$ uniquely determine a map $F\colon M\sqcup_{Y}N \to M' \sqcup_{Y} N'$ making the diagram commute. We will show $F$ is a homotopy equivalence.
	
	We can similarly construct a map $F'\colon M' \sqcup_{Y} N' \to M\sqcup_{Y}N$ using the pair $ p_{M}\circ \psi'$ and $ p_{N}\circ \phi'$.
	Notice the maps $ p_M\circ H_\phi\circ(j\times \id_\II)\colon Y\times \II\to M\sqcup_{Y}N$ and $p_N\circ H_\psi\circ (k\times \id_\II )\colon Y\times \II\to M\sqcup_{Y}N$ commute
	using that for all $y\in Y$ we have $H_\psi(k(y),t)=k(y)$ and $H_\phi(j(y),t)=j(y)$, and the commutativity of the diagram.
	Taking the product with $\II$ of the pushout of $j$ and $k$ is still a pushout, by Lemma~\ref{le:producthom}.
	Using the universal property of this pushout on the maps $p_M\circ H_\phi$ and $p_N\circ H_\psi$ gives a map $(M\sqcup_Y N)\times \II\to M\sqcup_{Y} N$ which is a homotopy from $F'\circ F$ to the identity functor.
	
	In the same way we can construct a homotopy $F\circ F'$ to the identity.
\end{proof}

\begin{theorem}\label{th:CofCsp}
The quadruple 
\[
\CofCsp=\left(Ob(\Topo)\;,\;\cCofCsp(X,Y)/\simche\;,\;\sbullet\;,\;\classche{\cmortikz{X\times \II}{\iota_{0}^X}{X}{\iota_1^X}{X}}\;\right)
\]	
is a category.	
\end{theorem}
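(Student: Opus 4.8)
The plan is to verify the two category axioms for the magmoid $\CofCsp=\cCofCsp/\simche$ already obtained in Lemma~\ref{le:comp_cc}: the object collection and the well-definedness of $\sbullet$ on $\simche$-classes are inherited from that lemma, so only associativity $(\CC2)$ and the identity law $(\CC1)$ remain. Throughout I would work with concrete representatives and then invoke the fact that a homotopy equivalence commuting with the cospan legs is, by Theorem~\ref{th:cofib_equiv}, a \che.

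For associativity I would pick concrete representatives of three composable classes and compare the two bracketings. Since $\sbullet$ composes via the fixed representative pushouts in $\Topo$ (Lemma~\ref{le:comp_ccc}), the two ways of composing yield the spaces $(M\sqcup_Y N)\sqcup_Z P$ and $M\sqcup_Y(N\sqcup_Z P)$. These are in general distinct as concrete cospans --- which is exactly why $\cCofCsp$ is only a magmoid --- but they are related by the canonical homeomorphism coming from associativity of iterated colimits. This homeomorphism commutes with all the outer structure maps by construction, and a homeomorphism is in particular a homotopy equivalence, hence a \che. Therefore the two concrete composites are $\simche$-equivalent and agree in $\CofCsp$.

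The substantive part is the identity law, which is where the collaring built into the chosen identity does its work. Given $f=\cmor{M}{i}{X}{j}{Y}$, I must show $\classche{f}\sbullet\classche{\cmor{X\times\II}{\iota_0^X}{X}{\iota_1^X}{X}}=\classche{f}$, together with the symmetric statement on the $Y$-side. For the left case, Lemma~\ref{le:comp_ccc} gives that the composite is the pushout of $\iota_1^X\colon X\to X\times\II$ and $i\colon X\to M$, with total space $(X\times\II)\sqcup_X M$ and legs $x\mapsto p_{X\times\II}(x,0)$ and $p_M\circ j$; note the gluing identifies $(x,1)$ with $i(x)$, so the free $X$-leg sits at the $t=0$ end of the cylinder. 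I would then define $\psi\colon (X\times\II)\sqcup_X M\to M$ to be the identity on the image of $p_M$ and to send $p_{X\times\II}(x,t)\mapsto i(x)$; these two prescriptions agree on the glued locus, so $\psi$ is continuous by the universal property of the pushout, and a direct check shows $\psi$ commutes with both cospan legs (on the $X$-side $\psi(p_{X\times\II}(x,0))=i(x)$, on the $Y$-side $\psi(p_M(j(y)))=j(y)$). To see $\psi$ is a homotopy equivalence I would exhibit $p_M$ as a homotopy inverse: $\psi\circ p_M=\id_M$, while the homotopy sending $(p_{X\times\II}(x,t),s)\mapsto p_{X\times\II}(x,t+s(1-t))$ and fixing $p_M(M)$ slides the cylinder onto its glued end, realising a deformation retraction of $(X\times\II)\sqcup_X M$ onto $p_M(M)$ whose time-one map is $p_M\circ\psi$. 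Continuity of this homotopy again follows from the pushout property, using that $-\times\II$ preserves pushouts by Lemma~\ref{le:producthom} and Theorem~\ref{th:lapcl}. Thus $\psi$ is a \che and $\classche{f}\sbullet\id_X=\classche{f}$. The right identity is entirely analogous: one glues $Y\times\II$ along $\iota_0^Y$ and collapses the cylinder onto its $Y\times\{0\}$ end.

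The main obstacle is precisely this identity verification. Unlike the Fong-style convention $\cmor{X}{\id_X}{X}{\id_X}{X}$, the collared identity forces one to produce an explicit cofibre homotopy equivalence that collapses a cylinder, and the delicate point is to respect the asymmetric roles of the two ends of $X\times\II$ --- the glued end versus the free $X$-leg --- so that the collapsing map genuinely commutes with the cospan legs rather than merely with the underlying spaces. Once the pushout universal property and the cylinder-collapse homotopy are set up, everything else (continuity of the glued maps and consistency on overlaps) is routine.
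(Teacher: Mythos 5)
Your proposal is correct and follows essentially the same route as the paper: both rely on the magmoid structure from Lemma~\ref{le:comp_cc}, prove associativity via the canonical homeomorphism between iterated pushouts obtained from repeated applications of the universal property, and prove the identity law by collapsing the cylinder through a map defined by the pushout's universal property, with an explicit straight-line homotopy whose continuity uses that $-\times \II$ preserves pushouts (Lemma~\ref{le:producthom} and Theorem~\ref{th:lapcl}). The only cosmetic differences are that you treat the identity glued at the $X$-side where the paper treats the $Y$-side, and you leave implicit the (already established, Proposition~\ref{pr:identity_ccc}) fact that the cylinder cospan is itself a concrete cofibrant cospan.
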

\begin{proof}
	Note that $\left(Ob(\Topo)\,,\;\cCofCsp(X,Y)/\simche\;,\;\sbullet\right)$ is a magmoid by Lemma~\ref{le:comp_cc}.\\
	$(\CC1)$ Note first that $\cmor{X\times \II}{\iota_{0}^X}{X}{\iota_1^X}{X}$ is a \ccc{} by Proposition~\ref{pr:identity_ccc}.
	Suppose we have a \cc{} represented by $\cmor{M}{i}{X}{j}{Y}$.
	We will show there is a \che{} from 
	$(\cmor{M}{i}{X}{j}{Y})\sbullet(\cmor{Y\times \II}{\iota^Y_0}{Y}{\iota^Y_1}{Y})$ to $\cmor{M}{i}{X}{j}{Y}$.
	Consider the following diagram.
	\[
	\begin{tikzcd}[ampersand replacement= \&,column sep=tiny]
		\makebox*{$M\sqcup_Y N$}{$X$} \ar[dr,"i"]\& \& Y\ar[dl,"j"']\ar[dr,"\iota^Y_0"] \& \& \makebox*{$M\sqcup_Y N$}{$Y$}\ar[dl,"\iota^Y_1"'] \\
		\&  M\ar[dr,"p_M"']\ar[ddr,"\id_M"',bend right] \& \& Y\times \II \ar[dl,"p_{Y\times \II}"]\ar[ddl,"{(y,t) \mapsto j(y)}",bend left]\& \\
		\& \&  M\sqcup_{Y} (Y\times \II)\ar[d,"\phi",dotted] \& \& \\[+2em]
		\& \& M \& \& 
	\end{tikzcd}
	\]
The map $\phi$ is constructed using the universal property of the pushout.
	By construction $\phi$ commutes with the cospans 
	$(\cmor{M}{i}{X}{j}{Y})\sbullet(\cmor{Y\times \II}{\iota^Y_0}{Y}{\iota^Y_1}{Y})$ and $\cmor{M}{i}{X}{j}{Y}$.
	We claim $\phi$ is a homotopy equivalence with homotopy inverse $p_M$.
	It is by construction that $\phi \circ p_M=\id_M$.
	
	We construct a homotopy $ p_M \circ \phi \to \id_{M\sqcup_Y(Y\times \II)}$ as follows.
	Since $M\sqcup_Y(Y\times \II)$ is a pushout, the map $p_M \circ \phi$ is uniquely determined by the pair of maps $M\to M\sqcup_Y(Y\times \II)$, $m\mapsto p_M(m)$ and $Y\times \II \to M\sqcup_Y(Y\times \II)$, $(y,t)\mapsto p_M(j(y))$, or equivalently $(y,t)\mapsto p_{Y\times \II}(\iota_{0}^Y(y))$.
	Similarly the identity is determined by the pair $M\to M\sqcup_Y(Y\times \II)$, $m\mapsto p_M(m)$ and $Y\times \II \to M\sqcup_Y(Y\times \II)$, $(y,t)\mapsto p_{Y\times \II}(y,t)$.
	The map $H_{Y\times \II} \colon (Y\times \II) \times \II \to M\sqcup_Y (Y \times \II)$, $((y,t),s) \mapsto p_{Y\times \II}(y,ts)$ is a homotopy between the two maps from $Y\times \II$.
	And for $M$ we can use the homotopy
	$H_M\colon M\times \II \to M\sqcup_Y (Y \times \II)$, $(m, t) \mapsto  p_M(m)$.
	
	By Lemma~\ref{le:producthom} the product with $\II$ preserves pushouts.
	Notice that 
	$H_M\circ (j\times \id)\colon Y\times \II\to M\sqcup_Y (Y \times \II)$ is $(y,t)\mapsto p_M(j(y))$ and 
	$H_{Y\times \II}\circ (\iota_0^Y\times \id)\colon Y\times \II\to M\sqcup_Y (Y \times \II)$
	is 
	$(y,s)\mapsto p_{Y\times \II} (\iota_0^Y(y))$,
	so we can use the universal property of the pushout of $j\times \id$ and $\iota_0^Y\times \id$ to obtain a homotopy 
	$\mathcal{H} \colon (M\sqcup_Y (Y \times \II)) \times \II \to M\sqcup_Y (Y \times \II)$ from $p_M \circ \phi$ to $\id_{M\sqcup_Y(Y\times \II)}$.
	
	We can similarly construct a \che{}  
	$(\cmor{Y\times \II}{\iota^Y_0}{Y}{\iota^Y_1}{Y})\sbullet(\cmor{M}{i}{X}{j}{Y})$ to $\cmor{M}{i}{X}{j}{Y}$.
	
	$(\CC2)$ We now check that the composition is associative. 
	Let $\cmor{M}{i}{W}{j}{X}$, $\cmor{N}{k}{X}{l}{Y}$ and $\cmor{O}{r}{Y}{s}{Z}$ be \ccc{}s.
	The two ways to compose these three cospans corresponds to taking a pushout first over $X$ or first over $Y$ as shown in the following diagram
	\[
	\begin{tikzcd}[ampersand replacement=\&, column sep =0]
	\makebox*{$\left(M\sqcup_X N\right) \sqcup_Y O$}{$W$}\ar[dr,"i"'] \&\&[-20pt]\&[-50pt] X\ar[dll,"j"]\ar[dr,"k"'] \&[+25pt] \&[+25pt]  Y\ar[dl,"l"]\ar[drr,"r"'] \&[-50pt] \&[-20pt]\& \makebox*{$\left(M\sqcup_X N\right) \sqcup_Y O$}{$Z$}\ar[dl,"s"] \\
	\& M\ar[dr]\ar[ddrrrr] \&\&\& N\ar[drr]\ar[dll,crossing over] \&\&\& O\ar[dl]\ar[ddllll,crossing over] \\
	\& \&  M \sqcup_X N\ar[dr] \&\&\&\& N\sqcup_Y O\ar[dl] \\
	\&\&\& |[xshift=-20pt]|\left(M\sqcup_X N\right) \sqcup_Y O \&\& |[xshift=20pt]|M\sqcup_X \left(N\sqcup_Y O \right). 
	\end{tikzcd}
	\] 
	We can use the universal property of the pushout on the pair of maps $M\to M\sqcup_X \left(N\sqcup_Y O \right) $ and $N\to N\sqcup_YO\to M\sqcup_X \left(N\sqcup_Y O \right)$ to obtain a map $ M\sqcup_X N\to  M\sqcup_X \left(N\sqcup_Y O \right)$.
	We can then apply the universal property again to this map $ M\sqcup_X N\to M\sqcup_X \left(N\sqcup_Y O \right)$ and the map $O\to N\sqcup_YO\to M\sqcup_X \left(N\sqcup_Y O \right)$ to obtain a map $\left(M\sqcup_X N\right) \sqcup_Y O \to M\sqcup_X \left(N\sqcup_Y O \right)$
	which commutes with the diagram.
	In a similar way we can obtain an inverse $M\sqcup_X \left(N\sqcup_Y O \right)\to \left(M\sqcup_X N\right) \sqcup_Y O$.
\end{proof}

Let $\cmor{M}{i}{X}{j}{Y}$ and $\cmor{N}{k}{Y}{l}{Z}$ be \ccc{}s. In an attempt to avoid excessive notation, from here we may use $i$ and $l$ to refer also to the maps $\tilde{i}=p_M\circ i$ and $\tilde{l}=p_N\circ l$ obtained in the composition.

\prop{The family of bijections $\rev_{X,Y}\colon \cCofCsp(X,Y)\to \cCofCsp(Y,X)$ from Lemma~\ref{le:cofcsp_rev} extend to an involutive functor
	\ali{
		\rev \; \colon \;\; \CofCsp&\to \CofCsp^{op}\\
		\classche{\cmortikz{M}{i}{X}{j}{Y}} &\mapsto \classche{\cmortikz{M.}{j}{Y}{i}{X}}
}}
\begin{proof}
	Lemma~\ref{le:cofcsp_rev} gives that $\rev$ is well defined, and that it is its own inverse.
	To show composition is preserved, let $\cmor{M}{i}{X}{j}{Y}$ and $\cmor{N}{k}{Y}{l}{Z}$ be \ccc{}s.
	Then the universal property of the pushout gives an isomorphism between $M\sqcup_Y N$ and $N\sqcup_Y M$, which gives a \che{} from $\rev\big((\cmor{N}{k}{Y}{l}{Z})\sbullet (\cmor{M}{i}{X}{j}{Y})\big) $ to $\rev (\cmor{N}{k}{Y}{l}{Z})\sbullet_{op} \rev(\cmor{M}{i}{X}{j}{Y}) = (\cmor{M}{j}{Y}{i}{X})\sbullet (\cmor{N}{l}{Z}{k}{Y})$.
\end{proof}

\subsubsection{Monoidal structure on \texorpdfstring{$\CofCsp$}{}}

We now construct a functor from  $\CofCsp\times \CofCsp$ to $\CofCsp$, and show that this leads to a symmetric monoidal category with underlying category $\CofCsp$.
\begin{lemma}\label{le:CofCsp_bifunctor}
	There is a functor
	\ali{
	\otimes \; \colon \; \CofCsp \times \CofCsp \; \;&\to\;\; \CofCsp\\
	\left(\classche{\cmortikz{M}{i}{W}{j}{X}}, \classche{\cmortikz{N}{k}{Y}{l}{Z}}\right)
	&\mapsto
	\classche{ \cmortikz{M\sqcup N}{i\sqcup k}{W\sqcup Y}{j\sqcup l}{X\sqcup Z}}
	}
	where $i\sqcup j$ is the image of a pair of maps under the monoidal product on $\Topo$ as in Proposition~\ref{pr:Top_mon}.
\end{lemma}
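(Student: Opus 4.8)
The plan is to verify the four things that make $\otimes$ a functor out of the product category $\CofCsp\times\CofCsp$: that the assignment lands in $\cCofCsp$, that it descends to $\simche$-classes, that it preserves identities, and that it preserves the componentwise composition of the product category. Recall that the identity morphism at $(W,Y)$ in $\CofCsp\times\CofCsp$ is the pair of identity cospans $\classche{\cmor{W\times\II}{\iota_0^W}{W}{\iota_1^W}{W}}$ and $\classche{\cmor{Y\times\II}{\iota_0^Y}{Y}{\iota_1^Y}{Y}}$, which I will have to match against the identity cospan on $W\sqcup Y$.

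First I would show the output is a \ccc{}, i.e.\ that $\langle i\sqcup k, j\sqcup l\rangle\colon (W\sqcup Y)\sqcup(X\sqcup Z)\to M\sqcup N$ is a closed cofibration. The key observation is that this map factors as $(\langle i,j\rangle\sqcup\langle k,l\rangle)\circ\sigma$, where $\sigma\colon (W\sqcup Y)\sqcup(X\sqcup Z)\to(W\sqcup X)\sqcup(Y\sqcup Z)$ is the shuffle homeomorphism, itself a cofibration by Lemma~\ref{le:homeo_cofib}. It therefore suffices to prove that a disjoint union of closed cofibrations is a closed cofibration and then appeal to Lemma~\ref{le:cofib_comp}, noting that $\sigma$ is bijective, so the image, and hence its closedness, is unaffected. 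The disjoint union of two cofibrations inherits the homotopy extension property by the device already used in Proposition~\ref{pr:identity_ccc} and Lemma~\ref{le:comp_ccc}: since $-\times\II$ preserves coproducts (Lemma~\ref{le:producthom} and Theorem~\ref{th:lapcl}), a test homotopy and map split into their restrictions on the two summands, one applies the homotopy extension property of $\langle i,j\rangle$ and of $\langle k,l\rangle$ separately, and reassembles the two extensions by the universal property of the coproduct; closedness of the image then follows from the description of the coproduct topology exactly as in Lemma~\ref{ccc_implies_cofibs}.

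Next I would check that $\otimes$ descends to $\simche$. Given a \che{} $\psi\colon M\to M'$ and a \che{} $\chi\colon N\to N'$ commuting with the respective cospans, the map $\psi\sqcup\chi\colon M\sqcup N\to M'\sqcup N'$ (Proposition~\ref{pr:Top_mon}) commutes with the disjoint-union cospans and is a homotopy equivalence: a homotopy inverse is $\psi^{-1}\sqcup\chi^{-1}$, and the two required homotopies are assembled from the componentwise ones using once more that $-\times\II$ preserves coproducts. For identities, applying $\otimes$ to the pair of identity cospans yields the cospan with apex $(W\times\II)\sqcup(Y\times\II)$, and the canonical homeomorphism $(W\sqcup Y)\times\II\cong(W\times\II)\sqcup(Y\times\II)$ sends $\iota_0^{W\sqcup Y},\iota_1^{W\sqcup Y}$ to $\iota_0^W\sqcup\iota_0^Y,\iota_1^W\sqcup\iota_1^Y$; being a homeomorphism commuting with the cospans it is in particular a \che{}, so the output is $\simche$-equal to the identity cospan on $W\sqcup Y$.

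The main obstacle is preservation of composition. Here I would take composable cospans in each slot and compare the two ways of combining them: $(\,\sbullet\,)\otimes(\,\sbullet\,)$ first pushes out over $X$ and over $Z$ and then takes a disjoint union, giving apex $(M\sqcup_X M'')\sqcup(N\sqcup_Z N'')$, whereas $(\otimes)\sbullet(\otimes)$ first takes disjoint unions and then pushes out over $X\sqcup Z$, giving apex $(M\sqcup N)\sqcup_{X\sqcup Z}(M''\sqcup N'')$. I would exhibit a homeomorphism between these two spaces intertwining all the cospan legs; this is exactly the statement that coproducts commute with pushouts in $\Topo$ (colimits commute with colimits), which I would realise concretely by repeated application of the universal properties of the coproduct and the pushout to build mutually inverse maps, in the same style as the associativity argument of Theorem~\ref{th:CofCsp}. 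Since this comparison map is a homeomorphism commuting with the cospans it is a \che{}, so the two $\simche$-classes coincide and $\otimes$ is a functor.
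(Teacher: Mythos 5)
Your proof is correct and follows essentially the same route as the paper's: the homotopy extension property for $\langle i\sqcup k, j\sqcup l\rangle$ is established by splitting the test data over the two summands using that $-\times\II$ preserves coproducts, descent to $\simche$-classes uses $\psi\sqcup\chi$ with homotopies assembled componentwise, identities are matched via the canonical homeomorphism $(W\sqcup Y)\times\II\cong(W\times\II)\sqcup(Y\times\II)$, and preservation of composition is a commuting isomorphism built from the universal properties of coproduct and pushout. Your explicit factorisation through the shuffle homeomorphism $\sigma$ is a small organisational refinement rather than a different argument -- it makes precise the regrouping of $(W\sqcup Y)\sqcup(X\sqcup Z)$ into $(W\sqcup X)\sqcup(Y\sqcup Z)$ that the paper's direct verification uses implicitly.
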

\begin{proof}
	We first check that 
	$\cmor{M\sqcup N}{i\sqcup k}{W\sqcup Y}{j\sqcup l}{X\sqcup Z}$
	is a \ccc{}.
	In particular we show that the map $\lan i\sqcup k,j\sqcup l\ran \colon (W\sqcup Y)\sqcup (X\sqcup Z)\to M\sqcup N$ is a closed cofibration.
	Let $K$ be a space and suppose we have maps $h\colon ((W\sqcup Y)\sqcup (X\sqcup Z))\times \II\to K$ and $f\colon M\sqcup N\to K$ satisfying the commutation conditions of Definition~\ref{de:homotopy_extension}.
	By Lemma~\ref{le:producthom}, the product with $\II$ preserves colimits so the map $h$ uniquely determines a pair $h'\colon (W\sqcup Y)\times \II\to K$ and $h''\colon (X\sqcup Z)\times \II\to K$.
	Similarly the map $f$ determines maps $f'\colon M\to K$ and $f''\colon N\to K$.
	We can use the homotopy extension property of $\lan i,j\ran$ on the pair $h'$ and $f'$ to obtain a map $H'\colon M\times \II \to K$ and similarly of $\lan k,l\ran$ on the pair $h''$ and $f''$ to obtain $H''\colon N\times \II \to K$. 
	Now using Lemma~\ref{le:producthom} again, $H'$ and $H''$ determine uniquely a map $H\colon (M\sqcup N)\times \II \to K$ extending $f$ and $h$.
	The image of $\lan i\sqcup k,j\sqcup l\ran$ is the union of the images of $\lan i,j\ran$ and $\lan k,l\ran$, thus is closed.
	
	We now check that the monoidal product respects the equivalence relation.
	Suppose we have a \ccc{} $\cmor{M'}{i'}{W}{j'}{X}$ which is cospan homotopy equivalent to $\cmor{M}{i}{W}{j}{X}$
	via some \che{} $\phi\colon M\to M'$ and similarly $\cmor{N'}{k'}{Y}{l'}{Z}$
	equivalent to 
	$\cmor{N}{k}{Y}{l}{Z}$ via $\psi\colon N\to N'$.
	Then there exist homotopy inverses $\phi'$ of $\phi$ and $\psi'$ of $\psi$.
	The following diagram commutes
	\[
	\begin{tikzcd}[ampersand replacement=\&]
	\& M\sqcup N \ar[dd,"\phi\sqcup \psi"'] \& \\
	W\sqcup Y\ar[ur,"i\sqcup k"]
	\ar[dr,"i'\sqcup k'"'] \& \&
	X\sqcup Z\ar[ul,"j\sqcup l"']\ar[dl,"j'\sqcup l'"]\\
	\& M'\sqcup N' \&
	\end{tikzcd}
	\]
	and using the universal property of the coproduct on the appropriate homotopies it is straightforward to check that $\phi\sqcup \psi$ is a homotopy equivalence with homotopy inverse $\phi'\sqcup \psi'$.
	
	We now check that $\otimes$ is a functor, starting with checking that $\otimes$ preserves identities. Let $X$ and $Y$ be any spaces. The canonical isomorphism  $(X\sqcup Y)\times \II \to (X\times \II )\sqcup (Y\times \II)$, which in particular is a homotopy equivalence, is sufficient to show that
	$\cmor{(X\times \II)\sqcup (Y\times \II)}{\iota_0^X\sqcup \iota^Y_0}{X\sqcup Y}{\iota_0^X\sqcup \iota_1^Y}{X\sqcup Y}$
	is cospan homotopy equivalent to $\cmor{(X\sqcup Y)\times \II}{\iota_0^{X\sqcup Y}}{X\sqcup Y}{\iota_1^{X\sqcup Y}}{X\sqcup Y}$.
	
	Finally we check that $\otimes$ preserves composition. Given two pairs of composable \ccc{}s, there are distinct cospans obtained from first appplying $\otimes$ and then composing and from composing and then applying $\otimes$.
	A commuting isomorphism is constructed between these cospans using the universal properties of the coproduct and the pushout.
\end{proof}	

To construct a monoidal structure on $\CofCsp$, we will give all associators and unitors in the form of the cospan in the following lemma.

\lemm{\label{le:monoidal_isos}
	Let $X$ and $X'$ be spaces and $f\colon X\to X'$ a homeomorphism.
	Then the cospan 
	\[
	\begin{tikzcd}[ampersand replacement=\&,column sep=small, row sep=small]
	X \ar[rd,"\iota_0^{X'}\circ f"',near start] \& \& X' \ar[dl,"\iota_1^{X'}",near start]\\
	\& X'\times \II 
	\end{tikzcd}
	\]
	is a \ccc{} and its \che{} class is an isomorphism in $\CofCsp$.\\
	(Recall that $\iota^X_a\colon X \to X \times \II$ is the map $x\mapsto (x,a)$.)
}
\begin{proof}
	We first prove that the cospan is a \ccc{}.
	Note that the map $\lan \iota_0^{X'}\circ f, \iota_1^{X'}\ran$ is equal to the composition
	\[
	X\sqcup X'
	\xrightarrow{\lan f,\id_X\ran}
	X'\sqcup X'
	\xrightarrow{\lan \iota_0^{X'}, \iota_1^{X'}\ran}
	(X'\sqcup X')\times \II.
	\]
	The first map is a homeomorphism; hence it is a cofibration by Lemma~\ref{le:homeo_cofib}. 
	We proved that the second map is a cofibration in  Lemma~\ref{pr:identity_ccc}.
	Hence, by Lemma~\ref{le:cofib_comp}, the composition is a cofibration. Since the first map is a homeomorphism, the image of the composition is equal to the image of the second map, so is closed by Lemma~\ref{pr:identity_ccc}.
	
	To see that the \che{} class is an isomorphism notice that the composition 
	\[
	\left[ \cmortikz{X'\times \II}{\iota_0^{X'}\circ f}{X}{\iota_1^{X'}}{X'} \right]
	\sbullet \left[ \cmortikz{X'\times \II}{\iota_0^{X'}}{X'}{\iota_1^{X'}\circ f	}{X}\right] 
	\]
	is equivalent to $\cmor{X'\times \II}{\iota_0^{X'}\circ f}{X}{\iota_1^{X'}\circ f}{X}$ 
	via the obvious isomorphism $X'\times \II\cong (X'\times \II)\sqcup_{X'}(X'\times \II)$, which is equivalent to $\cmor{X\times \II}{\iota_0^X}{X}{\iota_1^X}{X}$ via the homeomorphism $f\times \id_{\II}\colon X\times \II \to X'\times \II$.
\end{proof}

\begin{lemma}\label{le:CofCsp_mon}
	Recall from Proposition~\ref{pr:Top_mon} that $(\Topo,\sqcup,\emptyset,\alpha_{X,Y,Z}^T,\lambda_X^T,\rho_X^T)$ is a monoidal category.
	There is a monoidal category \[(\CofCsp\;,\;\otimes\;,\;\emptyset\;,\;\alpha_{X,Y,Z}\;,\;\lambda_{X}\;,\;\rho_{X})\] where $\otimes$ is as in Lemma~\ref{le:CofCsp_bifunctor},  
	\begin{itemize}
		\item for any spaces $X,Y,Z\in Ob(\CofCsp)$, $\alpha_{X,Y,Z}\colon (X\sqcup Y)\sqcup Z \to X \sqcup (Y\sqcup Z)$ is the \che{} of the cospan  
		\[
		\begin{tikzcd}[ampersand replacement=\&,column sep = -2em]
		(X\sqcup Y)\sqcup Z \ar[rd,"\iota_0^{X\sqcup (Y\sqcup Z)}\circ\alpha^{T}_{X,Y,Z}\hspace*{-1em}"',near start] \& \& X \sqcup (Y\sqcup Z) \ar[dl,"\iota_1^{X\sqcup (Y\sqcup Z)}",near start]\\
		\& (X\sqcup (Y\sqcup Z))\times \II,
		\end{tikzcd}
		\]
		\item for any space $X\in Ob(\CofCsp)$,
		$\lambda_X\colon \emptyset \sqcup X \to X$ is the \che{} class of the cospan
		\[
		\cmortikz{X\times \II,}{\iota_0^X\circ\lambda_X^T}{\emptyset\sqcup X}{\iota_1^X}{X}
		\]
		\item for any space $X\in Ob(\CofCsp)$,
		$\rho_X\colon X\sqcup \emptyset \to X$ is the \che{} class of the cospan 
		\[
		\cmortikz{X\times \II.}{\iota_0^X\circ\rho_X^T}{X\sqcup \emptyset}{\iota_1^X}{X}
		\]
	\end{itemize}
\end{lemma}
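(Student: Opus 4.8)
The plan is to observe that every associator and unitor in the statement is the $\simche$-class of a cylinder cospan attached to a homeomorphism of $\Topo$, and then to transport the coherence axioms from the monoidal category $(\Topo,\sqcup,\emptyset,\alpha^T,\lambda^T,\rho^T)$ of Proposition~\ref{pr:Top_mon}. For a homeomorphism $f\colon X\to X'$ write $\kappa(f)=\classche{\cmor{X'\times \II}{\iota_0^{X'}\circ f}{X}{\iota_1^{X'}}{X'}}$; this is exactly the cospan of Lemma~\ref{le:monoidal_isos}, hence an isomorphism in $\CofCsp$. Since $\alpha^T_{X,Y,Z}$, $\lambda^T_X$ and $\rho^T_X$ are homeomorphisms we have $\alpha_{X,Y,Z}=\kappa(\alpha^T_{X,Y,Z})$, $\lambda_X=\kappa(\lambda^T_X)$ and $\rho_X=\kappa(\rho^T_X)$, so all structural morphisms are well defined isomorphisms and it remains only to verify (M1)--(M4).

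The technical core is a \emph{sliding lemma}: for any \ccc{} $\cmor{M}{a}{X}{b}{Y}$ and homeomorphisms $p\colon W\to X$, $q\colon Y\to Y'$,
\[
\kappa(q)\sbullet\classche{\cmor{M}{a}{X}{b}{Y}}\sbullet \kappa(p)=\classche{\cmor{M}{a\circ p}{W}{b\circ q^{-1}}{Y'}}.
\]
The proof is a cylinder retraction: precomposing with $\kappa(p)$ forms the pushout that glues a cylinder $X\times\II$ onto $M$ along $a$, and the retraction of that cylinder onto its attached end is a homotopy equivalence fixing $M$, hence commuting with the two legs of the composite cospan; by Theorem~\ref{th:cofib_equiv} it is therefore a cofibre homotopy equivalence, that is a \che{}. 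Post-composition with $\kappa(q)$ is symmetric. Taking $\cmor{M}{a}{X}{b}{Y}$ to be itself a cylinder cospan recovers the functoriality $\kappa(v\circ u)=\kappa(v)\sbullet\kappa(u)$ with $\kappa(\id_X)$ equal to the identity $1_X$ of $\CofCsp$ (essentially the computation already performed in Lemma~\ref{le:monoidal_isos}), and the canonical homeomorphism $(X'\times\II)\sqcup(Y'\times\II)\cong(X'\sqcup Y')\times\II$ gives the $\otimes$-compatibility $\kappa(u)\otimes\kappa(v)=\kappa(u\sqcup v)$, where $\otimes$ is the bifunctor of Lemma~\ref{le:CofCsp_bifunctor}. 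I expect this sliding lemma, together with the bookkeeping of which leg gets relabelled, to be the main obstacle; everything after it is formal.

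For the pentagon (M1) and triangle (M2) every edge is of the form $\kappa(u)\otimes 1=\kappa(u)\otimes\kappa(\id)$, $1\otimes\kappa(u)$ or $\kappa(u)$ for a structure homeomorphism $u$ of $\Topo$. By functoriality and $\otimes$-compatibility of $\kappa$, each composite around the diagram equals $\kappa$ applied to the corresponding composite of $\Topo$ structure maps; these two $\Topo$-composites coincide because $(\Topo,\sqcup,\dots)$ is monoidal (Proposition~\ref{pr:Top_mon}), so the $\CofCsp$ diagram commutes.

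For the naturality squares (M3) and (M4) let $f,g,h$ be arbitrary morphisms of $\CofCsp$ represented by cospans with apexes $M,N,P$. Applying the sliding lemma to evaluate the pre- and post-composites with the cylinder cospans $\lambda$, $\rho$, $\alpha$ rewrites each side of a naturality square as a single cospan. For (M3) both sides reduce to the \emph{same} cospan with apex $M$, after using $\lambda^T_M$ (respectively $\rho^T_M$) to identify the apex $\emptyset\sqcup M$ of $\id_\emptyset\otimes f$ (respectively $M\sqcup\emptyset$ of $f\otimes\id_\emptyset$) with $M$; that identification is legitimate precisely by the naturality of $\lambda^T$ (respectively $\rho^T$) in $\Topo$. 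For (M4) the two sides reduce to cospans with apexes $(M\sqcup N)\sqcup P$ and $M\sqcup(N\sqcup P)$ whose legs are matched by the homeomorphism $\alpha^T_{M,N,P}$; that $\alpha^T_{M,N,P}$ commutes with both legs is exactly the naturality of $\alpha^T$ in $\Topo$ applied to the legs of $f,g,h$, so it is a \che{} and the square commutes. This establishes (M1)--(M4) and hence that $(\CofCsp,\otimes,\emptyset,\alpha,\lambda,\rho)$ is monoidal.
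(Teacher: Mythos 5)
Your proof is correct, but it takes a genuinely different route from the paper's. The paper proves the lemma by direct computation: it notes (via Lemma~\ref{le:monoidal_isos}) that the structural morphisms are isomorphisms, then explicitly constructs a \che{} for the triangle axiom (M2) only --- building maps $(x,t)\mapsto(x,t/2)$ and $(x,t)\mapsto(x,(t+1)/2)$ on the two half-cylinders, gluing them by the universal property of the pushout into a homeomorphism commuting with the cospans --- and asserts that (M1), (M3) and (M4) follow by similar arguments. You instead prove a uniform \emph{sliding lemma}, $\kappa(q)\sbullet\classche{\cmor{M}{a}{X}{b}{Y}}\sbullet\kappa(p)=\classche{\cmor{M}{a\circ p}{W}{b\circ q^{-1}}{Y'}}$, whose cylinder-retraction proof is essentially the same computation as the paper's identity axiom in Theorem~\ref{th:CofCsp}(\CC1) (gluing a cylinder along a cofibration and retracting it is a homotopy equivalence fixing $M$), and then transport everything from $\Topo$: coherence diagrams (M1)--(M2) become images under $\kappa$ of the corresponding $\Topo$ diagrams, and the naturality squares (M3)--(M4) reduce, after sliding, to a single homeomorphism of apexes ($\lambda^T_M$, $\rho^T_M$, or $\alpha^T_{M,N,P}$) whose commutation with the legs is exactly naturality of the $\Topo$ structure maps applied to the legs $i,j$ of the representing cospans. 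Notably, the paper itself sketches the transport idea in the remark following Theorem~\ref{th:CofCsp_symm} (the functor $\kappa\colon\Topo^h\to\CofCsp$), but only for diagrams built entirely from structure maps; your sliding lemma is precisely the extra ingredient needed to make that remark also cover the naturality axioms involving arbitrary cospans, which the paper leaves as ``similar.'' So the paper's approach buys a concrete, self-contained demonstration of the \che{}-building technique, while yours buys uniformity and a complete treatment of all four axioms at the cost of one auxiliary lemma whose verification (homotopy inverse given by the inclusion $M\to M\sqcup_X(X\times\II)$, as in the identity proof) you should, strictly speaking, spell out rather than dispatch in one sentence.
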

\begin{proof}
	First note that Lemma~\ref{le:monoidal_isos} gives that all associators and unitors are isomorphisms.\\
	The proofs of each of (M1-4)  
	are similar, so we only give the proof of (M2) here.\\	
	We must construct a \che{} from the composition
	\[
	\begin{tikzcd}[ampersand replacement=\&,column sep=scriptsize]
	(X\sqcup \emptyset)\sqcup Y
	\ar[dr,"\iota_0^{X\sqcup (\emptyset \sqcup Y)}\circ \alpha^T_{X,\emptyset,Y}"'] \&
	\&
	X\sqcup (\emptyset \sqcup Y)
	\ar[dl,"\iota_1^{X\sqcup (\emptyset \sqcup Y)}"]
	\ar[dr,"\iota_0^X\sqcup(\iota_0^Y\circ \lambda^T_Y)"'] \& \& \makebox*{$X\sqcup (\emptyset \sqcup Y)$}{$X\sqcup Y$}
	\ar[dl,"\iota_1^X\sqcup \iota_1^Y"] \\
	\& (X\sqcup (\emptyset \sqcup Y))\times \II\ar[dr] \& \& 
	(X\times \II )\sqcup (Y\times \II)\ar[dl]\\
	\& \& \makebox*{$X\sqcup (\emptyset \sqcup Y)$}{$((X\sqcup (\emptyset \sqcup Y))\times \II) \sqcup_{X\sqcup (\emptyset \sqcup Y)} ((X\times \II )\sqcup (Y\times \II))$,}
	\end{tikzcd}
	\]
	to the cospan
	\[
	\begin{tikzcd}[ampersand replacement=\&]
	(X\sqcup \emptyset)\sqcup Y \ar[dr,"(\iota_0^X\circ \rho^T_X)\sqcup \iota_0^Y"'] \& \& X\sqcup Y\ar[dl,"\iota_1^X\sqcup \iota_1^Y"] \\
	\& (X\times \II )\sqcup (Y\times \II ).
	\end{tikzcd}
	\]
	By the universal property of the coproduct and Lemma~\ref{le:producthom}, a map 
	$f\colon (X\times \II)\sqcup ((\emptyset \times \II)\sqcup (Y\times \II))\to (X\times \II )\sqcup (Y\times \II)$
	is uniquely determined by 
	\ali{
		f_X\colon X\times \II&\to (X\times \II )\sqcup (Y\times \II)\\
		(x,t)&\mapsto ((x,t/2),1)
	}
	and 
	\ali{
		f_Y\colon Y\times \II&\to (X\times \II )\sqcup (Y\times \II)\\
		(y,t)&\mapsto ((y,t/2),2).
	}
	Similarly a map $g\colon(X\times \II)\sqcup (Y\times \II)\to (X\times \II )\sqcup (Y\times \II)$ is determined by the pair
	\ali{
		g_X\colon X\times \II&\to (X\times \II )\sqcup (Y\times \II)\\
		(x,t)&\mapsto ((x,1/2(t+1)),1)
	}
	and 
	\ali{
		g_Y\colon Y\times \II&\to (X\times \II )\sqcup (Y\times \II)\\
		(y,t)&\mapsto ((y,1/2(t+1)),2).
	}
	We have that $f\circ \iota_1^{X\sqcup(\emptyset\sqcup Y)}= g\circ \iota_0^X\sqcup(\iota_0^Y\circ\lambda^T_Y)$ commute, so by the universal property of the pushout, these maps determine a map
	\[
	h\colon ((X\sqcup (\emptyset \sqcup Y))\times \II) \sqcup_{X\sqcup (\emptyset \sqcup Y)}( (X\times \II )\sqcup (Y\times \II))\to (X\times \II )\sqcup (Y\times \II)
	\]
	which is a homeomorphism, and it is straightforward to check this commutes with the cospans, hence is a \che{}.
\end{proof}

\begin{theorem}\label{th:CofCsp_symm}
	There is a symmetric monoidal category
	\[
	(\CofCsp\;,\;\otimes\;,\;\emptyset\;,\;\alpha_{X,Y,Z}\;,\;\lambda_{X}\;,\;\rho_{X}\;,\;\beta_{X,Y})
	\] 
	where $(\CofCsp,\otimes,\emptyset,\alpha_{X,Y,Z},\lambda_{X},\rho_{X})$ is as in Lemma~\ref{le:CofCsp_mon}, and for any spaces $X,Y\in Ob(\CofCsp)$,
	$\beta_{X,Y} \colon X\otimes Y \to Y\otimes X $ is the \che{} class of the cospan
	\[ 
	\begin{tikzcd}[ampersand replacement=\&]
	Y\sqcup X \ar[dr,"\iota_0^{Y\sqcup X}\circ \beta^T_{X,Y}"'] \& \& X\sqcup Y \ar[dl,"\iota_1^{Y\sqcup X}"] \\
	\& (Y\sqcup X)\times \II
	\end{tikzcd}
	\]
	where $\beta^T_{X,Y}$ is the braiding in $\Topo$ as in Proposition~\ref{pr:Top_braid}.\\
	By abuse of notation we will refer to this symmetric monoidal category as $\CofCsp$.
\end{theorem}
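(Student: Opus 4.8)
The monoidal structure $(\CofCsp,\otimes,\emptyset,\alpha,\lambda,\rho)$ is already in hand from Lemma~\ref{le:CofCsp_mon}, so the plan is to check only that $\beta$ is a well-defined natural isomorphism satisfying the two braiding equations and the symmetry condition $\beta_{Y,X}*_\CC\beta_{X,Y}=1_{X\otimes Y}$. First I would note that $\beta^T_{X,Y}\colon X\sqcup Y\to Y\sqcup X$ is a homeomorphism (Proposition~\ref{pr:Top_braid}), so the displayed cospan has exactly the shape treated in Lemma~\ref{le:monoidal_isos}; that lemma immediately gives that $\beta_{X,Y}$ is a \ccc{} and that its \che{} class is an isomorphism in $\CofCsp$, with inverse the class of the cylinder cospan of $(\beta^T_{X,Y})^{-1}=\beta^T_{Y,X}$, namely $\beta_{Y,X}$.

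The main idea is to reduce every remaining verification to the corresponding statement in the symmetric monoidal category $\Topo$ (Propositions~\ref{pr:Top_mon} and~\ref{pr:Top_braid}). To this end I would isolate two facts about the assignment $h\mapsto \classche{\cmor{X'\times\II}{\iota_0^{X'}\circ h}{X}{\iota_1^{X'}}{X'}}$ sending a homeomorphism $h\colon X\to X'$ to its cylinder cospan. (i) It is \emph{functorial}: it sends $\id_X$ to the identity $1_X$ of $\CofCsp$ and a composite $h'\circ h$ to $(\text{cospan of }h')*_\CC(\text{cospan of }h)$; the computation is the one already carried out in the isomorphism part of Lemma~\ref{le:monoidal_isos}, using that a pushout of a cylinder relabels the glued boundary by the homeomorphism. (ii) It is \emph{monoidal}: the cylinder cospan of $h\sqcup k$ is \che{} to $(\text{cylinder of }h)\otimes(\text{cylinder of }k)$, via the canonical homeomorphism $(X'\sqcup Y')\times\II\cong(X'\times\II)\sqcup(Y'\times\II)$ used already in Lemma~\ref{le:CofCsp_bifunctor}. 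Since $\alpha$, $\lambda$, $\rho$ and $\beta$ are, by definition, the images under this assignment of $\alpha^T$, $\lambda^T$, $\rho^T$ and $\beta^T$, the braiding hexagons and the symmetry condition then follow term by term from the equalities $\beta^T_{X,Y\sqcup Z}=(1_Y\sqcup\beta^T_{X,Z})\circ(\beta^T_{X,Y}\sqcup 1_Z)$, $\beta^T_{X\sqcup Y,Z}=(\beta^T_{X,Z}\sqcup 1_Y)\circ(1_X\sqcup\beta^T_{Y,Z})$, and $\beta^T_{Y,X}\circ\beta^T_{X,Y}=\id_{X\sqcup Y}$, which hold in $\Topo$.

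For naturality, which is the one step not covered purely by homeomorphism cylinders, I would argue directly with representatives. Given morphisms $f=\classche{\cmor{M}{i}{X}{j}{X'}}$ and $g=\classche{\cmor{N}{k}{Y}{l}{Y'}}$, both composites $\beta_{X',Y'}*_\CC(f\otimes g)$ and $(g\otimes f)*_\CC\beta_{X,Y}$ are pushouts in which a cylinder of a swap homeomorphism is glued to $M\sqcup N$ (respectively $N\sqcup M$) along a boundary. The swap $\beta^T$ induces a homeomorphism $M\sqcup N\cong N\sqcup M$ permuting the two summands, and this, together with the relabeling principle of (i), produces a \che{} between the two pushout apexes commuting with both legs; hence the naturality square of $\beta$ commutes in $\CofCsp$. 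The main obstacle is establishing (i) cleanly: one must show that gluing the cylinder of a homeomorphism onto a cofibrant cospan yields a \che{} onto the relabeled cospan, which requires the same ``product-with-$\II$ preserves pushouts'' and universal-property bookkeeping as the identity-axiom proof of Theorem~\ref{th:CofCsp}. Once that principle is in place, every coherence equation is a transcription of a true identity in $\Topo$.
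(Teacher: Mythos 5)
Your proposal is correct, but it is organised differently from the paper's proof. The paper verifies the axioms by direct construction: for the symmetry condition $\beta_{Y,X}*_{\CofCsp}\beta_{X,Y}=1_{X\otimes Y}$ it writes down explicit maps $f_1,f_2$ on the two cylinders, checks they agree on the glued boundary, applies the universal property of the pushout to get a homeomorphism from the composite's apex to $(X\sqcup Y)\times \II$, and checks it commutes with the cospans; the hexagons and naturality are then declared ``similar''. You instead factor everything through the assignment $\kappa$ sending a homeomorphism $h\colon X\to X'$ to its cylinder cospan, prove it is functorial and monoidal, and transport the coherence identities from the symmetric monoidal structure on $\Topo$ (Proposition~\ref{pr:Top_braid}). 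This is precisely the strategy the paper itself records in the remark immediately following the theorem (the functor $\kappa\colon \Topo^h\to \CofCsp$), so your route is a fleshed-out version of that remark rather than of the printed proof. What your approach buys is uniformity: once the ``relabelling'' lemma (gluing the cylinder of a homeomorphism onto a \ccc{} gives a \che{} onto the cospan with the relevant leg precomposed by the inverse homeomorphism, proved by the same product-with-$\II$ bookkeeping as the identity axiom in Theorem~\ref{th:CofCsp} and Lemma~\ref{le:monoidal_isos}) is in place, all of (M1)--(M2), both hexagons, and symmetry are images under $\kappa$ of true identities in $\Topo$ and commute automatically. You also correctly single out naturality of $\beta$ as the one axiom not covered by $\kappa$, since it involves arbitrary cospans, and your argument there is sound: after relabelling, both composites are represented on the apexes $M\sqcup N$ and $N\sqcup M$, and the swap homeomorphism of the coproduct is a \che{} between them (one checks it intertwines $i\sqcup k$ with $(k\sqcup i)\circ\beta^T_{X,Y}$ and $(j\sqcup l)\circ\beta^T_{Y',X'}$ with $l\sqcup j$, exactly as in your sketch). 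In this respect your write-up is actually more complete than the paper's, which leaves naturality implicit; the cost is the extra lemma establishing functoriality of $\kappa$, which the paper avoids by verifying a single axiom concretely.
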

\begin{proof}
	As with the previous theorem, the proofs of all necessary identities are similar.
	Here we give the proof that $\beta$ is symmetric.
	
	We must construct a \che{} from the composition
	\[
	\begin{tikzcd}[ampersand replacement=\&]
	X\sqcup Y\ar[dr,"\iota_0^{Y\sqcup X}\circ \beta^T_{X,Y}"'] \& \& Y\sqcup X\ar[dl,"\iota_1^{Y\sqcup X}"]
	\ar[dr,"\iota_0^{X\sqcup Y}\circ \beta^T_{Y,X}"' {xshift=1em}] \& \& X\sqcup Y  \ar[dl,"\iota_1^{X\sqcup Y}"]\\
	\& (Y\sqcup X) \times \II \ar[dr]\& \&
	(X\sqcup Y) \times \II \ar[dl]\\
	\&\& \makebox*{$Y\sqcup X$}{$((Y\sqcup X) \times \II )\sqcup_{Y\times X}((X\sqcup Y) \times \II)$,}
	\end{tikzcd}	
	\]
	to the cospan 
	\[
	\begin{tikzcd}[ampersand replacement=\&, sep =small]
	X\sqcup Y\ar[dr,"\iota_0^{X\sqcup Y}"'] \& \& X\sqcup Y\ar[dl,"\iota_1^{X\sqcup Y}"] \\
	\& (X\sqcup Y)\times \II.
	\end{tikzcd}
	\]
	Define maps
	\ali{
		f_1\colon(Y\sqcup X) \times \II &\to (X\sqcup Y)\times \II\\
		(x,t)&\mapsto (\beta^T_{Y, X}(x),t/2)
	}
	and
	\ali{
		f_2\colon (X\sqcup Y)\times \II &\to (X\sqcup Y)\times \II \\
		(x,t)&\mapsto (x,1/2(t+1)) .
	}
Note that $f_1\circ \iota_1^{Y\sqcup X}=f_2\circ (\iota_0^{X\sqcup Y}\circ \beta^T_{Y,X})$, hence applying the universal property of the pushout 
	determines a map
	\[
	f\colon ((Y\sqcup X) \times \II )\sqcup_{Y\times X}((X\sqcup Y) \times \II)\to (X\sqcup Y)\times \II.
	\]
	Notice that $f$ is a homeomorphism, and it is straightforward to check that it commutes with the cospans, and so is a \che{}.
\end{proof}

\rem{Denote by $\Topo^h$ is the wide subcategory of $\Topo$ where all maps are homeomorphisms.
	There is a functor $\kappa\colon \Topo^h \to \CofCsp$, which sends a homeomorphism $f\colon X\to Y$ to the \che{} class of the cospan $\cmor{Y\times \II}{\iota_0^Y\circ f}{X}{\iota^Y_1}{Y}$.
	It then follows that (M1-2), and the identities required for braiding and symmetry, commute in $\CofCsp$ as they are precisely the images of the corresponding identities in $\Topo$.
	The same construction leads to a map from the classical mapping class group of a manifold into $\CofCsp$, this is why TQFTs give representations of mapping class groups.
	A related mapping leads to a functor from the mapping class groupoid of a space $X$ into $\CofCsp$. 
	See Section~\ref{sec:funcHomcob} for more.
}

\subsection{Category of homotopy cobordisms \texorpdfstring{$\HomCob$}{HomCob}}

Here we construct the symmetric monoidal category $\HomCob$ (Theorem~\ref{th:HomCob}), which we will use as the source category of the TQFT we construct in Section~\ref{sec:tqft}.
We obtain $\HomCob$ as a subcategory of $\CofCsp$ with a finiteness condition on spaces.

\begin{definition}\label{de:homfin}
	A space $X$ is called {\em \homfin{}} if $\pi(X,A)$ (Def.~\ref{de:Afundgrpd}) is finitely generated for all finite sets of basepoints $A$.\\
	Let $\chi$ denote the class of all \homfin{} spaces.
\end{definition}

\begin{lemma}\label{le:submgmoid_HomCob}
	There exists a submagmoid
	\[
	\cHomCob=(\chi,\cHomCob(-,-),\sbullet)
	\]
	of $\cCofCsp$
	where
	\[
	\cHomCob(X,Y)=\left\{ \cmortikz{M}{i}{X}{j}{Y}\; \middle\vert \;  \parbox{22em}{$\lan i , j \ran$ is a closed cofibration, and \\
	$X,Y$ and $M$ are \homfin{}}
\right\}.
	\]
	Morphisms in $\cHomCob$ are called {\em \chomcob{}s}.
\end{lemma}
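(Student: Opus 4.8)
The plan is to verify the two things that make $\cHomCob$ a submagmoid of $\cCofCsp$: first that $\cHomCob$ is closed under the composition $\sbullet$, and second that it is closed under the monoidal/structural constraints insofar as they are needed for a magmoid (here only closure of composition is required, since a submagmoid is just a subcollection of objects and morphisms closed under $\sbullet$). The object class $\chi$ and the morphism sets $\cHomCob(X,Y)$ are defined by fiat, so the only real content is that composing two \chomcob{}s yields another \chomcob{}.

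**First I would** recall from Lemma~\ref{le:comp_ccc} that $\sbullet$ is already a well-defined composition of \ccc{}s, so that $\cmor{M\sqcup_Y N}{\tilde i}{X}{\tilde l}{Z}$ is automatically a concrete cofibrant cospan (i.e. $\lan \tilde i,\tilde l\ran$ is a closed cofibration) whenever the two factors are. Hence the cofibration condition in the definition of $\cHomCob(X,Z)$ is inherited for free. What remains is purely the finiteness condition: given \chomcob{}s $\cmor{M}{i}{X}{j}{Y}$ and $\cmor{N}{k}{Y}{l}{Z}$ with $X,Y,Z,M,N$ all \homfin{}, I must show the pushout space $M\sqcup_Y N$ is \homfin{}, i.e. that $\pi(M\sqcup_Y N, A)$ is finitely generated for every finite basepoint set $A$.

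**The key step** is to apply the van Kampen corollary, Corollary~\ref{co:vk}. Since $\cmor{M}{i}{X}{j}{Y}$ is a \ccc{}, Lemma~\ref{ccc_implies_cofibs} tells us $j\colon Y\to M$ is a closed cofibration, so we are in the hypotheses of Corollary~\ref{co:vk} with $X_0=Y$, $X_1=M$, $X_2=N$ and the cofibration $j$ (taking the pushout of $j$ and $k$). Choosing finite representative subsets inside $Y$, $M$ and $N$ — which exist and can be taken finite because the spaces are \homfin{} and hence have finitely many path components — Corollary~\ref{co:vk} gives that
\[
\begin{tikzcd}[ampersand replacement=\&]
\pi(Y,A_Y) \ar[r]\ar[d] \& \pi(M,A_M)\ar[d] \\
\pi(N,A_N)\ar[r] \& \pi(M\sqcup_{Y} N,D)
\end{tikzcd}
\]
is a pushout of groupoids, where $D$ is representative in $M\sqcup_Y N$. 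Because $M$, $N$ and $Y$ are \homfin{}, the groupoids $\pi(M,A_M)$, $\pi(N,A_N)$ and $\pi(Y,A_Y)$ are finitely generated, and Theorem~\ref{th:grpd_pushoutfg} then asserts that their pushout $\pi(M\sqcup_Y N, D)$ is finitely generated. Finally, for an arbitrary finite basepoint set $A\subset M\sqcup_Y N$ one enlarges $D$ by the finitely many extra points of $A$ and invokes Lemma~\ref{le:add_bps} (which controls how adding a single basepoint changes the groupoid) finitely many times to conclude $\pi(M\sqcup_Y N, A)$ is again finitely generated. Hence $M\sqcup_Y N\in\chi$ and the composite is a \chomcob{}.

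**The main obstacle** I anticipate is the bookkeeping around basepoints: Corollary~\ref{co:vk} produces finite generation only for the specific representative set $D$ arising from the pushout of the chosen representative subsets, whereas \homfin{}ness requires finite generation for \emph{every} finite basepoint set. Bridging this gap cleanly — passing from the distinguished $D$ to an arbitrary finite $A$ while preserving finite generation — is where care is needed, and this is exactly what the basepoint-manipulation lemmas (Lemmas~\ref{le:extension} and \ref{le:add_bps}) are designed to handle. Once that reduction is in place, the rest is a direct assembly of Lemma~\ref{le:comp_ccc}, Lemma~\ref{ccc_implies_cofibs}, Corollary~\ref{co:vk} and Theorem~\ref{th:grpd_pushoutfg}, and the submagmoid claim follows.
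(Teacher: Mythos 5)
Your overall strategy is the same as the paper's: the cofibration condition on the composite is inherited from Lemma~\ref{le:comp_ccc}, and finite generation of the fundamental groupoid of the pushout $M\sqcup_Y N$ is obtained by combining Corollary~\ref{co:vk} (applicable since $j$ and $k$ are closed cofibrations by Lemma~\ref{ccc_implies_cofibs}) with Theorem~\ref{th:grpd_pushoutfg}. One detail the paper makes explicit and you gloss over: Corollary~\ref{co:vk} requires the chosen finite representative subsets to satisfy the compatibility conditions $j(Y_0)=M_0\cap j(Y)$ and $k(Y_0)=N_0\cap k(Y)$, not merely to be finite and representative; this is easy to arrange but should be stated.

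The genuine flaw is in your final bridging step. You correctly identify that Corollary~\ref{co:vk} plus Theorem~\ref{th:grpd_pushoutfg} only give finite generation of $\pi(M\sqcup_Y N, D)$ for the one distinguished basepoint set $D=M_0\sqcup_{Y_0}N_0$, whereas \homfin{}ness demands it for every finite basepoint set — a gap the paper's own proof leaves implicit. But the lemmas you invoke to close it cannot do so: Lemma~\ref{le:extension} and Lemma~\ref{le:add_bps} are statements about groupoid \emph{morphisms into} $\GG_G$ — extending a map $\pi(X,X_0)\to \GG_G$ over one extra basepoint, and the resulting bijection $\Grpd(\pi(X,X_0),\GG_G)\times G\cong \Grpd(\pi(X,X_0\cup\{y\}),\GG_G)$ — which are used later in the TQFT construction and say nothing about finite generation of the fundamental groupoid itself. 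Invoking them ``finitely many times'' yields no conclusion about generators of $\pi(M\sqcup_Y N, A)$. The correct tool is Lemma~\ref{le:fg_all_sets}: if $\pi(X,A)$ is finitely generated for \emph{some} finite representative set $A$, then it is finitely generated for \emph{all} finite representative sets. (Its proof is exactly the conjugation-by-connecting-paths argument that you would otherwise need to reconstruct by hand, and note also that you would need it to go in the direction of arbitrary sets $A$, including enlarging \emph{and} changing basepoints, not just adding points to $D$.) With that single substitution your argument coincides with the paper's proof.
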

\begin{proof}
	We check $\cHomCob$ is closed under composition. 
	Suppose $\cmor{M}{i}{X}{j}{Y}$ and $\cmor{N}{k}{Y}{l}{Z}$ are \chomcob{}s.
	Consider the pushout
	\[
	\begin{tikzcd}[ampersand replacement= \&,column sep=tiny]
	\& \& Y\ar[dl,"j"']\ar[dr,"k"] \& \&  \\
	\&  M\ar[dr,"p_M"'] \& \& N \ar[dl,"p_N"]\& \\
	\& \& M\push{$p_Mp_N$}{Y} N. \& \& 
	\end{tikzcd}
	\]
	We may choose finite representative subsets $Y_0\subseteq Y$, $M_0\subseteq M$ and $N_0\subseteq N$ such that $j(Y_0)= M_0\cap j(Y)$ and 
	$k(Y_0)= N_0\cap k(Y)$.
	Applying Corollary~\ref{co:vk}
	the following square is also a pushout.
	\[
	\begin{tikzcd}[ampersand replacement= \&,column sep=tiny]
	\& \& \pi(Y,Y_0)\ar[dl,"\pi(j)"']\ar[dr,"\pi(k)"] \& \&  \\
	\&  \pi(M,M_0)\ar[dr,"\pi(p_M)"'] \& \& \pi(N,N_0) \ar[dl,"\pi(p_N)"]\& \\
	\& \&  \pi(M\push{$p_Mp_N$}{Y} N,M_0\push{$p_Mp_N$}{Y_0} N_0) \& \& 
	\end{tikzcd}
	\]
	\sloppypar{\noindent
		We have, from  Theorem~\ref{th:grpd_pushoutfg}, that the pushout of finitely generated  groupoids is finitely generated, so that $\pi(M\push{$p_Mp_N$}{Y} N,M_0\push{$p_Mp_N$}{Y_0} N_0)$ is finitely generated follows from the fact that $\pi(M,M_0)$ and $\pi(N,N_0)$ are.
		Hence the composition is a \chomcob{}.}
\end{proof}

To give specific examples of \chomcob{} we will use the following result which says that, to check a space $X$ is \homfin{}, it will be sufficient to find a single representative subset $A\subseteq X$ such that  $\pi(X,A)$ is finitely generated.

\begin{lemma}\label{le:fg_all_sets}
	If $\pi(X,A)$ is finitely generated for some finite representative set $A$, then $\pi(X,A')$ is finitely generated for all finite representative sets $A'$. 
\end{lemma}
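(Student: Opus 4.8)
The plan is to reduce the statement to two special cases by passing through the union $B=A\cup A'$, which is again a finite representative set. It then suffices to prove: (i) enlarging a finite representative set by a single point preserves finite generation, and (ii) passing from a finite representative set $B$ to a finite representative \emph{subset} $A'\subseteq B$ preserves finite generation. Indeed, applying (i) finitely many times to adjoin the points of $A'\setminus A$ to $A$ shows $\pi(X,B)$ is finitely generated, and then (ii) with $A'\subseteq B$ gives that $\pi(X,A')$ is finitely generated. Throughout, the hypothesis that the sets are representative (Definition~\ref{de:representative}) is exactly what guarantees the existence of the connecting paths used below.

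For (i), suppose $\pi(X,A)$ is generated by a finite set $S$ of morphisms and let $y\notin A$. Since $A$ is representative there is a path $\gamma\colon x\to y$ with $x\in A$. I claim $S'=S\cup\{[\gamma],[\gamma]^{-1},1_y\}$ generates $\pi(X,A\cup\{y\})$. Given any morphism $[\phi]\colon a\to b$ with $a,b\in A\cup\{y\}$, I would argue by the four cases according to whether $a$ or $b$ equals $y$: for instance if $a\in A$ and $b=y$ then $[\gamma]^{-1}[\phi]$ is a morphism of the full subgroupoid $\pi(X,A)$, hence a composite of elements of $S$, and $[\phi]=[\gamma]\bigl([\gamma]^{-1}[\phi]\bigr)$; the remaining cases are handled by conjugating $[\phi]$ with $[\gamma]$ and $[\gamma]^{-1}$ in the same way. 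As $S'$ is finite, $\pi(X,A\cup\{y\})$ is finitely generated, and induction adjoins all of $A'\setminus A$ to pass from $A$ to $B$ (the intermediate sets all contain $A$, so stay representative).

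For (ii), let $T$ be a finite generating set for $\pi(X,B)$. Using that $A'$ is representative, for each $b\in B$ I would fix a point $r(b)\in A'$ in the same path component as $b$ together with a path $\delta_b\colon r(b)\to b$, chosen so that $\delta_b=1_b$ whenever $b\in A'$. For each generator $t\colon b\to b'$ in $T$ set $\widehat{t}=\delta_{b'}^{-1}\,t\,\delta_b$, a morphism of $\pi(X,A')$, and let $\widehat{T}$ be the finite set of all such $\widehat{t}$ together with the identities of $A'$. The key computation is that, for any morphism $\phi\colon a\to a'$ of $\pi(X,A')$, a decomposition $\phi=t_n\cdots t_1$ into generators through intermediate points $c_0=a,\dots,c_n=a'$ in $B$ yields, after telescoping the cancelling factors $\delta_{c_i}\delta_{c_i}^{-1}=1_{c_i}$, the identity $\widehat{t_n}\cdots\widehat{t_1}=\delta_{c_n}^{-1}\,t_n\cdots t_1\,\delta_{c_0}=\delta_{a'}^{-1}\,\phi\,\delta_a=\phi$, the last equality because $\delta_a$ and $\delta_{a'}$ are identities. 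Hence $\widehat{T}$ generates $\pi(X,A')$.

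The main obstacle is bookkeeping rather than conceptual: one must keep the composition order consistent with the paper's function-order convention, verify that every case in (i) and every conjugate $\widehat{t}$ in (ii) genuinely lands in the relevant full subgroupoid, and confirm that the constructed sets $S'$ and $\widehat{T}$ remain finite. The cancellation of the internal $\delta_{c_i}\delta_{c_i}^{-1}$ factors in the telescoping identity is the crux of step (ii): it is precisely the device that converts a decomposition living in the larger groupoid $\pi(X,B)$ into one living in the smaller groupoid $\pi(X,A')$.
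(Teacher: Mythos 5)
Your proof is correct, and its engine is the same as the paper's: conjugate a generating set by chosen connecting paths and telescope by inserting cancelling pairs (your $\delta_{c_i}\delta_{c_i}^{-1}=1_{c_i}$ is exactly the paper's insertion of $\gamma_{p_l}^{-1}\gamma_{p_l}$). The global structure, however, differs. The paper proves the lemma in a single direct step: for an arbitrary finite representative set $B$ --- no containment with $A$ assumed --- it fixes connecting paths $\gamma_{n,m}\colon a_n\to b_m$ between points of $A$ and $B$ lying in the same path component, and shows that the finitely many classes $\gamma_{n',m'}\,s\,\gamma_{n,m}^{-1}$ with $s\in S$ generate $\pi(X,B)$: given $t\colon b_m\to b_{m'}$, it conjugates back to a loop $\tilde t=\gamma_{n,m'}^{-1}t\,\gamma_{n,m}$ based at $a_n$, decomposes $\tilde t$ into elements of $S$, and re-inserts cancelling pairs to split the result into generators of the stated form. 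Your step (ii) is essentially this computation specialised to $A'\subseteq B$, so your detour through the union $B=A\cup A'$ --- in particular step (i), adjoining points one at a time --- is rendered unnecessary by the paper's formulation, whose one computation simultaneously handles enlarging, shrinking, and incomparable basepoint sets. What your organisation buys is modularity: step (i) is an elementary, reusable fact in the spirit of the paper's Lemma~\ref{le:extension} on extending groupoid maps along a chosen path, and step (ii) isolates the restriction mechanism; moreover, your explicit inclusion of $[\gamma]^{-1}$ and the identities in $S'$ and $\widehat{T}$ tracks the paper's definition of finite generation (composites of generators only, identities included) more carefully than the paper's own write-up does.
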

\begin{proof}
	Let $A=\{a_1,...,a_N\}\subset X$ be a finite representative subset, and suppose $\pi(X,A)$ is finitely generated. 
	Then there exists a finite set of generating morphisms. Let $S=\{s_1,...,s_K\}$ be a finite set of representative paths, such that taking path-equivalence classes of each path gives a set of generating morphisms for $\pi(X,A)$.
	
	Let $B=\{b_1,...,b_M\}\subset X$ be another finite representative subset. 
	For each pair $\{n,m\}$ such that $a_n$ and $b_m$ are in the same path connected component, choose a path $\gamma_{n,m} \colon a_n \to b_m$. We denote the set of all such paths by $\Gamma$.
	Note that this is finite since $A$ and $B$ are finite.
	We will show that $\pi(X,B)$ is generated by the set of path-equivalence classes of all paths of the form $\gamma_{n',m'} s\gamma_{n,m}^{-1}$, where $s\in S$, $s_0=a_n$ and $s_1=a_{n'}$. Note this is again finite. 
	
	Let $t:b_m\to b_{m'}$ be any path.
	Choose $n$ such that $a_n$ is in the same path connected component as $b_m$, then  $\tilde{t}=\gamma^{-1}_{n,m'}t \gamma^{\phantom{-1}}_{n,m}\colon a_n\to a_n$ is a path and 
	$\gamma^{\phantom{-1}}_{n,m'}\tilde{t} \gamma^{-1}_{n,m}\simp t$.
	Now we have $\tilde{t}\sim p_L\dots p_2p_1$, a finite sequence of $p_l \in S$, since the equivalence classes of the $s_k$ generate $\pi(X,A)$. Hence $t\simp\gamma_{n,m'}^{\phantom{-1}} p_L...p_2p_1\gamma_{n,m}^{-1}$.
	For each $p_l$, choose a path $\gamma_{p_l}\in \Gamma$ such that $(\gamma_{p_l})_0=(p_l)_1$, so  $\gamma^{-1}_{p_l}\gamma_{p_l}^{\phantom{-1}}\simp e_{(p_l)_1}$, the identity path. 
	Now $t\simp  
	\gamma_{n,m'}^{\phantom{-1}}p_L\gamma_{p_{L-1}}^{-1}...\gamma_{p_2}^{\phantom{-1}}p_2\gamma_{p_1}^{-1}\gamma_{p_1}^{\phantom{-1}}p_1\gamma_{n,m}^{-1}$, which is of the desired form.
\end{proof}

\exa{\label{ex:disk} The \ccc{} from Proposition~\ref{disk} is a \chomcob, as the fundamental groups of $D^2$ and of $S^1$ are finitely generated.}

\exa{\label{ex:mer_b} The \ccc{}, $\cmor{M}{i}{X}{j}{Y}$, in Example~\ref{ex:mer} is a \chomcob{}.
	We have $X\cong S^1\sqcup S^1$, hence, letting $X_0$ be a subset with a single point in each path connected component, $\pi(X,X_0)\cong \Z \sqcup \Z$.
	Similarly $Y\cong S^1$, so $\pi(Y,\{y\})\cong \Z$ for any $y\in Y$.
	The manifold $M$ is a homotopy equivalent to the twice punctured disk, hence has fundamental group $\pi(M,\{m\})\cong \Z*\Z$.
	Hence $X$, $Y$ and $M$ are \homfin{}.
}
\exa{\label{ex:embmer_b} 
	The \ccc{}, $\cmor{M}{i}{X}{j}{Y}$, in Example~\ref{ex:embmer} is a \chomcob{}. 
	The space $X$ is homotopy equivalent to the disjoint union of two copies of the open disk and a twice punctured disk.
	Thus, choosing $X_0\subset X$ with a point in each connected component, we have $\pi(X,X_0)$ is finitely generated.
The space $Y$ is homotopy equivalent to the disjoint union of the circle and the open disk thus, choosing $Y_0$ in the same way, we have $\pi(Y,Y_0)$ finitely generated.
	The space $M$ is the disjoint union of a contractible space, and a space which is homotopy equivalent to a $3$ punctured sphere 
	and thus via a stereographic projection, homotopy equivalent to the twice punctured disk. 	Hence $\pi(M,M_0)$ is finitely generated
	for any choice of $M_0\subset M$ finite. \\
	To see the homotopy equivalence to the punctured sphere, notice that a ball with the centre removed is equivalent to the sphere, and this can be seen by choosing a homotopy equivalence which sends each point in the ball to the closest point on the sphere. Now consider the ball with three line segments removed, each with one end point on the boundary and the other at the centre of the ball. The same homotopy equivalence sends this space to the $3$ punctured sphere. Notice that this extends to a homotopy gradually pushing points to the boundary, and at some point during this homotopy, the image will be a space homeomorphic to the space represented in Figure~\ref{fig:embmer}.
}

\begin{example}
	Let $\Gamma$ be a finite graph.
	Choose disjoint sets $V_1,V_2\subseteq V(\Gamma)$ of vertices.
	Then 
	$\cmor{\Gamma}{i}{V_1}{j}{V_2}$ is a concrete homotopy cobordism where $i$ and $j$ are inclusions.
	To see that $\lan i,j\ran $ is a closed cofibration, we can think of $\Gamma$ as a CW complex, and $V_1\cup V_2$ a subcomplex, and then apply Proposition~\ref{pr:CWcofib}.
	That the spaces are \homfin{} can be seen by taking the set of basepoints to be all vertices, and generating paths to be edges.
\end{example}

\begin{example}
	Let $M$ be a CW complex, and $X$ and $Y$ disjoint subcomplexes. Then $\cmor{M}{i}{X}{j}{Y}$, where $i$ and $j$ are inclusions, is a \chomcob{}.
	Proposition~\ref{pr:CWcofib} gives that $\lan i, j\ran$ is a closed cofibration. 
That finitely generated CW complexes have finite fundamental group follows from Proposition 1.26 of \cite{hatcher}.
\end{example}

\begin{definition}
	A \cc{} is called a {\em \homcob{}} if there exists a representative which is a \chomcob.\\
	For \homfin{} spaces $X,Y\in \Topo$ define
	\[
	\HomCob(X,Y)=\left\{\text{ }\classche{\cmortikz{M}{i}{X}{j}{Y}}
	\; \middle\vert \;\cmortikz{M}{i}{X}{j}{Y} \text{  is a \chomcob{}}   \right\}.
	\]
\end{definition}
Notice that if $\cmor{M}{i}{X}{j}{Y}$ is a \ccc{} with all spaces \homfin{}, then it is clear from the definition of \che{} that every cospan in the equivalence class also has all spaces \homfin{}. 

\begin{theorem}\label{HomCob}
	There is a subcategory of $\CofCsp$ (Thm~\ref{th:CofCsp})
	\[
	\HomCob=\left(\chi,\HomCob(X,Y),\; \sbullet \;, \; \classche{\cmortikz{X\times \II}{\iota_{0}^X}{X}{\iota_1^X}{X}}\;\right)
	\]
	with
	\begin{itemize}
		\item all \homfin{} spaces as objects;
		\item for spaces $X,Y\in Ob(\HomCob)$, 
		morphisms in $\HomCob(X,Y)$ are homotopy cobordisms i.e. cospan homotopy equivalence classes (Lem~\ref{le:che}) of cospans 
		\[
		\classche{\cmortikz{M}{i}{X}{j}{Y}}
		\]
		with all spaces \homfin{} and $\lan i, j \ran$ a closed cofibration;
		\item composition is as follows
		\ali{
			\sbullet\;\colon \HomCob(X,Y)\times \HomCob(Y,Z)&\to \HomCob(X,Z) \\
			\left({\classche{\cmortikz{M}{i}{X}{j}{Y}}}\;\raisebox{-1em}{,}\;{\classche{\cmortikz{N}{k}{Y}{l}{Z}}}\right)&\mapsto \classche{\cmortikz{M \sqcup_Y N}{\tilde{i}}{X}{\tilde{l}}{Z}}
		}
		where $\tilde{i}=p_Mi$ and $\tilde{l}=p_Nl$ are obtained via the pushout diagram
		\[
		\begin{tikzcd}[ampersand replacement= \&,column sep=tiny]
		\makebox*{$M\sqcup_Y N$}{$X$} \ar[dr,"i"']\& \& Y\ar[dl,"j"]\ar[dr,"k"'] \& \& \makebox*{$M\sqcup_Y N$}{$Z$}\ar[dl,"l"] \\
		\&  M\ar[dr,"p_M"'] \& \& N \ar[dl,"p_N"]\& \\
		\& \&  M\sqcup_{Y} N; \& \& 
		\end{tikzcd}
		\]
		\item for a space $X\in Ob(\HomCob)$ the identity morphism is the equivalence class of the cospan
		\[
		\cmortikz{X\times \II.}{\iota_{0}^X}{X}{\iota_1^X}{X}
		\] 
	\end{itemize}
\end{theorem}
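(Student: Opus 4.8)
The plan is to verify directly that the data listed in the statement constitute a subcategory of the category $\CofCsp$ of Theorem~\ref{th:CofCsp}. Once I have checked the inclusions of objects and of morphisms, membership of the identities, and closure under composition, the associativity and identity axioms $(\CC1)$ and $(\CC2)$ are inherited for free from $\CofCsp$, so no separate verification of these is required.

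First I would record the two ``inclusion'' facts. The object class $\chi$ of \homfin{} spaces is a subclass of $Ob(\Topo)=Ob(\CofCsp)$, and for each pair $X,Y\in\chi$ the set $\HomCob(X,Y)$ is by definition contained in $\CofCsp(X,Y)=\cCofCsp(X,Y)/\simche$. I would also make explicit the observation noted just before the statement: the \homfin{} condition is constant on each \che{}-class, because any \che{} is in particular a homotopy equivalence, hence induces an equivalence of fundamental groupoids and so preserves finite generation of $\pi(-,A)$ for every finite $A$. Thus ``being a \homcob{}'' is a well-defined property of a \cc{}, independent of representative, and $\HomCob(X,Y)$ is unambiguously a subset of $\CofCsp(X,Y)$.

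Next I would check the one point requiring genuine argument: that the $\CofCsp$-identity at $X\in\chi$, namely $\classche{\cmortikz{X\times \II}{\iota_0^X}{X}{\iota_1^X}{X}}$, lies in $\HomCob(X,X)$. This amounts to showing $X\times\II$ is \homfin{} whenever $X$ is. The inclusion $X\cong X\times\{0\}\hookrightarrow X\times\II$ is a homotopy equivalence, with homotopy inverse the projection and with the contracting homotopy stationary on $X\times\{0\}$; hence for a finite representative subset $A\subseteq X$ with $\pi(X,A)$ finitely generated, the finite set $A\times\{0\}$ is representative in $X\times\II$ and the induced maps give an isomorphism $\pi(X\times\II,A\times\{0\})\cong\pi(X,A)$, which is therefore finitely generated. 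Lemma~\ref{le:fg_all_sets} then upgrades this to all finite sets of basepoints, so $X\times\II\in\chi$ as needed.

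Finally I would establish closure under composition, which is essentially Lemma~\ref{le:submgmoid_HomCob}. Given $[\alpha]\in\HomCob(X,Y)$ and $[\beta]\in\HomCob(Y,Z)$, choose concrete representatives that are \chomcob{}s (possible by definition of a \homcob{}); by Lemma~\ref{le:submgmoid_HomCob} the composite $\beta\sbullet\alpha$ is again a \chomcob{}, and since the composition $\sbullet$ on \che{}-classes is already well-defined in $\CofCsp$, its class $[\beta]\sbullet[\alpha]=[\beta\sbullet\alpha]$ lies in $\HomCob(X,Z)$. With objects, morphisms, identities, and composition all accounted for, $\HomCob$ is a subcategory of $\CofCsp$ and hence a category. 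I expect the only real content to be the identity step, which rests on the interplay between homotopy equivalence, Lemma~\ref{le:fg_all_sets}, and the homotopy invariance of finite generation of fundamental groupoids; everything else is bookkeeping inherited from Theorem~\ref{th:CofCsp} and Lemma~\ref{le:submgmoid_HomCob}.
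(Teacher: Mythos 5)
Your proposal is correct and follows essentially the same route as the paper's proof: closure under composition is delegated to Lemma~\ref{le:submgmoid_HomCob}, the only genuine check is that the identity cospan lies in $\HomCob(X,X)$, which both you and the paper settle via the homotopy equivalence $X\simeq X\times\II$ (the paper states this in one line where you spell out the isomorphism $\pi(X\times\II,A\times\{0\})\cong\pi(X,A)$ and invoke Lemma~\ref{le:fg_all_sets}), and the axioms $(\CC1)$--$(\CC2)$ are inherited from $\CofCsp$. Your extra remark that the \homfin{} property is constant on \che{}-classes is likewise the observation the paper records just before the theorem statement, so nothing in your argument departs from the paper's.
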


\begin{proof}
	We have from Lemma~\ref{le:submgmoid_HomCob} that $\HomCob=(\chi, \HomCob(X,Y),\sbullet)$ is a magmoid, in particular the composition is closed.
	It remains only to prove that all identities are in $\HomCob$.
	
	Let $X$ be a \homfin{} space.
	Then $X\times \II$ is homotopy equivalent to $X$, and so $[\cmor{X\times \II}{\iota_0^X}{X}{\iota_1^X}{X}]$
	is a \homcob{}.	
\end{proof}

\begin{proposition}\label{pr:fun_cob}
	Let $\mathbf{Cob}(n)$ be the category where objects $(n-1)$-dimensional closed oriented smooth manifolds and morphisms are equivalence classes of concrete cobordisms (Definition~\ref{de:cobord}) as in \cite[Ch.~1]{lurie}.
	For all $n\in \N$ there is a functor \[\Cob{n}\colon\mathbf{Cob}(n) \to \HomCob
	\]
	 which maps objects to their underlying space and maps a morphism to the equivalence class of the \ccc{} which is the image of a representative cobordism under the mapping described in Proposition~\ref{pr:concob_to_concsp}.
\end{proposition}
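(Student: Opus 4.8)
The plan is to verify in turn that $\Cob{n}$ is well defined on objects, that it is well defined on morphisms (both that the image lands in $\HomCob$ and that it is independent of the chosen cobordism representative), and finally that it preserves identities and composition.

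First I would treat the objects. An object of $\mathbf{Cob}(n)$ is a closed oriented smooth $(n-1)$-manifold $X$; forgetting the orientation, I must check that its underlying space lies in $\chi$, i.e. is \homfin{}. Since $X$ is compact it has the homotopy type of a finite CW complex, so its fundamental group at any point is finitely generated; picking one point in each of its finitely many path components gives a finite representative set $A$ with $\pi(X,A)$ finitely generated, and Lemma~\ref{le:fg_all_sets} upgrades this to all finite representative sets. The same argument applies to the apex: a concrete cobordism $M$ is a compact smooth manifold with boundary, hence also has the homotopy type of a finite CW complex and is \homfin{}. Proposition~\ref{pr:concob_to_concsp} already shows that the cospan $\cmor{M}{i}{X}{j}{Y}$ built from the boundary parametrisation $\phi$ is a \ccc{}, so together with the finiteness just established it is a \chomcob{} and $\classche{\cmor{M}{i}{X}{j}{Y}}$ is a morphism of $\HomCob$.

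Next I would check independence of the representative. Two concrete cobordisms $M,M'$ from $X$ to $Y$ represent the same morphism of $\mathbf{Cob}(n)$ precisely when there is an orientation-preserving diffeomorphism $\psi\colon M\to M'$ compatible with the boundary parametrisations, i.e. $\psi\circ\phi=\phi'$. Unwinding the definitions of $i,j$ in Proposition~\ref{pr:concob_to_concsp}, this says exactly that $\psi$ commutes with both cospan legs. As a diffeomorphism, $\psi$ is in particular a homotopy equivalence, so it realises the relation $\simche$ (Lemma~\ref{le:che}), and hence $\Cob{n}$ sends equivalent cobordisms to the same \homcob{}; this makes the map on morphisms well defined. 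Preservation of identities is then a direct computation: the identity of $X$ in $\mathbf{Cob}(n)$ is represented by $X\times\II$ with the obvious parametrisation, whose image under Proposition~\ref{pr:concob_to_concsp} is $\cmor{X\times\II}{\iota_0^X}{X}{\iota_1^X}{X}$, which is by definition the identity of $X$ in $\HomCob$ (Theorem~\ref{HomCob}).

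Finally I would verify compatibility with composition, which I expect to be the only real obstacle. Given composable cobordisms $\cmor{M}{i}{X}{j}{Y}$ and $\cmor{N}{k}{Y}{l}{Z}$, composition in $\mathbf{Cob}(n)$ glues $M$ and $N$ along $Y$ (using collars to produce a smooth structure), whereas composition in $\HomCob$ forms the topological pushout $M\sqcup_Y N$ of $j$ and $k$ (Lemma~\ref{le:comp_ccc}). The key point is that the smooth glued manifold and the topological pushout have canonically homeomorphic underlying spaces: the adjunction space $M\sqcup_Y N$ is exactly the underlying space of the smooth gluing, and the choice of collar affects only the smooth structure, not the homeomorphism type, while the legs $\tilde i,\tilde l$ agree with the images of $i,l$ in both descriptions. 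This canonical homeomorphism commutes with the cospan legs and so is a \che{}, giving $\Cob{n}$ of the glued cobordism equal to $\Cob{n}(\cmor{N}{k}{Y}{l}{Z})\sbullet\Cob{n}(\cmor{M}{i}{X}{j}{Y})$. The care needed here is purely in matching the definition of composition in $\mathbf{Cob}(n)$ (Definition~\ref{de:cobord} together with the gluing of \cite{lurie}) to the topological pushout, and in checking that the legs coincide; once this homeomorphism is identified, functoriality follows.
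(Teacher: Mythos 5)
Your proposal is correct and follows essentially the same route as the paper's proof: both establish \homfin{}ness via the fact that compact smooth manifolds have the homotopy type of finite CW complexes, both observe that a boundary-compatible diffeomorphism is in particular a \che{} (so the map on morphisms is well defined), both identify the underlying space of the glued cobordism with the topological pushout $M\sqcup_Y N$ to get functoriality, and both note the cylinder maps to the identity cospan. Your treatment is if anything slightly more explicit than the paper's (invoking Lemma~\ref{le:fg_all_sets} and remarking that the collar choice only affects the smooth structure), but the substance is identical.
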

\begin{proof}
	We first check that $\Cob{n}$ is well defined.
	Chapter 6 of \cite{hirsch} proves that compact smooth manifolds have the homotopy type of finite CW complexes (see in particular the start of Section~3 and Theorems~1.2 and 4.1).
	If we choose the set of basepoints to be the $0$-cells of the corresponding CW complex, then the generators of the fundamental groupoid are the $1$-cells and so the fundamental groupoids of smooth manifolds with a finite set of basepoints are finitely generated.
	If two concrete cobordisms are equivalent up to boundary preserving diffeomorphism then they are certainly equivalent up to cofibre homotopy equivalence using the same map. So we have that the functor is well defined.
	
	Let $X,Y,Z$ be a triple of objects in $\mathbf{Cob}(n)$ and  $M\colon X\to Y$, $M'\colon Y\to Z$ a pair of cobordisms. 
	Then we have maps $\phi\colon X\sqcup Y \to M$ and $\phi'\colon Y\sqcup Z\to M'$ between the underlying topological spaces.
	The image of the composition in $\mathbf{Cob}(n)$  is the cospan
	$\cmor{M\sqcup N/((y,0)\sim(y,1))}{i}{X}{j}{Z}$ 
	where $i(x)=\phi(x,0)$ and $j(y)=\phi'(z,1)$.
	This is precisely the composition of the images of $M\colon X\to Y$ and $M'\colon Y\to Z$ in $\HomCob$.
	
	The identity for a manifold $X$ in $\mathbf{Cob}(n)$ is represented by the cylinder $X\times \II$ with $\lan \iota^X_0 , \iota_1^X \ran \colon  \bar{X}\sqcup X\to X\times \II$, this clearly maps to a representative of the identity cospan of $X$.
\end{proof}

\subsubsection{Monoidal structure on \texorpdfstring{$\HomCob$}{}}\label{sec:monHomCob}
The category $\HomCob$ becomes a symmetric monoidal category, just like $\CofCsp$.

\medskip

We will need the following result about \homfin{} spaces to ensure $\otimes$ restricts to a closed composition in $\HomCob$.
\begin{lemma}\label{union_homfin_spaces}
	If $X$ and $Y$ are \homfin{} spaces, then $X\sqcup Y$ is \homfin{}.
\end{lemma}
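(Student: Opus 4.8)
The plan is to reduce, via the definition of \homfin{}, to showing that the fundamental groupoid of $X \sqcup Y$ on an arbitrary finite set of basepoints is finitely generated, and then to identify that groupoid as a coproduct of groupoids we already control. First I would take an arbitrary finite set of basepoints $C \subseteq X \sqcup Y$ and split it as $C = A \sqcup B$, where $A = C \cap X$ and $B = C \cap Y$ are each finite. Note that this requires no representativity of $A$ or $B$, which matches the definition of \homfin{} literally (it quantifies over \emph{all} finite basepoint sets).

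The key structural observation is that any path $\gamma \colon \II \to X \sqcup Y$ has connected image, and since $X$ and $Y$ are clopen in $X \sqcup Y$ its image lies entirely in $X$ or entirely in $Y$. Consequently there are no morphisms in $\pi(X \sqcup Y, C)$ between a point of $A$ and a point of $B$, while the morphisms among points of $A$ are exactly those of $\pi(X, A)$, and likewise for $B$. This exhibits a canonical isomorphism
\[
\pi(X \sqcup Y, A \sqcup B) \;\cong\; \pi(X, A) \sqcup \pi(Y, B),
\]
the coproduct of groupoids fixed in Section~\ref{sec:colims_grpd}. (Alternatively, for representative $A$ and $B$ one can view $X \sqcup Y$ as the pushout of $X \leftarrow \emptyset \to Y$, use that $(X,\emptyset)$ is a cofibred pair, and deduce the same identification from Corollary~\ref{co:vk}, the pushout over the empty groupoid being the coproduct.)

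Next I would invoke the hypotheses: since $X$ and $Y$ are \homfin{} and $A,B$ are finite, $\pi(X, A)$ and $\pi(Y, B)$ are finitely generated. It then remains to observe that the coproduct of two finitely generated groupoids is finitely generated, generated by the union of the two generating sets — exactly the fact recorded in the proof of Theorem~\ref{th:grpd_pushoutfg}. Hence $\pi(X \sqcup Y, C)$ is finitely generated, and as $C$ was an arbitrary finite basepoint set, $X \sqcup Y$ is \homfin{}.

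The argument is essentially routine; the only point needing care is the groupoid decomposition, i.e. verifying that $\pi(X \sqcup Y, A \sqcup B)$ genuinely \emph{is} the coproduct and not merely contains it. This rests entirely on the connectedness of path images, which rules out any cross morphisms between the $A$- and $B$-parts; everything else is a direct appeal to the definition of \homfin{} and to facts already established about coproducts of groupoids.
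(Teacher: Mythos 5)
Your proof is correct and follows essentially the same route as the paper's: identify $\pi(X\sqcup Y, A\sqcup B)$ with the groupoid coproduct $\pi(X,A)\sqcup\pi(Y,B)$ (no cross morphisms, since paths have connected image) and invoke the fact, recorded in the proof of Theorem~\ref{th:grpd_pushoutfg}, that a coproduct of finitely generated groupoids is finitely generated. If anything, your treatment of an arbitrary finite basepoint set $C$ is slightly tidier than the paper's, which works with representative subsets $X_0\sqcup Y_0$ and then observes that every finite representative subset of $X\sqcup Y$ has this form, whereas Definition~\ref{de:homfin} quantifies over all finite basepoint sets.
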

\begin{proof}
	Suppose $X_0$ and $Y_0$ are finite representative subsets of $X$ and $Y$ respectively.
	The images of $X$ and $Y$ in $X\sqcup Y$, under the maps into the coproduct, are disjoint, hence there is an isomorphism $\pi(X\sqcup Y,X_0 \sqcup Y_0)\cong\pi(X,X_0)\amalg \pi(Y,Y_0)$ of groupoids given by sending a path equivalence class $[\gamma]$ to $([\gamma],1)$ if $\gamma$ is a path in $X$ and to $([\gamma],2)$ if $\gamma$ is a path in $Y$.
	By Theorem~\ref{th:grpd_pushoutfg} we have that $\pi(X,X_0)\amalg \pi(Y,Y_0)$ is finitely generated if and only if $\pi(X,X_0)$ and $\pi(Y,Y_0)$ are.
	Notice all finite representative subsets of $X\sqcup Y$ are of the form $X_0\sqcup Y_0$, so this is sufficient.
\end{proof}

\begin{theorem}\label{th:HomCob}
	There is a symmetric monoidal subcategory
	\[(\HomCob\;,\;\otimes\;,\;\emptyset\;,\;\alpha_{X,Y,Z}\;,\;\lambda_{X}\;,\;\rho_{X}\;,\;\beta_{X,Y})
	\]
	of $\CofCsp$.
	Here $\otimes$ is as in Lemma~\ref{le:CofCsp_bifunctor}, the $\alpha_{X,Y,Z}$, $\lambda_{X}$, $\rho_{X}$ are as in Lemma~\ref{le:CofCsp_mon} and the $\beta_{X,Y}$ are  as in Lemma~\ref{th:CofCsp_symm}.
\end{theorem}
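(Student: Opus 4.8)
The plan is to verify directly that the data $(\HomCob,\otimes,\emptyset,\alpha_{X,Y,Z},\lambda_X,\rho_X,\beta_{X,Y})$ satisfies the conditions in the definition of a symmetric monoidal subcategory of the symmetric monoidal category $\CofCsp$ (Theorem~\ref{th:CofCsp_symm}). Most of the work is already done: $\HomCob$ is a subcategory of $\CofCsp$ by Theorem~\ref{HomCob}, and all the coherence isomorphisms are the very same \che{} classes used in $\CofCsp$. Since the composition in $\HomCob$ agrees with that of $\CofCsp$, the coherence diagrams (M1)--(M4) and the braiding and symmetry axioms, which hold in $\CofCsp$, hold automatically once we know the relevant morphisms and objects lie in $\HomCob$. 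It therefore remains to check: (i) that $\otimes$ restricts to a closed composition on $\HomCob$; (ii) that the monoidal unit $\emptyset$ is an object of $\HomCob$; and (iii) that each associator $\alpha_{X,Y,Z}$, each unitor $\lambda_X$ and $\rho_X$, and each braiding $\beta_{X,Y}$ is a morphism of $\HomCob$.

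For (i), on objects the disjoint union $X\sqcup Y$ is \homfin{} whenever $X$ and $Y$ are, by Lemma~\ref{union_homfin_spaces}. On morphisms, the monoidal product (Lemma~\ref{le:CofCsp_bifunctor}) of two \chomcob{}s $\cmor{M}{i}{W}{j}{X}$ and $\cmor{N}{k}{Y}{l}{Z}$ has apex $M\sqcup N$ and legs $W\sqcup Y$ and $X\sqcup Z$; as $M,N,W,X,Y,Z$ are all \homfin{}, a further application of Lemma~\ref{union_homfin_spaces} shows the apex and both legs are \homfin{}, so the product is again a \chomcob{}. For (ii), $\pi(\emptyset,\emptyset)$ is the empty groupoid, which is vacuously finitely generated, whence $\emptyset\in Ob(\HomCob)$.

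For (iii), recall from Lemma~\ref{le:monoidal_isos} that each of these structural morphisms is the \che{} class of a cospan of the shape $\cmor{W\times\II}{\iota_0^{W}\circ f}{V}{\iota_1^{W}}{W}$, where $f\colon V\to W$ is a homeomorphism and $V$, $W$ are built from the objects $X,Y,Z$ by iterated disjoint unions (for example $V=(X\sqcup Y)\sqcup Z$ and $W=X\sqcup(Y\sqcup Z)$ for the associator, and $W=Y\sqcup X$ for the braiding $\beta_{X,Y}$), with $V=\emptyset\sqcup X$ or $X\sqcup\emptyset$ for the unitors. By Lemma~\ref{union_homfin_spaces} (together with (ii) for the unit) the spaces $V$ and $W$ are \homfin{}, and since $W\times\II$ is homotopy equivalent to $W$ it too is \homfin{} --- the same observation already invoked in the proof of Theorem~\ref{HomCob}, and justified formally by the homotopy invariance of $\pi$ combined with Lemma~\ref{le:fg_all_sets}. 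Hence every apex and every leg of these cospans is \homfin{}, so each associator, unitor and braiding lies in $\HomCob$.

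Having established (i)--(iii), the conditions of the definition are met and $\HomCob$ is a symmetric monoidal subcategory of $\CofCsp$. The only point requiring any care is (iii): one must confirm that all the spaces occurring in the coherence cospans --- obtained only by disjoint unions and products with $\II$ of \homfin{} objects --- remain \homfin{}, but this is immediate from Lemma~\ref{union_homfin_spaces} and the homotopy invariance of \homfin{}ness. There is no substantive obstacle here, the genuine work having already been carried out in establishing Theorem~\ref{th:CofCsp_symm}.
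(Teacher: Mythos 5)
Your proposal is correct and takes essentially the same route as the paper: both reduce the theorem to the closure conditions in the definition of a symmetric monoidal subcategory (the coherence axioms being inherited from $\CofCsp$) and verify them via Lemma~\ref{union_homfin_spaces} for $\otimes$ and the unit, together with the fact that the apex $W\times \II$ of each structural cospan is \homfin{} when $W$ is. The only cosmetic difference is that the paper justifies this last fact by the explicit isomorphism $\pi(X\times \II, A\times\{0\})\cong \pi(X,A)$ with $A\times\{0\}$ representative, whereas you appeal to homotopy invariance of \homfin{}ness via Lemma~\ref{le:fg_all_sets}; both are valid.
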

\begin{proof}	
	The empty set is \homfin{}.
	For each pair of \homfin{} spaces, the disjoint union is \homfin{} by Lemma~\ref
	{union_homfin_spaces} so $\otimes$ sends a pair of \homcob{}s to a \homcob{}.
	
	Using again Lemma~\ref{union_homfin_spaces} along with the fact that for any space $X$ and finite representative $A\subseteq X$ we have $\pi(X,A)\cong\pi(X\times \II, A\times \{0\})$, and $A\times \{0\}$ is representative in $X\times \II$, thus the $\alpha_{X,Y,Z}$, $\lambda_{X}$, $\rho_{X}, \beta_{X,Y}$ are all in $\HomCob$.
\end{proof}

\begin{proposition}\label{cob_to_HomCob}
	\sloppypar{
	The functor $\Cob{n}\colon \mathbf{Cob}_n\to \HomCob$ as in Proposition~\ref{pr:fun_cob} is a symmetric strong monoidal functor with $(\Cob{n})_0=\classche{\cmor{\emptyset}{\emptyset}{\emptyset}{\emptyset}{\emptyset}}$ and $(\Cob{n})_2(X,Y)=\classche{\cmor{(X\sqcup Y)\times \II}{\iota_0^{X\sqcup Y}}{X\sqcup Y}{\iota_1^{X\sqcup Y}}{X\sqcup Y}}$. }
\end{proposition}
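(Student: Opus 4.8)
The plan is to exploit the fact that $\Cob{n}$ is \emph{strictly} monoidal on objects, so that both pieces of coherence data are forced to be identities, and then to reduce every coherence axiom to the single assertion that $\Cob{n}$ carries the structural isomorphisms of $\mathbf{Cob}_n$ (with its standard disjoint-union symmetric monoidal structure) to those of $\HomCob$ (Theorem~\ref{th:HomCob}).

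First I would record that the monoidal product on each category is disjoint union, so on objects $\Cob{n}(X\otimes Y)=X\sqcup Y=\Cob{n}(X)\otimes\Cob{n}(Y)$ and $\Cob{n}(\emptyset)=\emptyset$, the monoidal unit of $\HomCob$. Consequently the prescribed $(\Cob{n})_0$ is a cospan from $\emptyset$ to $\Cob{n}(\emptyset)=\emptyset$, and since $\emptyset\times\II=\emptyset$ it is literally the identity $\classche{\cmor{\emptyset}{\emptyset}{\emptyset}{\emptyset}{\emptyset}}=1_\emptyset$ of Theorem~\ref{th:CofCsp}; likewise $(\Cob{n})_2(X,Y)$ is exactly the identity morphism $1_{X\sqcup Y}=\classche{\cmor{(X\sqcup Y)\times\II}{\iota_0^{X\sqcup Y}}{X\sqcup Y}{\iota_1^{X\sqcup Y}}{X\sqcup Y}}$ on $X\sqcup Y$. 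In particular $(\Cob{n})_0$ and each $(\Cob{n})_2(X,Y)$ are isomorphisms, so once $\Cob{n}$ is shown to be monoidal it is automatically strong.

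Next I would verify naturality of $(\Cob{n})_2$ together with the three monoidal-functor coherence diagrams. Because every component of $(\Cob{n})_2$ and $(\Cob{n})_0$ is an identity, and composition with an identity in $\HomCob$ returns the original morphism by the identity axiom $(\CC1)$ of Theorem~\ref{th:CofCsp}, each diagram collapses: naturality reduces to $\Cob{n}(f\otimes g)=\Cob{n}(f)\otimes\Cob{n}(g)$ for cobordisms $f,g$ (which holds since the disjoint union of cobordisms maps under Proposition~\ref{pr:concob_to_concsp} to the disjoint-union cospan of Lemma~\ref{le:CofCsp_bifunctor}); the associativity coherence diagram reduces to $\Cob{n}(\alpha_{X,Y,Z})=\alpha_{\Cob{n}(X),\Cob{n}(Y),\Cob{n}(Z)}$; and the two unit triangles reduce to $\Cob{n}(\lambda_X)=\lambda_{\Cob{n}(X)}$ and $\Cob{n}(\rho_X)=\rho_{\Cob{n}(X)}$. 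Finally, the symmetric (braided) monoidal condition reduces in the same way to $\Cob{n}(\beta_{X,Y})=\beta_{\Cob{n}(X),\Cob{n}(Y)}$.

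Thus everything comes down to the key claim that $\Cob{n}$ sends each structural isomorphism of $\mathbf{Cob}_n$ to the corresponding one of $\HomCob$. Both families are built from the same underlying topological maps: in $\mathbf{Cob}_n$ the associator, unitors and braiding are the cobordisms given by cylinders $N'\times\II$ whose boundary identification uses the diffeomorphisms $\alpha^{T}_{X,Y,Z}$, $\lambda^{T}_X$, $\rho^{T}_X$, $\beta^{T}_{X,Y}$, while in $\HomCob$ they are, by Lemmas~\ref{le:CofCsp_mon} and \ref{th:CofCsp_symm}, the cospans $\cmor{N'\times\II}{\iota_0^{N'}\circ f}{N}{\iota_1^{N'}}{N'}$ of Lemma~\ref{le:monoidal_isos} for the same maps $f$. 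I would check that applying Proposition~\ref{pr:concob_to_concsp} to the cylinder on a diffeomorphism $f\colon N\to N'$ yields precisely this cospan (equivalently, that $\Cob{n}$ agrees on cylinders with the functor $\kappa\colon\Topo^h\to\CofCsp$ of the remark following Theorem~\ref{th:CofCsp_symm}), so the two families coincide in $\HomCob$. The main obstacle is exactly this identification: one must pin down the precise boundary parametrisation of the cylinder cobordisms representing the structural isomorphisms of $\mathbf{Cob}_n$ and match it, up to cofibre homotopy equivalence, with the orientation of the interval fixed in Lemma~\ref{le:monoidal_isos}; this is where the flip homeomorphism $f\times\id_\II$ used in the proof of Lemma~\ref{le:monoidal_isos} is needed to reconcile a diffeomorphism applied at the $0$-end versus the $1$-end of the cylinder.
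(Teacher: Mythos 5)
Your proof is correct and takes essentially the same route as the paper: since the monoidal products agree strictly on the nose, $(\Cob{n})_0$ and $(\Cob{n})_2$ are identity morphisms, every coherence diagram collapses by the identity axiom, and everything reduces to $\Cob{n}$ carrying the associators, unitors and braiding of $\mathbf{Cob}_n$ to those of $\HomCob$. The only difference is that you verify this last matching explicitly via Proposition~\ref{pr:concob_to_concsp} and Lemma~\ref{le:monoidal_isos} (correctly, since $i(x)=\phi(x,0)=(f(x),0)$ recovers exactly the cospan $\cmor{N'\times\II}{\iota_0^{N'}\circ f}{N}{\iota_1^{N'}}{N'}$), where the paper simply asserts it.
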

\begin{proof}
	The monoidal product $\otimes'$ in $\mathbf{Cob}(n)$ is given by disjoint union, thus we have $\otimes \circ(\Cob{n} \times \Cob{n})=\Cob{n} \circ \otimes'$ and thus the appropriate identity is the required natural transformation.
	It is straightforward to check all necessary identities as $(\Cob{n})_0$ and $(\Cob{n})_2$ are identity morphisms, and the functor $\Cob{n}$ maps all associators, unitors and braidings to exactly the corresponding associators, unitors and braidings in 
	$\HomCob$.
\end{proof}

\section{Motion groupoids, mapping class groupoids and homotopy cobordisms
}\label{sec:funcHomcob}

In this section we construct functors into $\HomCob$ from (subgroupoids of) the motion groupoid $\Mot{M}{A}$ and from the  mapping class groupoid $\mcg{M}{A}$ of a manifold $M$ and a subset $A\subset M$, as constructed in \cite{motion}.
Thus functors from $\HomCob$ into $\Vect_{\mathbb{C}}$ give representations of motion groupoids and mapping class groupoids.

 We briefly recall each of these constructions below.
 For a point of reference note that for 
 $K\subset D^2$ a finite subset in the $2$-disk, then  $\Mot{D^2}{\partial D^2}(K,K)\cong \mcg{D^2}{\partial D^2}(K,K)$ are isomorphic to the Artin braid group on $|K|$ strands \cite{artin}, and similarly considering $L\subset B^3$, a configuration of unknotted, unlinked circles in the $3$-ball, we have $\Mot{B^3}{\partial B^3}(L,L)\cong \mcg{B^3}{\partial B^3}(L,L)$ are isomorphic to the corresponding loop braid group \cite{damiani}. 

\medskip

Let $\Topo^h\subset \Topo$  denote the subcategory of homeomorphisms. 
Then let  $\TOPO^h(X,Y)$ denote the space with underlying set $\Topo^h(X,Y)$ and the compact open topology (see e.g. \cite[Sec.2.4]{dieck}).
Let $X$ be a space and $A\subset X$ a subset, denote by  $\TOPO^h_A(X,X)\subset \TOPO^h(X,X)$ the subspace containing maps $f\colon X\to X$ such that $f(a)=a$ for all $a\in A$.
	For $X$ a set, $\Power X$ denotes the power set of $X$.

\subsection{Functor from the motion groupoid, 	\texorpdfstring{$\Motfunctor{M}{A}\colon \hfMot{M}{A} \to \HomCob$}{Motfunctor}}
We briefly recall the construction of the motion groupoid from \cite{motion}.
Let $\II =[0,1]\subset\R$ with the usual topology.
	Fixing a \axiomM{} $M$, and a subset $A\subset M$  
	$$
	\premo{M}{A} \; = \; \{ f \in   \Topo^{}(\II,\TOPO^h_A(M,M)) \; | \; f_0 = \id_M \,  \}  
	$$
- a set of gradual deformations of $M$ fixing $A$,
over some standard unit of time. 
	For	$(f,g)\in \premo{M}{A}\times \premo{M}{A}$
	note that $g*f$ given by
	\begin{align*}
		(g*f)_t = \begin{cases}
			f_{2t} & 0\leq t\leq 1/2, \\
			g_{2(t-1/2)}\circ f_1 & 1/2\leq t \leq 1.
		\end{cases}
	\end{align*}
is in $\premo{M}{A}$, as is  $\bar{f}$  given by 
		$ 
		\bar{f}_t 
		\;=\;  f_{(1-t)}\circ f_1^{-1} .
		$ 
	\hspace{.1cm}
	Thus $(\premo{M}{A},*)$ is a magma.

		Let
		$\Mtcmag^A \; =\; ( \Power M , \Mtcmag^A(-,-)  , * )$ be the magmoid constructed as follows. The 
		object set is the power set $\Power M$, and
		morphisms are triples $(f,N,N')$ consisting of an $f\in \premo{M}{A}$ together with a pair of spaces $N,N'\subseteq M$ and  such that $f_1(N)=N'$. We denote triples $(f,N,N')\in \Mtcmag^A$ by $\mot{f}{}{N}{N'}$.
		A morphism in $\Mtcmag^A$ is a gradual deformation of 
		$M$ that takes the initial object subset to the final object subset, hence called a {\em motion}.
				The {\em worldline} of a motion $\mot{f}{}{N}{N'}$ in a manifold $M$ is
				\[\W\left(\mot{f}{}{N}{N'}\right)=  \bigcup_{t \in [0,1]} f_t(N) \times \{t\} \subseteq M\times \II.
				\]
				There is a partial composition of motions given by $\mot{g}{}{N'}{N''}*\mot{f}{}{N}{N'}= \mot{g*f}{}{N}{N''}$.
	For $f,g\in \premo{M}{A}$ and $N,N'$ subsets,
	let
	\begin{multline*}
		\Topo(\II^2,\TOPO_A^h(M,M))\rel{f}{N}{N'}{g} 
		\;=\; 
		\{ H \in \Topo(\II^2,\TOPO_A^h(M,M)) \; | \; \forall t \; H(t,0)=f_t, \\
		H(t,1) = g_t, \forall s \; H(0,s)=\Id_M, H(1,s)(N)=N'=f_1(N) \}
	\end{multline*}

	\begin{theorem} \label{th:mg} (Thms.~3.49 and 4.6 of \cite{motion})
		Let $M$ be a \axiomM{}.
		The relation 
		$f\simrp g$ if 
		$\Topo(\II^2,\TOPO_A^h(M,M))\rel{f}{N}{N'}{g} \neq\emptyset$ 
		gives a congruence on the 
		magmoid $\Mtcmag^A$. 
		The quotient is a groupoid --- the {\em motion groupoid} of 
		$M$: 
		$$
		\Mot{M}{A}\;\; 
		=\;\;\; \Mtcmag^A/\simrp \;\;\;\; 
		=\;\; \;
		(\Power M,\; \Mtcmag^A(N,N')/\simrp,*,
		\classrp{\Id_M}, \; 
		\classrp{f} \mapsto \classrp{\bar{f}} )  .
		$$	
	\end{theorem}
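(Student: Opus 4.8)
The plan is to follow the classical template by which path-homotopy turns the magmoid of paths into the fundamental groupoid (Propositions~\ref{pr:path_comp} and \ref{pr:fundamentalgroupoid}), adapted to the homeomorphism space $\TOPO_A^h(M,M)$ and to the extra bookkeeping of the subsets $N,N'$. Throughout, I would check continuity of the two-parameter families I construct in $\TOPO_A^h(M,M)$ via the product-hom adjunction (Lemma~\ref{le:producthom}), using that a manifold is locally compact Hausdorff; this lets me treat an element $H\in\Topo(\II^2,\TOPO_A^h(M,M))$ interchangeably with a continuous map $\II^2\times M\to M$, so that concatenations and twists built pointwise are automatically continuous.

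First I would show $\simrp$ is an equivalence relation on each $\Mtcmag^A(N,N')$: reflexivity uses the constant homotopy $H(t,s)=f_t$; symmetry uses $(t,s)\mapsto H(t,1-s)$; transitivity concatenates two relative homotopies in the $s$-direction, the point being that the boundary conditions $H(0,s)=\id_M$ and $H(1,s)(N)=N'$ hold at every $s$-slice and in particular agree at the shared middle slice. For the congruence condition, suppose $f\simrp f'$ as motions $N\to N'$ via $H_f$ and $g\simrp g'$ as motions $N'\to N''$ via $H_g$. I would witness $g*f\simrp g'*f'$ by the family $H(t,s)=H_f(2t,s)$ for $t\le 1/2$ and $H(t,s)=H_g(2t-1,s)\circ H_f(1,s)$ for $t\ge 1/2$. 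The two pieces agree at $t=1/2$ because $H_g(0,s)=\id_M$, so $H$ is continuous; the start condition $H(0,s)=\id_M$ is immediate; and $H(1,s)(N)=H_g(1,s)(H_f(1,s)(N))=H_g(1,s)(N')=N''$ for all $s$, which is exactly $(g*f)_1(N)$. Hence $\simrp$ is a congruence on $\Mtcmag^A$, and the quotient $\Mtcmag^A/\simrp$ is a magmoid.

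It then remains to verify the category and groupoid axioms at the level of $\simrp$-classes. Associativity holds after quotienting because $(\,(h*g)*f\,)$ and $(\,h*(g*f)\,)$ trace out the same concatenation of $f$, then $g\circ f_1$, then $h\circ g_1\circ f_1$, differing only in the placement of the two breakpoints (at $1/2,3/4$ versus $1/4,1/2$); the twist factors coincide segment-by-segment, so the standard reparametrisation homotopy applies and preserves the boundary data. For the identity axiom I would show that composing $\mot{f}{}{N}{N'}$ on either side with the constant motion $\classrp{\id_M}$ produces $f$ reparametrised with a constant pause, hence $\simrp$-equivalent to $f$ by a reparametrisation homotopy. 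For inverses, using $\bar f_t=f_{1-t}\circ f_1^{-1}$ the composite $\mot{\bar f}{}{N'}{N}*\mot{f}{}{N}{N'}$ unwinds to the out-and-back path $t\mapsto f_{2t}$ on $[0,1/2]$ and $t\mapsto f_{2-2t}$ on $[1/2,1]$, which contracts to the constant $\id_M$ by the usual ``run only up to depth $s$'' homotopy; symmetrically for $\mot{f}{}{N}{N'}*\mot{\bar f}{}{N'}{N}$.

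The main obstacle I anticipate is not any individual homotopy but the uniform verification that every family I write down genuinely lies in $\Topo(\II^2,\TOPO_A^h(M,M))$ and respects \emph{both} boundary conditions at once --- especially the endpoint condition $H(1,s)(N)=N'$, which must hold for every intermediate $s$ even though the endpoint homeomorphism itself is allowed to vary. Tracking this constraint through the composition, associativity, and inversion homotopies, where the twisting factors $f_1,g_1$ interact with the subset conditions, is the delicate bookkeeping that separates this argument from the plain fundamental-groupoid case; this is precisely the content worked out in \cite{motion}.
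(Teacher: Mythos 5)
Your proposal is correct and takes essentially the same route as the proof this paper relies on: Theorem~\ref{th:mg} is not reproved here but quoted from \cite{motion}, where the argument proceeds exactly by these pasted and reparametrised two-parameter families in $\Topo(\II^2,\TOPO_A^h(M,M))$ --- constant/reversed/concatenated homotopies for the equivalence relation, the stacked homotopy $H(t,s)=H_g(2t-1,s)\circ H_f(1,s)$ for the congruence, reparametrisation for associativity and units, and the retrace contraction for $\bar f$ --- with continuity handled pointwise via the product-hom adjunction since $M$ is locally compact Hausdorff. The one step you leave implicit, that $\classrp{f}\mapsto\classrp{\bar f}$ is well defined on $\simrp$-classes, is automatic once $\classrp{\bar f}$ is shown to be a two-sided inverse of $\classrp{f}$ in the quotient category, because inverses in a category are unique.
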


\begin{definition}
Let $\hfMot{M}{A}\subset \Mot{M}{A}$ be the full subgroupoid with $N\in Ob(\hfMot{M}{A})$ if $M\setminus N$ is \homfin{}.
\end{definition}

\begin{definition}
Let $\Wc(\mot{f}{}{N}{N'})=(M\times \II)\setminus \left(\W\left(\mot{f}{}{N}{N'}\right)\right).
$
\end{definition}

Notice that $\Wc(\mot{f}{}{N}{N'})\cap (M\times \{t\}) =(M\times \{t\})\setminus(f_t(N)\times \{t\} )\cong M\setminus f_t(N)$, thus there is a well-defined, continuous map $\mott{f}\colon M\setminus f_t(N)\to \Wc(\mot{f}{}{N}{N'}) $, $m\mapsto (m,t)$.

\begin{theorem}\label{th:functorMot_HomCob}
	Let $M$ be a manifold. There is a well-defined functor
	\[
	\Motfunctor{M}{A}\colon \hfMot{M}{A} \to \HomCob\]
	which sends an object $N \in Ob(\hfMot{M}{A})$ to $M\setminus N$,
	and which sends a morphism $\classrp{\mot{f}{}{N}{N'}}$ to the \che{}
	class of 
	\[\cmortikz{\Wc(\mot{f}{}{N}{N'}) }{\motzero{f}}{M\setminus N}{\motone{f}}{M\setminus N'}
	\]
	where the maps $\mott{f}$  are as explained before the lemma.
	\end{theorem}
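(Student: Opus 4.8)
The plan is to reduce the candidate morphism to the standard form of Lemma~\ref{le:monoidal_isos} and then verify each axiom from there. The object assignment needs no work: by definition $N\in Ob(\hfMot{M}{A})$ means precisely that $M\setminus N$ is \homfin{}, which is the condition to be an object of $\HomCob$. The key tool throughout is the observation that the level-preserving map $\hat{F}_f\colon M\times \II \to M\times \II$, $(m,t)\mapsto (f_t(m),t)$, is a homeomorphism (its inverse is $(m,t)\mapsto (f_t^{-1}(m),t)$; continuity of both in the compact-open topology follows from the adjunction of Lemma~\ref{le:producthom} together with continuity of inversion, guaranteed by the manifold axioms of \cite{motion}). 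Since $\hat{F}_f$ carries $N\times \II$ onto $\W(\mot{f}{}{N}{N'})$, it restricts to a homeomorphism $(M\setminus N)\times \II \to \Wc(\mot{f}{}{N}{N'})$, and because $f_0=\id_M$, $f_1(N)=N'$ one checks it intertwines the cospan legs. First I would use this to show
\[
\Motfunctor{M}{A}(\classrp{\mot{f}{}{N}{N'}}) = \classche{\cmortikz{(M\setminus N)\times \II}{\iota_0^{M\setminus N}}{M\setminus N}{\iota_1^{M\setminus N}\circ f_1^{-1}}{M\setminus N'}} .
\]
This simultaneously settles that the cospan is a \chomcob{}: the map $\lan \motzero{f},\motone{f}\ran$ factors as the homeomorphism $\id \sqcup f_1^{-1}$ followed by the closed cofibration $\lan \iota_0^{M\setminus N},\iota_1^{M\setminus N}\ran$ of Proposition~\ref{pr:identity_ccc} (so it is a closed cofibration by Lemmas~\ref{le:homeo_cofib} and \ref{le:cofib_comp}), while $\Wc\cong (M\setminus N)\times \II$ is \homfin{} since $\pi((M\setminus N)\times \II, A\times\{0\})\cong \pi(M\setminus N, A)$.

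I expect the main obstacle to be well-definedness on $\simrp$-classes. Suppose $\mot{f}{}{N}{N'}\simrp \mot{g}{}{N}{N'}$ via $H$ with $H(t,0)=f_t$, $H(t,1)=g_t$, $H(0,s)=\Id_M$ and $H(1,s)(N)=N'$. The difficulty is that the congruence pins the endpoint down only as a set map $N\mapsto N'$, so in general $f_1\neq g_1$ and the two standard-form cospans carry genuinely different right legs $\iota_1^{M\setminus N}\circ f_1^{-1}$ and $\iota_1^{M\setminus N}\circ g_1^{-1}$. The plan is to exploit $H$ at $t=1$: the maps $k_s := H(1,s)^{-1}\circ f_1$ satisfy $k_s(N)=N$ (since $H(1,s)^{-1}(N')=N$), hence restrict to an isotopy of $M\setminus N$ from $k_0=\id$ to $k_1=g_1^{-1}f_1$. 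Then $\psi(x,t):=(k_t(x),t)$ is a level-preserving self-homeomorphism of $(M\setminus N)\times \II$, and a short computation gives $\psi\circ \iota_0^{M\setminus N}=\iota_0^{M\setminus N}$ and $\psi\circ(\iota_1^{M\setminus N}\circ f_1^{-1})=\iota_1^{M\setminus N}\circ g_1^{-1}$, so $\psi$ is the required \che{}.

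Finally I would check functoriality, which I expect to be routine given the standard form. For identities, the constant motion $\classrp{\Id_M}$ has worldline $N\times \II$, so its cospan is literally $\cmortikz{(M\setminus N)\times \II}{\iota_0^{M\setminus N}}{M\setminus N}{\iota_1^{M\setminus N}}{M\setminus N}$, the $\HomCob$-identity at $M\setminus N$. For composition of $\mot{f}{}{N}{N'}$ and $\mot{g}{}{N'}{N''}$, the worldline of $g*f$ is the time-concatenation of those of $f$ and $g$, meeting the slice $M\times\{1/2\}$ exactly in $(M\setminus N')\times\{1/2\}$. The rescalings $\rho_f(m,s)=(m,s/2)$ and $\rho_g(m,s)=(m,(s+1)/2)$ then map $\Wc(f)$ and $\Wc(g)$ homeomorphically onto the two closed halves of $\Wc(g*f)$ and agree on the glued copy of $M\setminus N'$ (since $\rho_f\circ\motone{f}=\rho_g\circ\motzero{g}$); by the universal property of the pushout they assemble into a homeomorphism $\Wc(f)\sqcup_{M\setminus N'}\Wc(g)\to \Wc(g*f)$ commuting with the outer legs, giving the \che{} that proves $\Motfunctor{M}{A}(\classrp{g*f})=\Motfunctor{M}{A}(\classrp{g})\sbullet \Motfunctor{M}{A}(\classrp{f})$.
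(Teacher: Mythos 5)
Your proposal is correct, and on most steps it coincides with the paper's own proof: the closed-cofibration claim is established by the identical factorisation of $\lan \motzero{f},\motone{f}\ran$ through the homeomorphism $\id_{M\setminus N}\sqcup f_1^{-1}$, the closed cofibration $\lan \iota_0^{M\setminus N},\iota_1^{M\setminus N}\ran$ of Proposition~\ref{pr:identity_ccc}, and the restriction of $\Theta(f)$; the \homfin{} argument and the composition argument (rescaling the two worldline complements onto the two closed halves of $\Wc(\mot{g*f}{}{N}{N''})$) are likewise the paper's, up to the direction in which the gluing homeomorphism is written, and you additionally make the identity check explicit where the paper leaves it implicit. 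The genuine divergence is at the crux, invariance under $\simrp$. The paper imports Theorem~4.18 of \cite{motion} --- equivalent motions have worldlines related by a level-preserving ambient isotopy fixing $(M\times \{0,1\})\cup (A\times \II)$ --- and restricts its time-one homeomorphism to the complements. You avoid that machinery entirely: after normalising each image to the standard form $\cmor{(M\setminus N)\times \II}{\iota_0^{M\setminus N}}{M\setminus N}{\iota_1^{M\setminus N}\circ f_1^{-1}}{M\setminus N'}$ (legitimate, since \che{} is an equivalence relation by Lemma~\ref{le:che}), you build the \che{} $\psi(x,t)=(k_t(x),t)$ with $k_t=H(1,t)^{-1}\circ f_1$ directly from the defining homotopy $H$, using only the constraint $H(1,s)(N)=N'$ together with continuity of inversion in $\TOPO^h(M,M)$ --- the same Arens-type fact the paper itself invokes in the proof of Lemma~\ref{le:functormcg_HomCob}; your computations $k_0=\id$, $k_1=g_1^{-1}\circ f_1$, $k_s(N)=N$, and the leg checks all verify. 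So your route is more elementary and self-contained, whereas the paper's buys a homeomorphism between the actual worldline complements at the cost of deferring the isotopy analysis to \cite{motion}; a pleasant by-product of your normal form is that it equals $\mcgfunctor{M}{A}(\F(\classrp{\mot{f}{}{N}{N'}}))$ up to the homeomorphism $f_1\times \id_{\II}$, so the factorisation $\Motfunctor{M}{A}=\mcgfunctor{M}{A}\circ \F$ of Theorem~\ref{th:comp_mcgfunctorF} is visible from the outset. One point to tighten, at the same level of rigour the paper itself adopts: in the composition step the universal property of the pushout gives continuity of the assembled map but not of its inverse, so either note that the two halves are closed in $\Wc(\mot{g*f}{}{N}{N''})$ and glue the inverse over this closed cover, or exhibit the inverse by the paper's explicit piecewise rescaling formula.
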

\begin{proof}
	We first check that $\Motfunctor{M}{A}$ is well defined.
	
	From \cite[Thm.3.72]{motion}, for any motion $\mot{f}{}{N}{N'}$, there exists a homeomorphism $\Theta(f)\colon M\times \II\to M\times \II$ given by $\Theta(f)(m,t)=(f_t(m),t)$, and notice that $\Theta(f)(N\times \II)=(\W(\mot{f}{}{N}{N'}))$, and hence the restriction of $\Theta(f)$ gives a homeomorphism $(M\setminus N)\times \II\cong \Wc(\mot{f}{}{N}{N'})$.
	
	Now suppose $\mot{f}{}{N}{N'}$ is a motion representing a morphism in $\hfMot{M}{A}$, then, by construction, $M\setminus N$ and $M\setminus N'$ are \homfin{}. Let $X\subset M\setminus N$ be a finite representative subset, then $X\times \{0\}$ is representative in $(M\setminus N)\times \II$.
	Then we have $\pi((M\setminus N)\times \II,X\times \{0\})\cong \pi(M\setminus N, X)\times \pi(\II, \{0\})\cong \pi((M\setminus N),X)$, where the first isomorphism follows from the fact that $\pi$ preserves products (\cite[6.4.4]{brownt+g}), and hence  $(M\setminus N)\times \II$ is \homfin{}.
Using that  $(M\setminus N)\times \II\cong \Wc(\mot{f}{}{N}{N'})$, it follows that $\Wc(\mot{f}{}{N}{N'})$ is \homfin{}. 
	
	We can see the map $\lan \motzero{f}, \motone{f} \ran \colon (M\setminus N)\sqcup (M\setminus N')\to \Wc(\mot{f}{}{N}{N'})$ is a cofibration as follows.
	The map $\lan \motzero{f}, \motone{f} \ran$ is equal to the 	 composition
	\begin{multline*}
	(M\setminus N)\sqcup (M\setminus N')\xrightarrow{\id_{M\setminus N}\sqcup f_1^{-1}} (M\setminus N)\sqcup (M\setminus N) \\ \xrightarrow{\lan \iota_0^{M\setminus N}, \iota_1^{M\setminus N}\ran} (M\setminus N)\times \II \xrightarrow{\Theta(f)|_{(M\setminus N)\times \II}}
	\Wc(\mot{f}{}{N}{N'}).
	\end{multline*}
	Now the first and last maps are homeomorphisms, hence cofibrations by Lemma~\ref{le:homeo_cofib}.
	That the middle map is a cofibration is proved in Proposition~\ref{pr:identity_ccc}. Finally, by 
	Lemma~\ref{le:cofib_comp}, the composition of cofibrations is a cofibration.

	Suppose $\mot{f,f'}{}{N}{N'}$ are representatives of the same morphism in $\hfMot{M}{A}(N,N')$.
	By Theorem~4.18 of \cite{motion} 
	this implies there is a {\it level preserving ambient isotopy} of $H\colon (M\times \II)\times \II\to M\times \II$ taking $\W(\mot{f}{}{N}{N'})$ to $\W(\mot{f'}{}{N}{N'})$ which fixes $(M\times \{0,1\})\cup (A\times \II)$ pointwise.
	The full definition of level preserving ambient isotopy can be found in Definition 4.12 of \cite{motion}, the property we will need is that $(m,t)\mapsto H(m,t,1)$ defines a homeomorphism of $M\times \II$, and in particular it is a homotopy equivalence.
	It follows that the restriction to $J\colon \Wc (\mot{f}{}{N}{N'})\to \Wc(\mot{f'}{}{N}{N'})$, defined by $(m,t)\mapsto H(m,t,1)$ is a well defined homeomorphism, and in particular is a homotopy equivalence, which fixes $\Wc (\mot{f}{}{N}{N'})\cap (M\times \{0\})$ and $\Wc (\mot{f}{}{N}{N'})\cap (M\times \{1\})$. Hence $\Motfunctor{M}{A}\classrp{\mot{f}{}{N}{N'}}= \Motfunctor{M}{A}\classrp{\mot{f}{}{N}{N'}}$. 

\medskip

	We now check that $\Motfunctor{M}{A}$ preserves composition. Let $\mot{f}{}{N}{N'}$ and $\mot{g}{}{N}{N'}$ be motions in $M$ representing composable morphisms in $\hfMot{M}{A}$.
	Then \[\Motfunctor{M}{A}(\classrp{\mot{g*f}{}{N}{N''}})=\classche{\cmor{\Wc(\mot{g*f}{}{N}{N''})}{\iota_{f_0}}{M\setminus N}{\iota_{f_1}}{M\setminus N''}}.\]
	By Lemma~3.38 of \cite{motion} 
	\begin{multline*}
	\W\left(g*f\colon N \too N''\right)\hspace{-0.12pt}=\hspace{-0.12pt}
	\big\{(m,t/2) \hspace{-1pt}\mid\hspace{-1pt} (m,t) \in \W(f\colon N \too N')\big \} \\ \cup  \big\{(m,(t+1)/2) \hspace{-1pt}\mid \hspace{-1pt}(m,t) \in \W(g\colon N' \too N'')\big \}
	\end{multline*}
	and hence 
	\begin{multline*}
	\W'\left(g*f\colon N \too N''\right)\hspace{-0.12pt}=\hspace{-0.12pt}
	\big\{(m,t/2) \hspace{-1pt}\mid\hspace{-1pt} (m,t) \in \W'(f\colon N \too N')\big \} \\ \cup  \big\{(m,(t+1)/2) \hspace{-1pt}\mid \hspace{-1pt}(m,t) \in \W'(g\colon N' \too N'')\big \}.
	\end{multline*}
	We have that $\Motfunctor{M}{A}(\classrp{\mot{f}{}{N}{N'}})\sbullet \Motfunctor{M}{A}(\classrp{\mot{g}{}{N}{N'}})$
	\begin{align*}
		&=\classche{\cmortikz{\Wc(\mot{g}{}{N'}{N''})\sqcup_{M\setminus N'} \Wc(\mot{f}{}{N}{N''})}{\pomotzero{f}}{M\setminus N}{\pomotone{g}}{M\setminus N''}}
	\end{align*}
The pushout is given by the set of elements of the form $((m,t),1)$ and $((m,t),2)$ quotiented by the relation $((m,1),1)\sim ((m,0),2)$. 
There is a map $\W'\left(g*f\colon N \too N''\right)\to \Wc(\mot{g}{}{N'}{N''})\sqcup_{M\setminus N'} \Wc(\mot{f}{}{N}{N''}) $ given by sending $(m,t)$ to the  $[((m,2t),1)]$ if $t\in [0,1/2]$ and $(m,t)$ to  $[((m,2t-1),2)]$ if $t\in [1/2,1]$. It is straightforward to check that this map commutes appropriately with the relevant maps, and it is clearly a homotopy equivalence. Thus we have proved that composition is preserved. 
\end{proof}

\subsection{Functor from the mapping class groupoid, \texorpdfstring{$\mcgfunctor{M}{A} \colon \hfmcg{M}{A} \to \HomCob$}{mcgfunctor}}

We briefly recall the construction of the mapping class groupoid $\mcg{M}{A}$, of a manifold $M$ together with a fixed subset $A\subset M$, from \cite[Thm.6.10]{motion}. The object set is the power set $\Power M$.
Morphisms in $\mcg{M}{A}(N,N')$ are equivalence classes of triples, denoted $\shmor{f}{}{N}{N'}$, consisting of a pair of subsets $N,N'\subseteq M$ and a self-homeomorphism $\sh{f}\in \TOPO_A^h(M,M)$ such that $\sh{f}(N)=N'$. 
Triples $\shmor{f}{}{N}{N'}$ and $\shmor{g}{}{N}{N'}$ are related if  
\begin{multline*}
\pathphisotop{M}{\gammaf}{N}{N'}{\gammaff}{A}
= \{ H \in \Topo(\II, \TOPO_A^h(M, M) \; | \; 
H(0)=\gammaf, H(1)=\gammaff, \\
\forall t\colon H(t)(N)=N'
\}
\end{multline*}
is non-empty. 
We denote the equivalence class as $\classi{\shmor{f}{}{N}{N'}}$
The partial composition is $(\classi{\shmor{f}{}{N}{N'}},\classi{\shmor{g}{}{N'}{N''}})\mapsto(\classi{\shmor{g\circ f}{}{N'}{N''}})$ where $\circ$ denotes function composition.
The inverse of a class $\classi{\shmor{f}{}{N}{N'}}$ is given by $\classi{\shmor{f^{-1}}{}{N}{N'}}$.

\begin{definition}
	Let $\hfmcg{M}{A}\subset \mcg{M}{A}$ be the full subgroupoid with $N\in Ob(\hfmcg{M}{A})$ if $M\setminus N$ is \homfin{}.
\end{definition}

For a triple $\shmor{f}{}{N}{N'}$ representing a morphism in $\mcg{M}{A}$,
 define $\mcgim{f}\colon M\setminus N \to (M\setminus N') \times \II $ to be the composition $\iota_0^{M\setminus N'}\circ \sh{f}|_{M\setminus N}\colon M\setminus N\to (M\setminus N')\times \II$,  $m\mapsto (\sh{f}(m),0)$. The map $\mcgim{f}$ is well defined since $\sh{f}(N)=N'$ and $\sh{f}$ is a homeomorphism, thus $\sh{f}(M\setminus N)=M\setminus N'$.
 
\lemm{\label{le:functormcg_HomCob}
	Let $M$ be a manifold. There is a functor 
\[
\mcgfunctor{M}{A} \colon \hfmcg{M}{A} \to \HomCob
\]
which sends an object $N\in Ob(\hfmcg{M}{A})$ to $M\setminus N$,
and which sends a morphism $\classi{\shmor{f}{}{N}{N'}}$ to the \che{}
class of 
\[\cmortikz{(M\setminus N')\times \II .}{\mcgim{f}}{M\setminus N}{\iota_1^{M\setminus N'}}{M\setminus N'}\]
}
\begin{proof}
	We first check that $\mcgfunctor{M}{A}$ is well defined.
	
	Suppose $\shmor{f}{}{N}{N'}$ a represents a morphism in $\hfmcg{M}{A}$, then, by construction, $M\setminus N$ and $M\setminus N'$ are \homfin{}.
	
	We now check that the map $\lan \mcgim{f}, \iota_1^{M\setminus N'} \ran \colon (M\setminus N)\sqcup (M\setminus N') \to (M\setminus N')\times \II$ is a cofibration. 
	The map $\lan \mcgim{f}, \iota_1^{M\setminus N'} \ran$ is equal to the composition
	\[
	(M\setminus N)\sqcup (M\setminus N') \xrightarrow{\sh{f}_{M\setminus N}\; \sqcup \; \id_{M\setminus N'}} (M\setminus N')\sqcup (M\setminus N') \xrightarrow{\lan \iota_0^{M\setminus N'} , \iota_1^{M\setminus N'} \ran} (M\setminus N')\times \II .
	\]
	Now the first map is a homeomorphism, thus a cofibration by Lemma~\ref{le:homeo_cofib}. The second is a cofibration by Proposition~\ref{pr:identity_ccc}. Using Lemmas~\ref{le:homeo_cofib} and \ref{le:cofib_comp}
 it follows that the composition is a cofibration.	
 	
	Suppose that $\shmor{f}{}{N}{N'}$ and $\shmor{f'}{}{N}{N'}$ are representatives of the same morphism in $\hfmcg{M}{A}$.
	We show $\classche{\mcgfunctor{M}{A}(\shmor{f}{}{N}{N'})}=\classche{\mcgfunctor{M}{A}(\shmor{f'}{}{N}{N'})}$ by constructing a homeomorphism from $(M\setminus N')\times \II$ to itself, which commutes with the appropriate cospans.
		It follows from the assumption that there exists a continuous map  $H\colon \II \to \TOPO^h_A(M,M)$ from $\sh{f}$ to $\sh{f}'$, satisfying $H(t)(N)=N'$ for all $t\in \II$. We denote $H(t)$ by $H_t$. 
	Using the product-hom adjunction (Lemma~\ref{le:producthom}), $H$ corresponds to a continuous map $H'\colon M \times \II \to \II$, $(m,t)\mapsto H_t(m)$.
	Define a map $\hat{H}\colon M\times \II\to M\times \II$ by $(m,t)\mapsto (H_{1-t}\circ \sh{f}^{-1}(m),t)$. It is straightforward to see that this map is continuous. 
	We show that $\hat{H}$ is a homeomorphism by giving a continuous inverse.
	Since $\TOPO^h(M,M)$ is a topological group (original result \cite[Th.4]{Arens}, see also \cite[Th.2.11]{motion})
	the map from $\TOPO^h(M,M)$ to itself sending each homeomorphism to its inverse is continuous.
	Thus $J\colon \II\to \TOPO^h(M,M)$ given by $t\mapsto H_t^{-1}$ is a continuous map. Applying the product-hom adjunction again, there is a continuous map $J'\colon M\times \II\to M$ given by $(m,t)\mapsto H_t^{-1}(m)$.
	Hence there exists a continuous map $\hat{J}\colon M\times \II \to M\times \II$ given by $(m,t)\mapsto (f\circ J'(m,1-t),t)=(f\circ H_{1-t}^{-1}(m),t)$. It is straightforward to see $\hat{J}\circ \hat{H}=\id_{M\times \II}=\hat{H}\circ \hat{J}$.
	Next notice that $\hat{H}(N'\times \II)=\cup_{t\in\II}H_{1-t}(f^{-1}(N'))\times \{t\}=\cup_{t\in\II}H_{1-t}(N)\times \{t\}=\cup_{t\in\II}N'\times \{t\}=N'\times \II$, thus $\hat{H}$ restricts to a homeomorphism $(M\setminus N')\times \II$.
	Finally we check the commutation relations.
	We have $\hat{H}( \mcgim{f}(m))=\hat{H}(\sh{f},0)=(H_{1}\circ \sh{f}^{-1}\circ \sh{f} (m) ,0)=(\sh{f}'(m),0)=\mcgim{f'}(m)$ and $\hat{H}( \iota_1^{M\setminus N'}(m))=\hat{H}(m,1)=(H_0\circ\sh{f}^{-1}(m)),1)=(m,1)=\iota_1(m)$.
	Thus $\hat{H}$ is a \che{} from $\shmor{f}{}{N}{N'}$ to $\shmor{f'}{}{N}{N'}$.

	\medskip
	
	We now check that $\mcgfunctor{M}{A}$ preserves composition.
	Let $\shmor{f}{}{N}{N'}$ and $\shmor{g}{}{N'}{N''}$ be representatives of composable morphisms in $\hfmcg{M}{A}$.
	\[
	\mcgfunctor{M}{A}(\classi{\shmor{g\circ f}{}{N}{N''}})
	=\cmor{(M\setminus N'')\times \II}{\mcgim{(g\circ f)}}{M\setminus N}{\iota_1^{M\setminus N''}}{M\setminus N''}
	\]
	and
	we have that $\mcgfunctor{M}{A}(\classi{\shmor{g}{}{N'}{N''}})\sbullet \mcgfunctor{M}{A}(\classi{\shmor{f}{}{N}{N'}})$ 
	\begin{align*}
		&=\classche{\cmortikz{(M\setminus N'')\times \II}{\mcgim{g}}{M\setminus N'}{\iota_1^{M\setminus N''}}{M\setminus N''}} \sbullet \; \classche{\cmortikz{(M\setminus N')\times \II}{\mcgim{f}}{M\setminus N}{\iota_1^{M\setminus N'}}{M\setminus N'}}
		\\
	&= \classche{\cmortikz{((M\setminus N')\times \II)\sqcup_{M\setminus N'}(M\setminus N'')\times \II}{\mcgim{\tilde{f}}}{M\setminus N}{{\tilde{\iota}}_1^{M\setminus N''}}{M\setminus N''}}
	\end{align*}
	The pushout $((M\setminus N')\times \II)\sqcup_{M\setminus N'}((M\setminus N'')\times \II)$ is given by the set of elements of the form $((m,t),1)$ and $((m,t),2)$ quotiented by the relation $((m,1),1)\sim ((\sh{g}(m),0),2)$.
	To show that the two compositions are equivalent, we must construct a cospan homotopy equivalence $K\colon (M\setminus N'')\times \II \to ((M\setminus N')\times \II)\sqcup_{M\setminus N'}((M\setminus N'')\times \II)$.
	A suitable choice is given by 
	\[
	K(m,t)=\begin{cases}
		[((\sh{g}^{-1}(m),2t),1)], & 0\leq t \leq 1/2 \\
		[((m,2t-1),2)], & 1/2\leq t \leq 1
	\end{cases}
	\]
	Notice that $\sh{g}^{-1}(N'')=N'$, thus $\sh{g}^{-1}(M\setminus N'')=M\setminus N'$, and furthermore, the images are the same equivalence class at $t=1/2$, thus $K$ is well defined. It can be seen that $K$ is continuous by considering it as a composition of maps into the disjoint union and then projecting to the pushout.

	Each piece is clearly continuous, and defined on closed sets, thus $K$ is continuous.
	It is straightforward to see that $K$ commutes with the relevant maps in the cospans.
	We can see that $K$ is a homotopy equivalence by noting that it is, in particular a homeomorphism.
	A continuous inverse is given by the map 
	$L\colon ((M\setminus N')\times \II)\sqcup_{M\setminus N'}(M\setminus N'')\times \II \to (M\setminus N')\times \II$,
	\[
	L([((m,t),i)])=\begin{cases}
		 (m,t/2) & i=1 \\
		 (\sh{g}(m),(t+1)/2) & i=2. 
	\end{cases}
	\]
	It is straightforward to check that $L$ is well-defined and continuous.
\end{proof}

\subsection{Composition of functor \texorpdfstring{$\F\colon \Mot{M}{A}\to \mcg{M}{A}$}{FMottomcg} with \texorpdfstring{$\mcgfunctor{M}{A}$}{mcgfunctor} }

From \cite{motion} we have that, for any manifold $M$ and subset $A\subset M$, there is a functor $\F\colon \Mot{M}{A}\to \mcg{M}{A}$, which is the identity on objects and which sends the $\simrp$ equivalence class of a motion $\mot{f}{}{N}{N'}$ to the $\simi$ equivalence class of $\shmot{f_1}{}{N}{N'}$.
\begin{theorem}\label{th:comp_mcgfunctorF}
	Let $M$ be a manifold, with $A\subset M$ a subset, we have that 
	\[
	\Motfunctor{M}{A} = \mcgfunctor{M}{A}\circ\F .
	\]
\end{theorem}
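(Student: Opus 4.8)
The plan is to check that the two functors agree on objects and on morphisms. On objects this is immediate: $\F$ is the identity on objects and $\mcgfunctor{M}{A}$ sends $N$ to $M\setminus N$, so $(\mcgfunctor{M}{A}\circ\F)(N)=M\setminus N=\Motfunctor{M}{A}(N)$. For morphisms, I would fix a representative motion $\mot{f}{}{N}{N'}$ of a morphism in $\hfMot{M}{A}$, recall that $\F(\classrp{\mot{f}{}{N}{N'}})=\classi{\shmor{f_1}{}{N}{N'}}$, and then compare the two resulting cofibrant cospans. Since both functors are already known to be well-defined (Theorem~\ref{th:functorMot_HomCob} and Lemma~\ref{le:functormcg_HomCob}), it suffices to produce a single cospan homotopy equivalence between these two chosen concrete representatives; that is, a homotopy equivalence $\psi$ commuting with both cospans.

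The key step is to write $\psi$ down explicitly. I would use the homeomorphism $\Theta(f)\colon M\times\II\to M\times\II$, $(m,t)\mapsto(f_t(m),t)$, from the proof of Theorem~\ref{th:functorMot_HomCob}, whose inverse $(m,t)\mapsto(f_t^{-1}(m),t)$ restricts to a homeomorphism $\Theta(f)^{-1}\colon \Wc(\mot{f}{}{N}{N'})\to (M\setminus N)\times\II$ (the restriction of $\Theta(f)$ to this pair is exactly the homeomorphism recorded in that proof). Composing with $f_1\times\id_{\II}\colon (M\setminus N)\times\II\to(M\setminus N')\times\II$ --- which is a homeomorphism because $f_1$ is a self-homeomorphism of $M$ carrying $N$ onto $N'$ --- yields $\psi=(f_1\times\id_{\II})\circ\Theta(f)^{-1}$, explicitly $\psi(m,t)=(f_1\circ f_t^{-1}(m),t)$. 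Being a composite of homeomorphisms, $\psi$ is itself a homeomorphism and in particular a homotopy equivalence, with the correct source $\Wc(\mot{f}{}{N}{N'})$ and target $(M\setminus N')\times\II$.

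It then remains to verify the two commuting triangles. Using $f_0=\id_M$, for $m\in M\setminus N$ one has $\psi(\motzero{f}(m))=\psi(m,0)=(f_1(m),0)=\mcgim{f_1}(m)$, and for $m\in M\setminus N'$ one has $\psi(\motone{f}(m))=\psi(m,1)=(m,1)=\iota_1^{M\setminus N'}(m)$. These are exactly the conditions making $\psi$ a cospan homotopy equivalence from $\Motfunctor{M}{A}(\classrp{\mot{f}{}{N}{N'}})$ to $(\mcgfunctor{M}{A}\circ\F)(\classrp{\mot{f}{}{N}{N'}})$, so the two $\simche$-classes coincide and the functors are equal. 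I expect the only real obstacle to be arranging $\psi$ so that it simultaneously has the right codomain $(M\setminus N')\times\II$ and realises the ``twist by $f_1$'' at the bottom while restricting to the identity at the top; once the factorisation through $\Theta(f)$ is spotted, the boundary computations at $t=0$ and $t=1$ are routine, and no appeal to the equivalence relations is needed beyond noting that well-definedness of both functors lets us work with the single chosen representative.
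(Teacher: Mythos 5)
Your proposal is correct and follows essentially the same route as the paper: the paper constructs the homeomorphism $K\colon (M\setminus N')\times \II\to \Wc(\mot{f}{}{N}{N'})$, $(m,t)\mapsto \Theta(f)(f_1^{-1}(m),t)$, which is exactly the inverse of your $\psi(m,t)=(f_1\circ f_t^{-1}(m),t)$, and verifies the same two boundary identities at $t=0$ and $t=1$. Since both maps are homeomorphisms commuting with the cospans and $\simche$ is symmetric, the choice of direction is immaterial.
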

\begin{proof}
	Suppose we have a morphism $\classrp{\mot{f}{}{N}{N'}}\in \hfMot{M}{A}(N,N')$, then 
	\begin{align*}\mcgfunctor{M}{A} \circ \F(\classm{\mot{f}{}{N}{N'}})&=\classi{\shmot{f_1}{}{N}{N'}} \\ &=\classche{\cmortikz{(M\setminus N')\times \II }{\mcgmotim{f_1}}{M\setminus N}{\iota_1^{M\setminus N'}}{M\setminus N'}}\end{align*}
	and
	\[\Motfunctor{M}{A}(\classm{\mot{f}{}{N}{N'}})=\cmortikz{\Wc(\mot{f}{}{N}{N'}) }{\motzero{f}}{M\setminus N}{\motone{f}}{M\setminus N'}.
	\]
	We must construct a homotopy equivalence $K\colon (M\setminus N')\times \II\to \Wc(\mot{f}{}{N}{N'})$ commuting with the cospans.
	From \cite[Thm.3.72]{motion}, for any motion $\mot{f}{}{N}{N'}$, there is homeomorphism of $\Theta(f)\colon M\times \II\to M\times \II$ given by $f(m,t)=(f_t(m),t)$.
		 
	 Let $K'\colon M\times \II\to M\times \II$ be the map $(m,t) \mapsto \Theta(f)(f_1^{-1}(m),t)$, and notice that this is a homeomorphism.
	 Notice also that $K'$ restricts to a well defined homeomorphism	 
	 $K\colon (M\setminus N')\times \II\to \Wc(\mot{f}{}{N}{N'})$, since $K'(N'\times \II)=\Theta(f)(N\times \II)=  \bigcup_{t \in [0,1]} f_t(N) \times \{t\} =\W(\mot{f}{}{N}{N'})$.
	
	We now check that it commutes with the cospans.
	Consider $m\in M\setminus N$. Then $K(\mcgmotim{f_1} (m))=K(f_1(m),0)=(\Theta(f)(f_1^{-1}(f_1(m)),0)=(f_0(m),0)=(m,0)=\motzero{f}(m)$, and $K(\iota_1^{M\setminus N'}(m))=K(m,1)=\Theta(f)(f_1^{-1}(m),1)=(f_1(f_1^{-1}(m)),1)=(m,1)$.
\end{proof}

\rem{The previous theorem says that any representation of $\hfMot{M}{A}$ obtained using the functor $\Motfunctor{M}{A}$, i.e. factoring through $\HomCob$, will map any pair of motion classes to the same linear map if their images under $\F$ are equivalent in $\mcg{M}{A}$.
For many cases we are interested in the functor $F$ is an isomorphism, but for some cases we need to change the category $\HomCob$ to see the information.}
\section{Construction of a functor \texorpdfstring{$\FG\colon \HomCob \to \VectC$}{ZG:HomCobto Vect}}\label{sec:tqft}

In this section we explicitly construct, for each finite group $G$, a functor 
\[
\tqft \colon \HomCob \to \Vect_{\mathbb{C}}.
\]
 We begin by defining a magmoid morphism from a modification of $\cHomCob$ whose objects are pairs of a space and a set of basepoints, into $\vVectC$.
 We then construct a colimit over all representative finite subsets $A\subseteq X$ of basepoints and maps $g\colon \pi(X,A)\to G$, and arrive at the functor $\FG$ in Theorem~\ref{th:FG_functor}.
 We then show, in Section~\ref{sec:local_equiv}, that $\FG$ can be calculated on objects by choosing a convenient fixed choice of finite representative subset $A\subseteq X$ (Theorem~\ref{natiso_to_colim}).
 Thus our construction is explicitly calculable, see Example~\ref{ex:ZG_surfacetang}.
We will ultimately prove, in Theorem~\ref{th:FG_pi(X)}, that $\FG$ maps an object space $X$ to 
the vector space with basis the set of equivalence classes of maps $f\colon \pi(X)\to G$, up to natural transformation.
In Section~\ref{sec:FG_mon} we prove that $\FG$ is a symmetric monoidal functor.

\subsection{Magmoid of based cospans, \texorpdfstring{$\bHomCob$}{bHomCob}}{}\label{sec:bHomCob}
Let $\chip$ denote the class of pairs of the form $(X,X_0)$ where $X$ is a \homfin{}  space (Def.~\ref{de:homfin}) and $X_0$ is a representative finite subset of $X$ (representative means one point in each path component).
We will refer to the set $X_0$ as a set of basepoints.

\begin{definition}
	Let $X,Y$ be spaces and $A \subseteq X$ and $ B \subseteq Y$ be subsets.
	A {\em map of pairs} $f\colon (X,A) \to (Y,B)$ is a map $f\colon X \to Y$ such that $f(A)\subseteq B$.
\end{definition}

\begin{definition}\label{de:cbcc}
	Let $(X,X_0)$, $(Y,Y_0)$ and $(M,M_0)$ be pairs in $\chip$.
	A diagram $\cbmor{M}{i}{X}{j}{Y}$
	is called a  {\em \cbhomcob{}} from $(X,X_0)$ to $(Y,Y_0)$  such that:
	\vspace{-\topsep}
	\begin{enumerate}[label=(\roman*)]
		\setlength{\parskip}{0pt}
		\setlength{\itemsep}{0pt plus 1pt}
		\item $i \colon X \to M \rightarrow Y \colon j$ is a \chomcob{}.
		\item $i$ and $j$ are maps of pairs.
		\item $M_0 \cap i(X) =i(X_0)$ and 
		$M_0 \cap j(Y) = j(Y_0)$.
	\end{enumerate} 
	\vspace{-\topsep} 
	For any pairs $(X,X_0),(Y,Y_0)$ in $\chip$,
	\[
	\bHomCob((X,X_0),(Y,Y_0))=\left\{ \text{ based homotopy cobordisms } \cbmortikz{M}{i}{X}{j}{Y}\right\}.
	\]
\end{definition}

\exa{Consider the \ccc{} described in Proposition~\ref{disk}, which is a \homcob{} (see Example~\ref{ex:disk}). We can add basepoints to obtain a  \bhomcob{} as shown in Figure~\ref{fig:bccc_S1,D2}.
 	\begin{figure}
		\centering
		\def\svgwidth{0.5\columnwidth}
\begingroup%
  \makeatletter%
  \providecommand\color[2][]{%
    \errmessage{(Inkscape) Color is used for the text in Inkscape, but the package 'color.sty' is not loaded}%
    \renewcommand\color[2][]{}%
  }%
  \providecommand\transparent[1]{%
    \errmessage{(Inkscape) Transparency is used (non-zero) for the text in Inkscape, but the package 'transparent.sty' is not loaded}%
    \renewcommand\transparent[1]{}%
  }%
  \providecommand\rotatebox[2]{#2}%
  \newcommand*\fsize{\dimexpr\f@size pt\relax}%
  \newcommand*\lineheight[1]{\fontsize{\fsize}{#1\fsize}\selectfont}%
  \ifx\svgwidth\undefined%
    \setlength{\unitlength}{567.05281787bp}%
    \ifx\svgscale\undefined%
      \relax%
    \else%
      \setlength{\unitlength}{\unitlength * \real{\svgscale}}%
    \fi%
  \else%
    \setlength{\unitlength}{\svgwidth}%
  \fi%
  \global\let\svgwidth\undefined%
  \global\let\svgscale\undefined%
  \makeatother%
  \begin{picture}(1,0.36208647)%
    \lineheight{1}%
    \setlength\tabcolsep{0pt}%
    \put(0,0){\includegraphics[width=\unitlength,page=1]{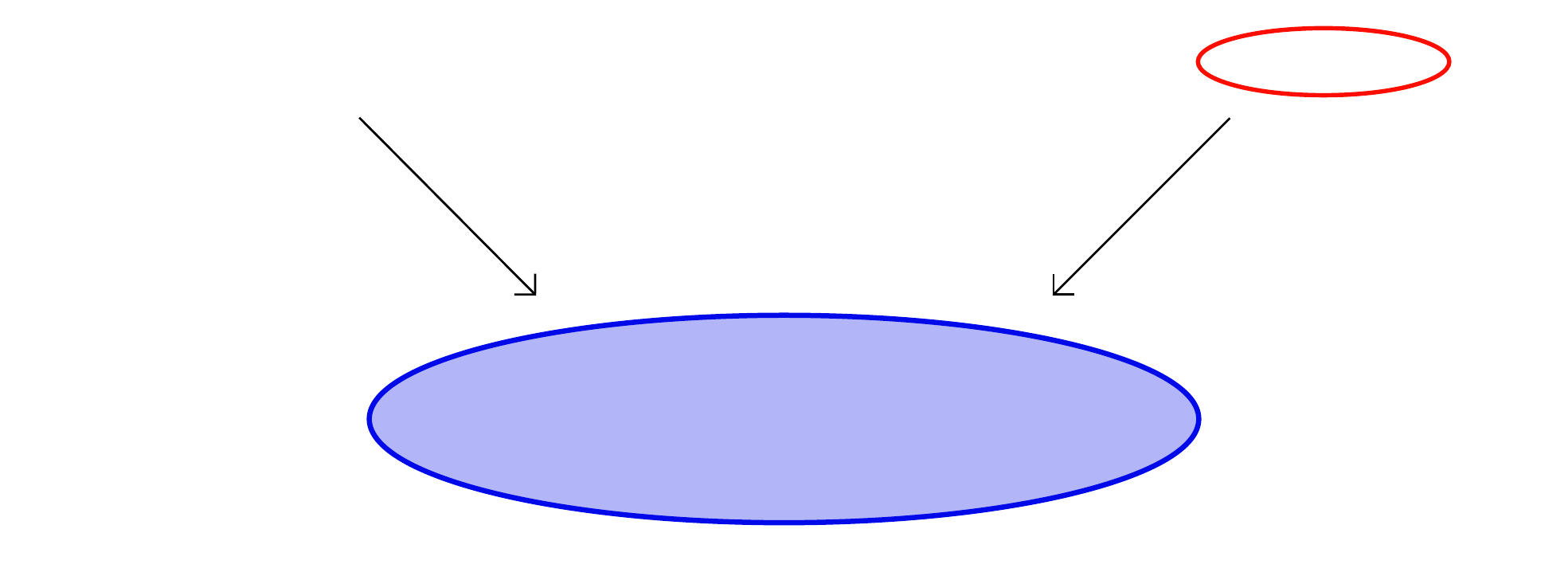}}%
    \put(0.68519346,0.00332725){\makebox(0,0)[lt]{\lineheight{1.25}\smash{\begin{tabular}[t]{l}$(D^2,D^2_0)$\end{tabular}}}}%
    \put(0.73809856,0.20077677){\makebox(0,0)[lt]{\lineheight{1.25}\smash{\begin{tabular}[t]{l}$j$\end{tabular}}}}%
    \put(0.22227359,0.2140031){\makebox(0,0)[lt]{\lineheight{1.25}\smash{\begin{tabular}[t]{l}$i$\end{tabular}}}}%
    \put(0,0){\includegraphics[width=\unitlength,page=2]{bccc_S1,D2.pdf}}%
    \put(0.87036142,0.25935034){\makebox(0,0)[lt]{\lineheight{1.25}\smash{\begin{tabular}[t]{l}$(S^1,{S^1_0}')$\end{tabular}}}}%
    \put(-0.00117108,0.24400533){\makebox(0,0)[lt]{\lineheight{1.25}\smash{\begin{tabular}[t]{l}$(S^1,S^1_0)$\end{tabular}}}}%
    \put(0,0){\includegraphics[width=\unitlength,page=3]{bccc_S1,D2.pdf}}%
  \end{picture}%
\endgroup%

		\caption[Example of a \bhomcob{}]{Here the dots represent basepoints. The three blue points in the leftmost copy of $S^1$ are $S^1_0$ and and two red basepoints in the rightmost copy of $S^1$ are ${S^1}_0'$.
		The set of basepoints $D^2_0$ then contains $i(S^1_0)\cup j({S^1}'_0)$ as well as two extra basepoints marked in green which do not intersect $i(S^1)\cup j(S^1)$. Then $\cmor{(D^2,D^2_0)}{i}{(S^1,S^1_0)}{j}{{S^1}'_0 }$ is a \cbhomcob{}.} 
		\label{fig:bccc_S1,D2}
\end{figure}}

\prop{Let $\cmor{M}{i}{X}{j}{Y}$ be a \chomcob{}, then there exists a \bhomcob{} $\cbmor{M}{i}{X}{j}{Y}$ for some representative finite subsets $X_0,Y_0$ and $M_0$ of $X$, $Y$ and $M$ respectively.}
\begin{proof}
	A suitable choice of $X_0$, $Y_0$ and $M_0$ is constructed as follows.
	Choose a point in each path-connected component of $X$, and let $X_0$ be the union of these points.
	Choose  $Y_0$ in the same way.
	Let $M_0$ be the union of $i(X_0)$ and $j(Y_0)$, together with a choice of point in each path-connected component not containing a point $i(X_0)\cup j(Y_0)$.
	
	Notice that $X_0,Y_0$ and $M_0$ are finite, 
	as $X$, $Y$ and $M$ are \homfin{}, and thus contain a finite number of path-connected components.
\end{proof}

Of course for any manifold $X$ consisting of a single path connected component, there is an uncountably infinite number of choices of $\{x\}$ such that $(X,\{x\})\in \chip$. Hence there will usually be many ways to obtain a \bhomcob{} from a \homcob{}.

\medskip 

The following Lemma says that the composition $\sbullet$ extends to a composition of \bhomcob{}s.
\lemm{\label{le:comp_bhomcob}
	$(I)$ For any spaces $X,Y$ and $Z$ in $Ob(\cHomCob)$ there is a composition
	\ali{
		\sbullet\;\colon \bHomCob((X,X_0),(Y,Y_0))\times \bHomCob((Y,Y_0),(Z,Z_0))&\to \bHomCob((X,X_0),(Z,Z_0)) \\
		\left({\cbmortikz{M}{i}{X}{j}{Y}}\;\raisebox{-1em}{,}\;{\cbmortikz{N}{k}{Y}{l}{Z}}\right)&\mapsto \scalebox{0.65}{\cmortikz{(M \sqcup_Y N,M_0\sqcup_{Y_0}N_0)}{\tilde{i}}{(X,X_0)}{\tilde{l}}{(Z,Z_0)}}
	}
	where $\cmor{M \sqcup_Y N}{\tilde{i}}{X}{\tilde{l}}{Z}$ is the composition 
	$(\cmor{M}{i}{X}{j}{Y})\sbullet(\cmor{N}{k}{Y}{l}{Z})$ with $\sbullet$ as in Lemma~\ref{le:comp_ccc}, and $M_0\sqcup_{Y_0}N_0$
	the set pushout of $M_0\xleftarrow{j}Y_0\xrightarrow{k}Z_0$ (where we use $j$ and $k$ also for the obvious restrictions).\\
	$(II)$ Hence there is a magmoid 
	\[
	\bHomCob=(\chip,\bHomCob(-,-),\sbullet).
	\]
}

\begin{proof}
	We check that the composition is well defined.
	We first show that $M_0\sqcup_{Y_0} N_0$ is representative in $M\sqcup_Y N$.	
	Recall that our chosen representatives of pushouts in $\Topo$ have the same underlying set and set maps as the representative of the corresponding pushout of the underlying set maps in $\Set$.
	Also $j$ and $k$ are homeomorphisms, by Lemma~\ref{le:homeo_cofib}.
	Thus $M_0\sqcup_{Y_0} N_0\subseteq M\sqcup_{Y} N$ and $M_0\sqcup_{Y_0} N_0=p_M(M_0)\cup p_N(N_0)$ where $p_M$ and $p_N$ are as in Proposition~\ref{le:comp_ccc}.
	Let $m\in M\sqcup_Y N$ be any point, then it has a preimage $p^{-1}(m)$ in $M$ or $N$, and thus there is a path in $M$ or $N$ connecting $p^{-1}(m)$ to a point in $M_0$ or $N_0$. The image of this path under $p_M$ or $p_N$ connects $m$ to a point in $M_0\sqcup_{Y_0} N_0$. 
	
	We have from Proposition~\ref{HomCob} that $\cmor{M\sqcup_{Y}N}{i}{X}{l}{Z}$ is a \chomcob{}.
	
	Since $M_0\sqcup_{Y_0} N_0=p_M(M_0)\cup p_N(N_0)$, and $i(X_0)\subseteq M_0$ and  $l(Z_0)\subseteq N_0$, we have $\tilde{i}(X_0)\subseteq M_0\sqcup_{Y_0}N_0$ and $\tilde{l}(Z_0)\subseteq M_0\sqcup_{Y_0}N_0$, thus $\tilde{i}$ and $\tilde{k}$ are maps of pairs.
	
	The map $\langle i,j \rangle \colon X\sqcup Y \to M$ is a cofibration, hence by Theorem~\ref{th:homeo_onto_im} it is a homeomorphism onto its image.
	This means $i(X)\cap j(Y)=\emptyset$ in $M$, and similarly $k(Y)\cap l(Z)=\emptyset$. Hence there is no equivalence on points in $X$ or $Z$ in the pushout. Thus $\left(M_0\sqcup_{Y_0} N_0\right) \cap \tilde{i}(X)=\tilde{i}(X_0)$ 
	follows directly from the fact that $\left(M_0\right) \cap i(X)=i(X_0)$ and similarly
	$\left(M_0 \sqcup_{Y_0}N_0\right) \cap \tilde{l}(Z) = \tilde{l}(Z_0)$.	
\end{proof}

\subsection{Magmoid morphism from \texorpdfstring{$\bHomCob$}{HomCob} to \texorpdfstring{$\vVectC$}{VectC}}

Here we construct a magmoid morphism 
$
\bFG \colon \bHomCob\to \vVectC
$.

Recall that there is a groupoid $\GG_G=(\{*\},\GG_G(*,*),\circ_G,e_G,g\mapsto g^{-1})$ obtained from any group $G$ with morphisms the elements of $G$, and composition and inverses given by group composition and inverse. Throughout this section, by abuse of notation we will use $G$ for $\GG_G$.
\defn{
Let $G$ be a group. 
For a pair $(X,X_0)\in \chip$, define
\[
\bFG(X,X_0) = \C \left( \Grpd \left( \pi(X,X_0) ,G \right) \right).
\]
That is, $\bFG(X,X_0)$ is the $\C$ vector space whose basis is the set of groupoid maps from the fundamental groupoid of $X$ with respect to $X_0$ into $G$.}

\exa{\label{ex:ObS1cupS1} Let $X=S^1\sqcup S^1$, and let $X_0\subset X$ contain two points, one in each copy of $S^1$.
	We have $\pi(X,X_0)\cong \Z\sqcup \Z$. Hence maps from $\pi(X,X_0)$ to $G$ are determined by pairs in $G\times G$, where the elements of $G$ denote the image of the generating elements of each copy of $\Z$. 
So we have $\bFG(X,X_0)\cong\C(G\times G)$. }

In the following example note that $X_0$ must be representative by the definition of $\chip$, therefore we choose basepoints even in path components that are homotopically trivial. 
This is necessary since, when considered as part of a cospan, these trivial components may have image in a homotopically non-trivial component.  

\exa{\label{ex:ObembS1cupS1}
	Let $X$ and $X_0$ be as explained in the caption to Figure~\ref{fig:emb_S1cupS1_based}.
	Then $\pi(X,X_0)\cong(\Z* \Z)\sqcup \{*\}\sqcup \{*\}$ where $\{*\}$ denotes the trivial group. Then maps from $\pi(X,X_0)$ to $G$ are determined by pairs in $G\times G$, whose elements respectively denote the images of the equivalence classes of the loops marked $x_1$ and $x_2$ in the figure, so we have $\bFG(X,X_0)\cong\C(G\times G)$.
	\begin{figure}
		\centering
		\def\svgwidth{0.5\columnwidth}
\begingroup%
  \makeatletter%
  \providecommand\color[2][]{%
    \errmessage{(Inkscape) Color is used for the text in Inkscape, but the package 'color.sty' is not loaded}%
    \renewcommand\color[2][]{}%
  }%
  \providecommand\transparent[1]{%
    \errmessage{(Inkscape) Transparency is used (non-zero) for the text in Inkscape, but the package 'transparent.sty' is not loaded}%
    \renewcommand\transparent[1]{}%
  }%
  \providecommand\rotatebox[2]{#2}%
  \newcommand*\fsize{\dimexpr\f@size pt\relax}%
  \newcommand*\lineheight[1]{\fontsize{\fsize}{#1\fsize}\selectfont}%
  \ifx\svgwidth\undefined%
    \setlength{\unitlength}{449.07728097bp}%
    \ifx\svgscale\undefined%
      \relax%
    \else%
      \setlength{\unitlength}{\unitlength * \real{\svgscale}}%
    \fi%
  \else%
    \setlength{\unitlength}{\svgwidth}%
  \fi%
  \global\let\svgwidth\undefined%
  \global\let\svgscale\undefined%
  \makeatother%
  \begin{picture}(1,0.43566711)%
    \lineheight{1}%
    \setlength\tabcolsep{0pt}%
    \put(0,0){\includegraphics[width=\unitlength,page=1]{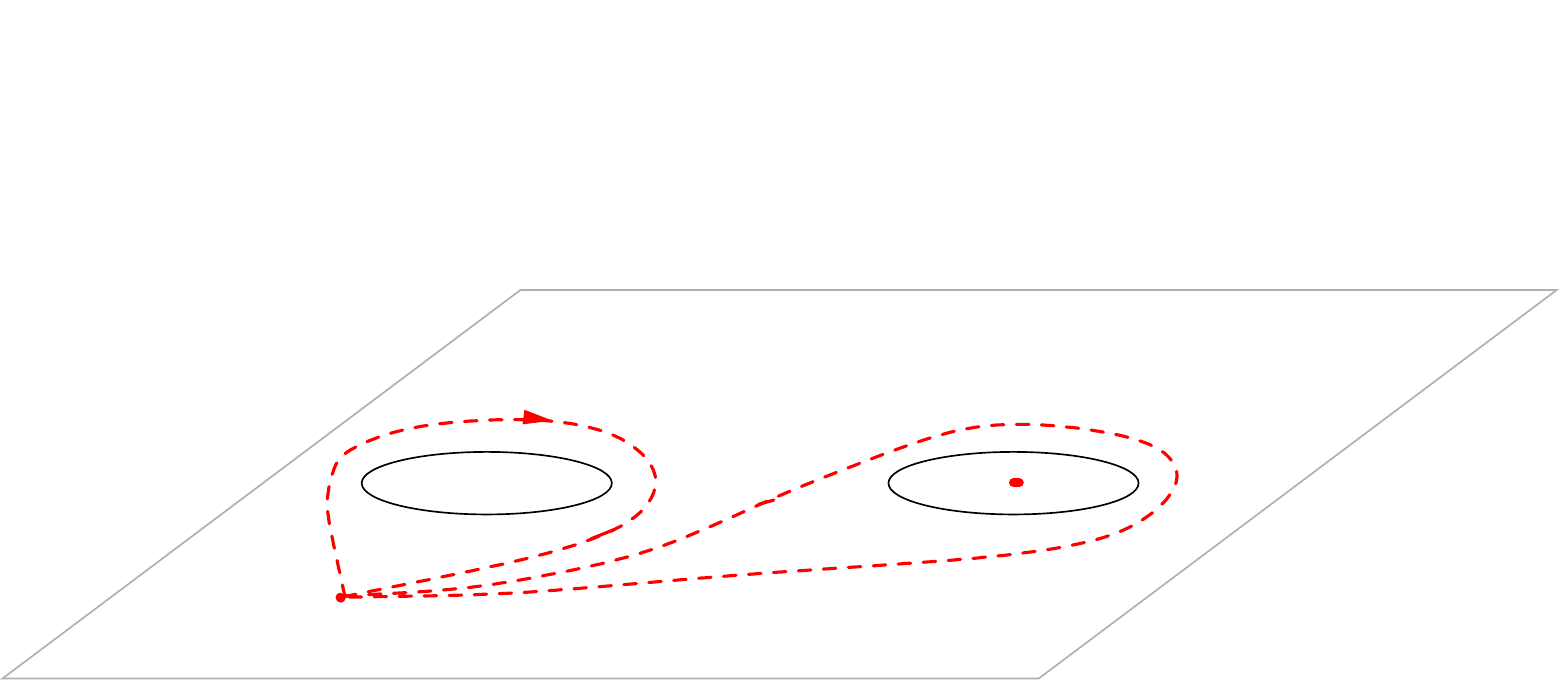}}%
    \put(0.42165773,0.14598936){\color[rgb]{0,0,0}\makebox(0,0)[lt]{\lineheight{1.25}\smash{\begin{tabular}[t]{l}$x_1$\end{tabular}}}}%
    \put(0,0){\includegraphics[width=\unitlength,page=2]{emb_S1cupS1_based.pdf}}%
    \put(0.71367689,0.16819658){\color[rgb]{0,0,0}\makebox(0,0)[lt]{\lineheight{1.25}\smash{\begin{tabular}[t]{l}$x_2$\end{tabular}}}}%
  \end{picture}%
\endgroup%

		\caption[Illustration of calculation of $\bFG$ on the complement of embeddings of $S^1$ in $\II^2$]{
		Let $X$ be the complement in $\II^2$ of the embedding of $S^1\sqcup S^1$ shown. Let $X_0$ be the three marked points shown.
		} 
		\label{fig:emb_S1cupS1_based}
\end{figure}}

For each pair $(X,X_0)$, the vector space $\bFG(X,X_0)$
has an intrinsic basis, the maps into $G$.
We will define a linear map assigned to each \bhomcob{} as a matrix indexed by these bases.

\defn{
Let $\cbmor{M}{i}{X}{j}{Y}$ be a \cbhomcob.
We define a matrix
\[
\bFG \left(\cbmortikz{M}{i}{X}{j}{Y}\right) : \bFG(X,X_0) \to \bFG(Y,Y_0)
\]
as follows. 
Let $f \in \bFG(X,X_0)$ and $g \in \bFG(Y,Y_0)$ be basis elements, then
\begin{align}
\label{eq:dirac_bFG}
\left\langle g \,\middle|\, \bFG \left(\cbmortikz{M}{i}{X}{j}{Y}\right) \middle| f \right\rangle
=\left|\left\{ h:\pi(M, M_0) \to G\hspace*{0.2em} \middle\vert \scalebox{.7}{
	\begin{tikzcd}[cramped, column sep=tiny, ampersand replacement=\&] 
	\pi(X, X_0) \ar[dr,"\pi(i)"'] \ar[ddr, bend right,"f"'] \& \&  \pi(Y,Y_0) \ar[dl,"\pi(j)"] \ar[ddl, bend left,"g"] \\
	\& \pi(M, M_0) \ar[d,"h"]\& \\
	\& G \&
	\end{tikzcd}}
\right\}\right|.
\end{align}
In other words, the right hand side is the cardinality of the set of maps $h$ making the diagram commute.
Here we are using Dirac notation: \eqref{eq:dirac_bFG} is the matrix element in the column corresponding to $f$ and the row corresponding to $g$.

\sloppypar{
When we have already specified the relevant cospan, we will often use $\bFG (M,M_0)$ 
for $\bFG \left(\cbmor{M}{i}{X}{j}{Y}\right)$,
and
 write the matrix elements as}
\[
\langle g \,|\, \bFG (M,M_0) \,|\, f \rangle = \left\vert \left\{h\colon  \pi(M,M_0) \to G  \;|\; h|_{\pi(X,X_0)}=f \wedge h|_{\pi(Y,Y_0)}=g \right\}\right\vert,
\]
	where by $h|_{\pi(X,X_0)}$ we really mean the restriction of the map $h$ to the image $\pi(i)(\pi(X,X_0))$.
}

\exa{\label{ex:bFG_mer}
	Let $\cbmor{M}{i}{X}{j}{Y}$ 
	be the \bhomcob{} represented in Figure~\ref{fig:mer_c}, and described in the caption. Note this is a \homcob{} as discussed in Examples~\ref{ex:mer} and \ref{ex:mer_b}. Note also that there is a finite number of marked points, and at least one in each connected component of $X$, $Y$ and $M$.
	
	Now $\pi(Y,Y_0)\cong \Z$, where the isomorphism is realised by mapping the loop labelled $y_1$ in the figure to $1$.
	Hence a map $g\colon \pi(Y,Y_0)\to G$ is uniquely determined by a choice of an element $g_1\in G$ with $f(\classp{y_1})=g_1$. Thus we have $\bFG(Y,Y_0)\cong\C(G)$.
	
	Recall from Example~\ref{ex:ObS1cupS1} that $\bFG(X,X_0)\cong\C(G\times G)$, where a pair $(g_1,g_2)$ denotes the map $(g_1,g_2)(\classp{x_1})=g_1$ and $(g_1,g_2)(\classp{x_2})=g_2$.
	
	Let $x$ be the basepoint which is in the loop labelled $x_1$. By Lemma~\ref{le:add_bps}, there is a bijection sending a map $h\in \Grpd(\pi(M,M_0), G)$ to a triple $(h',h(\gamma_1),h(\gamma_2)) \in\Grpd(\pi(M,\{x\}), G)\times G\times G$,
	where $h'$ agrees with $h$ on $\pi(M,\{x\})$.
	The space $M$
	 is equivalent to the twice punctured disk, which has fundamental group isomorphic to the free product $\Z*\Z$. This isomorphism can be realised by sending the element represented by $x_1$ to the $1$ in the first copy of $\Z$ and the element represented by $\gamma_2^{-1}x_2\gamma_2$ to the $1$ in the second copy of $\Z$.
	Thus we can label elements in $\Grpd(\pi(M,\{x\}), G)$ by elements of $G\times G$ where $a \in(a,b)$ corresponds to the image of $\classp{x_1}$, and $b$ the image of $\classp{\gamma_2^{-1}x_2\gamma_2}$. 
	Hence a map in $\Grpd(\pi(M,M_0), G)$ is determined by a quadruple $(a,b,c,d)\in G\times G\times G\times G$ where $a$ corresponds to the image of $x_1$, $b$ to the image of $\gamma_2^{-1}x_2\gamma_2$, and $c$ and $d$ correspond to the images of $\gamma_1$ and $\gamma_2$ respectively.
	
	Choosing basis elements $(f_1,f_2)\in \bFG(X,X_0)$ and $g_1\in \bFG(Y,Y_0)$  
	the commutation condition in \eqref{eq:dirac_bFG} gives conditions on allowed quadruples $(a,b,c,d)\in G\times G\times G\times G$.
	We have
	\ali{
		\lan(g_1) \; \vert\; \bFG(M,M_0)\; \vert\; (f_1,f_2)\ran &=
		\vert \left\{a,b,c,d\in G \; \vert \; (a,dbd^{-1})=(f_1,f_2), c^{-1}bac=g_1 \right\} \vert \\
		&= \vert \left\{b,c,d\in G \vert dbd^{-1}=f_2, c^{-1}bf_1c=g_1 \right\} \vert \\
		&= \vert \left\{d\in G \vert c^{-1} d^{-1}f_2d f_1c=g_1 \right\} \vert.
	}
}

	\begin{figure}
	\centering
	\def\svgwidth{0.3\columnwidth}
	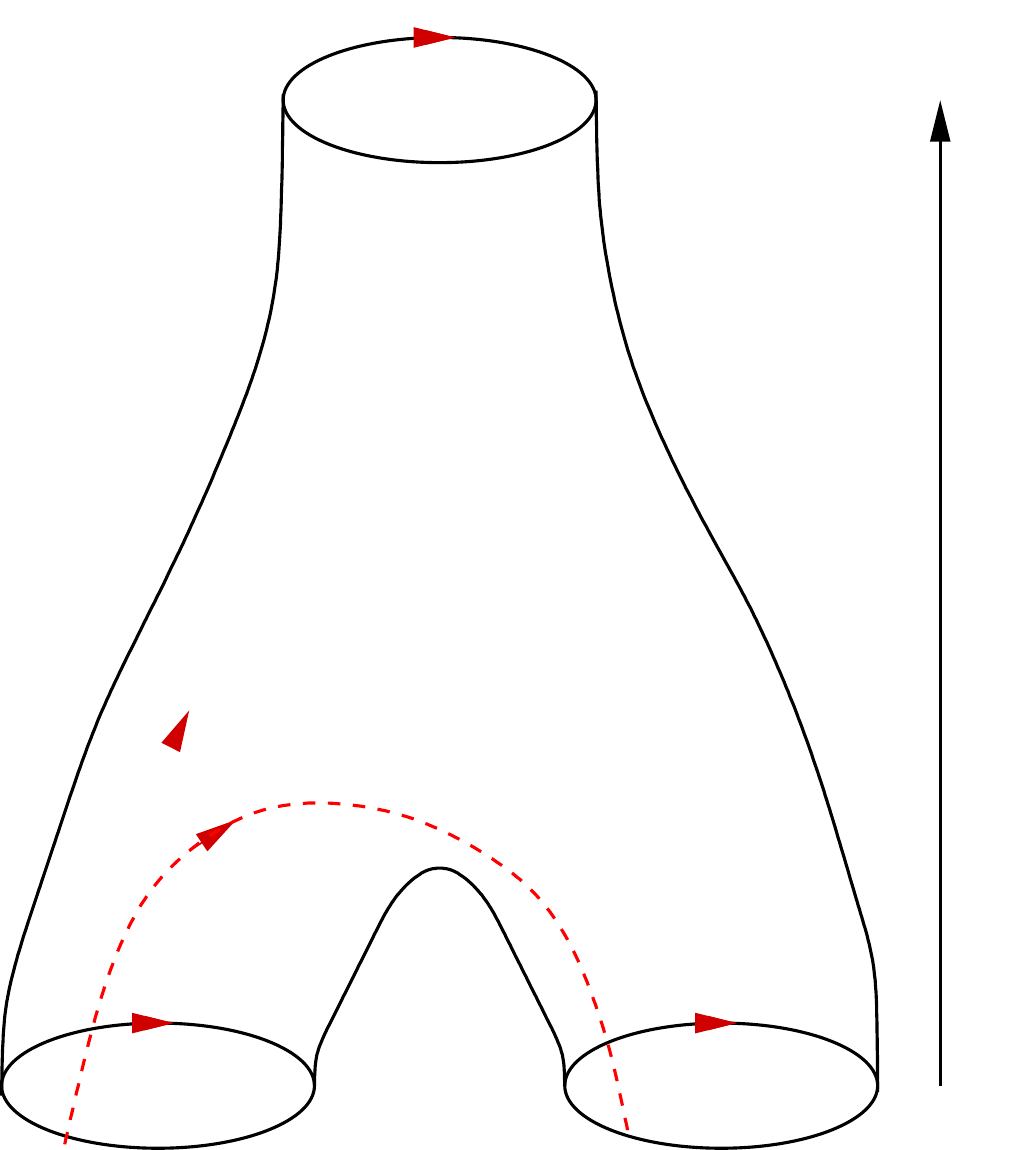
	\caption[Figure showing possible choice of basepoints added to Example~\ref{ex:mer}]{
		This figure represents the \ccc{} from Example~\ref{ex:mer}, so $M$ is the represented manifold, and $X\cong S^1 \sqcup S^1$ and $Y\cong S^1$ are the bottom and top boundary respectively, with the inclusion maps. The red points and dashed lines show a possible choice of basepoints $M_0$ and generating paths. Let $X_0$ and $Y_0$ be the intersection of $M_0$ with $X$ and $Y$ respectively.} 
	\label{fig:mer_c}
\end{figure}

\begin{lemma}\label{le:tqft_bFGmagmor}
	The map
	\[\bFG\colon \bHomCob\to \vVectC\]
	is a magmoid morphism.
\end{lemma}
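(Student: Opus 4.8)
The plan is to observe that the object and morphism assignments of $\bFG$ are already fixed by definition, so the only content of the lemma is that $\bFG$ respects the composition $\sbullet$ of based homotopy cobordisms: for composable based cobordisms the matrix of the composite must equal the product of the two matrices. Before the main argument I would record a finiteness remark that makes the statement meaningful. Each space $\bFG(X,X_0)=\C(\Grpd(\pi(X,X_0),G))$ is finite dimensional, and each matrix entry \eqref{eq:dirac_bFG} is a finite cardinality, because $\pi(X,X_0)$ is finitely generated (as $X$ is \homfin{} and $X_0$ is finite) and $G$ is finite, so there are only finitely many groupoid maps $\pi(X,X_0)\to G$. In particular the matrix product below is a finite sum over a finite index set.

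For the core step I would fix two composable based cobordisms with the notation of Lemma~\ref{le:comp_bhomcob}, so that the composite has middle space $M\sqcup_Y N$ and basepoints $M_0\sqcup_{Y_0}N_0$, with $\tilde i=p_M\circ i$ and $\tilde l=p_N\circ l$. The key ingredient is Corollary~\ref{co:vk} applied to the pushout of $j$ and $k$. I would check its hypotheses hold: $j\colon Y\to M$ is a closed cofibration by Lemma~\ref{ccc_implies_cofibs}; the subsets $Y_0,M_0,N_0$ are representative; and conditions (ii)--(iii) of Definition~\ref{de:cbcc} give precisely $j(Y_0)=M_0\cap j(Y)$ and $k(Y_0)\subseteq N_0$, which are the hypotheses of the corollary with $A=Y_0$, $B=M_0$, $C=N_0$ and $D=M_0\sqcup_{Y_0}N_0$. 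Hence
\[
\pi(M\sqcup_Y N,\, M_0\sqcup_{Y_0}N_0)
\]
is the pushout in $\Grpd$ of $\pi(M,M_0)\xleftarrow{\pi(j)}\pi(Y,Y_0)\xrightarrow{\pi(k)}\pi(N,N_0)$.

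Next I would invoke the universal property of this pushout with target $G$: a groupoid map $h$ out of the pushout is the same data as a pair $(h_M,h_N)$ with $h_M\colon\pi(M,M_0)\to G$, $h_N\colon\pi(N,N_0)\to G$ and $h_M\circ\pi(j)=h_N\circ\pi(k)$, where $h\circ\pi(p_M)=h_M$ and $h\circ\pi(p_N)=h_N$. Under this bijection the two constraints in \eqref{eq:dirac_bFG} for the composite translate, using $\pi(\tilde i)=\pi(p_M)\pi(i)$ and $\pi(\tilde l)=\pi(p_N)\pi(l)$, into $h_M|_{\pi(X,X_0)}=f$ and $h_N|_{\pi(Z,Z_0)}=g$. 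Partitioning the set of such $h$ according to the common value $e\coloneq h_M\circ\pi(j)=h_N\circ\pi(k)\in\Grpd(\pi(Y,Y_0),G)$ identifies it with the disjoint union over $e$ of the product of $\{h_M\mid h_M|_{\pi(X,X_0)}=f,\ h_M|_{\pi(Y,Y_0)}=e\}$ and $\{h_N\mid h_N|_{\pi(Y,Y_0)}=e,\ h_N|_{\pi(Z,Z_0)}=g\}$. Taking cardinalities gives exactly
\[
\langle g\,|\,\bFG(M\sqcup_Y N,M_0\sqcup_{Y_0}N_0)\,|\,f\rangle=\sum_{e}\langle g\,|\,\bFG(N,N_0)\,|\,e\rangle\,\langle e\,|\,\bFG(M,M_0)\,|\,f\rangle,
\]
the $(g,f)$ entry of the matrix product, which is the composition-preservation identity we need.

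I expect the main obstacle to be not any single deep idea but the careful verification that the hypotheses of Corollary~\ref{co:vk} are genuinely supplied by Definition~\ref{de:cbcc}, together with the bookkeeping that translates the ``restriction to basepoints'' conditions through the pushout identification. Once the van Kampen theorem for cofibrations is in place, the argument reduces to the standard hom--out--of--a--pushout bijection in $\Grpd$ and a finite partition of the counting set by the middle value $e$.
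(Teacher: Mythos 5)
Your proposal is correct and takes essentially the same route as the paper's own proof: both apply Corollary~\ref{co:vk} (with hypotheses supplied by Definition~\ref{de:cbcc} and Lemma~\ref{ccc_implies_cofibs}) to identify $\pi(M\sqcup_Y N,\,M_0\sqcup_{Y_0}N_0)$ as the pushout of $\pi(M,M_0)\leftarrow\pi(Y,Y_0)\rightarrow\pi(N,N_0)$, then use the universal property to split each $h$ into a compatible pair and partition the counting set by the common value on $\pi(Y,Y_0)$ to recover the matrix product. Your preliminary finiteness remark and your explicit verification of the hypotheses of Corollary~\ref{co:vk} are correct elaborations of points the paper leaves implicit, not a different argument.
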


\begin{proof}
	It is immediate from the construction that the map is well defined. Thus we only need to check that composition is preserved.
	Let $\cbmor{M}{i}{X}{j}{Y}$ and $\cbmor{N}{k}{Y}{l}{Z}$ be \cbhomcob{}s.
	Let $f\in\bFG(X,X_0)$ and $g\in\bFG(Z,Z_0)$ be basis elements. 
	The matrix element corresponding to these basis elements is given by counting maps $h$ in the following diagram.
	\[
	\begin{tikzcd}[cramped,sep=scriptsize, ampersand replacement=\&] 
	\pi(X, X_0) \ar[dr,"\pi(i)"'] \ar[dddrr, bend right=35,"f"'] \& \& \pi(Y,Y_0) \ar[dr,"\pi(k)"'] \ar[dl,"\pi(j)"] \& \& \pi(Z,Z_0) \ar[dl,"\pi(l)"] \ar[dddll, bend left=35,"g"] \\
	\& \pi(M,M_0) \ar[dr]  \& \& \pi(N,N_0)\ar[dl] \& \\
	\& \& \makebox*{$\pi(X, X_0)$}{$\pi(M \sqcup_Y N, M_0 \sqcup_{Y_0} N_0)$} \ar[d,"h"]\& \&\\[+15pt]
	\& \& G \& \&
	\end{tikzcd}
	\]
	From Definition~\ref{de:cbcc} and Lemma~\ref{ccc_implies_cofibs}, the pushout of $ (M,M_0)\xleftarrow{j}(Y,Y_0)\xrightarrow{k}(N,N_0)$ satisfies the conditions of  Corollary~\ref{co:vk}. Hence the middle square of this diagram is a pushout.
	Hence each $h$ is uniquely determined by a pair
	$h_1\colon \pi(M,M_0) \to G$ and 
	$h_2\colon \pi(N,N_0) \to G$ 
	such that the above diagram commutes.
	So we have
	\begin{align*}
	\langle g \,|\, \mathcal{F}_{G}^{!} (M \sqcup_{Y} N ,M_0 \sqcup_{Y_0} N_0) \,|\, f \rangle
	&= \left\vert \left\{  h_1, h_2 \,|\,  h_1 \circ \pi(j)= h_2 \circ \pi(k) \wedge h_1|_{\pi(X,X_0)} = f \right.\right. \\[-2pt] 
	& \hspace*{16em} \left.\left. \wedge h_2|_{\pi(Z,Z_0)} = g\right\} \right\vert  \\
	&=\sum _{\makebox[0pt]{\scriptsize $\theta:\pi(Y,Y_0) \to G$}} \left\vert\left\{h_1 \,|\, h_1|_{\pi(Y,Y_0)} = \theta \wedge h_1|_{\pi(X,X_0)} = f \right\} \right\vert \\[-12pt] 
	& \hspace*{8em} \left\vert\left\{ h_2 \,|\, h_2|_{\pi(Y,Y_0)}= \theta \wedge h_2|_{\pi(Z,Z_0)} = g\right\}\right\vert \\
	&=\sum _{\makebox[0pt]{\scriptsize $\theta:\pi(Y,Y_0) \to G$}}\langle g \vert \mathcal{F}_{G}^{!} (N,N_0) \vert \theta \rangle
	\langle \theta \vert \mathcal{F}_{G}^{!} (M,M_0) \vert f \rangle
	\end{align*}
	Now this is precisely the corresponding matrix element given by multiplying the matrices $\mathcal{F}_{G}^{!}(M,M_0)$ and $\mathcal{F}_{G}^{!}(N,N_0)$. 
\end{proof}

The following lemma says that $\bFG$ respects \che{}.
\begin{lemma}\label{bFG_well_defined}
	Suppose we have \chomcob{}s $\cmor{M}{i}{X}{j}{Y}$ and $\cmor{M'}{i'}{X}{j'}{Y}$ which are equivalent up to cospan homotopy equivalence (as defined in Lemma~\ref{le:che}). 
	Then (by Theorem~\ref{th:cofib_equiv}) we have homotopy equivalences 
	$\psi\colon M\to M'$ and $\psi'\colon M'\to M$ which commute with the cospans.
	Choosing sets of baspoints $X_0\subseteq X$, $Y_0\subseteq Y$, $M_0\subseteq M$ such that $\cbmor{M}{i}{X}{j}{Y}$ is a \bhomcob{}, we have
	\[
	\bFG\left(\cbmortikz{M}{i}{X}{j}{Y}\right) = \bFG\left(
	\scalebox{0.6}{\begin{tikzcd}[ampersand replacement= \&, cramped ,sep=tiny]
		(X, X_0) \ar[dr,"i'"' near start] \& \& \makebox*{MM}{$(Y,Y_0)$} \ar[dl,"j'" near start] \\
		\& (M',M_0'), \& \\
		\end{tikzcd}
		\hspace*{5pt}}
	\right)
	\]
	where $M_0'=\psi(M_0)$.
\end{lemma}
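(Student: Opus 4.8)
The plan is to extract from the homotopy equivalence $\psi\colon M\to M'$ an \emph{isomorphism} of groupoids $\Phi\colon \pi(M,M_0)\to \pi(M',M_0')$ that intertwines the two cospan legs, and then to observe that each matrix entry of $\bFG$ is computed purely from the data of a groupoid together with its two leg-maps, so that an intertwining isomorphism forces the two matrices to coincide entrywise. Throughout I use that $\pi$ is a functor $\Topo\to\Grpd$, and that $\psi$ (together with its homotopy inverse $\psi'$) commutes with the cospans, which is exactly the content supplied by the definition of \che{} and Theorem~\ref{th:cofib_equiv}.

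First I would apply $\pi$ to the commuting cospan diagram. Since $\psi\circ i=i'$ and $\psi\circ j=j'$, functoriality gives $\pi(\psi)\circ\pi(i)=\pi(i')$ and $\pi(\psi)\circ\pi(j)=\pi(j')$. Because $M_0'=\psi(M_0)$, the functor $\pi(\psi)$ carries objects of $M_0$ into $M_0'$ and path-classes between $M_0$-points to path-classes between $M_0'$-points, so it restricts to a functor $\Phi\colon \pi(M,M_0)\to \pi(M',M_0')$ between the full subgroupoids on these basepoint sets, still satisfying $\Phi\circ\pi(i)=\pi(i')$ and $\Phi\circ\pi(j)=\pi(j')$.

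The crux is to show $\Phi$ is an isomorphism of groupoids, not merely an equivalence. On morphisms this is essentially automatic: a homotopy $\psi'\psi\simeq\id_M$ induces a natural isomorphism $\pi(\psi')\pi(\psi)\cong\id$, so $\pi(\psi)\colon\pi(M)\to\pi(M')$ is an equivalence and hence fully faithful; since restricting a fully faithful functor to full subgroupoids preserves full faithfulness, $\Phi$ is full and faithful. On objects I must check that $\psi|_{M_0}$ is a bijection onto $M_0'$. Surjectivity is the definition of $M_0'$; injectivity is the delicate point, and it is where the based-cospan hypotheses must be used. On $i(X_0)\cup j(Y_0)$, injectivity holds because $i'=\psi i$ and $j'=\psi j$ are embeddings (cofibrations, by Lemma~\ref{ccc_implies_cofibs} and Theorem~\ref{th:homeo_onto_im}) with disjoint images, forcing $\psi$ to separate these points; the remaining basepoints of $M_0$ lie in distinct path components that contain no $i/j$-basepoint, and $\psi$, being a homotopy equivalence, induces a bijection on path components and so sends them to distinct points of distinct components of $M'$. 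With $\Phi$ bijective on objects and fully faithful, it is an isomorphism.

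Finally, precomposition $h'\mapsto h'\circ\Phi$ is a bijection $\Grpd(\pi(M',M_0'),G)\to\Grpd(\pi(M,M_0),G)$, with inverse precomposition by $\Phi^{-1}$. Using the intertwining relations, $h'\circ\pi(i')=f$ if and only if $(h'\circ\Phi)\circ\pi(i)=f$, and similarly for $j$ and $g$, so this bijection restricts to a bijection between the constrained sets of maps counted by $\langle g\,|\,\bFG(M',M_0')\,|\,f\rangle$ and $\langle g\,|\,\bFG(M,M_0)\,|\,f\rangle$ for every pair of basis elements $f,g$; equality of all entries yields equality of the two matrices. I expect the objectwise injectivity of $\psi|_{M_0}$ to be the main obstacle: an arbitrary homotopy equivalence can collapse two basepoints lying in a single component, and were that to happen the counts would genuinely differ, since an equivalence of groupoids that is not an isomorphism induces a bijection only on natural-isomorphism classes of maps into $G$, not on the maps themselves. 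The argument therefore rests squarely on the cospan embedding conditions and on the placement of the extra basepoints one per remaining component to rule this out.
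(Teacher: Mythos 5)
Your overall route is exactly the paper's: the paper also transports the counted maps $h$ by precomposition, asserting outright that ``on the level of fundamental groupoids $\psi$ and $\psi'$ become inverse isomorphisms'' between $\pi(M,M_0)$ and $\pi(M',M_0')$. You are right that the entire weight falls on showing $\Phi=\pi(\psi)$ restricts to an \emph{isomorphism} of the based groupoids, and your full-faithfulness step and your final precomposition bijection are fine. The genuine gap is the objectwise injectivity argument. Definition~\ref{de:cbcc} does not place the remaining basepoints one per component: ``representative'' (Definition~\ref{de:representative}) means \emph{at least} one point in each path component, and nothing prevents several points of $M_0\setminus(i(X_0)\cup j(Y_0))$ from lying in a single component --- the paper's own Figure~\ref{fig:bccc_S1,D2} has two such extra basepoints in the (connected) disk, and an extra basepoint may even share a component with points of $i(X_0)\cup j(Y_0)$. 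So your premise is false, and the failure is fatal rather than cosmetic: take $X=Y=\emptyset$, $M$ contractible with $M_0=\{m_1,m_2\}$, $M'$ a one-point space and $\psi$ the unique map. This is a cospan homotopy equivalence of (empty-legged) based cospans, yet $\bFG(M,M_0)$ is the $1\times 1$ matrix $(|G|)$ while $\bFG(M',\psi(M_0))=(1)$ --- precisely the collapse you anticipate in your closing remark. Hence $\psi|_{M_0}$ need not be injective and the asserted matrix equality can literally fail.

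It is worth knowing that this is also a silent gap in the paper's proof: the claimed ``inverse isomorphisms'' would additionally require $\psi'(M_0')\subseteq M_0$ and on-the-nose invertibility on basepoints, neither of which follows from the hypotheses, so the lemma is only correct when $\psi|_{M_0}$ is injective. Two honest repairs are available. (a) Restrict to an $M_0$ consisting of $i(X_0)\cup j(Y_0)$ together with one point in each remaining component: there your two correct observations --- injectivity on $i(X_0)\cup j(Y_0)$ via the embedding and disjoint-image properties of $\lan i',j'\ran$ (Lemma~\ref{ccc_implies_cofibs}, Theorem~\ref{th:homeo_onto_im}), and the $\pi_0$-bijection argument for points in distinct components --- do give injectivity, and then bijectivity on objects plus full faithfulness makes $\Phi$ an isomorphism; note this completes the proof without ever needing $\psi'$ to respect basepoints, which is tidier than the paper's formulation. (b) In general one gets $\bFG(M,M_0)=|G|^{\,|M_0|-|\psi(M_0)|}\,\bFG(M',\psi(M_0))$: choose $N_0\subseteq M_0$ containing all forced points with $\psi|_{N_0}$ bijective onto $\psi(M_0)$ (collisions only occur within a component, so discarding duplicates keeps $N_0$ representative), apply the argument of (a), and rescale as in the basepoint-varying computation preceding Lemma~\ref{le:bbFG_mor_bps}. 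Consequently the normalised $\bbFG$ --- the only thing used downstream in Theorem~\ref{th:FG_functor} --- is genuinely invariant, even though the unnormalised statement needs the injectivity hypothesis.
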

\begin{proof}
		Let $f\in \bFG(X,X_0)$ and $g\in \bFG(Y,Y_0) $ be basis elements and  $h\colon  \pi(M,M_0) \to G$ a map with $h|_{\pi(X,X_0)}=f$ and  $h|_{\pi(Y,Y_0)}=g$.
		On the level of fundamental groupoids $\psi$ and $\psi'$ become inverse group isomorphisms making the following diagram commute.

	\[
	\begin{tikzcd}[ column sep=tiny, ampersand replacement=\&] 
	\& \pi(M,M_0) \ar[dd,"\pi(\psi)"', bend right] \ar[ddd, bend left=45,  "h" near start]\& \\
	\pi(X, X_0) \ar[dr,"\pi(i')"'] \ar[ur,"\pi(i)"] \ar[ddr, bend right,"f"'] \& \&  \pi(Y,Y_0) \ar[ul,"\pi(j)"'] \ar[dl,"\pi(j')", crossing over] \ar[ddl, bend left,"g"] \\
	\& \pi(M', M'_0) \ar[d,dashed,"h'"] \ar[uu,"\pi(\psi')",bend right]\& \\
	\& G \&
	\end{tikzcd}
	\]
	For any such $h$ we can obtain a map $h'$ making the diagram commute
	by precomposing $h$ with $\pi(\psi')$.
	Thus we have a set map 
	\begin{multline*}\Psi\colon \left\{h\colon  \pi(M,M_0) \to G  \,|\, h|_{\pi(X,X_0)}=f \wedge h|_{\pi(Y,Y_0)}=g \right\} \to \\
		\left\{h'\colon  \pi(M',M'_0) \to G  \,|\, h'|_{\pi(X,X_0)}=f \wedge h'|_{\pi(Y,Y_0)}=g \right\},
		\end{multline*}
	which has inverse given by precomposing with $\pi(\psi)$.
	Thus we have that 
	$\langle g \,|\, \bFG (M,M_0) \,|\, f \rangle = \langle g \,|\, \bFG (M',M'_0) \,|\, f \rangle$ for all $f,g$.
\end{proof}

\subsection{Functor \texorpdfstring{$\FG\colon\HomCob\to \Vect_{\mathbb{C}}$}{HomCob}}\label{remove_bps}
The magmoid morphism $\bFG$ clearly depends on the sets of basepoints. 
Also notice that there are many ways we could obtain a based cospan from the cospan representing the identity in $\HomCob$, $\cmor{X\times \II }{\iota_0^X}{X}{\iota_1^X}{X}$, and in general these based cospans will not give the identity matrix under $\bFG$. 
In this section we add a normalisation factor, then define a map on the objects of $\HomCob$ by taking a colimit over $\bFG$ for all choices of basepoints, and  finally
redefine the map on morphisms using the universal property of the colimit. This leads to a functor $\FG\colon \HomCob\to \Vect_{\mathbb{C}}$ which does not depend on a choice of basepoints.
We will find that
removing the basepoint dependence will also solve the identity problem. 

\subsubsection*{Varying the set of basepoints}

Let $\cbmor{M}{i}{X}{j}{Y}$ be a \cbhomcob{}.
We first consider how changing the set of basepoints in the set $M_0$ changes $\bFG(M,M_0)$.

If such a point exists, choose a point $m\in M\setminus  M_0$ such that 
$\scbmor{M}{M_0\cup\{m\}}{i}{X}{X_0}{j}{Y}{Y_0}$ is also a \cbhomcob{}. 
By Lemma~\ref{le:add_bps}, the set of maps $h'\colon  \pi(M,M_0\cup \{m\}) \to G $ is in bijective correspondence with the set of pairs of a map $h\colon  \pi(M,M_0) \to G $ and an element of $G$, via a bijection $\Theta_{\gamma}$.
Let $f\in\bFG(X,X_0)$ and $g\in\bFG(Y,Y_0)$ be basis elements.
Note that for all $h\colon  \pi(M,M_0) \to G $ such that  $h|_{\pi(X,X_0)}=f $ and $ h|_{\pi(Y,Y_0)}=g$, the map $h'$ obtained from a pair $(h,x)$ using the map $\Theta^{-1}_\gamma$ as in the proof of Lemma~\ref{le:add_bps} also satisfies $h'|_{\pi(X,X_0)}=f $ and $ h'|_{\pi(Y,Y_0)}=g $.
Similarly, for a map $h'\colon  \pi(M,M_0\cup \{m\}) \to G $ satisfying $h'|_{\pi(X,X_0)}=f $ and $ h'|_{\pi(Y,Y_0)}=g $, $\Theta_{\gamma}(h')|_{\pi(X,X_0)}=f $ and $ \Theta_{\gamma}(h')|_{\pi(Y,Y_0)}=g$.
Hence  
for all pairs $f,g$ we have that 
$
\left\langle g \,\middle|\, \bFG (M,M_0 \cup \{m\}) \,\middle|\, f \right\rangle = |G| \left\langle g \,\middle|\, \bFG (M,M_0) \,\middle|\, f \right\rangle $,
and hence that 
\begin{align*}
 \bFG (M,M_0 \cup \left\{m \right\} )  &= |G| \, \bFG (M,M_0).
\end{align*}
It follows that for all $ M_0' \supseteq M_0$ we have
$
\bFG (M,M_0') = |G|^{(|M_0'| - |M_0|)} \bFG (M,M_0),
$ 
and hence
\begin{align*}
|G|^{-|M_0'|}\bFG (M,M_0') &= |G|^{ - |M_0|} \bFG (M,M_0). 
\end{align*}
Now suppose instead there are no containment conditions between $M_0'$ and $M_0$, then we can write 
\[\bFG (M,M_0' \cup M_0) = |G|^{(|M_0' \cup M_0| - |M_0|)} \bFG (M,M_0)
\]
and
\[\bFG (M,M_0' \cup M_0) = |G|^{(|M_0' \cup M_0| - |M_0'|)} \bFG (M,M_0')
\]
which together imply
\[|G|^{- |M_0|} \bFG (M,M_0) = |G|^{- |M_0'|} \bFG (M,M_0')
\]
and that
\[|G|^{- (|M_0|-|X_0|)} \bFG (M,M_0) = |G|^{- (|M_0'|-|X_0|)} \bFG (M,M_0').
\]
We have proven the following.
\lemm{\label{le:bbFG_mor_bps}
	The map $\bbFG$, assigning a linear map to a \cbhomcob{} as follows
	\[
	\bbFG \left(\cbmortikz{M}{i}{X}{j}{Y} \right) = |G|^{- (|M_0| - |X_0|)} \bFG \left(\cbmortikz{M}{i}{X}{j}{Y} \right),
	\]
	 does not depend on the subset $M_0\subseteq M$.
 \qed}
When the relevant cospan is clear, or has been given, we will refer to the image as $\bbFG(M,X_0,Y_0)$ to highlight the dependence on $X_0$ and $Y_0$.

In defining $\bbFG$ we have included a term counting the cardinality of $X_0$.
Some term counting basepoints in $X$ or $Y$ is necessary to ensure the new definition is still compatible with the composition; however we could have chosen $\nicefrac{1}{2}(|X_0|+|Y_0|)$ for example, as is the convention in \cite{yetter},
and avoided the asymmetry. 
The reason for our convention is that it allows us to work for longer in the basis set rather than moving to the $\C$ vector space, making calculation easier. 
We will highlight later where this becomes relevant (Remark~\ref{rem:Yetter}). 

\lemm{\label{le:tqft_compbbFG}
	Let $\cbmor{M}{i}{X}{j}{Y}$ and $\cbmor{N}{k}{Y}{l}{Z}$ be \cbhomcob{}s.
	Then
	\[
	\bbFG\left(\cbmortikz{N}{k}{Y}{l}{Z}\right)\bbFG\left(\cbmortikz{M}{i}{X}{j}{Y}\right)=\bbFG\left(\cbmortikz{M}{i}{X}{j}{Y}\sbullet\cbmortikz{N}{k}{Y}{l}{Z}\right)
	\]
	where concatenation denotes composition of linear maps, or equivalently matrix multiplication.
}
\begin{proof}
	We have
	\begin{align*}
	\bbFG (M \sqcup_Y N, X_0, Z_0) 
	&= |G|^{-(|M_0\sqcup_{Y_0} N_0| - |X_0|)}  \bFG (M \sqcup_{Y} N, M_0 \sqcup_{Y_0} N_0) \\
	&= |G|^{-(|M_0|+|N_0| - |Y_0| - |X_0|)}  \bFG (M , M_0 )   \bFG (N, N_0)\\
	&= |G|^{-(|M_0|-|X_0|)} \bFG (M , M_0 ) |G|^{-(|N_0| - |Y_0|)} \bbFG (N, N_0)  \\
	&= \bbFG (M , X_0, Y_0) \bbFG (N, Y_0, Z_0)
	& &\qedhere
	\end{align*}
using that, by Lemma~\ref{le:tqft_bFGmagmor}, $\bFG$ preserves composition. 
\end{proof}

\subsubsection*{Basepoint independent map from $ Ob(\HomCob)$ to  $Ob(\VectC)$}

We focus here on the sets of basepoints in $Ob(\bHomCob)$.
Here we will move to using Greek subscripts to indicate varying choices of subsets, so for a space $X$, objects in $\bHomCob$ are pairs of the form $(X,X_\alpha)$.
We will eventually show that we can choose just one subset to calculate the image and will then switch back to the original notation.

We proceed by constructing, for a space $X\in Ob(\HomCob)$, a colimit in $\VectC$ over a diagram with vertices $\bFG(X,X_\alpha)$ for all possible choices of finite, representative $X_\alpha\subset X$.

\prop{Let $X$ be a \homfin{} space.
	There is a subcategory of $\Set$,
	\[
	\FinSetX=(Ob(\FinSetX),\FinSetX(-,-),\circ,\id)
	\]
	where $Ob(\FinSetX)$ contains all $X_\alpha$ such that $(X,X_\alpha)\in \chip$ (i.e. $X_\alpha$ finite, representative) and 
	$\FinSetX(X_\alpha,X_\beta)$ contains the inclusion $\iota_{\alpha\beta}\colon X_\alpha \to X_\beta$ if $X_\alpha\subseteq X_\beta$, otherwise $\FinSetX(X_\alpha,X_\beta)=\emptyset$.
	}
\begin{proof}
	Note we have $\iota_{\alpha \alpha }=1_{X_{\alpha}}\colon X_\alpha \to X_\alpha$ in $\FinSetX(X_\alpha,X_\alpha)$.
	Suppose $X_\alpha,X_\beta,X_\gamma\in \FinSetX$, with $X_\alpha\subseteq X_\beta\subseteq X_\gamma$, then the composition of $ \iota_{\alpha\beta}\colon X_\alpha\to X_\beta$ and $ \iota_{\beta\gamma}\colon X_\beta\to X_\gamma$ is precisely the unique morphism in $\FinSetX(X_\alpha,X_\gamma)$. (This is the only case for which we have composable morphisms.)
\end{proof}

By abuse of notation, for an inclusion $\iota_{\alpha\beta}\colon X_\alpha\to X_\beta$ we will also write $\iota_{\alpha\beta}\colon\pi(X,X_\alpha)\to \pi(X,X_\beta)$ for the inclusion of groupoids.
\lemm{\label{le:functor_V}
There is a contravariant functor \[{\mathcal{V}_X:\FinSetX \to \mathbf{Set}}\] constructed as follows.
Let $X_\alpha, X_\beta \in Ob(\FinSetX)$ with $X_\beta \subseteq X_\alpha$.
Let $\mathcal{V}_X(X_\alpha)= \Grpd(\pi(X,X_\alpha),G)$.
For any $v_\alpha\in \mathcal{V}_X(X_\alpha)$ we have a commuting triangle
\[
\begin{tikzcd}[ampersand replacement =\&]
\pi(X,X_\beta) \ar[r,"\iota_{\beta\alpha}"] \ar[dr,"v_\alpha \circ \iota_{\beta\alpha}"',dashed] \& \pi(X,X_\alpha) \ar[d,"v_\alpha"] \\ 	
\& G.
\end{tikzcd}
\]
Now let $\mathcal{V}_X(\iota_{\beta\alpha}\colon X_\beta \to X_\alpha) = \phi_{\alpha\beta} $ where $\phi_{\alpha\beta}:\mathcal{V}_X(X_\alpha) \to \mathcal{V}_X(X_\beta) $, $v_\alpha \mapsto v_\alpha \circ \iota_{\alpha\beta}$.
}
\begin{proof}
	We have $\VV(1_{X_\alpha}\colon X_\alpha\to X_\alpha)=1_{\VV(X_\alpha)}\colon \VV(X_\alpha)\to \VV(X_\alpha)$.
		Suppose $X_\alpha,X_\beta,X_\gamma\in \FinSetX$, with $X_\gamma\subseteq X_\beta\subseteq X_\alpha$, then $\phi_{\beta\gamma}\circ \phi_{\alpha\beta} =\phi_{\alpha\gamma}$ since
		$(v_{\alpha}\circ \iota_{\beta\alpha})\circ \iota_{\gamma\beta}=v_\alpha \circ (\iota_{\alpha\gamma})$.
\end{proof}
\begin{lemma}\label{le:tqft_VVepi}
	For any space $X$ and $X_\beta , X_\alpha \in Ob(\FinSetX)$ with $X_\beta\subseteq X_\alpha$, $\phi_{\alpha\beta}$ is a surjection.
\end{lemma}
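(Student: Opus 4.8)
The plan is to show that every groupoid map $v_\beta \in \mathcal{V}_X(X_\beta) = \Grpd(\pi(X,X_\beta),G)$ arises as the restriction $v_\alpha \circ \iota_{\alpha\beta}$ of some $v_\alpha \in \mathcal{V}_X(X_\alpha)$, which is precisely the assertion $\phi_{\alpha\beta}(v_\alpha) = v_\beta$. Since $X_\alpha$ and $X_\beta$ are both finite with $X_\beta \subseteq X_\alpha$, I would adjoin the finitely many points of $X_\alpha \setminus X_\beta$ one at a time, extending the groupoid map at each stage using the one-basepoint extension results already established.

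First I would enumerate $X_\alpha \setminus X_\beta = \{y_1, \dots, y_n\}$ and set $S_0 = X_\beta$ and $S_k = X_\beta \cup \{y_1, \dots, y_k\}$, so that $S_n = X_\alpha$ and each $S_k$ is finite and representative in $X$ (representativeness is inherited, since $X_\beta \subseteq S_k$ is already representative). Starting from $v_\beta$ on $\pi(X,S_0)$, at the $k$-th stage I have a map on $\pi(X,S_{k-1})$ and wish to extend it across the single new basepoint $y_k$.

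The extension across one basepoint is exactly the content of Lemma~\ref{le:add_bps} (equivalently Lemma~\ref{le:extension}): because $S_{k-1}$ is representative and $y_k \in X$, there is a path $\gamma_k$ from some point of $S_{k-1}$ to $y_k$, and $\Theta_{\gamma_k}$ is a bijection $\Grpd(\pi(X,S_{k-1}),\GG_G) \times G \to \Grpd(\pi(X,S_{k-1}\cup\{y_k\}),\GG_G)$ whose output restricts, along the inclusion, to the first coordinate of the input. Choosing the element $e_G \in G$ (any choice would do) and applying $\Theta_{\gamma_k}$ produces a map on $\pi(X,S_k)$ whose restriction to $\pi(X,S_{k-1})$ is the previous map. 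Iterating up to $k = n$ yields a map $v_\alpha$ on $\pi(X,X_\alpha)$, and by construction its restriction to $\pi(X,X_\beta)$ is $v_\beta$, giving $\phi_{\alpha\beta}(v_\alpha) = v_\beta$.

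I expect no serious obstacle here. The only point requiring care is the existence of a suitable path $\gamma_k$ at each step, and this follows at once from the fact that $X_\beta$, and hence every intermediate $S_{k-1}$, is representative in $X$, so $y_k$ lies in the path component of some already-chosen basepoint. The compatibility of the successive restrictions—which guarantees that the final restriction of $v_\alpha$ recovers $v_\beta$ exactly, rather than merely some extension agreeing with it on objects—is built into the defining property of each $\Theta_{\gamma_k}$, so the composite of all the restriction maps returns $v_\beta$.
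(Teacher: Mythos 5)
Your proof is correct and takes essentially the same route as the paper: the paper's proof simply invokes Lemma~\ref{le:extension} to extend $v_\beta$ to some $v_\alpha$ agreeing with it on $\iota_{\beta\alpha}(\pi(X,X_\beta))$, implicitly iterating over the finitely many points of $X_\alpha\setminus X_\beta$. You have merely made that finite induction explicit, together with the (correct) observation that representativeness of the intermediate basepoint sets supplies the required paths $\gamma_k$.
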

\begin{proof}
	By Lemma~\ref{le:extension} for any $v_\beta \in \mathcal{V}_X(X_\beta)$ we can extend to some $v_\alpha \in \mathcal{V}_X(X_\alpha)$ which is equal to $v_\beta$ on the image $\iota_{\beta\alpha}(\pi(X,X_\beta))$ in $\pi(X,X_\alpha)$.
\end{proof}

\noindent
The colimit over $\mathcal{V}_X$ consists of a family of commuting triangles diagrams of the form
\[
\begin{tikzcd}[ampersand replacement =\&, column sep=tiny]
\mathcal{V}_X(X_\alpha) \ar[rd,"\phi_\alpha"'] \ar[rr,"\phi_{\alpha\beta}"] \& \& \mathcal{V}_X(X_\beta)\ar[ld,"\phi_\beta"] \\
\& \colim (\mathcal{V}_X) \&
\end{tikzcd}
\]
for each pair $X_\beta \subseteq X_\alpha$.
By abuse of notation we will use $v_\alpha$ for both $v_\alpha \in \mathcal{V}_X(X_\alpha)$ and its image in $ \sqcup_{X_\alpha}\mathcal{V}_X(X_\alpha)$. 
Hence we have 
\[
\colim(\mathcal{V}_X)={}^{\sqcup_{X_\alpha}\mathcal{V}_X(X_\alpha)}/_ { \sim}
\]
where $\sim$ is the reflexive, symmetric and transitive closure of the relation $v_\alpha\sim v_\beta$ if $\phi_{\alpha\beta}(v_\alpha) = v_\beta$. See Section~\ref{sec:colim} for more on colimits in $\Set$.
We use $[v_\alpha]$ to denote the equivalence class of $v_\alpha$ in $\colim(\mathcal{V}_X)$.
Hence we have $\phi_\alpha\colon \VV_X(X_\alpha)\to \colim(\VV_X)$, $v_\alpha\mapsto [v_\alpha]$. 

Notice that this relation is certainly not itself an equivalence.
For example for any $X_\beta\subset X_\alpha$, and $v_\beta=v_\alpha \circ \iota_{\beta\alpha}\colon \pi(X,X_\beta) \to G$, then the relation says $v_\beta=\phi_{\alpha\beta}(v_\alpha)\sim v_\alpha$ but not $v_\alpha \sim v_\beta$ as there is no map $\phi_{\beta \alpha}$.

\begin{lemma}\label{le:surjective}
	Let $\VV_X\colon \FinSetX \to \Set$ be as in Lemma~\ref{le:functor_V}, then we have that all maps $\phi_\alpha\colon \VV_X(X_\alpha)\to \colim(\VV_X)$ are surjections.
\end{lemma}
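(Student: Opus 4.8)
The plan is to show that every element of $\colim(\VV_X)$ admits a representative lying in $\VV_X(X_\alpha)$; since $\phi_\alpha$ sends $v_\alpha\mapsto [v_\alpha]$, this is exactly surjectivity of $\phi_\alpha$. The essential structural fact I would exploit is that $\FinSetX$ is upward directed under union: if $X_\alpha$ and $X_\gamma$ are finite representative subsets of $X$, then so is $X_\alpha\cup X_\gamma$, which is therefore again an object of $\FinSetX$ admitting inclusions from both $X_\alpha$ and $X_\gamma$. Finiteness is clear, and representativity is immediate because a union of sets each meeting every path-component again meets every path-component.

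First I would take an arbitrary class in $\colim(\VV_X)$ and pick a representative $w\in \VV_X(X_\gamma)$ for some $X_\gamma\in Ob(\FinSetX)$. Set $X_\delta=X_\alpha\cup X_\gamma$, so that $X_\alpha\subseteq X_\delta$ and $X_\gamma\subseteq X_\delta$, giving restriction maps $\phi_{\delta\gamma}\colon \VV_X(X_\delta)\to\VV_X(X_\gamma)$ and $\phi_{\delta\alpha}\colon \VV_X(X_\delta)\to\VV_X(X_\alpha)$. By Lemma~\ref{le:tqft_VVepi} the map $\phi_{\delta\gamma}$ is surjective, so I may choose $u\in\VV_X(X_\delta)$ with $\phi_{\delta\gamma}(u)=w$. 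Then set $v_\alpha=\phi_{\delta\alpha}(u)\in\VV_X(X_\alpha)$.

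It then remains to observe that $u$ is linked to both $w$ and $v_\alpha$ by the generating relation defining $\sim$: indeed $\phi_{\delta\gamma}(u)=w$ gives $u\sim w$ and $\phi_{\delta\alpha}(u)=v_\alpha$ gives $u\sim v_\alpha$, so by transitivity of the closure $[w]=[u]=[v_\alpha]=\phi_\alpha(v_\alpha)$. As the original class was arbitrary, $\phi_\alpha$ is surjective.

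I expect the only genuine content to be the directedness of $\FinSetX$ together with Lemma~\ref{le:tqft_VVepi}; the mild subtlety --- and the main thing to get right --- is that $\sim$ is the reflexive, symmetric, transitive closure of a relation whose generators point only from a larger basepoint set to a smaller one (there is no map $\phi_{\beta\alpha}$ in the reverse direction), so one genuinely needs the common upper bound $X_\delta$ to bridge $w$ and the desired representative $v_\alpha$; a single restriction on its own would not suffice.
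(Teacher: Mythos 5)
Your proposal is correct and is essentially the paper's own argument: the paper likewise picks a representative $v_\beta\in\VV_X(X_\beta)$, forms the union $X_\gamma=X_\alpha\cup X_\beta$, lifts along the surjection $\phi_{\gamma\beta}$ (Lemma~\ref{le:tqft_VVepi}), and restricts via $\phi_{\gamma\alpha}$ to produce a representative in $\VV_X(X_\alpha)$, exactly as you do with your $X_\delta$, $u$, and $v_\alpha$. Your added remarks --- that the union of finite representative subsets is again finite and representative, and that the one-directional generating relation forces the use of a common upper bound --- are correct and make explicit details the paper leaves implicit.
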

\begin{proof}
	Fix some $\mathcal{V}_X(X_\alpha)$.
	We must show that every equivalence class $[v]\in \colim (\mathcal{V}_X)$ has a representative in $\mathcal{V}_X(X_\alpha)$.
	Certainly $[v]$ has a representative $v_\beta$ in some $\mathcal{V}_X(X_\beta)$. 
	Let $X_\gamma = X_\alpha \cup X_\beta$ and choose $v_\gamma \in \mathcal{V}_X(X_\gamma)$ with $\phi_{\gamma \beta}(v_\gamma)=v_\beta$, which is always possible since $\phi_{\gamma \beta}$ is a surjection by Lemma~\ref{le:tqft_VVepi}.
	Now $v_\alpha=\phi_{\gamma \alpha}(v_\gamma)$ is a representative for $[v]$ since $v_\gamma \sim v_\beta$ and $v_\gamma \sim v_\alpha$.
\end{proof}

\begin{lemma}\label{le:finite_colim}
	Let $\VV_X\colon \FinSetX \to \Set$ be as in Lemma~\ref{le:functor_V}. The set $\colim (\mathcal{V}_X)$ is finite.
\end{lemma}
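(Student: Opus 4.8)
The plan is to produce a surjection onto $\colim(\mathcal{V}_X)$ from a single finite set, from which finiteness follows immediately. The natural candidate is $\mathcal{V}_X(X_\alpha) = \Grpd(\pi(X,X_\alpha),G)$ for any fixed choice of finite representative $X_\alpha$: by Lemma~\ref{le:surjective} the map $\phi_\alpha\colon \mathcal{V}_X(X_\alpha)\to \colim(\mathcal{V}_X)$ is surjective, so it suffices to show $\mathcal{V}_X(X_\alpha)$ is finite for some (hence this) choice of $X_\alpha$.

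First I would fix a finite representative subset $X_\alpha \subseteq X$ with one point in each path component; this exists and is finite because $X$ is \homfin{} and thus has finitely many path components. The key input is that $X$ is \homfin{}, so $\pi(X,X_\alpha)$ is a finitely generated groupoid (Definition~\ref{de:homfin}). This means there is a finite set of generating morphisms $\{s_1,\dots,s_K\}$ in $\pi(X,X_\alpha)$. A groupoid morphism $f\colon \pi(X,X_\alpha)\to G$ is determined on objects (there is no choice, since $G=\GG_G$ has a single object) and on morphisms by functoriality; since every morphism is a composite of the generators and their inverses, $f$ is completely determined by the images $f(s_1),\dots,f(s_K)\in G$ of the generators. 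Hence there is an injection
\[
\Grpd(\pi(X,X_\alpha),G)\hookrightarrow G^{K},\quad f\mapsto (f(s_1),\dots,f(s_K)),
\]
and since $G$ is finite, $G^K$ is finite, so $\mathcal{V}_X(X_\alpha)=\Grpd(\pi(X,X_\alpha),G)$ is finite.

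Finally, combining this with Lemma~\ref{le:surjective}, the surjection $\phi_\alpha\colon \mathcal{V}_X(X_\alpha)\to \colim(\mathcal{V}_X)$ has finite domain, so its image $\colim(\mathcal{V}_X)$ is finite. I do not anticipate a genuine obstacle here; the only point requiring slight care is making sure the counting argument for determining a groupoid map by its values on generators is stated correctly (one must note that not every tuple in $G^K$ arises, because the generators may satisfy relations, but this only makes the set smaller, so the injection into $G^K$ suffices for finiteness). The result is then essentially immediate from finite generation plus finiteness of $G$.
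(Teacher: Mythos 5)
Your proof is correct and takes essentially the same approach as the paper: finite generation of $\pi(X,X_\alpha)$ together with finiteness of $G$ gives finiteness of $\mathcal{V}_X(X_\alpha)$, and surjectivity of $\phi_\alpha$ from Lemma~\ref{le:surjective} then transfers this to $\colim(\mathcal{V}_X)$. The only difference is that you make explicit the injection $f\mapsto (f(s_1),\dots,f(s_K))$ into $G^K$, which the paper leaves implicit.
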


\begin{proof}
	The groupoid $\pi(X,X_\alpha)$ is finitely generated since $X$ is a \homfin{} space. Also $G$ is finite, 
	hence the $\mathcal{V}_X(X_\alpha)$ is finite for all $X_\alpha$, so with Lemma~\ref{le:surjective} we have the result.
\end{proof}

It is well known that there is an adjunction between the categories $\VectC$ and $\Set$ \cite{maclane}.

	\exa{\label{ex:functor_VectSet}
	There is a forgetful functor $\mathrm{U_{V_\mathbbm{k}}}\colon \Vectk \to \Set$ which sends a vector space $V$ to the set of all vectors and a linear map to the corresponding set map determined by its action on vectors.}
\lemm{\label{le:left_adjoint_Vectforget}
	Consider the functor $\mathrm{U_{V_{\mathbbm{k}}}}\colon\Vectk\to \Set$ as in Example~\ref{ex:functor_VectSet}.\\
	(I) There is a functor $\mathrm{F_{V_{\mathbbm{k}}}}\colon \Set \to \Vectk$ which sends a set $X$ to the free vector space over $X$ and a function to the unique linear map between the corresponding free vector spaces which acts in the same way on basis elements.\\
	(II) The functor $\mathrm{F_{V_{\mathbbm{k}}}}$ is left adjoint to $\mathrm{U_{{V_\mathbbm{k}}}}\colon\Vectk\to \Set$.
}
\begin{proof}
	Straightforward.
\end{proof}
 
The previous lemma implies $F_{V_{\mathbb{C}}}$ preserves colimits (see Lemma~\ref{le:left_adjoint_Vectforget}).
This means we can equivalently define a map on objects in $\HomCob$ terms of the vector space with basis $\colim(\VV_X)$, or as the colimit of the image $F_{V_{\mathbb{C}}}\circ \VV_X$ in $\VectC$. In general the colimit in $\Set$ will be easier to work with.

\defn{ \label{de:FG_obj}
	For $X\in\chi $ define
\[
\FG(X)= \colim(\mathcal{V}_X')=\C (\colim(\mathcal{V}_X))
\]
where $\mathcal{V}_X' = F_{V_{\mathbb{C}}}\circ \mathcal{V}_X $ and  $\VV_X\colon \FinSetX \to \Set$ is as in Lemma~\ref{le:functor_V}.}

\subsubsection*{Magmoid morphism $\FG\colon\cHomCob\to \vVectC$}

The linear map $\bbFG$ assigned to a \bhomcob{} still depends on the basepoints in objects.
Here we define a map $\FG$ which assigns to a \bhomcob{}, $\cbmor{M}{i}{X}{j}{Y}$, a linear map $ \FG(X)\to \FG(Y)$. We will then show that this linear map is also be independent of $X_0$ and $Y_0$.

Let $(X,X_\alpha),(X,X_\beta)\in \chip$.
In the previous section we constructed
$\VV_X\colon \FinSetX \to \Set$ (Lemma~\ref{le:functor_V}) which sends inclusions $\iota_{\beta \alpha}\colon X_\beta \to X_\alpha$ to maps 
$\phi_{\alpha\beta}\colon \mathcal{V} \left(X_\alpha\right)\to\mathcal{V} \left(X_\beta\right)$.
Notice that $F_{V_{\mathbb{C}}}\circ\mathcal{V} (X_\alpha)=\mathcal{V}'(X_\alpha)=\bFG(X,X_\alpha)$,
so we have a map $F_{V_{\mathbb{C}}}(\phi_{\alpha\beta})\colon \bFG(X,X_\alpha)\to \bFG(X,X_\beta)$.
By abuse of notation we will also use $\phi_{\alpha\beta}$ to refer to the maps $F_{V_{\mathbb{C}}}(\phi_{\alpha\beta})$.
In this section we will need to vary the input space in the construction of $\VV_X$, thus we add a superscript denoting the space, so we have maps
\[\phi^X_{\alpha\beta}\colon\bFG\left(X,X_\alpha\right)\to\bFG\left(X,X_\beta\right).
\]

\begin{lemma}\label{le:bbFG_cocone}
	Let $\cmor{M}{i}{X}{j}{Y}$ be a \chomcob{}.
	Then for any pair $X_\alpha, X_\beta\subseteq X$ with $X_\beta \subseteq X_\alpha$, and \cbhomcob{}s $\scbmor{M}{M_{\alpha\alpha^{\lprime}}}{i}{X}{X_\alpha}{j}{Y}{Y_{\alpha^{\lprime}}}$ 
	and $\scbmor{M}{M_{\beta\alpha^{\lprime}}}{i}{X}{X_\beta}{j}{Y}{Y_{\alpha^{\lprime}}},$
	the following diagram commutes
	\[
	\begin{tikzcd}[ampersand replacement =\&,column sep=0, row sep=large]
	\bFG\left(X,X_\alpha\right)\ar[dr, "{\bbFG(M,X_\alpha,Y_{\alpha^{\lprime}})}"'] \ar[rr, "\phi_{\alpha\beta}^{X}"] \& \& \bFG\left(X,X_\beta\right) \ar[dl, "{\bbFG(M,X_\beta,Y_{\alpha^{\lprime}})}"]\\
	\& \bFG(Y,Y_{\alpha^{\lprime}}) \& 
	\end{tikzcd}.
	\]
	 That is, the maps $\bbFG$ form a cocone over the vector spaces $\bFG(X,X_\alpha)$ and the maps $\phi^{X}_{\alpha\beta}$.
	Hence 
	there is a unique map \[d^M_{\alpha^{\lprime}}\colon \FG(X) \to \bFG(Y,Y_{\alpha^{\lprime}}).\]
\end{lemma}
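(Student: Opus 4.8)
The plan is to verify the cocone condition directly and then invoke the universal property of the colimit $\FG(X)=\colim(\mathcal{V}'_X)$ to extract the map $d^M_{\alpha^{\lprime}}$. The whole statement reduces to checking the single triangle commutes, namely that for all inclusions $\iota_{\beta\alpha}\colon X_\beta\to X_\alpha$ we have
\[
\bbFG(M,X_\beta,Y_{\alpha^{\lprime}})\circ \phi^X_{\alpha\beta}=\bbFG(M,X_\alpha,Y_{\alpha^{\lprime}}).
\]
Since these are all linear maps into $\bFG(Y,Y_{\alpha^{\lprime}})$, which has an explicit basis (maps $g\colon\pi(Y,Y_{\alpha^{\lprime}})\to G$), I would check equality on basis elements of $\bFG(X,X_\alpha)$ by comparing the relevant matrix elements in the Dirac notation of \eqref{eq:dirac_bFG}. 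Concretely, I would take a basis element $v_\alpha\in\bFG(X,X_\alpha)$, recall that $\phi^X_{\alpha\beta}(v_\alpha)=v_\alpha\circ\iota_{\alpha\beta}$ is the restriction of $v_\alpha$ to $\pi(X,X_\beta)$ (Lemma~\ref{le:functor_V}), and compute both sides of the matrix element $\langle g\mid - \mid v_\alpha\rangle$.

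First I would establish the key counting identity at the level of the unnormalised $\bFG$. The matrix element $\langle g\mid \bFG(M,M_{\beta\alpha^{\lprime}})\mid v_\alpha|_{X_\beta}\rangle$ counts maps $h\colon\pi(M,M_{\beta\alpha^{\lprime}})\to G$ restricting to $v_\alpha|_{X_\beta}$ on the image of $X_\beta$ and to $g$ on the image of $Y_{\alpha^{\lprime}}$, while $\langle g\mid \bFG(M,M_{\alpha\alpha^{\lprime}})\mid v_\alpha\rangle$ counts maps $h'$ restricting to the full $v_\alpha$ on the image of $X_\alpha$. The extra basepoints in $X_\alpha\setminus X_\beta$ (and the corresponding extra basepoints in $M_{\alpha\alpha^{\lprime}}$ versus $M_{\beta\alpha^{\lprime}}$) contribute a factor of $|G|$ per added point via Lemma~\ref{le:add_bps}, exactly as in the computation preceding Lemma~\ref{le:bbFG_mor_bps}. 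The crucial point is that fixing $v_\alpha$ on the larger set $X_\beta\subsetneq X_\alpha$ pins down the values of the candidate $h$ at the extra image-points, so the constraints on $h$ are strictly stronger; the bijection $\Theta_\gamma$ of Lemma~\ref{le:add_bps} shows the two counts differ precisely by the normalisation factor carried in $\bbFG$. After inserting the normalisations $|G|^{-(|M_0|-|X_0|)}$, the $|G|$-powers cancel and the two matrix elements agree.

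The main obstacle is bookkeeping the basepoint cardinalities carefully: the source normalisation factor in $\bbFG(M,X_\alpha,Y_{\alpha^{\lprime}})$ is $|G|^{-(|M_{\alpha\alpha^{\lprime}}|-|X_\alpha|)}$ while that in $\bbFG(M,X_\beta,Y_{\alpha^{\lprime}})$ is $|G|^{-(|M_{\beta\alpha^{\lprime}}|-|X_\beta|)}$, and one must confirm that the difference $|M_{\alpha\alpha^{\lprime}}|-|M_{\beta\alpha^{\lprime}}|$ compensates exactly for both $|X_\alpha|-|X_\beta|$ and the $|G|$-factors coming from the extra free choices counted by $\bFG$. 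This is really the content already distilled into Lemma~\ref{le:bbFG_mor_bps}, so the cleanest route is: choose the based cospans so that $M_{\alpha\alpha^{\lprime}}\supseteq M_{\beta\alpha^{\lprime}}$ with $M_{\alpha\alpha^{\lprime}}\setminus M_{\beta\alpha^{\lprime}}=i(X_\alpha\setminus X_\beta)$, apply Lemma~\ref{le:bbFG_mor_bps} to reduce everything to a common $M_0$, and then the triangle becomes the statement that counting extensions of $v_\alpha|_{X_\beta}$ to $v_\alpha$ is compatible with restriction. Once the triangle is verified, the $\mathcal{V}'_X$-diagram together with its maps $\phi^X_{\alpha\beta}$ and the cocone $\{\bbFG(M,X_\alpha,Y_{\alpha^{\lprime}})\}_\alpha$ satisfies the hypotheses of the colimit's universal property (Section~\ref{sec:colim}), which yields the unique factoring map $d^M_{\alpha^{\lprime}}\colon\FG(X)=\colim(\mathcal{V}'_X)\to\bFG(Y,Y_{\alpha^{\lprime}})$, completing the proof.
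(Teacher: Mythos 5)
Your overall architecture matches the paper's proof: compare matrix elements on basis vectors, choose the based cospans nested so that $M_{\alpha\alpha'}=M_{\beta\alpha'}\cup i(X_\alpha\setminus X_\beta)$, use the extension/restriction machinery of Lemmas~\ref{le:extension} and \ref{le:add_bps}, then obtain $d^M_{\alpha'}$ from the universal property of the colimit. However, the central counting step as you state it is wrong, and in fact contradicts the correct observation you make in the same paragraph. You claim the extra basepoints in $X_\alpha\setminus X_\beta$ contribute a factor of $|G|$ per added point, ``exactly as in the computation preceding Lemma~\ref{le:bbFG_mor_bps}'', and that these $|G|$-powers then cancel against the normalisation. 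The factor of $|G|$ in that earlier computation arises only for basepoints added to $M_0$ away from $i(X)\cup j(Y)$, where the extension of $h$ is a free choice of an element of $G$. Here the added basepoints lie in $i(X)$, and---as you yourself note---the boundary condition $h\circ \pi(i)=f$ pins down the value of $h$ on any path into the new basepoint: the class of such a path is a morphism of $\pi(X,X_\alpha)$, and $f$ prescribes its image. Consequently restriction is a \emph{bijection} between the two constrained sets of maps $h$, so the unnormalised matrix elements are already equal,
\[
\left\langle g \,\middle|\, \bFG (M,M_{\alpha\alpha'}) \,\middle|\, f \right\rangle
=\left\langle g \,\middle|\, \bFG(M,M_{\beta\alpha'}) \,\middle|\, \phi_{\alpha\beta}^{X}(f) \right\rangle ,
\]
and moreover with the nested choice $|M_{\alpha\alpha'}|-|X_\alpha|=|M_{\beta\alpha'}|-|X_\beta|$, so the two normalisation exponents coincide as well: nothing cancels, the two sides agree term by term. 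Had each added point genuinely contributed a factor of $|G|$ to the count, the triangle would fail to commute by $|G|^{|X_\alpha\setminus X_\beta|}$, precisely because the normalisations are equal.

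Two smaller points. First, ``apply Lemma~\ref{le:bbFG_mor_bps} to reduce everything to a common $M_0$'' is not available: condition (iii) of Definition~\ref{de:cbcc} forces $M_{\alpha\alpha'}\cap i(X)=i(X_\alpha)$ and $M_{\beta\alpha'}\cap i(X)=i(X_\beta)$, so the two interior basepoint sets can never be equal when $X_\beta\subsetneq X_\alpha$; what Lemma~\ref{le:bbFG_mor_bps} licenses is only a convenient choice of each $M_0$ separately, which is how the paper uses it. Second, the argument should be run first for a single added point and then iterated over the finite set $X_\alpha\setminus X_\beta$, as the paper does. With these repairs your proposal coincides with the paper's proof; the concluding appeal to the universal property of $\colim(\mathcal{V}_X')$ is fine as stated.
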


\begin{proof}
	First suppose $X_\alpha=X_\beta \cup \{x\}$ for some $x\notin X_\beta$.
	Let $f \in \bFG(X,X_\alpha)$ and $ g\in \bFG(Y,Y_{\alpha^{\lprime}})$ be basis elements. 
	We have 
	\[
	\left\langle g\middle| \bbFG(M,X_\alpha,Y_{\alpha^{\lprime}})\middle| f \right\rangle = 
	|G|^{-(|M_{\alpha\alpha^{\lprime}}|-|X_\alpha|)}
	\left\langle g \,\middle|\, \bFG (M,M_{\alpha\alpha^{\lprime}}) \,\middle|\, f \right\rangle
	\]
	and
	\[
	\left\langle g\middle| \bbFG(M,X_\beta,Y_{\alpha^{\lprime}})\middle|\phi_{\alpha\beta}^X(f)\right\rangle 
	= |G|^{-(|M_{\beta\alpha^{\lprime}}| - |X_\beta|)} \left\langle g \,\middle|\, \bFG(M,M_{\beta\alpha^{\lprime}}) \,\middle|\, \phi_{\alpha\beta}^{X}(f) \right\rangle
	\]
	for appropriate choices $M_{\alpha\alpha^{\lprime}}$ and $M_{\beta\alpha^{\lprime}}$.
	We may choose $M_{\alpha\alpha^{\lprime}}=M_{\beta\alpha^{\lprime}} \cup \{x\}$.
	There is a map from $\left\langle g \,\middle|\, \bFG (M,M_{\alpha\alpha^{\lprime}}) \,\middle|\, f \right\rangle$ to $\left\langle g \,\middle|\, \bFG(M,M_{\beta\alpha^{\lprime}}) \,\middle|\, \phi_{\alpha\beta}^{X}(f) \right\rangle$ given by taking the restriction of a map $h\colon \pi(M,M_{\alpha\alpha^{\lprime}}) \to G$ to $h'=h|_{\pi(M,M_{\beta\alpha^{\lprime}})}$.
	Note also that if $h|_{\pi(X,X_\alpha)}=f$ and $h|_{\pi(Y,Y_{\alpha'})}=g$, then also $h'|_{\pi(X,X_\beta)}=\phi_{\alpha\beta}^{X}(f)$ and $h'|_{\pi(Y,Y_{\alpha'})}=g$.

	This map has inverse given by extending any map $\tilde{h}:\pi(M,M_{\beta\alpha^{\lprime}}) \to G$ to map $\tilde{h'}:\pi(M,M_{\alpha\alpha'}) \to G$, which sends a path $\gamma:x \to x'$ in $\pi(X,X_\alpha)$, with $x'\in X_\beta$, to $f(\gamma)$, as in Lemma~\ref{le:extension}. If the map $\tilde{h}$ satisfies $\tilde{h}|_{\pi(X,X_\beta)}=\phi_{\alpha\beta}^{X}(f)$ and $\tilde{h}|_{\pi(Y,Y_{\alpha'})}=g$, then $\tilde{h}'|_{\pi(X,X_\alpha)}=f$ and $\tilde{h}'|_{\pi(Y,Y_{\alpha'})}=g$.
	Hence 
	\[
	\langle g \,|\, \mathcal{F}_{G}^{!} (M,M_{\alpha\alpha^{\lprime}}) \,|\, f \rangle=\langle g \,|\, \mathcal{F}_{G}^{!} (M,M_{\beta\alpha^{\lprime}}) \,|\, \phi_{\alpha\beta}^X(f) \rangle.
	\]
	Also $|M_{\beta\alpha^{\lprime}}| - |X_\beta| = |M_{\alpha\alpha^{\lprime}}+1| - |X_\alpha+1|=|M_{\alpha\alpha^{\lprime}}|-|X_\alpha|$.
	The set $X_\alpha\setminus X_\beta$  is finite, so we can repeat the same process for all $\{x_1,...,x_n\}\in X_\alpha\setminus X_\beta$.
\end{proof}

\rem{\label{rem:Yetter}Notice that, had we defined the normalisation to be $|G|^{-(M_0-\nicefrac{1}{2}(X_0+Y_0))}$, as is Yetter's convention, the triangle in the previous Lemma would not be commutative. The fix is to redefine the maps $\phi_{\alpha\beta}$ in such a way that they no longer send basis elements to basis elements. This complicates the picture slightly. It is straighforward to see that each choice leads to the same image on any cospan of the form $\emptyset \to M\leftarrow \emptyset$.}

\begin{lemma}\label{le:subset_ind}
	Let $\cmor{M}{i}{X}{j}{Y}$ be a \chomcob{}.
	Fix a choice of $Y_{\alpha'}\subseteq Y$ such that $(Y,Y_{\alpha'})\in \chip$. 
	For each pair $X_\alpha, X_\beta\subseteq X$ such that $(X,X_\alpha),(X,X_\beta)\in \chip$ we have the following diagram
	\begin{align}
	\begin{tikzcd}[ampersand replacement =\&,column sep=tiny]\label{cocone}
	\bFG\left(X,X_\alpha\right)\ar[ddr, bend right, "{\bbFG(M,X_\alpha,Y_{\alpha^{\lprime}})}"'] \ar[rr, "\phi_{\alpha\beta}^{X}"] \ar[dr, "\phi_\alpha^X"'] \& \& \bFG\left(X,X_\beta\right) \ar[ddl, bend left, "{\bbFG(M,X_\beta,Y_{\alpha^{\lprime}})}"] \ar[dl, "\phi_\beta^X"] \\
	\& \FG(X) \ar[d,"d^M_{\alpha^{\lprime}}"'] \& \\
	\& \bFG(Y,Y_{\alpha^{\lprime}}) \ar[dr,bend right, "\phi_{\alpha^{\lprime}}^Y"']  \& \\
	\& \&\FG(Y).
	\end{tikzcd}
	\end{align}
	The assignment 
	\[
	\FG\left(
	\cmortikz{M}{i}{X}{j}{Y}\right)=\phi_{\alpha^{\lprime}}^{Y}d^M_{\alpha^{\lprime}}
	\]
	does not depend on the choice of $Y_{\alpha'}$. \\
	As above, where we have given a cospan we will use the notation $\FG(M)$ for $\FG(\cmor{M}{i}{X}{j}{Y})$.
\end{lemma}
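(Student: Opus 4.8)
The plan is to observe first that the diagram \eqref{cocone} commutes by construction, so that the only genuine content of the lemma is the asserted independence of $\phi_{\alpha'}^Y d^M_{\alpha'}$ on the choice of $Y_{\alpha'}$. Indeed, the upper triangle through $\FG(X)$ is precisely the colimit cocone defining $\FG(X)=\colim(\mathcal{V}_X')$, while $d^M_{\alpha'}$ is by construction the unique map induced by the cocone $\{\bbFG(M,X_\alpha,Y_{\alpha'})\}$ of Lemma~\ref{le:bbFG_cocone}; hence $d^M_{\alpha'}\circ\phi_\alpha^X=\bbFG(M,X_\alpha,Y_{\alpha'})$ for every $X_\alpha$, and the lower triangle is simply the definition of the assignment.

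For the independence I would reduce to a comparison between nested basepoint sets. Given two arbitrary choices $Y_{\alpha'},Y_{\gamma'}$, their union $Y_{\alpha'}\cup Y_{\gamma'}$ is again finite and representative, hence lies in $\chip$; so it suffices to prove $\phi_{\beta'}^Y d^M_{\beta'}=\phi_{\alpha'}^Y d^M_{\alpha'}$ whenever $Y_{\beta'}\subseteq Y_{\alpha'}$, and by adding basepoints one at a time it suffices to treat the case $Y_{\alpha'}=Y_{\beta'}\cup\{y\}$ with $y\in Y\setminus Y_{\beta'}$. Writing $\phi_{\alpha'\beta'}^Y\colon\bFG(Y,Y_{\alpha'})\to\bFG(Y,Y_{\beta'})$ for the restriction map $\mathcal{V}_Y(\iota_{\beta'\alpha'})$ lifted to $\VectC$, the colimit cocone for $\FG(Y)$ gives $\phi_{\beta'}^Y\circ\phi_{\alpha'\beta'}^Y=\phi_{\alpha'}^Y$ by contravariance of $\mathcal{V}_Y$ (Lemma~\ref{le:functor_V}); so the claim follows once I establish the matrix identity $\phi_{\alpha'\beta'}^Y\,\bbFG(M,X_\alpha,Y_{\alpha'})=\bbFG(M,X_\alpha,Y_{\beta'})$ for every $X_\alpha$, and then upgrade it to an equality of maps out of $\FG(X)$.

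The heart of the argument is that matrix identity, which I would verify on basis matrix elements. Choosing the $M$-basepoint set $M_{\alpha\alpha'}=M_{\alpha\beta'}\cup\{j(y)\}$ (a legitimate based cospan, since $j$ is injective with image disjoint from $i(X)$), the entry $\langle g_{\beta'}|\,\phi_{\alpha'\beta'}^Y\,\bbFG(M,X_\alpha,Y_{\alpha'})\,|f\rangle$ is the sum over all $g_{\alpha'}$ restricting to $g_{\beta'}$ of $\langle g_{\alpha'}|\,\bbFG(M,X_\alpha,Y_{\alpha'})\,|f\rangle$; partitioning the maps $h\colon\pi(M,M_{\alpha\alpha'})\to G$ by their restriction to $\pi(Y,Y_{\alpha'})$ turns this sum of counts into the single count of those $h$ with $h|_{\pi(X,X_\alpha)}=f$ and $h|_{\pi(Y,Y_{\beta'})}=g_{\beta'}$. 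Since both conditions only constrain $h$ on morphisms among the points of $M_{\alpha\beta'}$, Lemma~\ref{le:add_bps} shows that each admissible $h_0\colon\pi(M,M_{\alpha\beta'})\to G$ has exactly $|G|$ extensions over the new point $j(y)$, so this count is $|G|$ times the corresponding count for $M_{\alpha\beta'}$; this factor of $|G|$ cancels exactly against the extra $|G|^{-1}$ coming from $|M_{\alpha\alpha'}|=|M_{\alpha\beta'}|+1$ in the normalisation of $\bbFG$, and one reads off $\langle g_{\beta'}|\,\bbFG(M,X_\alpha,Y_{\beta'})\,|f\rangle$. Independence of $\bbFG$ on the $M$-basepoints (Lemma~\ref{le:bbFG_mor_bps}) justifies the choice of $M_{\alpha\alpha'}$. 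Finally, precomposing with $\phi_\alpha^X$ and using $d^M_{\alpha'}\circ\phi_\alpha^X=\bbFG(M,X_\alpha,Y_{\alpha'})$ shows $\phi_{\alpha'\beta'}^Y d^M_{\alpha'}$ and $d^M_{\beta'}$ agree after precomposition with every $\phi_\alpha^X$, hence are equal by the universal property of the colimit $\FG(X)$ (equivalently, since the $\phi_\alpha^X$ are surjective onto a basis by Lemma~\ref{le:surjective}); composing with $\phi_{\beta'}^Y$ then yields $\phi_{\alpha'}^Y d^M_{\alpha'}=\phi_{\beta'}^Y d^M_{\beta'}$.

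I expect the main obstacle to be the careful bookkeeping in the matrix identity: keeping track of exactly which morphisms the conditions $h|_{\pi(X,X_\alpha)}=f$ and $h|_{\pi(Y,Y_{\beta'})}=g_{\beta'}$ constrain, and matching the $|G|$ from counting extensions of $h$ against the $|G|^{-1}$ in the normalisation so that no residual power of $|G|$ survives. By contrast, the reduction to a single added point and the universal-property upgrade should be routine.
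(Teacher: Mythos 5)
Your proof is correct and follows essentially the same route as the paper: you establish the cocone identity $\phi^{Y}_{\alpha'\beta'}\,\bbFG(M,X_\alpha,Y_{\alpha'})=\bbFG(M,X_\alpha,Y_{\beta'})$ one added basepoint at a time, using the extension count of Lemma~\ref{le:add_bps} with the new $M$-basepoint $j(y)$, cancelling the resulting factor of $|G|$ against the normalisation, and then conclude via the universal property of $\colim(\VV_X)$ and the relation $\phi^Y_{\beta'}\phi^Y_{\alpha'\beta'}=\phi^Y_{\alpha'}$, exactly as the paper does. The only differences are organisational and harmless: you partition the maps $h$ by their restriction to $\pi(Y,Y_{\alpha'})$ rather than using the paper's per-preimage equality of matrix entries, and you make explicit the reduction of arbitrary pairs to nested ones via their union, which the paper leaves implicit.
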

\begin{proof}
	We show that the following diagram commutes for any pair $Y_{\alpha^{\lprime}}, Y_{\beta^{\lprime}}$
	\[
	\begin{tikzcd}[ampersand replacement =\&,column sep = 0, row sep =large]
	\& \bFG(X,X_\alpha)\ar[dl, "{\bbFG(M,X_\alpha,Y_{\alpha^{\lprime}})}"'] \ar[dr, "{\bbFG(M,X_\alpha,Y_{\beta^{\lprime}})}"]\& \\
	\bFG\left(Y,Y_{\alpha^{\lprime}}\right) \ar[rr, "\phi_{\alpha^{\lprime}\beta^{\lprime}}^{Y}"'] \& \& \bFG\left(Y,Y_{\beta^{\lprime}}\right) 
	\end{tikzcd}
	\]
	This implies that $\phi_{\alpha^{\lprime}\beta^{\lprime}}^{Y}$ is a map of cocones and, by the universal property of the colimit, that $\phi_{\alpha^{\lprime}\beta^{\lprime}}^{Y}d_{\alpha^{\lprime}}^M=d^M_{\beta^{\lprime}}$
	and hence that
	$\phi^Y_{\alpha^{\lprime}} d_{\alpha^{\lprime}}^{M}=\phi^Y_{\beta^{\lprime}}\phi_{\alpha^{\lprime}\beta^{\lprime}}^{Y}d_{\alpha^{\lprime}}^M=\phi^Y_{\beta^{\lprime}} d^M_{\beta^{\lprime}}$.
	
	Suppose first that $Y_{\alpha^{\lprime}}=Y_{\beta^{\lprime}} \cup \{y\}$ for some $y\notin Y_{\beta^{\lprime}}$ and let $f\in\bFG(X,X_\alpha)$ and $g\in\bFG(Y,Y_{\beta^{\lprime}})$ be a basis elements.
	The map $\phi^{Y}_{\alpha^{\lprime},\beta^{\lprime}}\colon \mathcal{V}_Y(Y_{\alpha^{\lprime}})\to \mathcal{V}_Y(Y_{\beta^{\lprime}}) $ is surjective (by Lemma~\ref{le:tqft_VVepi}), so sends a subset of $\mathcal{V}_Y(Y_{\alpha^{\lprime}})$ to $g\in \mathcal{V}_Y(Y_{\beta^{\lprime}})$.
	Thus the matrix element $\lan f \vert \phi_{\alpha^{\lprime}\beta^{\lprime}}^{Y} \bbFG(M,X_\alpha,Y_{\alpha^{\lprime}})\vert g \ran$ is the sum of the matrix elements in $\bbFG(M,X_\alpha,Y_{\alpha^{\lprime}})$ corresponding to $f$ and each $g'$ in the preimage ${\phi_{\alpha^{\lprime}\beta^{\lprime}}^{Y}}^{-1}(g)$.
	Hence we have 
	\begin{align*}
	\left\langle g \,\middle|\, \phi_{\alpha^{\lprime}\beta^{\lprime}}^{Y} \bbFG (M,X_\alpha,Y_{\alpha^{\lprime}}) \,\middle|\, f \right\rangle
	&=\sum_{g' \in \phi_{\alpha^{\lprime}\beta^{\lprime}} ^{Y-1}(g)} \left\langle g' \,\middle|\, \bbFG (M,X_\alpha,Y_{\alpha^{\lprime}}) \,\middle|\, f \right\rangle\\
	&=|G| ^{-(|M_{\alpha\alpha^{\lprime}}| - |X_\alpha|)} \sum_{g' \in \phi_{\alpha\beta} ^{Y-1}} \left\langle g' \,\middle|\, \bFG (M,M_{\alpha\alpha^{\lprime}}) \,\middle|\, f \right\rangle 
	\end{align*}
	for an appropriate choice of $M_{\alpha\alpha^{\lprime}}$.
	Following the same argument as used in the previous lemma 
	we may choose a subset $M_{\alpha\beta^{\lprime}}$ with $M_{\alpha\alpha^{\lprime}}=M_{\alpha\beta^{\lprime}}\cup\{y\}$
	and then
	\[
	\left\langle g \,\middle|\, \bFG (M,M_{\alpha\beta^{\lprime}}) \,\middle|\, f \right\rangle
	=\left\langle g' \,\middle|\, \bFG(M,M_{\alpha\alpha^{\lprime}}) \,\middle|\, f \right\rangle .
	\]
	For every map $g:\pi(Y,Y_{\beta^{\lprime}}) \to G$, there will be precisely $G$ maps in the preimage under $\phi_{\alpha^{\lprime}\beta^{\lprime}}^Y$, one for each choice of an element of $G$. This can be seen by noting that $\phi_{\alpha^{\lprime}\beta^{\lprime}}^Y$ is the composition of the bijection $\Theta_\gamma^{-1}\colon\Grpd (\pi(X,Y_{\alpha'}),G)  \to \Grpd(\pi(Y,Y_{\beta'}),G)\times G$ in Lemma~\ref{le:add_bps} with the projection to the first coordinate, for some choice of $\gamma\colon y\to y'\in M_{\alpha\beta'}$.
	Hence we have 
	\begin{align*}
	\left\langle g \,\middle|\, \phi_{\alpha^{\lprime}\beta^{\lprime}}^{Y} \bbFG (M,X_\alpha,Y_{\alpha^{\lprime}}) \,\middle|\, f \right\rangle
	&=|G| ^{-(|M_{\alpha\alpha^{\lprime}}| - |X_\alpha|)} |G|	\left\langle g \,\middle|\, \bFG (M,M_{\alpha\alpha^{\lprime}}) \,\middle|\, f \right\rangle \\
	&=|G| ^{-(|M_{\alpha\beta^{\lprime}}| - |X_\alpha|)}	\left\langle g \,\middle|\, \bFG (M,M_{\alpha\beta^{\lprime}}) \,\middle|\, f \right\rangle\\
	&=	\left\langle g \middle| \bbFG(M,X_\alpha,Y_{\beta^{\lprime}})\middle| f\right\rangle.
	\end{align*}
	Now suppose $Y_{\alpha^{\lprime}} = Y_{\beta^{\lprime}} \cup \{y_1,...,y_n\}$, then we similarly acquire one factor of $\vert G\vert$ and one factor $\vert G \vert^{-1}$ for each new point, hence $\phi_{\alpha^{\lprime}\beta^{\lprime}}^{Y} \bbFG (M,X_\alpha,Y_{\alpha^{\lprime}})=\bbFG(M,X_\alpha,Y_{\beta^{\lprime}})$.
\end{proof}
\begin{lemma}\label{le:tqft_FGmagmor}
	We have a magmoid morphism 
	\[
	\FG\colon \cHomCob\to \vVectC
	\]
	where $\FG$ is given in Definition~\ref{de:FG_obj} and Lemma~\ref{le:subset_ind}.
\end{lemma}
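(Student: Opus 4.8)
The plan is to verify the two conditions that make $\FG$ a magmoid morphism: that it carries objects to objects and morphisms to morphisms in a well-defined way, and that it preserves composition. The object assignment $X\mapsto \FG(X)$ is Definition~\ref{de:FG_obj} and visibly produces a $\C$-vector space, while the morphism assignment $\cmor{M}{i}{X}{j}{Y}\mapsto \FG(M)=\phi^Y_{\alpha'}\,d^M_{\alpha'}$ is the one constructed in Lemma~\ref{le:subset_ind}, where it is shown to be independent of the target basepoint set $Y_{\alpha'}$; together with Lemma~\ref{le:bbFG_mor_bps} (independence of the interior set $M_0$) this already makes $\FG$ well-defined on morphisms. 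So the only substantive point left is preservation of composition, and essentially all of the analytic content has been packaged into Lemma~\ref{le:tqft_compbbFG}; the remaining work is assembly via the colimit universal property.

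For composition I would fix concrete homotopy cobordisms $\cmor{M}{i}{X}{j}{Y}$ and $\cmor{N}{k}{Y}{l}{Z}$ and choose compatible finite representative subsets: a fixed middle set $Y_{\alpha'}\subseteq Y$ and a target set $Z_{\alpha''}\subseteq Z$, together with interior sets $M_0,N_0$ chosen so that the two diagrams become genuine based homotopy cobordisms sharing the \emph{same} middle set $Y_{\alpha'}$. The point enabling this is that the middle basepoint set may be held fixed while the interior sets are chosen freely around it — exactly the configuration in which Lemma~\ref{le:tqft_compbbFG} gives, for every source set $X_\alpha\subseteq X$,
\[
\bbFG(N,Y_{\alpha'},Z_{\alpha''})\,\bbFG(M,X_\alpha,Y_{\alpha'})=\bbFG(M\sqcup_Y N,X_\alpha,Z_{\alpha''}).
\]

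Next I would unwind $\FG$ on each factor using the defining cocone property of the mediating maps from Lemma~\ref{le:bbFG_cocone}: since $d^M_{\alpha'}\phi^X_\alpha=\bbFG(M,X_\alpha,Y_{\alpha'})$ one gets $\FG(M)\phi^X_\alpha=\phi^Y_{\alpha'}\bbFG(M,X_\alpha,Y_{\alpha'})$, and likewise $\FG(N)\phi^Y_{\alpha'}=\phi^Z_{\alpha''}\bbFG(N,Y_{\alpha'},Z_{\alpha''})$. Composing and substituting the displayed identity yields
\[
\FG(N)\,\FG(M)\,\phi^X_\alpha=\phi^Z_{\alpha''}\,\bbFG(M\sqcup_Y N,X_\alpha,Z_{\alpha''})=\FG(M\sqcup_Y N)\,\phi^X_\alpha,
\]
where the last equality is the cocone property applied to the composite cospan. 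Because $\phi^X_\alpha\colon \bFG(X,X_\alpha)\to\FG(X)$ is surjective (Lemma~\ref{le:surjective}), two linear maps out of $\FG(X)$ that agree after precomposition with $\phi^X_\alpha$ must coincide; hence $\FG(N)\,\FG(M)=\FG(M\sqcup_Y N)$, which is precisely preservation of composition in function-order notation.

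I expect the main difficulty to be bookkeeping rather than a genuine obstacle: one must be careful that the normalised count $\bbFG$ is insensitive to the auxiliary choices $M_0,N_0$ (Lemma~\ref{le:bbFG_mor_bps}) while the shared set $Y_{\alpha'}$ is kept fixed so the two based cospans remain composable, and that surjectivity of a single $\phi^X_\alpha$ lets one conclude equality of the two induced maps on $\FG(X)$ after testing against only that one colimit leg. Since each of these ingredients is already established, the proof reduces to chaining them together.
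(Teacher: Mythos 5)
Your proof is correct and takes essentially the same route as the paper: well-definedness from Lemmas~\ref{le:bbFG_mor_bps} and \ref{le:subset_ind}, then preservation of composition by combining Lemma~\ref{le:tqft_compbbFG} with the cocone identities of Lemma~\ref{le:bbFG_cocone}. The only (harmless) deviation is the final cancellation step, where you precompose with a single surjective colimit leg $\phi^X_\alpha$ (Lemma~\ref{le:surjective}) instead of, as the paper does, observing that $d_0^N\phi^Y_0 d_0^M$ commutes with the whole cocone over varying $X_0$ and invoking the uniqueness clause of the colimit's universal property to identify it with $d_0^{M\sqcup_Y N}$ --- both yield the same conclusion.
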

\begin{proof}
	Lemmas~\ref{le:bbFG_mor_bps} and \ref{le:subset_ind} give that $\FG$ is well defined.
	
	We prove $\FG$ preserves composition.
	Suppose we have \chomcob{}s 
	$\cmor{M}{i}{X}{j}{Y}$ and $\cmor{N}{k}{Y}{l}{Z}$.
	Let $Y_0\subseteq Y$ and $Z_0\subseteq Z$ be fixed finite representative subsets.
	Notice that for any finite representative subset $X_0\subseteq X$, by Lemma~\ref{le:tqft_compbbFG}, we have  $\bbFG(M\sqcup N,X_0,Z_0)=\bbFG(N,Y_0,Z_0)\bbFG(M,X_0,Y_0)=d_0^N\phi^Y_0\bbFG(M,X_0,Y_0)$.
	Thus $d_0^N\phi^Y_0 d_0^M\colon \FG(X)\to \bFG(Z,Z_0)$ is a map commuting with the cocone given by the $\bbFG(M\sqcup N,X_0,Z_0)$, where $X_0$ is varying.
	Hence by the uniqueness of the map obtained from the universal property of the colimit, we have $d_0^N\phi^Y_0 d_0^M=d_0^{M\sqcup_{Y}N}$.
	Hence $\phi^Z_0d_0^N\phi^Y_0 d_0^M=\phi^Z_0d_0^{M\sqcup_{Y}N}$ and 
	$\FG(N)\FG(M)=\FG(M\sqcup_{Y}N)$.
\end{proof}	
\subsubsection*{The functor $\FG\colon \HomCob\to \VectC$}
The following theorem says that $\FG$ becomes a functor from the category $\HomCob$.
\begin{theorem}\label{th:FG_functor}
	There is a functor 
	\[\FG \colon \HomCob \to \VectC
	\]
	 defined as follows.
	 \begin{itemize}
	 	\item For a space $X\in Ob(\HomCob)$,
	 	\[
	 	\FG(X) = \mathbb{C}\left( \colim(\V_X)  \right)
	 	\]
	 	where $\V_X$ is the diagram in $\Set$
	 	with vertices
	 	$
	 	\V_X (X_\alpha) = \Grpd\left( \pi(X,X_\alpha) ,G \right) 
	 	$ for each finite representative subset $X_\alpha\subseteq X$
	 	and edges 
	 	$\phi_{\alpha\beta}\colon\V_X(X_\alpha) \to \V_X(X_\beta)$ whenever $X_\beta \subseteq X_\alpha$, sending each $f\in \V_X (X_\alpha)$ to $f\circ \iota_{\beta\alpha}$ where $\iota_{\beta\alpha}\colon \pi(X,X_\beta)\to \pi(X,X_\alpha)$ is the inclusion.
	 	\item For a \homcob{} $\classche{\cmor{M}{i}{X}{j}{Y}}$,
	 	\[
	 	\FG\left(\classche{\cmortikz{M}{i}{X}{j}{Y}}\right)=
	 	\FG\left(
	 	\cmortikz{M}{i}{X}{j}{Y}\right)=\phi_{\alpha^{\lprime}}^{Y}d^M_{\alpha^{\lprime}}\colon \FG(X)\to \FG(Y)
	 	\]
	 	where $Y_{\alpha'}\subseteq Y$ is some choice of finite representative subset and, $\phi_{\alpha^{\lprime}}^{Y}$ and $d^M_{\alpha^{\lprime}}$ are as in Lemma~\ref{le:subset_ind}.
	 \end{itemize}
\end{theorem}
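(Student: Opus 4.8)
The plan is to assemble the functoriality of $\FG$ from the pieces already established, treating the theorem as the culmination of the sequence of lemmas rather than proving anything substantially new. The statement defines $\FG$ on objects via the colimit $\mathbb{C}(\colim(\V_X))$ and on morphisms via $\phi^Y_{\alpha'}d^M_{\alpha'}$, so the two things to verify are that this assignment respects the equivalence relation $\simche$ (so that it descends from $\cHomCob$ to $\HomCob$), and that it preserves composition and identities (so that the magmoid morphism is genuinely a functor).

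First I would note that Lemma~\ref{le:tqft_FGmagmor} already gives a magmoid morphism $\FG\colon\cHomCob\to\vVectC$ preserving composition, and Lemmas~\ref{le:bbFG_mor_bps} and \ref{le:subset_ind} establish that the assignment is independent of the choices of basepoint sets $M_0$ and $Y_{\alpha'}$. So the remaining content is twofold. The first step is to check that $\FG$ descends to the quotient $\HomCob=\cHomCob/\simche$, i.e.\ that cospan homotopy equivalent representatives give the same linear map. For this I would invoke Lemma~\ref{bFG_well_defined}, which shows $\bFG$ is invariant under a \che{} once compatible basepoint sets are chosen (sending $M_0$ to $\psi(M_0)$); combining this with the basepoint-independence of $\bbFG$ and the construction of $d^M_{\alpha'}$ via the universal property of the colimit, the induced map $\phi^Y_{\alpha'}d^M_{\alpha'}$ is unchanged. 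Formally, since $\FG$ is built from $\bbFG$ by a colimit construction that only depends on $\bbFG$-values up to the equivalences already quotiented out, a \che{} between $M$ and $M'$ yields identical cocones and hence the same unique colimit map.

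The second step is to verify the identity axiom, which is the point where genuine care is needed. I expect this to be the main obstacle, precisely because the identity in $\HomCob$ is the equivalence class of $\cmor{X\times\II}{\iota_0^X}{X}{\iota_1^X}{X}$ rather than a trivial cospan. I would fix a finite representative $X_0\subseteq X$ and compute $\FG$ applied to the cylinder. Using $\pi(X,X_0)\cong\pi(X\times\II, X_0\times\{0\})$ (the product-preservation of $\pi$ cited in the proof of Theorem~\ref{th:HomCob}) together with the bijection from Lemma~\ref{le:add_bps} to handle the basepoints $(X_0\times\{1\})$, one shows that for a suitable $M_0=(X_0\times\{0,1\})$ the matrix $\bbFG$ of the cylinder is, after the normalisation factor $|G|^{-(|M_0|-|X_0|)}$, exactly the identity matrix on the basis $\Grpd(\pi(X,X_0),G)$. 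The key computation is that $h\colon\pi(X\times\II,M_0)\to G$ with prescribed restrictions $f$ on the bottom and $g$ on the top forces $f=g$ (since a path in the cylinder connects each bottom basepoint to the corresponding top basepoint through the contractible $\II$-direction), and the count of such $h$ is $|G|^{|X_0|}$, cancelling the normalisation. Passing to the colimit via $\phi^X_{\alpha}$ then gives the identity on $\FG(X)$.

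Having checked descent to the quotient, preservation of composition (inherited from Lemma~\ref{le:tqft_FGmagmor}), and the identity axiom, I would conclude that $\FG\colon\HomCob\to\VectC$ is a functor by the definition of functor in terms of magmoid morphisms preserving identities. I would keep the composition argument brief, simply citing that the quotient map $\cHomCob\to\HomCob$ is a magmoid morphism and that $\FG$ factors through it, so associativity and composition-preservation are automatic once well-definedness on equivalence classes is known. The only part warranting a detailed display is the identity computation, where I would write out the normalised matrix element $\langle g\mid\bbFG(X\times\II,X_0,X_0)\mid f\rangle=|G|^{-(|X_0|)}|G|^{|X_0|}\delta_{f,g}=\delta_{f,g}$ explicitly to make the cancellation transparent.
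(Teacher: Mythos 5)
Your overall architecture matches the paper's: composition is inherited from Lemma~\ref{le:tqft_FGmagmor}, descent to $\simche$-classes follows from Lemma~\ref{bFG_well_defined} exactly as you say, and the identity axiom is the one place where real work remains. The gap is in your identity computation. A map $h\colon\pi\bigl(X\times\II,(X_0\times\{0\})\cup(X_0\times\{1\})\bigr)\to G$ with bottom restriction $f$ and top restriction $g$ does \emph{not} force $f=g$: the vertical paths $(e_x,\id_\II)$ join distinct objects of the fundamental groupoid, so $h$ may send them to arbitrary elements $\eta_x\in G$, constrained only by $g(s)=\eta_{x_1}f(s)\eta_{x_0}^{-1}$ for each $s\colon x_0\to x_1$. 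Hence such $h$ exist precisely when $g$ is related to $f$ by a natural transformation, and their number is the number of natural transformations $f\Rightarrow g$ --- generally not $|G|^{|X_0|}$, and generally nonzero for $f\neq g$. Concretely, take $X=S^1$, $X_0=\{x\}$, and $f$ sending the generating loop to a noncentral $a\in G$: the normalised diagonal entry is $|C_G(a)|/|G|<1$, and the entries pairing $f$ with maps sending the generator to distinct conjugates of $a$ are nonzero. So your claimed formula $\langle g\,|\,\bbFG(X\times\II,X_0,X_0)\,|\,f\rangle=\delta_{f,g}$ is false; with a fixed basepoint set, $\bbFG$ of the cylinder is a normalised projection onto natural-isomorphism classes, not the identity, on the basis $\Grpd(\pi(X,X_0),G)$.

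Consequently ``passing to the colimit via $\phi^X_\alpha$'' does not finish the argument as you state it: to see that this projection induces the identity on $\FG(X)$ one must know that the colimit classes are \emph{exactly} natural-transformation classes, which is Theorem~\ref{natiso_to_colim} (resting on Lemmas~\ref{well_defined} and~\ref{injective}), combined with the class-summed matrix-element formula of Lemma~\ref{le:FG_sum} and Remark~\ref{FG_equivalence}: summing over all top restrictions $g'$ equivalent to $f$ in the colimit counts all $|G|^{|X_0|}$ choices of the components $\eta_x$, and only then does the normalisation $|G|^{-|X_0|}$ cancel to give $1$, while off-diagonal class entries vanish because no $h$ connects non-isomorphic $f$ and $g$. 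This is precisely why the paper's proof of the theorem defers the identity axiom to Lemma~\ref{le:identity_preserved} and explicitly flags that ``a different interpretation of the colimit'' is needed; your proposal omits this ingredient, and without it the identity verification does not go through.
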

\begin{proof}
	We have from Lemma~\ref{le:tqft_FGmagmor} that $\FG$ is a magmoid morphism so it remains only to check that $\FG$ does not depend on a choice of representative cospan and that it preserves identities.
	We will need a different interpretation of the colimit to prove that $\FG$ preserves identities, we do this in Lemma~\ref{le:identity_preserved}.
	
	In Lemma \ref{bFG_well_defined} we show that $\bFG$ does not depend on the representative \homcob{} we choose. It thus follows that $\bbFG$ and hence $\FG$ do not depend on a choice of representative cospan.
\end{proof}

The following Lemma gives an alternative description of the image of the linear map a cospan is sent to under $\FG$, in terms of a choice of based cospan.
\begin{lemma}\label{le:FG_sum}
	Let $\cmor{M}{i}{X}{j}{Y}$ be a \chomcob{},
	 $\cbmor{M}{i}{X}{j}{Y}$ a choice of \cbhomcob{},
	and $[f]\in \FG(X)$ and $[g]\in\FG(Y)$ be basis elements (so $[f]$, for example, is an equivalence class in $\colim(\VV_X)$), then
	\begin{align*}
	\langle [g] | \FG(M) | [f] \rangle&=
	|G|^{-(|M_0| - |X_0|)} \hspace*{-1em}\sum_{g \in \phi^{Y-1}_0\left([g]\right)}
	\hspace*{-1em}\left\vert\left\{  h\colon\pi(M,M_0) \to G  \,|\,h|_{\pi(X,X_0)}=f \wedge h|_{\pi(Y,Y_0)}=g \right\}\right\vert\\
	&=|G|^{-(|M_0|-|X_0|)}\sum_{g \in \phi^{Y-1}_0\left([g]\right)}\hspace*{-1em}
	\left\langle g\,\middle|\, \bbFG(M,M_0) \,\middle|\, f \right\rangle
	\end{align*}
	where $\phi_0^{Y}\colon \bFG(Y,Y_0)\to \FG(Y)$ is the map into $\colim(\VV'_Y)$; see Definition~\ref{de:FG_obj}.
\end{lemma}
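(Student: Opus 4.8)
The plan is to compute the matrix element directly from the explicit description of $\FG$ in Lemma~\ref{le:subset_ind}, exploiting the freedom to choose the auxiliary basepoint sets to be exactly those of the given \cbhomcob{}. By Lemma~\ref{le:subset_ind} the assignment $\FG(M)=\phi_{\alpha'}^{Y}d^{M}_{\alpha'}$ is independent of the choice of $Y_{\alpha'}$, so I would take $Y_{\alpha'}=Y_0$ and write $\FG(M)=\phi^{Y}_{0}d^{M}_{0}$. Since $X$ is \homfin{}, the map $\phi^{X}_{0}\colon \bFG(X,X_0)\to \FG(X)$ is surjective (Lemma~\ref{le:surjective}), so I may fix a representative $f\in \bFG(X,X_0)$ with $\phi^{X}_{0}(f)=[f]$. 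The defining cocone factorisation of $d^{M}_{0}$ coming from the universal property in Lemma~\ref{le:bbFG_cocone} then gives $d^{M}_{0}([f])=d^{M}_{0}\phi^{X}_{0}(f)=\bbFG(M,X_0,Y_0)(f)$, which reduces the whole computation to the already-computed assignment $\bbFG$.

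First I would expand $\bbFG(M,X_0,Y_0)(f)$ in the intrinsic basis of $\bFG(Y,Y_0)$, namely the maps $g\colon \pi(Y,Y_0)\to G$, obtaining
\[
\bbFG(M,X_0,Y_0)(f)=\sum_{g}\left\langle g\,\middle|\,\bbFG(M,M_0)\,\middle|\,f\right\rangle\,g.
\]
Applying $\phi^{Y}_{0}$, which is the linearisation of the set map $g\mapsto[g]$ into $\colim(\VV_Y)$ (Definition~\ref{de:FG_obj}), yields
\[
\FG(M)([f])=\sum_{g}\left\langle g\,\middle|\,\bbFG(M,M_0)\,\middle|\,f\right\rangle\,[g].
\]
Collecting the terms whose image under $g\mapsto[g]$ is a fixed basis vector $[g]\in\FG(Y)$ then extracts the matrix element as the sum over the preimage $(\phi^{Y}_{0})^{-1}([g])$, which is the stated formula. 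The cardinality form (the first displayed equality of the statement) follows by substituting $\bbFG(M,M_0)=|G|^{-(|M_0|-|X_0|)}\bFG(M,M_0)$, pulling the normalisation constant out of the sum, and using the definition of $\bFG$ which identifies $\left\langle g\,\middle|\,\bFG(M,M_0)\,\middle|\,f\right\rangle$ with the cardinality of the set of $h\colon\pi(M,M_0)\to G$ restricting to $f$ and $g$.

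The only point requiring care, rather than the computation itself, is that the right-hand side does not secretly depend on the arbitrary choices made. Independence of the representative $f$ of $[f]$ is automatic because $d^{M}_{0}$ factors through the colimit, so $d^{M}_{0}([f])$ is intrinsic; independence of the auxiliary subset $Y_0$ is precisely Lemma~\ref{le:subset_ind}; and the preimage sum over $(\phi^{Y}_{0})^{-1}([g])$ is finite because $\bFG(Y,Y_0)$ has finite basis by \homfin{}ness of $Y$ and finiteness of $G$. I do not expect a genuine obstacle here: the content of the lemma is the unwinding of the colimit construction of $\FG$, with the substantive input (the van Kampen pushout behaviour and the counting of groupoid-map extensions) already packaged into $\bFG$ and $\bbFG$ in the preceding lemmas.
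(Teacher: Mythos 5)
Your proposal is correct and follows essentially the same route as the paper: choose $Y_{\alpha'}=Y_0$, use surjectivity (Lemma~\ref{le:surjective}) to pick a representative $f$ with $\phi_0^X(f)=[f]$, compute $d_0^M([f])=\bbFG(M,X_0,Y_0)(f)$ via the cocone factorisation of Lemma~\ref{le:bbFG_cocone}, expand in the basis of $\bFG(Y,Y_0)$, apply $\phi_0^Y$ and collect the terms over the preimage $(\phi_0^Y)^{-1}([g])$ before substituting the normalisation. Your version is if anything slightly more careful than the paper's, which attributes the choice of representative to surjectivity of $\phi_0^Y$ where the relevant surjection is $\phi_0^X$, and which leaves the independence-of-representative point implicit.
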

\begin{proof}
	We will use notation as in \eqref{cocone}.
	Since each map $\phi_0^{Y}$ is surjective (Lemma~\ref{le:surjective}), we can find $d^M_{0}([f])$ by looking at $\bbFG(M,X_0,Y_0)(f)$.
	Hence we have
	\ali{
	d^M_0 ([f])
	&=
	\sum_{g\in \mathcal{V}_Y(Y_0)} 
	\left\langle g\,\middle|\, \bbFG(M,X_0,Y_0) \,\middle|\, f \right\rangle
	\left.\middle| g\right\rangle
	}
	and, choosing a basis element $[g]\in \FG(Y)$,
	\begin{align*}
	\langle [g] | \FG(M) | [f] \rangle
	&=\sum_{g \in \phi^{Y-1}_0\left([g]\right)}\hspace*{-1em}
	\left\langle g\,\middle|\, \bbFG(M,X_0,Y_0) \,\middle|\, f \right\rangle \\
	&= |G|^{-(|M_0|-|X_0|)}\sum_{g \in \phi^{Y-1}_0\left([g]\right)}\hspace*{-1em}
	\left\langle g\,\middle|\, \bFG(M,M_0) \,\middle|\, f \right\rangle \\
	\begin{split}
	&=|G|^{-(|M_0| - |X_0|)} \hspace*{-1em}\sum_{g \in \phi^{Y-1}_0\left([g]\right)}
	\hspace*{-1em}\left\vert\left\{  h\colon\pi(M,M_0) \to G  \,|\, h|_{\pi(X,X_0)}=f \; \wedge\;  h|_{\pi(Y,Y_0)}=g \right\}\right\vert. 
	\end{split}\\
	&&\qedhere	
	\end{align*}
\end{proof}
\begin{remark}\label{FG_equivalence}
	The set of maps $\phi_0^{-1}([g])$ contains all maps $g'\colon \pi(Y,Y_0)\to G$ such that $g'\sim g$ where $\sim$ is the equivalence relation defined by the colimit. 
	Since we are only counting the cardinality of maps $h$ we can rewrite the map on morphisms as 
	\begin{align}\label{eq:FG_sim}
	\langle [g] | \FG(M) | [f] \rangle \hspace*{-0.2em}=\hspace*{-0.2em} 
	|G|^{-(|M_0| - |X_0|)} \left\vert\left\{  h:\pi(M,M_0) \to G  \,|\, h|_{\pi(X,X_0)}=f \wedge  h|_{\pi(Y,Y_0)}\sim g \right\}\right\vert
	\end{align}
	where we have removed the sum and only insist maps $h$ are equivalent to $g$ on $Y$.
	In many cases, especially with the local equivalence obtained in following section, this will be the most useful formulation to use for calculations.
\end{remark}

\exa{\label{ex:FGmer}
	Let $\cmor{M}{i}{X}{j}{Y}$ be the homotopy cobordism shown in Figure~\ref{fig:mer}. Note this is a homotopy cobordism from Examples~\ref{ex:mer} and \ref{ex:mer_b}.
	Using \eqref{eq:FG_sim}, we may choose to calculate the image of  $\FG(\classche{\cmor{M}{i}{X}{j}{Y}})$ using the \bhomcob{} considered in Example~\ref{ex:bFG_mer}.
	Using the results and notation from Example~\ref{ex:bFG_mer}, we have 
	\ali{
	\lan[(g_1)] \; \vert\; \FG(M)\; \vert\; [(f_1,f_2)]\ran
	&=|G|^{-1}\lan(g_1) \; \vert\; \bFG(M,M_0)\; \vert\; (f_1,f_2)\ran \\
	&= |G|^{-1}\;\; \vert \left\{c,d\in G \vert c^{-1} d^{-1}f_2d f_1c\sim g_1 \right\} \vert. 
	}	
}

\subsection{Map \texorpdfstring{$\FG\colon Ob(\HomCob)\to Ob(\Vect_{\mathbb{C}})$}{ZG:HomCob to Vect} in terms of a local equivalence relation}\label{sec:local_equiv}

For a general \homfin{} space $X$ it is 
unlikely to be straightforward to calculate the colimit constructed in the previous section.
Usually there will be an uncountably infinite number of choices of finite representative subsets $X_\alpha\subseteq X$, and thus an uncountably infinite number of vertices in $\VV_X$.
Although, in Lemma~\ref{le:finite_colim}, we proved that $\FG(X)$ is finite dimensional for all $X$.
In this section we show that this global equivalence, given by taking the colimit over all choices of subsets, is the same as choosing a single subset and taking a local equivalence given by taking maps up to natural transformation. 
This will allow us to prove, in Lemma~\ref{le:identity_preserved},
that $\FG$ preserves the identity. We will also need this interpretation of $\FG$ to prove, in Section~\ref{sec:FG_mon}, that $\FG$ is a monoidal functor.

\medskip

Here we only need to work with a single space $X$, so with $\V_X$ as constructed in Lemma~\ref{le:functor_V}, we drop the subscript on $\V_X$, and the superscript on the $\phi^X$. Consider the commuting diagram,
\[
\begin{tikzcd}[column sep= small]
	{}^{\V(X_\alpha)}/_{\cong}\ar[drr,"\hat{\phi}_\alpha"',dashed, bend right=20] \&[+1.2em] \V(X_\alpha)\ar[rr,"\phi_{\alpha\beta}"]\ar[l,"p_\alpha"']\ar[dr,"\phi_\alpha"',bend right=15] \& \& \V(X_\beta)\ar[dl, "\phi_\beta",bend left=15] \\
	\& \& \colim(\V)
\end{tikzcd}
\]
where $\cong$ denotes the relation obtained by taking maps up to natural isomorphism (it is straightforward to check this is an equivalence relation). The set map $p_\alpha$ sends a groupoid map in $\Grpd(\pi(X,X_\alpha),G)$ to its equivalence class in ${}^{\V(X_\alpha)}/_{\cong}$.  
The map $\hat{\phi}_\alpha:{}^{\V(X_\alpha)}/_{\cong} \to \colim(\mathcal{V})$ is the canonical map sending an equivalence class to $\phi_\alpha$ of some representative (it remains to check this is well defined).
\begin{theorem}\label{natiso_to_colim}
	For  a space $X$, the map $\hat{\phi}_\alpha$ is an isomorphism.
	Hence, for a \homfin{} space $X$
	\[ \FG(X)=\C((\Grpd(\pi(X,X_0),G)/\cong),\] for any choice $X_0\subset X$ of finite representative subset, where $\cong$ denotes taking maps up to natural transformation.
\end{theorem}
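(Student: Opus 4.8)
The plan is to show that $\hat{\phi}_\alpha\colon {}^{\V(X_\alpha)}/_{\cong}\to \colim(\V)$ is a well-defined bijection, and then read off the stated description of $\FG(X)$ by combining with the definition $\FG(X)=\C(\colim(\V_X))$ (Definition~\ref{de:FG_obj}). First I would establish that $\hat{\phi}_\alpha$ is well defined, i.e. that if $v,v'\in \V(X_\alpha)$ are related by a natural transformation then $\phi_\alpha(v)=\phi_\alpha(v')$ in $\colim(\V)$. Concretely, a natural isomorphism between groupoid maps $v,v'\colon \pi(X,X_\alpha)\to G$ is a choice of morphism $\eta_x\colon v(x)\to v'(x)$ in $G$ for each basepoint $x\in X_\alpha$, commuting with all morphisms. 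Using Lemma~\ref{le:add_bps} (or directly Lemma~\ref{le:extension}), I would encode such an $\eta$ by enlarging the basepoint set: adjoin, for each $x\in X_\alpha$, an auxiliary point and a path witnessing the element $\eta_x\in G$, obtaining an extension in $\V(X_\gamma)$ for some $X_\gamma\supseteq X_\alpha$ that restricts to $v$ along one inclusion and (after an isomorphism of extensions) to $v'$ along another. Since both $v$ and $v'$ are images under surjections $\phi_{\gamma\alpha}$ of a common element of $\V(X_\gamma)$, they become identified in the colimit. This gives $\phi_\alpha(v)=\phi_\alpha(v')$, so $\hat{\phi}_\alpha$ descends to the quotient.

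Next I would prove surjectivity, which is immediate: Lemma~\ref{le:surjective} already shows every $\phi_\alpha$ is surjective onto $\colim(\V)$, and $\hat\phi_\alpha$ factors $\phi_\alpha$ through $p_\alpha$, so $\hat\phi_\alpha$ is surjective. The substantive content is injectivity. Suppose $v,v'\in \V(X_\alpha)$ with $\phi_\alpha(v)=\phi_\alpha(v')$, i.e. $v\sim v'$ in the closure of the colimit relation; I must show $v\cong v'$ via a natural transformation. Unwinding the colimit relation (Section~\ref{sec:colim}), $v$ and $v'$ are connected by a finite zig-zag of elementary relations $\phi_{\delta\epsilon}(w)=w'$ through various subsets, each of which is a restriction-along-inclusion. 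The key reduction is that any two subsets $X_\alpha$, $X_\beta$ have a common enlargement $X_\alpha\cup X_\beta$, so the entire zig-zag can be resolved by passing to a single sufficiently large $X_\gamma\supseteq X_\alpha$ and lifting: there exists $w_\gamma\in\V(X_\gamma)$ restricting (up to the relation) to both $v$ and $v'$. I would then argue that two extensions of the same map to $X_\gamma$ which restrict to $v$ and $v'$ on $X_\alpha$ must differ by a natural transformation on $\pi(X,X_\alpha)$ — this is exactly the freedom in the extension parametrised by $G$ in Lemma~\ref{le:add_bps}, where the added generators record the components of the natural isomorphism.

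The main obstacle will be the injectivity argument, specifically controlling the finite zig-zag in the colimit and showing it collapses to a single natural transformation rather than something weaker. The clean way to manage this is to prove a stronger intermediate claim: for $X_\beta\subseteq X_\alpha$, two maps $v,v'\in \V(X_\alpha)$ satisfy $\phi_{\alpha\beta}(v)=\phi_{\alpha\beta}(v')$ if and only if $v$ and $v'$ agree on $\pi(X,X_\beta)$ and differ on the remaining generators by values that assemble into a natural transformation; and that passing to a common refinement $X_\alpha\cup X_\beta$ turns every elementary step of the zig-zag into such a comparison. Since $\pi(X,X_\alpha)$ is finitely generated and $X_\alpha$ is representative, the natural transformation is determined by finitely many elements $\eta_x\in G$, one per connected component, and naturality on the generating paths is automatic from the extension formulae in Lemma~\ref{le:extension}. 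Once the bijection $\hat\phi_\alpha$ is established, applying $\C(-)$ and invoking $\FG(X)=\C(\colim(\V_X))$ yields $\FG(X)\cong \C\bigl(\Grpd(\pi(X,X_0),G)/\!\cong\bigr)$ for the chosen $X_0=X_\alpha$, completing the proof.
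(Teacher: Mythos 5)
Your overall architecture (well-definedness, surjectivity via Lemma~\ref{le:surjective}, injectivity, then apply $\C(-)$ to the colimit) matches the paper's, and the surjectivity step and the final deduction are fine. But your well-definedness justification is wrong as literally stated: there is only \emph{one} inclusion $X_\alpha\subseteq X_\gamma$, and $\phi_{\gamma\alpha}$ is a function, so no single element of $\mathcal{V}(X_\gamma)$ can restrict to both $v$ and $v'$ when $v\neq v'$; ``after an isomorphism of extensions'' does not rescue this, since natural isomorphism at level $\gamma$ is not an identification in the colimit --- that is exactly what is being proved. The construction you sketch (adjoin auxiliary points with paths recording the components $\eta_x$) is the right one, but the identification must pass through a \emph{third} basepoint set, as in the paper's Lemma~\ref{well_defined}: one builds \emph{two} extensions over $X_\gamma=X_{\tilde{\alpha}}\cup X_\beta$ --- the $\eta$-twisted extension $v_\gamma$ of $v$ and the trivial extension $v_\gamma'$ of $v'$ --- and checks $\phi_{\gamma\beta}(v_\gamma)=\phi_{\gamma\beta}(v_\gamma')$ on the fresh basepoints $X_\beta$, where naturality makes the values on loops agree; the conclusion is then a zig-zag $v\sim v_\gamma$, $\phi_{\gamma\beta}(v_\gamma)=\phi_{\gamma\beta}(v_\gamma')$, $v'\sim v_\gamma'$, not a common lift. (The paper also first shrinks to one basepoint per component so all morphisms are loops, and handles components admitting no fresh point $y_n\neq x_n$; your sketch omits this case analysis.) This part is repairable.

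The genuine gap is in injectivity. The diagram $\mathcal{V}$ is contravariant on the directed poset of finite representative subsets, so its arrows are surjective restrictions from larger basepoint sets to smaller ones; the colimit is therefore not filtered in the direction that would let a zig-zag collapse at a single stage. Your claim that the zig-zag ``can be resolved by passing to a single sufficiently large $X_\gamma$'' with one $w_\gamma$ ``restricting (up to the relation) to both $v$ and $v'$'' is either impossible (a single restriction map cannot send one element to two distinct values) or circular (``up to the relation'' invokes precisely the equivalence being analysed). Your intermediate claim --- that $\phi_{\alpha\beta}(v)=\phi_{\alpha\beta}(v')$ forces $v\cong v'$, by conjugating with chosen paths into $X_\beta$ --- is correct and provable, but it only compares functors with the \emph{same} domain $\pi(X,X_\alpha)$, whereas the elementary steps of the zig-zag relate maps on different basepoint sets, between which a natural transformation is not even defined. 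To chain the steps you would additionally need that natural isomorphisms propagate through extensions (so that ``naturally isomorphic after extending to a common superset'' is transitive), which you neither state nor prove. The paper's Lemma~\ref{injective} avoids all this bookkeeping by a device absent from your proposal: the retractions $r_n\colon\pi(X)\to\pi(X,X_n)$ coming from the equivalence of categories $\pi(X,X_n)\simeq\pi(X)$ (Lemma~\ref{le:equivalence_pi(X)}), which transport every $v_n$ to a functor $v_n r_n$ on the common domain $\pi(X)$; each elementary step then yields $v_n r_n\cong v_{n+1}r_{n+1}$, these natural transformations compose along the finite zig-zag, and restricting along $\iota_\alpha$ (using $r_\alpha\iota_\alpha=1_{\pi(X,X_\alpha)}$) gives $v\cong v'$. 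Without either that device or the missing propagation argument, your injectivity step does not go through.
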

\begin{proof}
	We prove $\hat{\phi}_\alpha$ is well defined and injective in Lemmas~\ref{well_defined} and \ref{injective} respectively. Surjectivity follows directly from Lemma~\ref{le:surjective}. 
\end{proof}

For a path $s\colon \II \to X$ in $X$, we will also use $s$ to denote its path equivalence class in $\pi(X)$ and $s\simp s'$ to mean that $s'\in\classp{s}$.

\begin{lemma} \label{well_defined}
	Let $v_\alpha,v_\alpha' \in \mathcal{V}(X_\alpha)$ be two groupoid maps such that $p_\alpha(v_\alpha) = p_\alpha(v_\alpha')$, then $\phi_\alpha(v_\alpha) = \phi_\alpha(v_\alpha')$.
\end{lemma}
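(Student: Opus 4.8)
The hypothesis $p_\alpha(v_\alpha)=p_\alpha(v_\alpha')$ means exactly that $v_\alpha$ and $v_\alpha'$ are naturally isomorphic as maps $\pi(X,X_\alpha)\to G$, whereas the conclusion $\phi_\alpha(v_\alpha)=\phi_\alpha(v_\alpha')$ asserts that $v_\alpha\sim v_\alpha'$ under the colimit relation, i.e.\ that they are joined by a zig-zag of restrictions $\phi_{\gamma\delta}$. So the plan is to realise a natural isomorphism as such a zig-zag. Recording a natural isomorphism as a family $\{\eta_x\in G\}_{x\in X_\alpha}$ of its components (which are elements of $G$ since $G$ has one object), the defining condition is $v_\alpha'(\gamma)=\eta_{x'}\,v_\alpha(\gamma)\,\eta_x^{-1}$ for every morphism $\gamma\colon x\to x'$ of $\pi(X,X_\alpha)$.

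First I would reduce to the case where $X_\alpha$ meets each path component of $X$ exactly once. Choosing a subset $X_\beta\subseteq X_\alpha$ with one point per component (representative since $X_\alpha$ is), the colimit relation gives $v_\alpha\sim v_\alpha|_{X_\beta}$ and $v_\alpha'\sim v_\alpha'|_{X_\beta}$ directly, and the restriction of $\eta$ still witnesses $v_\alpha|_{X_\beta}\cong v_\alpha'|_{X_\beta}$; hence it suffices to prove $v_\alpha|_{X_\beta}\sim v_\alpha'|_{X_\beta}$. This reduction is what keeps the next step available, since it means only one extra basepoint per component will be required.

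The main construction is a point-doubling argument. For each $x\in X_\beta$ whose component contains more than one point, choose a fresh point $y_x$ in that component together with a path $\rho_x\colon x\to y_x$ (a component consisting of a single point may be ignored, as there both maps equal the unique trivial map to $G$). Using Lemma~\ref{le:extension} I would extend $v_\alpha|_{X_\beta}$ over $X_\gamma=X_\beta\cup\{y_x\}$ to $V$ by setting $V(\rho_x)=\eta_x$, and extend $v_\alpha'|_{X_\beta}$ to $V'$ by setting $V'(\rho_x)=e$; each restricts to its original, so $V\sim v_\alpha|_{X_\beta}$ and $V'\sim v_\alpha'|_{X_\beta}$. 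Now restrict both to the set $X_\delta$ consisting of the points $y_x$ together with the single-point components. For a loop $\delta$ at $y_x$ the element $\rho_x^{-1}\delta\rho_x$ is a morphism of $\pi(X,X_\alpha)$, and the extension formulas give
\[
V(\delta)=\eta_x\,v_\alpha(\rho_x^{-1}\delta\rho_x)\,\eta_x^{-1},\qquad V'(\delta)=v_\alpha'(\rho_x^{-1}\delta\rho_x),
\]
which agree precisely by the naturality relation above. Hence $V|_{X_\delta}=V'|_{X_\delta}$, and the chain $v_\alpha\sim v_\alpha|_{X_\beta}\sim V\sim V|_{X_\delta}=V'|_{X_\delta}\sim V'\sim v_\alpha'|_{X_\beta}\sim v_\alpha'$ yields $\phi_\alpha(v_\alpha)=\phi_\alpha(v_\alpha')$.

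The conceptual heart is the observation that a natural isomorphism is nothing but a change in the chosen images $V(\rho_x)$ of the connecting paths in the extension of Lemma~\ref{le:extension}; after that the verification is the one-line naturality computation displayed above. The only genuinely technical point is guaranteeing that the fresh points $y_x$ exist, which is exactly why I would first pass to one basepoint per component: a component carrying any non-identity morphism necessarily contains at least two points, so a single extra point is always available, and distinct components automatically supply distinct new points.
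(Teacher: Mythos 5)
Your proof is correct and takes essentially the same route as the paper's: both reduce to one basepoint per path component and then use Lemma~\ref{le:extension} to extend the two maps over an enlarged basepoint set, encoding the components $\eta_x$ of the natural transformation as the images of the connecting paths for one extension and $1_G$ for the other, so that the two extensions restrict to equal maps on the new basepoints, giving the required zig-zag in the colimit. The only (immaterial) difference is the case split: you single out one-point components, whereas the paper sets $y_n = x_n$ with a constant path on components having no non-trivial loops.
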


\begin{proof}
	There exists a subset $X_{\tilde{\alpha}}\subseteq X_\alpha$ containing precisely one basepoint in each path-connected component, and maps $\tilde{v}_\alpha, \tilde{v}_\alpha'\colon\pi(X,X_{\tilde{\alpha}})\to G$ such that $\phi_{\alpha\tilde{\alpha}}(v_\alpha)=\tilde{v}_\alpha$ and $\phi_{\alpha\tilde{\alpha}}(v_\alpha')=\tilde{v}_\alpha'$.
	We will show that $\tilde{v}_\alpha$ and $\tilde{v}_\alpha'$ are equivalent in the colimit, implying $v_\alpha\sim v_\alpha'$.
	The idea of this proof is illustrated by the following diagram.
	\[
	\begin{tikzcd}[column sep = small]
	\& \pi(X,X_\gamma)\ar[ddr,"v_\gamma"'] \&\& \pi(X, X_\gamma)\ar[ddl,"v_\gamma'"] \& \\
	\pi(X,X_{\tilde{\alpha}})\ar[ur]\ar[drr,"\tilde{v}_{{\alpha}}"'] \&\& \pi(X,X_\beta)\ar[ul]\ar[ur]\ar[d] \&\& \pi(X,X_{\tilde{\alpha}})\ar[ul]\ar[dll,"\tilde{v}_{{\alpha}}'"] \\
	\& \& G \& \& 
	\end{tikzcd}
	\]
	We use the morphisms in the natural transformation connecting $v_\alpha$ and $v_\alpha'$ to extend the map $\tilde{v}_\alpha$ to a map from $v_\gamma\colon \pi(X,X_\gamma)\to G$, where $X_\gamma$ is a larger set of basepoints.
	We also trivially extend the map $\tilde{v}_\alpha'$  to $v_\gamma'\colon \pi(X,X_\gamma)\to G$, and
	show that these extensions have the same image under some $\phi_{\gamma\beta}$, and therefore are equivalent in the colimit.
	
	The set $X_{\tilde{\alpha}}$ is finite so we can write $X_{\tilde{\alpha}}=\{x_1,...,x_N\}$. 
	Recall that $X_{\tilde{\alpha}}$ contains one point in each path-component, thus all morphisms in $\pi(X,X_{\tilde{\alpha}})$ are represented by loops.
	Since $v_\alpha$ and $v_\alpha'$ are related by a natural transformation, for all points $x_n \in X_{\tilde{\alpha}}$ and for all equivalence classes of loops $s\colon x_n \to x_n$, the below square commutes.
	\[
	\begin{tikzcd}
		v_\alpha(x_n) \ar[r,"v_\alpha(s)"] \ar[d,"\eta_{x_n}"'] \& v_\alpha(x_n) \ar[d,"\eta_{x_n}"] \\
		v_\alpha'(x_n) \ar[r,"v_\alpha'(s)"'] \& v_\alpha'(x_n)
	\end{tikzcd}
	\]
	Recall that the image of $v_\alpha$ and $v_{\alpha'}$ is a groupoid with one object, so the image on points is always the same. Hence the two maps must be the same on any path-components that have no non-trivial paths.
	
	Choose another set of points $X_\beta=\{y_1,...,y_N\}$
	as follows.
	If there are no non-trivial loops based at $x_n$ then $y_n=x_n$,
	otherwise choose $y_n \neq x_n$ and, for each $n$, choose a path $t_n\colon x_n\to y_n$, with $t_n$ the constant path if $x_n=y_n$. This is always possible since a non-trivial loop based at $x_n$ must contain some $y_n\neq x_n$.
	
	Let $X_\gamma=X_{\tilde{\alpha}} \cup X_\beta$.
	We define a map $v_\gamma\colon \pi(X,X_\gamma) \to G$ as follows.
	Let $v_\gamma|_{\pi(X,X_{\tilde{\alpha}})}= \tilde{v}_{\alpha}$, and $v_\gamma(t_n) = \eta_{x_n}$ unless $t_n$ is the constant path, in which case $v_\gamma(t_n)=1_G$.
	By Lemma~\ref{le:extension} this completely defines $v_\gamma$.
	Notice $\phi_{\gamma\tilde{\alpha}}(v_\gamma)=\tilde{v}_{\alpha}$, hence $\tilde{v}_{\alpha} \sim v_\gamma$.
	
	Define another map $v_\gamma'\colon \pi(X,X_\gamma) \to G$ by $v_\gamma'|_{\pi(X,X_{\tilde{\alpha}})}=\tilde{v}_{\alpha}'$ and $v_\gamma'(t_n)=1_G$.
	We have $\phi_{\gamma\tilde{\alpha}}(v_\gamma')=\tilde{v}_{\alpha}'$ and so $\tilde{v}_\alpha' \sim v_\gamma'$.
	
	Now we check that $\phi_{\gamma\beta}(v_\gamma)=\phi_{\gamma\beta}(v'_\gamma)$, hence $v_\gamma\sim v_\gamma'$. Since $X_\beta$ has only one point in each path-connected component we only need to check that $v_\gamma$ and $v_{\gamma'}$ agree on loops.
	For any trivial  
	$s\colon x_n \to x_n$ with $y_n=x_n$, we have $v_\gamma(s)=1_G=\tilde{v}_\alpha(s)=\tilde{v}'_\alpha(s)$.
	
	Now suppose $s\colon y_n \to y_n$ is any class of loops with $y_n\neq x_n$,
	\[v_\gamma(s)=v_\gamma(t_n t_n^{-1} s t_n t_n^{-1}) = \eta_{x_n} \tilde{v}_{\alpha}(t_n^{-1} s t_n) 
	\eta_{x_n}^{-1}=\tilde{v}_\alpha'(t_n^{-1} s t_n)
	\]
	and similarly,
	\[v_\gamma'(s)=v_\gamma'(t_n t_n^{-1} s t_n t_n^{-1}) = v_\gamma'(t_n)v_\gamma'(t_n^{-1} s t_n) v_\gamma'(t_n^{-1}) = \tilde{v}_{\alpha}'(t_n^{-1} s t_n).
	\]
	Hence $\phi_{\gamma\beta}(v_\gamma)=\phi_{\gamma\beta}(v_\gamma')$ so $v_\gamma \sim v_\gamma'$ and $\tilde{v}_{\alpha} \sim \tilde{v}_{\alpha}'$. 
\end{proof}

\begin{lemma}\label{le:equivalence_pi(X)}
	For any finite representative subset $X_\alpha$ of a space $X$, $\pi(X,X_\alpha)$ and $\pi(X)$ are equivalent as categories.
\end{lemma}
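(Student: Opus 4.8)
The plan is to show that the inclusion functor $\iota\colon \pi(X,X_\alpha)\to \pi(X)$, which exists by the remark following Definition~\ref{de:Afundgrpd} (taking $A=X$ and $A'=X_\alpha$), is an equivalence of categories by exhibiting an explicit quasi-inverse. Recall that $\iota$ is \emph{fully faithful} with essentially no work: since $\pi(X,X_\alpha)$ is by definition the \emph{full} subgroupoid of $\pi(X)$ on the object set $X_\alpha$, for any $a,b\in X_\alpha$ the hom-set $\pi(X,X_\alpha)(a,b)$ is literally equal to $\pi(X)(a,b)$. Thus the only substantive content is the construction of a retraction $r\colon \pi(X)\to \pi(X,X_\alpha)$ together with a natural isomorphism $\iota\circ r\cong \mathrm{id}_{\pi(X)}$; the identity $r\circ \iota=\mathrm{id}_{\pi(X,X_\alpha)}$ will hold strictly.

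To build $r$ I would use that $X_\alpha$ is representative (Definition~\ref{de:representative}): every point of $X$ lies in the path-component of some point of $X_\alpha$. First I would fix, for each $x\in X$, a chosen point $c(x)\in X_\alpha$ in the path-component of $x$ together with a path $\gamma_x\colon x\to c(x)$, taking $\gamma_a=e_a$ (the constant path) whenever $x=a\in X_\alpha$, so that $c(a)=a$. Define $r$ on objects by $x\mapsto c(x)$ and on a morphism $[\sigma]\colon x\to y$ of $\pi(X)$ by
\[
r([\sigma]) = [\gamma_y\,\sigma\,\gamma_x^{\mathrm{rev}}]\colon c(x)\to c(y),
\]
using the composition convention of Proposition~\ref{pr:path_comp}. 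That $r$ is well defined on path-homotopy classes and is a functor is a direct check: $r([e_x])=[\gamma_x\gamma_x^{\mathrm{rev}}]=[e_{c(x)}]$, and for composable $\sigma\colon x\to y$, $\tau\colon y\to z$ the inserted factor $\gamma_y^{\mathrm{rev}}\gamma_y$ is path-homotopic to the constant path, giving $r([\tau])\,r([\sigma])=r([\tau\sigma])$.

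The remaining checks are the two coherence statements. For $r\circ\iota=\mathrm{id}$, note that on objects $c(a)=a$ and on morphisms the bounding paths are constant, so $r(\iota([\sigma]))=[\sigma]$ for $a,b\in X_\alpha$. For the natural isomorphism $\eta\colon \mathrm{id}_{\pi(X)}\Rightarrow \iota\circ r$ I would set $\eta_x=[\gamma_x]\colon x\to c(x)$, which is an isomorphism because $\pi(X)$ is a groupoid (Proposition~\ref{pr:fundamentalgroupoid}); naturality for $[\sigma]\colon x\to y$ reduces to the identity $[\gamma_y\,\sigma\,\gamma_x^{\mathrm{rev}}]\,[\gamma_x]=[\gamma_y]\,[\sigma]$, which again follows from cancelling $\gamma_x^{\mathrm{rev}}\gamma_x$. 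I expect the main (though modest) obstacle to be purely bookkeeping: keeping the path-composition order consistent with Proposition~\ref{pr:path_comp} throughout the functoriality and naturality computations, since silently reversing the convention would break these identities. Alternatively one could bypass the explicit $r$ and invoke the standard criterion that a fully faithful, essentially surjective functor between small categories is an equivalence, with essential surjectivity being exactly the representativeness of $X_\alpha$ (each $x$ being isomorphic in $\pi(X)$ to $c(x)\in X_\alpha$ via $[\gamma_x]$); the construction above simply makes that criterion concrete.
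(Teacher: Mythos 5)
Your construction is exactly the paper's proof: the paper likewise fixes, for each $x$, a chosen basepoint $y_x\in X_\alpha$ with a path $t_x\colon x\to y_x$ (constant when $x\in X_\alpha$), defines the retraction by $r_\alpha(s)=t_{x'}\,s\,t_x^{-1}$, observes $r_\alpha\iota_\alpha=1_{\pi(X,X_\alpha)}$ strictly, and takes $\eta_x=t_x$ as the natural isomorphism $1_{\pi(X)}\Rightarrow\iota_\alpha r_\alpha$. Your version is correct and in fact spells out the functoriality and naturality cancellations ($\gamma_y^{\mathrm{rev}}\gamma_y\simeq e_y$, etc.) in slightly more detail than the paper does.
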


\begin{proof}
	We have an inclusion
	$\iota_\alpha\colon \pi(X,X_\alpha)\to \pi(X)$. 
	We define explicitly a map $r_\alpha\colon \pi(X)\to \pi(X,X_\alpha)$ as follows.
	For each $x\in X\setminus X_\alpha$, choose a point $y_x \in X_\alpha$ in the same path-connected component as $x$, and a path $t_x\colon x \to y_x$. 
	If $x\in X_\alpha$ choose $y_x=x$ and $t_x$ the trivial path.
	Now define 
	\[
	r_\alpha(x)=	y_x 
	\]
	and for a path $s\colon x\to x'$ in $\pi(X)$
	\[
	r_\alpha(s)=t_{x'} s t_{x}^{-1}
	\]
	The composition $r_\alpha\iota_\alpha $ is equal to $1_{\pi(X,X_\alpha)}$ and a natural transformation $\eta\colon 1_{\pi(X)} \to \iota_\alpha r_\alpha$ is given by 
	\[
	\eta_{x}= t_x. \qedhere
	\]
\end{proof}

\begin{lemma}\label{injective}
	Let $v_\alpha,v_\alpha' \in \mathcal{V}(X_\alpha)$ be two maps such that $\phi_\alpha(v_\alpha)= \phi_\alpha(v_\alpha')$. Then $p_\alpha(v_\alpha) = p_\alpha(v_\alpha')$. 
\end{lemma}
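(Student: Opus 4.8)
The plan is to prove the contrapositive-free direct statement: assuming $\phi_\alpha(v_\alpha)=\phi_\alpha(v_\alpha')$, i.e. $v_\alpha\sim v_\alpha'$ in $\colim(\VV)$, I must produce a natural transformation between $v_\alpha$ and $v_\alpha'$, showing $p_\alpha(v_\alpha)=p_\alpha(v_\alpha')$. The relation $v_\alpha\sim v_\alpha'$ is the transitive-symmetric closure of the generating relation $v_\beta = \phi_{\gamma\beta}(v_\gamma)$ (restriction to a smaller basepoint set). So an equivalence $v_\alpha\sim v_\alpha'$ is witnessed by a finite zig-zag of restrictions through various $\VV(X_\delta)$. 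The first step is therefore to reduce to a single elementary link of this zig-zag: it suffices to show that whenever two maps are directly related by a restriction $\phi_{\gamma\beta}$, their images under the appropriate $p$ (after passing to a common refinement) are natural-transformation-equivalent. Natural isomorphism of groupoid maps into $G$ is itself an equivalence relation (this is the $\cong$ already used), so chaining the elementary links will give the result.

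Concretely, first I would observe that the generating relation is $v_\beta=\phi_{\gamma\beta}(v_\gamma)=v_\gamma\circ\iota_{\beta\gamma}$ for $X_\beta\subseteq X_\gamma$. So the essential case is: given a single map $v_\gamma\colon\pi(X,X_\gamma)\to G$ and its restriction $v_\beta=v_\gamma\circ\iota_{\beta\gamma}$ to a subset $X_\beta$, I must show $v_\beta$ and $v_\gamma$ represent the ``same'' class once both are compared in a common setting. Here I would invoke Lemma~\ref{le:equivalence_pi(X)}: both $\pi(X,X_\beta)$ and $\pi(X,X_\gamma)$ are equivalent as categories to $\pi(X)$ itself, via inclusions and the explicitly constructed retractions $r_\beta, r_\gamma$ with natural transformations $\eta^\beta, \eta^\gamma$. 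The key point is that the composite $v_\gamma\circ\iota_{\beta\gamma}\circ r_\beta$ and $v_\gamma\circ\iota_{\gamma}\circ r_\gamma$ (pushing everything to maps out of $\pi(X)$, or equivalently pulling everything back to a single fixed basepoint set) differ only by whiskering with the natural isomorphisms supplied by Lemma~\ref{le:equivalence_pi(X)}, and whiskering a natural isomorphism by a functor on either side again yields a natural isomorphism. Thus restriction along an inclusion changes a map only up to natural transformation when compared through the equivalence $\pi(X,-)\simeq\pi(X)$.

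The cleanest way to organize this is probably to first establish a uniform statement: for any finite representative $X_\delta$, composition with the equivalence data of Lemma~\ref{le:equivalence_pi(X)} gives a well-defined map $\Grpd(\pi(X,X_\delta),G)/\!\cong\ \to\ \Grpd(\pi(X),G)/\!\cong$, and this is compatible with all the restriction maps $\phi_{\gamma\beta}$ in the sense that restricting a map and then transporting to $\pi(X)$ yields a naturally isomorphic map. Granting that, if $\phi_\alpha(v_\alpha)=\phi_\alpha(v_\alpha')$ then the zig-zag of restrictions connecting them, transported to $\pi(X)$, collapses to a chain of natural isomorphisms, hence $v_\alpha$ and $v_\alpha'$ become naturally isomorphic after transport to $\pi(X)$; pulling back along the equivalence $\pi(X,X_\alpha)\simeq\pi(X)$ then yields a natural isomorphism $v_\alpha\cong v_\alpha'$ as maps out of $\pi(X,X_\alpha)$, which is exactly $p_\alpha(v_\alpha)=p_\alpha(v_\alpha')$.

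I expect the main obstacle to be the bookkeeping around the zig-zag rather than any deep idea: the relation $\sim$ is only the symmetric-transitive closure of a directed relation, so I cannot assume $v_\alpha$ and $v_\alpha'$ are connected by a single restriction in one direction. I must carefully verify that natural-isomorphism-equivalence is preserved both when restricting to a smaller set \emph{and} (via the surjectivity/extension results, Lemmas~\ref{le:extension} and \ref{le:tqft_VVepi}) when moving to a larger set, so that each elementary link of the zig-zag maps to a genuine natural isomorphism of the transported maps. A subtlety worth double-checking is that the natural transformation components live in $G=\GG_G(*,*)$ (a one-object groupoid), so a ``natural isomorphism'' is just a choice of group element $\eta_x$ at each object satisfying the naturality square $v_\alpha'(s)=\eta_{x'}\,v_\alpha(s)\,\eta_x^{-1}$; confirming that the transported components assemble into such a family, and that they are invariant under the equivalence $\pi(X,X_\alpha)\simeq\pi(X)$, is the only place where a genuine (though routine) computation is required.
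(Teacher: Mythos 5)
Your proposal is correct and follows essentially the same route as the paper's proof: decompose the colimit relation into a finite zig-zag of elementary restrictions, transport each $v_n$ to a map out of $\pi(X)$ via the retractions $r_n$ of Lemma~\ref{le:equivalence_pi(X)}, observe that each triangle commutes up to a whiskered natural isomorphism, chain these, and pull back along $\iota_\alpha$ using $r_\alpha\iota_\alpha=1_{\pi(X,X_\alpha)}$. The only difference is cosmetic: your worry about needing Lemmas~\ref{le:extension} and \ref{le:tqft_VVepi} for the ``enlarging'' direction is unnecessary, since each elementary link already exhibits one map as the restriction of the other regardless of its orientation in the zig-zag, which is exactly how the paper treats both directions symmetrically via the common set $X_{n,n+1}$.
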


\begin{proof}
	The maps $v_\alpha$ and $v_\alpha'$ being equivalent in the colimit means there is some finite sequence of relations $v_\alpha = v_0 \sim v_1 \sim ... \sim v_N = v_\alpha'$, where $v_n\neq v_{n+1}$, and of maps $v_n\colon \pi(X,X_n) \to G$ such that
	for each pair $v_n,v_{n+1}$ we have one of the following two diagrams:
	\[
	\begin{tikzcd}[ampersand replacement=\&]
	\pi(X,X_n) \ar[r,"\iota_{n,n+1}"] \ar[dr,"v_n=v_{n+1}\iota_{n,n+1}"']\& 
	\pi(X,X_{n+1}) \ar[d,"v_{n+1}"] \\
	\& G
	\end{tikzcd}
	\begin{tikzcd}[ampersand replacement =\&]
	\pi(X,X_{n+1}) \ar[r,"\iota_{n+1,n}"]\ar[rd,"v_{n+1}=v_n\iota_{n+1,n}"'] \& 
	\pi(X,X_n)\ar[d,"v_n"] \\
	\& G
	\end{tikzcd}
	\]
	Then we have a commuting diagram of the following form 
	
	\[
	\begin{tikzcd}[ampersand replacement=\&]
	\& \pi(X,X_{n,n+1}) \ar[dd] \\
	\pi(X,X_n) \ar[ur, "\iota'_n"] \ar[dr,"v_n"']\& \& \pi(X,X_{n+1}) \ar[ul,"\iota'_{n+1}"'] 
	\ar[dl,"v_{n+1}"] \\
	\& G \&
	\end{tikzcd}
	\]
	where we let $X_{n,n+1}$ be the larger of $X_n$ and $X_{n+1}$ and 
	one of $\iota'_n$ and $\iota'_{n+1}$ is a strict inclusion, and the other is the identity. The middle arrow is either $v_n$ or $v_{n+1}$.
	Consider the below (non-commuting) diagram
	\begin{adjustwidth}{-2cm}{-2cm}
	\[
	\scalebox{.8}{\begin{tikzcd}[cramped, column sep=0, row sep=tiny, ampersand replacement =\&] 
		\& \& \& \& \& \pi(X) 
		\ar[dddlllll, bend right=25, "r_{0}"', end anchor={[xshift=-4ex]}] 
		\ar[dllll, bend right=20,"r_{0,1}"', pos=0.6, end anchor={[xshift=0ex]}] 
		\ar[dddlll, bend right=15, "r_{1}"', end anchor={[xshift=0ex]}]
		\ar[dddl, bend right=10,"r_{n}"', end anchor={[xshift=-2ex]}]
		\ar[d,"r_{n,n+1}"',pos=0.6]
		\ar[dddr, bend left=10, "r_{n+1}"', end anchor={[xshift=2ex]}]
		\ar[dddrrr, bend left=15,"r_{N-1}"', end anchor={[xshift=0ex]}] 
		\ar[drrrr, bend left=20,"r_{N,N-1}"',pos=0.6, end anchor={[xshift=0ex]}] 
		\ar[dddrrrrr, bend left=25, "r_{N}"', end anchor={[xshift=4ex]}]
		\& \& \& \& \& \\ [+120pt]
		\& \pi(X,X_{0,1}) \ar[dddrrrr, bend right=25, start anchor={[xshift=-2ex]}] 
		\& \& \makebox{} \& \& \pi(X,X_{n,n+1}) \ar[ddd]
		\& \& \makebox{} \& \&  \pi(X,X_{N-1,N}) \ar[dddllll, bend left=25, start anchor={[xshift=+2ex]}] 
		\& \\
		\& \& \&  \makebox[2.5cm]{...}  \& \& \& \& \makebox[2.5cm]{...}  \& \& \& \\
		\pi(X,X_0) \ar[uur,"\iota'_0"] \ar[drrrrr, bend right=20,"v_0"', start anchor={[xshift=-4ex]},end anchor={[yshift=-.5em]}]
		\& \& \pi(X,X_1) \ar[uul,"\iota'_1"'] \ar[uur,end anchor={[xshift=-5ex,yshift=-2ex]}] \ar[drrr,"v_1"', bend right=15]
		\& \& \pi(X,X_n) \ar[uur,"\iota'_n"] \ar[uul,end anchor={[xshift=5ex,yshift=-2ex]}] \ar[dr,"v_n"', bend right=10, start anchor={[xshift=-2ex]}]
		\& \& \pi(X,X_{n+1}) \ar[uul,"\iota'_{n+1}"'] \ar[uur,end anchor={[xshift=-5ex,yshift=-2ex]}] \ar[dl,"v_n+1", bend left=10, start anchor={[xshift=2ex]}]
		\& \& \pi(X,X_{N-1}) \ar[uur,"\iota'_{N-1}"] \ar[uul,end anchor={[xshift=5ex,yshift=-2ex]}] \ar[dlll,"v_{N-1}", bend left=15] 
		\& \& \pi(X,X_N) \ar[uul,"\iota'_{N}"'] \ar[dlllll,"v_N", bend left=20, start anchor={[xshift=4ex]}, end anchor={[yshift=-.5em]}]
		\\ [+120pt]
		\& \& \& \& \& G  \& \& \& \& \&
		\end{tikzcd}}
	\]
\end{adjustwidth}
	where the maps $r$ 
	are as constructed in the proof of Lemma~\ref{le:equivalence_pi(X)}. 
	
	We will show there is a natural transformation $v_0 r_0$ to $v_N r_N$. Since $X_0=X_N=X_\alpha$, $\iota_0=\iota_N$ 
	 and hence $v_0 r_0\cong v_N r_N$ implies $v_0 r_0\iota_0 \cong v_N r_N \iota_N$. This implies $v_0 \cong v_N$ since $r_\beta\iota_\beta=1_{\pi(X,X_\beta)}$ for all finite representative $X_\beta\subseteq X$ by Lemma~\ref{le:equivalence_pi(X)}.
	
	We show all triangles in the diagram commute up to natural transformation.
	The bottom triangles commute exactly by the construction explained in the first part of the proof.
	Notice that $\iota_n' r_n  = r_{n,n+1}  \iota_n r_n$, where $\iota_n\colon \pi(X,X_n)\to \pi(X)$ is the inclusion.
	By Lemma~\ref{le:equivalence_pi(X)} we have that $ r_{n,n+1}  \iota_n r_n \simeq r_{n,n+1}$.
\end{proof}

\exa{Let $X=S^1\sqcup S^1$. Then, letting $X_0\subset X$ be a subset with precisely one point in each connected component, $\Grpd(\pi(X,X_0),G)=G\times G$ as discussed in Example~\ref{ex:ObS1cupS1}. Taking maps up to natural transformation corresponds to pairs of conjugacy classes of elements of $G$, so we have $\FG(X)=\C(G/G \times G/G)$. }
\exa{ \label{ex:S1cupS1_embed_conj}
	Consider again Example~\ref{ex:FGmer}, which in turn refers to Examples~\ref{ex:mer} and \ref{ex:mer_b}.
	The equivalence class $[g_1]$ in the basis of $\FG(Y)$
	consists of all maps sending $S^1$ to something in the conjugacy class of $g_1$. This allows us to refine the result of Example~\ref{ex:FGmer} as follows.
\ali{
	\lan[(g_1)] \; \vert\; \FG(M)\; \vert\; [(f_1,f_2)]\ran
	&=|G|^{-1}\vert \left\{c,d\in G \vert c^{-1}d^{-1}f_2d f_1c\sim g_1 \right\} \vert \\
	&= \;\; \vert \left\{d\in G \vert d^{-1}f_2d f_1\sim g_1 \right\} \vert. 
}	}

\exa{Let $X$ be the complement of the embedding of two circles shown, and explained in the caption of, Figure~\ref{fig:emb_S1cupS1_based}. Then, letting $X_0\subset X$ be the subset shown, we have $\Grpd(\pi(X,X_0),G)=G\times G$ as discussed in Example~\ref{ex:ObembS1cupS1}. 
Since all objects are mapped to the unique object in $G$, taking maps up to natural transformation is means taking maps up to conjugation by elements of $G$ at each basepoint, hence in this case maps are labelled by pairs of elements of $G$, up to simultaneous conjugation, so we have $\FG(X)=\C((G \times G)/G)$.  }

\exa{\label{ex:ZG_surfacetang}
	\begin{figure}
		\centering
		\def\svgwidth{0.5\columnwidth}
		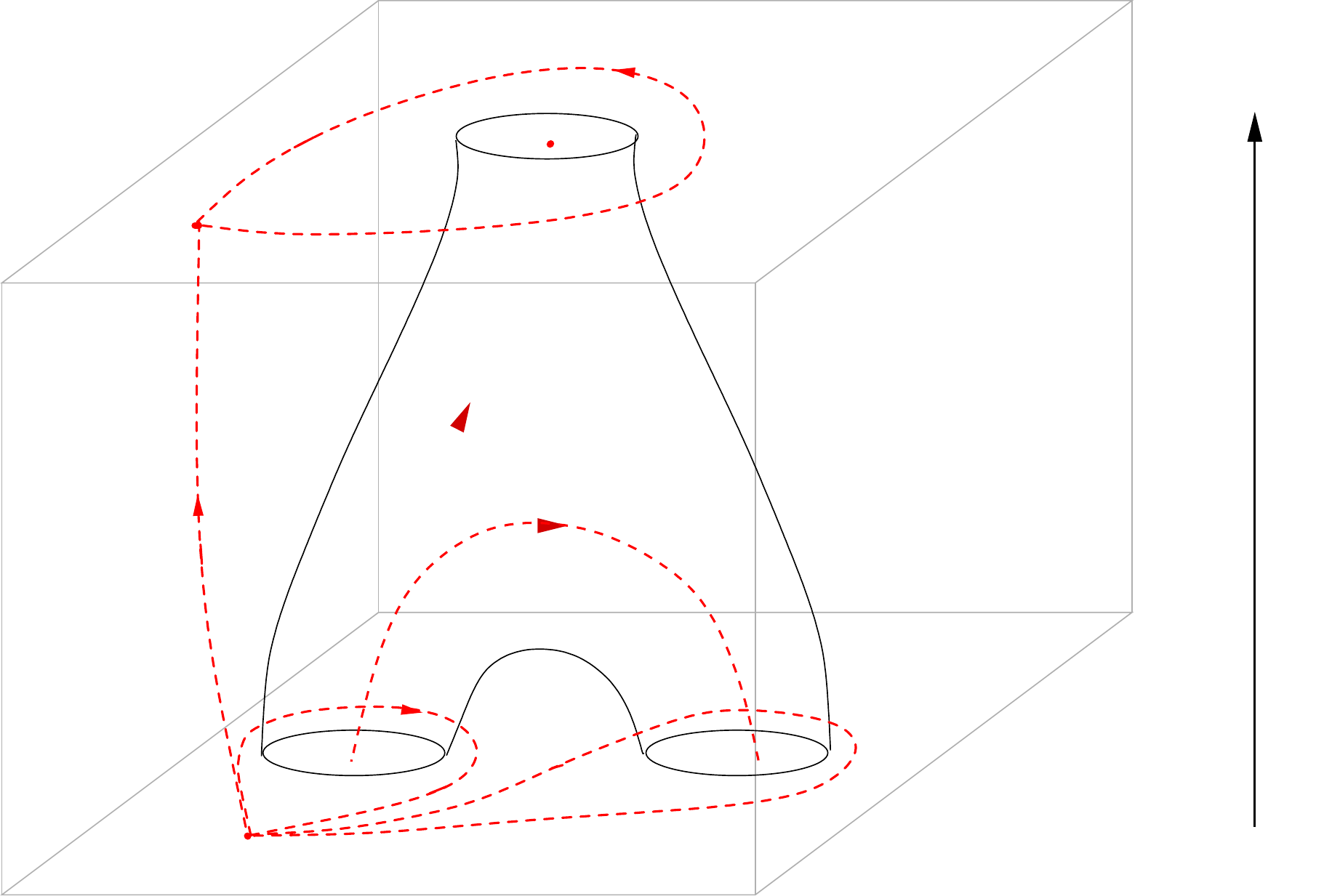
		\caption[Figure demonstrating that Example~\ref{ex:embmer} is a \chomcob{}]{This figure represents the \ccc{} from Example~\ref{ex:embmer}. The red points and lines show a possible choice of basepoints $M_0$ and paths. It can be seen that the equivalence  classes of the marked paths generate $\pi(M,M_0)$.
			We can see $X$ and $Y$ are \homfin{} by considering the intersection of the marked points and paths with $X$ and $Y$ respectively. Thus it is a \chomcob{}.} 
		\label{fig:embmer_c}
	\end{figure}
	Consider the \chomcob{} shown in Figure~\ref{fig:embmer}.
	This represents a manifold $M$, which is the complement of the marked subset in $\II^3$, and $X$ and $Y$ are given by the bottom and top boundary respectively. This becomes a cospan with the inclusion maps.
	This is in fact a \homcob{} from Examples~\ref{ex:embmer} and \ref{ex:embmer_b}.
	
	We calculate $\FG\big(\classche{\cmor{M}{i}{X}{Y}{j}}\big)$.
	We choose to use the \bhomcob{} shown in Figure~\ref{fig:embmer_c} for calculation. The set $M_0$ consists of all marked points and $X_0$ and $Y_0$ consist of the intersection of $M_0$ with $X$ and $Y$ respectively. 
	
	We have from Example~\ref{ex:S1cupS1_embed_conj} that basis elements in $\FG(X)$ are given by equivalence classes $[(f_1,f_2)]$ where $f_1,f_2\in G$ and $[]$ denotes simultaneous conjugation by the same element of $G$.
	
	Basis elements in $\FG(Y)$ are given by elements of $g$ taken up to conjugation, denoted $[g_1]$.
	
	Let $x\in X$ be the basepoint which is in the connected component of $X$ homotopy equivalent to the punctured disk, and $x'\in X$ some choice of basepoint in the other connected component. It follows from Lemma~\ref{le:add_bps} that there is a bijection sending a map $h\in \Grpd(\pi(M,M_0), G)$ to a quadruple $(h' , h(\gamma_1), h(\gamma_2), h(\gamma_3))\in\Grpd(\pi(M,\{x,x'\})\times G\times G\times G$,
	where $h'$ is the restriction of $h$ to $\pi(M,\{x,x'\})$. Now $\pi(M,\{x,x'\})$ is the disjoint union of the groupoids $\pi(M_1,\{x\})$ and $\pi(M_2,\{x'\})$ where $M_1$ is the path connected component of $M$ containing $x$, and $M_2$ is the path connected component containing $x'$.
	The group $\pi(M_2,\{x'\})$ is trivial, so there is one unique map into $G$.
	The group $\pi(M_1,\{x\})$ is equivalent to the twice punctured disk (see Example~\ref{ex:embmer_b}), which has fundamental group isomorphic to the free product $\Z*\Z$. This isomorphism can be realised by sending the loop $x_1$ to the $1$ in the first copy of $\Z$ and $x_2$ to the $1$ in the second copy of $\Z$.
	Thus we can label elements in $\Grpd(\pi(M_1,\{x\}), G)$ by elements of $G\times G$ where $g_1 \in(g_1,g_2)$ corresponds to the image of $x_1$, and $g_2$ the image of $x_2$. 
	Hence a map in $\Grpd(\pi(M,M_0), G)$ is determined by a five tuple $(a,b,c,d,e)\in G\times G\times G\times G\times G$ where $a$ corresponds to the image of $x_1$, $b$ to the image of $x_2$, and $c$, $d$ and $e$ correspond to the images of $\gamma_1$, $\gamma_2$ and $\gamma_3$ respectively.
	Hence we have
	\ali{
		\lan [g_1]\vert \FG(M)\vert [(f_1,f_2)]\ran &=
		|G|^{-2}\left\{a,b,c,d,e\in G \;\vert \; a=f_1, b=f_2, g_1\sim ebae^{-1}\right \}\\
		&=\left\{e\in G\; \vert \; g_1\sim ef_1f_2e^{-1} \right\}\\
		&=\begin{cases}
			|G| & \mbox{if $g_1\sim f_1f_2$}\\
			0 & \mbox{otherwise}.
		\end{cases}
	}
}

We now use Theorem~\ref{natiso_to_colim} to prove that identities are preserved.

\begin{lemma}\label{le:identity_preserved}
	The identity \homcob{} $\classche{\cmor{X\times \II}{\iota_0^X}{X}{\iota_1^X}{X}}$ for a space $X$ is mapped to the identity matrix by $\FG$.
\end{lemma}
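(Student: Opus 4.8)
The plan is to compute $\FG$ of the identity cospan using the explicit matrix formula of Lemma~\ref{le:FG_sum}/Remark~\ref{FG_equivalence}, combined with the local description of $\FG(X)$ supplied by Theorem~\ref{natiso_to_colim}. First I would fix a finite representative subset $X_0\subseteq X$, so that by Theorem~\ref{natiso_to_colim} the basis of $\FG(X)$ is $\Grpd(\pi(X,X_0),G)/\!\cong$, i.e. groupoid maps up to natural transformation. I would then choose a convenient \cbhomcob{} representing the identity: take the cylinder $M = X\times\II$ with $i=\iota_0^X$, $j=\iota_1^X$, and set $M_0 = \iota_0^X(X_0)\cup\iota_1^X(X_0)$, so that $M_0$ is representative in $X\times\II$ and restricts correctly to $X_0$ on each end (giving $X_0 = Y_0$ here, with $Y=X$). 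With this choice $|M_0| - |X_0| = |X_0|$, so the normalisation prefactor is $|G|^{-|X_0|}$.

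**Next I would** analyse the groupoid $\pi(X\times\II, M_0)$. The key geometric input is that the two inclusions $\iota_0^X,\iota_1^X\colon X\to X\times\II$ are both homotopy equivalences, and more precisely that $\pi(X\times\II, \iota_0^X(X_0))\cong\pi(X,X_0)$ and likewise for $\iota_1^X$, since $\pi$ preserves products (as used in the proof of Theorem~\ref{th:functorMot_HomCob}, citing \cite[6.4.4]{brownt+g}). A map $h\colon\pi(X\times\II,M_0)\to G$ restricts to maps $f = h|_{\pi(X,X_0)}$ on the bottom copy and $g = h|_{\pi(X,X_0)}$ on the top copy. The matrix element $\langle[g]\mid\FG(M)\mid[f]\rangle$ then counts, up to the prefactor, the maps $h$ that restrict to the prescribed $f$ on the bottom and to something naturally isomorphic to a representative of $[g]$ on the top. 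Because the cylinder retracts onto either end, any such $h$ is determined by its restriction to one end together with the images of a spanning set of ``vertical'' paths connecting the two copies of each point of $X_0$; this is where Lemma~\ref{le:add_bps} (adding basepoints contributes exactly one free factor of $G$ each) governs the count.

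**The computation I expect** is that, for fixed $f$ representing $[f]$, the restriction to the top copy is forced to be naturally isomorphic to $f$ (the natural isomorphism being realised by the chosen vertical connecting paths), so $\langle[g]\mid\FG(M)\mid[f]\rangle$ is nonzero precisely when $[g]=[f]$ in $\Grpd(\pi(X,X_0),G)/\!\cong$. The remaining freedom in $h$ is exactly a choice of one element of $G$ per connecting path, i.e. one factor of $|G|$ for each point of $X_0$, giving a total count of $|G|^{|X_0|}$ maps $h$ (weighted appropriately by the equivalence $\sim$ on the top). Multiplying by the prefactor $|G|^{-|X_0|}$ yields $1$ on the diagonal $[g]=[f]$ and $0$ off-diagonal, so $\FG$ of the identity cospan is the identity matrix.

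\medskip

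\noindent\textbf{The main obstacle} I anticipate is making precise the passage from ``$h$ restricts to naturally-isomorphic maps on the two ends'' to the diagonal count, and in particular verifying that the natural-transformation equivalence $\cong$ appearing in the basis of $\FG(X)$ (Theorem~\ref{natiso_to_colim}) matches exactly the equivalence generated by varying the free $G$-factors attached to the vertical paths. Concretely, one must check that two maps $h,h'$ agreeing with $f$ on the bottom but differing in their connecting-path data produce top restrictions $g,g'$ that are naturally isomorphic (so they collapse to the same basis element $[g]$), and conversely that every natural isomorphism class is hit. This is essentially the content already extracted in Lemmas~\ref{well_defined} and \ref{injective} underpinning Theorem~\ref{natiso_to_colim}, so I would lean on that machinery rather than re-deriving it; the remaining work is bookkeeping of the $|G|$ factors via Lemma~\ref{le:add_bps} to confirm the normalisation cancels exactly.
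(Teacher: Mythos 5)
Your proposal is correct and follows essentially the same route as the paper: the paper formalises your ``vertical connecting paths'' via the isomorphism $\pi(X\times \II, M_0)\cong \pi(X,X_0)\times\pi(\II,\{0,1\})$, deduces from the relation $(e_{x_1},\id_\II^{-1})(s,e_{1})(e_{x_0},\id_\II)=(s,e_{0})$ that the elements $h(e_{x},\id_\II)$ constitute exactly a natural transformation between the two end restrictions, and then invokes Theorem~\ref{natiso_to_colim} to conclude the matrix is zero off the diagonal, with the $|G|^{|X_0|}$ choices of vertical data cancelling the normalisation $|G|^{-(|M_0|-|X_0|)}$ on the diagonal. The ``main obstacle'' you flag is handled in the paper precisely as you predict, by leaning on Theorem~\ref{natiso_to_colim} rather than re-deriving the equivalence matching.
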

\begin{proof}
	We will show that the matrix element 
	\[
	\left\langle \left[f\right] \middle| 
	\FG(X\times \II) \middle| \left[g\right] \right\rangle.
	\]
	is $1$ if 
	$\left[f\right]=\left[g\right]$ and $0$ otherwise.

	First note, there is an isomorphism
	\[
	\pi\left(X\times \II,(X_0\times\{0\})\cup (X_0\times \{1\})\right)
	\xrightarrow{\sim}\pi(X,X_0)\times \pi\left(\II,\{0,1\}\right).
	\] 
	given by sending a representative path to the pair containing the classes of each projection (see 6.4.4 in \cite{brownt+g}).
	Hence we have that $\left\langle f \; \middle|\; 
	\bFG(X\times \II) \;\middle|\; g \right\rangle$ is given by the cardinality of the set of maps 
	\[
	h\colon \pi(X,X_0)\times \pi\left(\II,\{0,1\}\right) \to G
	\]
	such that 
	\[
	h(s,e_{i})=
	\begin{cases}
	f(s), \indent  i=0,\\
	g(s), \indent  i=1
	\end{cases}
	\]	
	where $e_i$ denotes the constant path at the point $i$.
	Any pair in the product space can be written as a composition of pairs with only one non-identity component.
	The morphisms of $\pi\left(\II,\{0,1\}\right)$ are generated by the equivalence class of the path $\id_\II\colon \II\to \II$.
	Thus a map $h$ is completely defined by specifying its action on pairs of the form $(e_{x_j},\id_\II)$.
	Let $s\colon x_0\to x_1$ be a path in $X$ with $x_0,x_1\in X_0$.
	Notice that
	\[
	(e_{x_1},\id_\II^{-1})(s,e_{1})(e_{x_0},\id_\II)=(s,e_{0}) \indent
	\implies \indent h(e_{x_1},\id)^{-1}g(s)h(e_{x_0},\id_\II)=f(s).\]
	Hence such an $h$ exists if and only if the $h(e_{x_i},\id_\II)$ are a natural transformation from $f$ to $g$. 
	By Theorem~\ref{natiso_to_colim} this means the matrix element corresponding to $f$ and $g$ is zero unless $[f]=[g]$.
	
	Now we consider the matrix element
	\[
	\left\langle \left[f\right] \middle| 
	\FG(X\times \II) \middle| \left[f\right] \right\rangle.
	\]
	A map $h$ is a defined by a choice of $h(e_{x_0},\id)\in G$ for each $x_0\in X$, and all choices define a natural transformation. Using the definition of $\FG$ from Lemma~\ref{le:FG_sum}, we must sum over all $\left\langle f \; \middle|\; 
	\bFG(X\times \II) \;\middle|\; f' \right\rangle$ with $f'\sim f$ equivalent in the colimit, which, by Theorem~\ref{natiso_to_colim}, means there is a natural transformation $f$ to $f'$. Hence all choices of $h(e_{x_0},\id)=g\in G$ will contribute to the sum. There are $|G|^{|X_0|}$ choices, then with the normalisation, the matrix element is $1$.
\end{proof}

To complete this section, we give our final alternative way to define the map on objects.

\begin{theorem}\label{th:FG_pi(X)}
	Let $X$ be a \homfin{} space. Then
	\[
	\FG(X)\cong \C(\Grpd(\pi(X),G)/\cong),
	\]	
	where $\cong$ denotes the relation given by taking the set of groupoid maps up to natural transformation.
\end{theorem}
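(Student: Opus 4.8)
The plan is to deduce the claim from Theorem~\ref{natiso_to_colim} together with the categorical equivalence $\pi(X,X_0)\simeq\pi(X)$ of Lemma~\ref{le:equivalence_pi(X)}. By Theorem~\ref{natiso_to_colim}, for any fixed finite representative subset $X_0\subseteq X$ we have $\FG(X)=\C\big(\Grpd(\pi(X,X_0),G)/\cong\big)$, so it suffices to produce a bijection $\Grpd(\pi(X,X_0),G)/\cong\,\to\,\Grpd(\pi(X),G)/\cong$ and then apply the free-vector-space functor $F_{V_{\mathbb{C}}}$ (Lemma~\ref{le:left_adjoint_Vectforget}). First I would note that, since $G$ is a groupoid, every natural transformation between groupoid maps into $G$ is a natural isomorphism; hence $\cong$ is genuinely an equivalence relation on each hom-set, and a natural transformation can be whiskered by a functor on either side to give a natural transformation of the composites.

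Next, writing $\iota\colon\pi(X,X_0)\to\pi(X)$ for the inclusion and $r\colon\pi(X)\to\pi(X,X_0)$ for the retraction of Lemma~\ref{le:equivalence_pi(X)}, which satisfy $r\circ\iota=1_{\pi(X,X_0)}$ and $\eta\colon 1_{\pi(X)}\Rightarrow\iota\circ r$, I would consider the precomposition maps $(-)\circ r\colon\Grpd(\pi(X,X_0),G)\to\Grpd(\pi(X),G)$ and $(-)\circ\iota\colon\Grpd(\pi(X),G)\to\Grpd(\pi(X,X_0),G)$. Each respects $\cong$ by whiskering: if $f\cong f'$ via $\mu$ then $f\circ r\cong f'\circ r$ via $\mu r$, and symmetrically for $\iota$. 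Thus both descend to well-defined maps on $\cong$-classes.

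Finally I would verify these descended maps are mutually inverse. In one direction, $(f\circ r)\circ\iota=f\circ(r\circ\iota)=f$ on the nose, so $(-)\circ\iota$ is a strict left inverse of $(-)\circ r$. In the other direction, for $F\colon\pi(X)\to G$ we have $(F\circ\iota)\circ r=F\circ(\iota\circ r)\cong F\circ 1_{\pi(X)}=F$, the natural isomorphism being the whiskering $F\eta$; hence the two maps compose to the identity on $\Grpd(\pi(X),G)/\cong$. This establishes the bijection of (finite) sets --- finiteness of the right-hand side follows from that of the left, itself finite because $\pi(X,X_0)$ is finitely generated and $G$ finite --- and applying $F_{V_{\mathbb{C}}}$ gives the stated linear isomorphism $\FG(X)\cong\C\big(\Grpd(\pi(X),G)/\cong\big)$.

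The only substance here is the well-definedness-and-inverse argument, whose delicate point is the compatibility of $\cong$ across the equivalence: one must check that natural transformations pull back and push forward correctly under whiskering, which is exactly where the equivalence data $r\circ\iota=1_{\pi(X,X_0)}$ and $\eta$ are used. I anticipate no real obstacle beyond this bookkeeping of whiskered natural transformations.
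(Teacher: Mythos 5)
Your proposal is correct and takes essentially the same route as the paper: both reduce to Theorem~\ref{natiso_to_colim} and then transport the $\cong$-classes across the equivalence of Lemma~\ref{le:equivalence_pi(X)} via precomposition with $\iota_0$ and $r_0$, using whiskered natural transformations for well-definedness. If anything, you spell out the mutual-inverse check (strict on one side via $r\circ\iota=1_{\pi(X,X_0)}$, up to the whiskering $F\eta$ on the other) that the paper's proof leaves implicit.
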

\begin{proof}
	We have from Theorem~\ref{natiso_to_colim} that $\FG(X)\cong \C(\Grpd(\pi(X, X_0),G)/\cong)$, for some finite representative set $X_0$.
	
	We have from Lemma~\ref{le:equivalence_pi(X)} that $\pi(X,X_0)$ and $\pi(X)$ are equivalent as categories. Let $\iota_0\colon \pi(X,X_0)\to X$ and $r_0\colon \pi(X)\to \pi(X,X_0)$ be as in the proof of Lemma~\ref{le:equivalence_pi(X)}.
	
	Let $[f]\in \Grpd(\pi(X),G)/\cong$, then there is a map 
	\ali{
		\phi\colon \Grpd(\pi(X),G)/\cong&\to \Grpd(\pi(X, X_0),G)/\cong\\ \phi([f])&\mapsto[f\circ \iota_0].
	}
	We show this map is well defined. Suppose $f'\in[f]$, so there is natural transformation, say $\eta$, from $f$ to $f'$. Then $f'\circ \iota_0\sim f\circ \iota_0$ using the restriction of  $\eta$ to $x\in X_0$. 
	
	This $\phi$ has inverse 
	$\phi'$, where $\phi'(g)=g\circ r_0$.
	Again this map is well defined, this time if $\eta$ is a natural transformation in $\Grpd(\pi(X, X_0),G)/\cong$, then  the maps $\eta_{r_0(x)}$ give the required natural transformation.
\end{proof}

\subsection{Monoidal functor \texorpdfstring{$\FG\colon \HomCob\to \VectC$}{ZG HomCob to VectC} }\label{sec:FG_mon}
We now show that the functor $\FG$ is symmetric monoidal when considering $\HomCob$ with the monoidal structure described in Section~\ref{sec:monHomCob} and $\VectC$ with the monoidal structure  in Proposition~\ref{pr:Vectkmon}.
We will need the following lemma.
\begin{lemma}\label{le:colim_prod}
	Let $X$ and $Y$ be \homfin{} spaces.
	There is a bijection 
	\[\kappa \colon \colim(\VV_{X\sqcup Y})	\xrightarrow{\sim} \colim(\VV_X)\times \colim(\VV_Y)\]
	where $\VV_X$ is as in Lemma~\ref{le:functor_V}
\end{lemma}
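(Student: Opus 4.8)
The plan is to construct the bijection $\kappa$ explicitly and then verify it is well defined and bijective. The key structural fact is that for \homfin{} spaces $X$ and $Y$, every finite representative subset of $X\sqcup Y$ has the form $X_\alpha \sqcup Y_\alpha$ with $X_\alpha \subseteq X$ and $Y_\alpha \subseteq Y$ finite representative (this is exactly the observation used in the proof of Lemma~\ref{union_homfin_spaces}). Moreover, since the images of $X$ and $Y$ in $X\sqcup Y$ are disjoint and there are no paths between them, there is a canonical groupoid isomorphism
\[
\pi(X\sqcup Y, X_\alpha\sqcup Y_\alpha)\cong \pi(X,X_\alpha)\amalg \pi(Y,Y_\alpha),
\]
again as in Lemma~\ref{union_homfin_spaces}. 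This means a groupoid map $\pi(X\sqcup Y, X_\alpha\sqcup Y_\alpha)\to G$ is determined precisely by a pair consisting of a map $\pi(X,X_\alpha)\to G$ and a map $\pi(Y,Y_\alpha)\to G$, since $G$ has a single object and the coproduct groupoid is generated by the two summands with no interaction. Thus at the level of the sets $\VV$ we have a natural bijection $\VV_{X\sqcup Y}(X_\alpha\sqcup Y_\alpha)\cong \VV_X(X_\alpha)\times \VV_Y(Y_\alpha)$.

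First I would make this pairing into a map of diagrams: show that the restriction maps $\phi_{\alpha\beta}$ for $\VV_{X\sqcup Y}$ correspond under the above bijection to the product $\phi^X_{\alpha\beta}\times \phi^Y_{\alpha\beta}$ of the restriction maps for $\VV_X$ and $\VV_Y$. This is immediate because restriction along the inclusion $X_\beta\sqcup Y_\beta \hookrightarrow X_\alpha\sqcup Y_\alpha$ decomposes as restriction along $X_\beta\hookrightarrow X_\alpha$ on the first summand and $Y_\beta\hookrightarrow Y_\alpha$ on the second. Next I would pass to colimits. Since colimits in $\Set$ commute with finite products (the product of two filtered-type colimits is the colimit of the product diagram; concretely one checks this directly from the explicit description of $\colim$ in $\Set$ given in Section~\ref{sec:colim}), the colimit of the product diagram $\VV_X\times \VV_Y$ is $\colim(\VV_X)\times \colim(\VV_Y)$. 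Composing with the diagram isomorphism gives the desired $\kappa$.

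The main obstacle, and the step deserving the most care, is verifying that $\kappa$ is \emph{well defined} on equivalence classes and that it is \emph{injective}, i.e. that the equivalence relation on $\sqcup \VV_{X\sqcup Y}(X_\alpha\sqcup Y_\alpha)$ matches the product of the two relations. The subtlety is that the index category $\FinSet_{X\sqcup Y}$ is not literally the product $\FinSet_X\times \FinSet_Y$: a representative subset of $X\sqcup Y$ splits as $X_\alpha\sqcup Y_\alpha$, but a pair $(v_\alpha,w_\alpha)$ and $(v_\beta,w_\beta)$ might be related in the product of colimits via \emph{independent} refinements on each factor, whereas the relation defining $\colim(\VV_{X\sqcup Y})$ only directly relates subsets $X_\alpha\sqcup Y_\alpha \supseteq X_\beta\sqcup Y_\beta$ by a single simultaneous inclusion. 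I would resolve this by the standard argument that one can always enlarge to a common refinement: given $v_\alpha\sim v_\alpha'$ in $\colim(\VV_X)$ and $w_\gamma \sim w_\gamma'$ in $\colim(\VV_Y)$, choose subsets large enough in both factors simultaneously (using that $\phi^X$ and $\phi^Y$ are surjective, Lemma~\ref{le:tqft_VVepi}, so any representative can be lifted to a larger subset) and then relate the corresponding elements of $\VV_{X\sqcup Y}$ by a single chain of inclusions of the form $X_\delta\sqcup Y_\delta$.

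Finally, surjectivity of $\kappa$ is straightforward: given classes in $\colim(\VV_X)$ and $\colim(\VV_Y)$, choose representatives $v\in \VV_X(X_\alpha)$ and $w\in\VV_Y(Y_\alpha)$ over a \emph{common} index, which is always possible by first taking representatives over possibly different subsets and then lifting both to the union using the surjectivity of the $\phi$ maps (exactly the argument in Lemma~\ref{le:surjective}); the pair then corresponds to an element of $\VV_{X\sqcup Y}(X_\alpha\sqcup Y_\alpha)$ mapping to the given pair of classes. Injectivity together with well-definedness, proved above via common refinement, completes the argument. I would also remark that $\kappa$ is natural in the evident sense, which is what will be needed when this lemma is applied to establish the monoidal structure on $\FG$.
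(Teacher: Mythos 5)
Your proposal is correct and in substance follows the same route as the paper's proof: the pointwise identification $\VV_{X\sqcup Y}(X_\alpha\sqcup Y_{\alpha'})\cong \VV_X(X_\alpha)\times\VV_Y(Y_{\alpha'})$ coming from the groupoid decomposition $\pi(X\sqcup Y,X_\alpha\sqcup Y_{\alpha'})\cong\pi(X,X_\alpha)\sqcup\pi(Y,Y_{\alpha'})$; well-definedness of $\kappa$ by projecting a zigzag in the product-indexed diagram to each factor; and injectivity by concatenating the two factors' zigzags, each paired with an identity on the other factor --- this is exactly the paper's sequence $(\phi^f_0,\id),\dots,(\phi^f_n,\id),(\id,\phi^g_0),\dots$ in its proof of Lemma~\ref{le:colim_prod}.

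Two of your side remarks are inaccurate, though neither is load-bearing since you also sketch the direct check. First, the appeal to ``filtered-type colimits'' is not available here: the maps $\phi_{\alpha\beta}$ go from larger basepoint sets to smaller ones, so the index category is the opposite of a directed poset and is \emph{not} filtered --- two disjoint representative subsets of $X$ admit no common representative subset below them. If you want an abstract argument in place of the direct verification, the correct one is that $\Set$ is cartesian closed, so $A\times(-)$ preserves all colimits, and Fubini for iterated colimits then gives $\colim$ of the external product diagram $\cong\colim(\VV_X)\times\colim(\VV_Y)$ with no filteredness hypothesis. Second, the index category for $X\sqcup Y$ \emph{is} canonically isomorphic to the product of $\FinSetX$ with the corresponding poset for $Y$: a finite subset of $X\sqcup Y$ is representative iff its intersections with $X$ and $Y$ are, and inclusions split componentwise. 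So the genuine subtlety is not the shape of the index category but, as you then correctly address, that the equivalence relation generated by simultaneous restrictions coincides with the product of the two generated relations. Finally, be wary of the phrase ``enlarge to a common refinement'': it is false that $v\sim v'$ in $\colim(\VV_X)$ implies the existence of a single common lift restricting to both (on $S^1$ with one basepoint, two distinct conjugate maps to $G$ are identified in the colimit yet no map on a larger basepoint set restricts to both), so only the chain/zigzag formulation you also give is valid, and it is the one the paper uses.
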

\begin{proof}
	For any subsets $X_\alpha \subseteq X$ and $Y_{\alpha'} \subseteq Y$, and points $x\in X_\alpha$ and $y\in Y_{\alpha'}$, we have that $\pi(X\sqcup Y,X_{\alpha}\sqcup Y_{\alpha'})(x,y)$ is empty. Thus there is an isomorphism of groupoids $\pi(X\sqcup Y,X_\alpha\sqcup Y_{\alpha'})\xrightarrow{\sim}\pi(X,X_\alpha)\sqcup \pi(Y,Y_{\alpha'})$ and we have a bijection
	$\Grpd(\pi(X\sqcup Y,X_\alpha\sqcup Y_{\alpha'}),G)\xrightarrow{\sim} \Grpd(\pi(X,X_{\alpha}),G)\times\Grpd(\pi(Y,Y_{\alpha'}),G)$ sending a map to the appropriate pair of restrictions. Equivalently we have a bijection $\VV_{X\sqcup Y}(X_\alpha\sqcup Y_{\alpha'})\xrightarrow{\sim}\VV_X(X_\alpha)\times \VV_Y(Y_{\alpha'})$.
	Thus $\colim(\VV_{X\sqcup Y})$
	is isomorphic to the colimit over the diagram with vertices of the form $\VV_X(X_\alpha)\times \VV_Y(Y_{\alpha'})$ and maps of the form $(\phi_{\alpha\beta}^X,\phi_{\alpha'\beta'}^Y)$, which we denote $\colim(\VV_{X\sqcup Y})'$. We construct a bijection between $\colim(\VV_{X\sqcup Y})'$ and $\colim(\VV_X)\times \colim (\VV_Y)$.
	
	Suppose $[(f,g)]=[(f',g')]$ in $\colim(\VV_{X\sqcup Y})'$ with $(f,g)\in \VV_X(X_\alpha)\times \VV_Y(Y_{\alpha'})$ and $(f',g')\in \VV_X(X_\beta)\times\VV_Y(Y_{\beta'})$.
	By the construction of the colimit, there exists a sequence of product sets $\VV_X(X_0)\times \VV_Y(Y_0),...,\VV_X(X_n)\times \VV_Y(Y_n)$ with $\VV_X(X_0)\times \VV_Y(Y_0)=\VV_X(X_\alpha)\times \VV_Y(Y_{\alpha'})$ and $\VV_X(X_n)\times \VV_Y(Y_n)=\VV_X(X_{\beta})\times \VV_Y(Y_{\beta'})$, and a sequence  of maps $\phi_0,...,\phi_{n-1}$ connecting $(f,g)$ and $(f',g')$ where each $\phi_i$ is either a map $ \VV_X(X_n)\times \VV_Y(Y_n)\to \VV_X(X_{n+1})\times \VV_Y(Y_{n+1})$ or a map $\VV_X(X_{n+1})\times \VV_Y(Y_{n+1})\to \VV_X(X_n)\times \VV_Y(Y_n)$.
	The projections of this sequence of maps give sequences of maps connecting $f$ and $f'$ in $\colim(\VV_X)$ and $g$ and $g'$ in $\colim (\VV_Y)$.
	Thus there is a well defined map 
	\ali{
	\kappa'\colon \colim(\VV_{X\sqcup Y})'	&\to \colim(\VV_X)\times \colim(\VV_Y)\\
	[(f,g)]&\mapsto ([f],[g]).
	}
	It is easy to see this map is a surjection. 
	To see that it is an injection, suppose now that $[f]=[f']$ in $\colim(\VV_X)$ and $[g]=[g']$ in $\colim(\VV_Y)$ then there are sequences $\phi^f_0,...,\phi^f_n$ and $\phi^g_0,...,\phi^g_n$ as in the proof of well definedness. Now the sequence given by $(\phi^f_0,\id),...,(\phi^f_n,\id),(\id,\phi^g_0),...(\id,\phi^g_1)$ is a sequence connecting $(f,g)$ and $(f',g')$ in $\colim(\VV_{X\sqcup Y})'$.
\end{proof}

\begin{lemma}
	The functor $\FG \colon \HomCob \to \VectC$
	 endowed with $(\FG)_0=1_\C\colon \C\to\C$ and natural transformations
	\ali{
	(\FG)_2(X,Y)\colon \FG(X) \otimes_{\C} \FG(Y)\to \FG(X\sqcup Y)
	}
	 which acts on basis elements as
	\ali{
	[f]\otimes_{\C}[g]&\mapsto \kappa^{-1}([f],[g])
	} 
	with $\kappa$ as in Lemma~\ref{le:colim_prod},
	is strong monoidal.\\
	(Here $\VectC$ has the monoidal structure from Lemma~\ref{pr:Vectkmon} and the monoidal structure on $\HomCob$ is as in Section~\ref{sec:monHomCob}.)
\end{lemma}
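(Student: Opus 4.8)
The plan is to verify in turn that $(\FG)_2$ is a natural isomorphism, that $(\FG)_0$ is an isomorphism, and that the associativity coherence diagram and the two unit coherence diagrams from the definition of a monoidal functor commute. Since we want \emph{strong} monoidal, the isomorphism claims come essentially for free from Lemma~\ref{le:colim_prod}.

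First I would record that $(\FG)_2(X,Y)$ and $(\FG)_0$ are isomorphisms. By the remark following Proposition~\ref{pr:Vectkmon}, a basis for $\FG(X)\otimes_\C \FG(Y)$ is given by the elements $[f]\otimes_\C[g]$ with $[f]\in\colim(\VV_X)$ and $[g]\in\colim(\VV_Y)$ basis elements, while $\FG(X\sqcup Y)=\C(\colim(\VV_{X\sqcup Y}))$ has basis $\colim(\VV_{X\sqcup Y})$. Lemma~\ref{le:colim_prod} provides a bijection $\kappa$ between $\colim(\VV_{X\sqcup Y})$ and $\colim(\VV_X)\times\colim(\VV_Y)$, so $\kappa^{-1}$ is a bijection between basis sets and its linear extension $(\FG)_2(X,Y)$ is a linear isomorphism. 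The map $(\FG)_0=1_\C$ is an isomorphism trivially, using that $\pi(\emptyset)$ is the empty groupoid and hence $\FG(\emptyset)\cong\C$.

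Then the main work, which I expect to be the principal obstacle, is naturality of $(\FG)_2$: for homotopy cobordisms $a\colon X\to X'$ and $b\colon Y\to Y'$ I must show $(\FG)_2(X',Y')\circ(\FG(a)\otimes_\C\FG(b))=\FG(a\sqcup b)\circ(\FG)_2(X,Y)$. The strategy is to evaluate both composites on a basis element $[f]\otimes_\C[g]$ using the matrix description of Lemma~\ref{le:FG_sum} together with Remark~\ref{FG_equivalence}, exploiting that $\otimes$ on $\HomCob$ (Lemma~\ref{le:CofCsp_bifunctor}) sends a pair of representing cospans $M,N$ to the disjoint union $M\sqcup N$. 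The key point is that, choosing basepoints $M_0\sqcup N_0$ for $M\sqcup N$, the coproduct-of-groupoids splitting $\pi(M\sqcup N,M_0\sqcup N_0)\cong\pi(M,M_0)\sqcup\pi(N,N_0)$ (as in the proof of Lemma~\ref{union_homfin_spaces}) puts the set of maps $h\colon\pi(M\sqcup N,M_0\sqcup N_0)\to G$ into bijection with pairs $(h_M,h_N)$; the commutation/boundary conditions then separate and the normalising powers of $|G|$ add, so the matrix element of $\FG(a\sqcup b)$ indexed by $\kappa$-split basis elements is exactly the product of the matrix elements of $\FG(a)$ and $\FG(b)$. This says precisely that, under $\kappa$, $\FG(a\sqcup b)$ is the tensor product of $\FG(a)$ and $\FG(b)$, which is naturality. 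Care is needed to confirm that $\kappa$ is compatible with the restriction maps $\phi^{Y}_{\alpha'}$ entering the definition of $\FG$ on morphisms, but this holds because the bijections of Lemma~\ref{le:colim_prod} are built from the same coproduct splitting.

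Finally I would dispatch the three coherence diagrams. Since all associators and unitors in $\HomCob$ (Lemma~\ref{le:CofCsp_mon}) and in $\VectC$ (Proposition~\ref{pr:Vectkmon}) are the canonical maps, and $(\FG)_2,(\FG)_0$ are defined through the canonical coproduct splitting of fundamental groupoids, each diagram reduces to a purely set-theoretic identity among the bijections $\kappa$. Concretely, the associativity diagram becomes the assertion that the two ways of iterating $\kappa$ to identify $\colim(\VV_{(X\sqcup Y)\sqcup Z})$ with $\colim(\VV_X)\times\colim(\VV_Y)\times\colim(\VV_Z)$ agree, which holds because the groupoid isomorphisms $\pi((X\sqcup Y)\sqcup Z)\cong\pi(X)\sqcup\pi(Y)\sqcup\pi(Z)$ are associative up to canonical natural isomorphism; the two unit diagrams become the statement that the empty-set bijection collapses the trivial factor $\colim(\VV_\emptyset)=\{*\}$, using $\FG(\emptyset)\cong\C$. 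These are routine, so I would spell out only the associativity diagram, the unit diagrams being entirely analogous.
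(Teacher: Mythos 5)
Your proposal is correct and its skeleton matches the paper's: the isomorphism claims for $(\FG)_0$ and $(\FG)_2$ are dispatched exactly as in the paper, via Lemma~\ref{le:colim_prod} and the basis remark after Proposition~\ref{pr:Vectkmon}, and the coherence diagrams are reduced to basis-level computations, just as the paper does by appeal to arguments ``similar to the proof of Lemma~\ref{le:identity_preserved}''. The interesting difference is the distribution of effort, and it cuts both ways. You rightly single out naturality of $(\FG)_2$ as a claim requiring proof, and your argument for it is sound: under the $\kappa$-split bases the matrix of $\FG(a\otimes b)$ is the Kronecker product of the matrices of $\FG(a)$ and $\FG(b)$, because $\pi(M\sqcup N, M_0\sqcup N_0)\cong \pi(M,M_0)\sqcup\pi(N,N_0)$ splits each $h$ into an independent pair $(h_M,h_N)$, a natural transformation on a coproduct of groupoids is exactly a pair of natural transformations on the summands (so the conditions $h|_{\pi(X,X_0)}=f$ and $h|_{\pi(Y,Y_0)}\sim g$ of Remark~\ref{FG_equivalence} separate componentwise), and the normalisation exponents $|M_0|-|X_0|$ and $|N_0|-|Y_0|$ add. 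The paper's proof never verifies naturality at all, so here your proposal supplies content the paper leaves implicit. Conversely, you slightly underplay the step the paper flags as the only real complication: the associators and unitors of $\HomCob$ are not identity cospans but cylinder cospans on homeomorphisms (Lemma~\ref{le:monoidal_isos}), so evaluating $\FG$ on them is not ``purely set-theoretic'' --- one needs the counting argument of Lemma~\ref{le:identity_preserved} to show $\FG(\alpha_{X,Y,Z})$ acts on basis elements by $(f\otimes_{\C}g)\otimes_{\C}h\mapsto f\otimes_{\C}(g\otimes_{\C}h)$, after which your reduction to compatibilities among the bijections $\kappa$ does go through. One small point in your favour: your justification that $\FG(\emptyset)\cong\C$ because $\Grpd(\pi(\emptyset,\emptyset),G)$ is a singleton, giving a one-point colimit, is the correct reading; the paper's phrasing $\colim(\VV_\emptyset)=\emptyset$ is a slip, as the free vector space on the empty set would be zero rather than $\C$.
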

\begin{proof}
	Notice $\colim(\mathcal{V}_{\emptyset})=\emptyset$ as $\mathcal{V}_{\emptyset}$ has just one vertex, the empty set, and no maps. Hence $\FG(\emptyset)=\C$ so $(\FG)_0$ is well defined.
	
	The vector space $\FG(X) \otimes_{\C} \FG(Y)$ has a basis isomorphic to $\colim(\VV_X)\times \colim(\VV_Y)$.
	Thus the map $(\FG)_2(X,Y)$ is the linear extension of $\kappa^{-1}$, hence an isomorphism by Lemma~\ref{le:colim_prod}.
	
	The only complication in checking the associativity relation is understanding the image of the associator, $\FG(\alpha_{X,Y,Z})$. On basis elements $\FG(\alpha_{X,Y,Z})((f\otimes_{\C}g)\otimes_{\C}h)=f\otimes_{\C}(g\otimes_{\C} h)$. 
	Similarly we can check the unitality relations using that, on basis elements, we  have $\FG(\lambda_{X})(\emptyset\otimes_{\C} f)= f$ and $\FG(\rho_{X})(f\otimes_{\C} \emptyset)= f$.
	The proofs of each identity are similar to the proof of Lemma~\ref{le:identity_preserved}, that the identity is preserved, so we don't write out the details here.
\end{proof}

\begin{lemma}\label{le:FG_symm_mon}
	The monoidal functor $\FG \colon \HomCob \to \VectC$ is symmetric monoidal.
\end{lemma}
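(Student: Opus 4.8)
The plan is to verify the single symmetry condition required of a braided monoidal functor, namely that for all $X,Y\in Ob(\HomCob)$ we have
\[
(\FG)_2(Y,X)*_{\VectC}\beta'_{\FG(X),\FG(Y)}=\FG(\beta_{X,Y})*_{\VectC} (\FG)_2(X,Y),
\]
where $\beta_{X,Y}$ is the symmetry in $\HomCob$ (Theorem~\ref{th:CofCsp_symm}) and $\beta'_{-,-}$ is the flip braiding on $\VectC$ (Proposition~\ref{pr:Vect_braid}). Since the previous lemma already establishes that $\FG$ is strong monoidal, and since the braidings on both $\HomCob$ and $\VectC$ are symmetric, this single identity is all that remains. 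I would check it by chasing a basis element $[f]\otimes_{\C}[g]\in \FG(X)\otimes_{\C}\FG(Y)$ around both sides and confirming the results agree.

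First I would compute the right-hand side. Applying $(\FG)_2(X,Y)$ sends $[f]\otimes_\C [g]$ to $\kappa^{-1}([f],[g])\in \FG(X\sqcup Y)$, where $\kappa$ is the bijection of Lemma~\ref{le:colim_prod}. Then I must evaluate $\FG(\beta_{X,Y})$ on this element. The key observation is that $\beta_{X,Y}$ is (the \che{} class of) a cospan built from the homeomorphism $\beta^T_{X,Y}\colon X\sqcup Y\to Y\sqcup X$, and by the construction via $\kappa$ (the isomorphism $\pi(X\sqcup Y,\,X_\alpha\sqcup Y_{\alpha'})\cong \pi(X,X_\alpha)\sqcup\pi(Y,Y_{\alpha'})$ used in its proof), precomposition with this swap simply interchanges the two factors. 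Hence $\FG(\beta_{X,Y})(\kappa^{-1}([f],[g]))=\kappa^{-1}([g],[f])$. For the left-hand side, $\beta'_{\FG(X),\FG(Y)}$ first swaps the tensor factors to give $[g]\otimes_\C [f]$, and then $(\FG)_2(Y,X)$ sends this to $\kappa^{-1}([g],[f])$ as well. Thus both sides yield $\kappa^{-1}([g],[f])$ and agree.

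The main work is therefore to justify that $\FG$ applied to the swap cospan $\beta_{X,Y}$ acts on the colimit basis exactly as the factor-interchange induced by $\kappa$. I would do this using the explicit matrix formula for $\FG$ on morphisms (Lemma~\ref{le:FG_sum} and Remark~\ref{FG_equivalence}) applied to a convenient based representative of $\beta_{X,Y}$: since $\beta^T_{X,Y}$ is a homeomorphism, the cospan $\cmor{(Y\sqcup X)\times\II}{\iota_0^{Y\sqcup X}\circ\beta^T_{X,Y}}{X\sqcup Y}{\iota_1^{Y\sqcup X}}{Y\sqcup X}$ behaves exactly like the relabelled identity cospan, and the argument of Lemma~\ref{le:identity_preserved} (that the cylinder induces the identity on natural-transformation classes) carries over verbatim once the swap of basepoint sets and the groupoid splitting of Lemma~\ref{le:colim_prod} are tracked. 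I expect the only genuine obstacle to be bookkeeping: matching the two factorings of $\pi$ over the disjoint union so that the image of $\kappa^{-1}([f],[g])$ under the swap is seen to be $\kappa^{-1}([g],[f])$, rather than any subtlety, since the underlying map is a homeomorphism and all the relevant naturality is already packaged into Lemma~\ref{le:colim_prod} and Theorem~\ref{natiso_to_colim}.
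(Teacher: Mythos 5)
Your proposal is correct and takes essentially the same route as the paper: the paper disposes of this lemma with ``as in the previous proof it is straightforward to check the relevant identity,'' where the previous proofs in turn reduce everything to the cylinder computation of Lemma~\ref{le:identity_preserved} --- exactly the mechanism you invoke for the twisted cylinder representing $\beta_{X,Y}$. Your explicit verification that $\FG(\beta_{X,Y})$ sends $\kappa^{-1}([f],[g])$ to $\kappa^{-1}([g],[f])$, matching the flip $\beta'$ on $\VectC$ via Lemma~\ref{le:colim_prod}, is precisely the detail the paper leaves implicit.
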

\begin{proof}
	As in the previous proof it is straightforward to check the relevant identity.
\end{proof}

\begin{lemma}
	The functor
	\[
	\tilde{\mathsf{Z}}_G=\FG\circ\Cob{n}\colon \mathbf{Cob}(n)\to \VectC
	\]
	where $\Cob{n}$ is as in Proposition~\ref{pr:fun_cob}, is a topological quantum field theory for all $n\in \N$
	, i.e. is a symmetric monoidal functor.
\end{lemma}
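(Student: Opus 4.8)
The plan is to exhibit $\tilde{\mathsf{Z}}_G$ as a composite of two symmetric monoidal functors and then appeal to the fact that such a composite is again symmetric monoidal. Both ingredients are already in hand: Proposition~\ref{cob_to_HomCob} shows that $\Cob{n}\colon \mathbf{Cob}(n)\to \HomCob$ is a symmetric strong monoidal functor, and Lemma~\ref{le:FG_symm_mon} shows that $\FG\colon \HomCob\to \VectC$ is a symmetric monoidal functor. For the underlying monoidal structure I would invoke the composition of monoidal functors recalled in Section~\ref{sec:moncats}, which equips $\tilde{\mathsf{Z}}_G$ with the data $(\tilde{\mathsf{Z}}_G)_0=\FG((\Cob{n})_0)\circ(\FG)_0$ and $(\tilde{\mathsf{Z}}_G)_2(X,Y)=\FG((\Cob{n})_2(X,Y))\circ(\FG)_2(\Cob{n}(X),\Cob{n}(Y))$, and which guarantees that the three coherence diagrams required of a monoidal functor commute.

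It then remains only to verify the braiding compatibility for the composite; this is the single substantive step. Writing $F=\Cob{n}$, $G=\FG$, and denoting the braidings in $\mathbf{Cob}(n)$, $\HomCob$ and $\VectC$ by $\beta$, $\beta'$ and $\beta''$ respectively, I must show that for all objects $X,Y$,
\[
(GF)_2(Y,X)*_{\VectC}\beta''_{GF(X),GF(Y)}=GF(\beta_{X,Y})*_{\VectC}(GF)_2(X,Y).
\]
The verification is a short diagram chase. Starting from the left-hand side and substituting the definition of $(GF)_2(Y,X)=G(F_2(Y,X))*_{\VectC}G_2(F(Y),F(X))$, I would apply the braiding condition for $G$ at the objects $F(X),F(Y)$ to rewrite $G_2(F(Y),F(X))*_{\VectC}\beta''_{GF(X),GF(Y)}$ as $G(\beta'_{F(X),F(Y)})*_{\VectC}G_2(F(X),F(Y))$, then use functoriality of $G$ to absorb $G(F_2(Y,X))$ and $G(\beta'_{F(X),F(Y)})$ into a single image $G\big(F_2(Y,X)*_{\HomCob}\beta'_{F(X),F(Y)}\big)$. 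Applying the braiding condition for $F$ inside this image rewrites the argument as $F(\beta_{X,Y})*_{\HomCob}F_2(X,Y)$, and a final use of functoriality of $G$ together with associativity of $*_{\VectC}$ reassembles the right-hand side.

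I do not expect any genuine obstacle here. The only care needed is bookkeeping of composition order (the $*$-convention of the paper versus function-order $\circ$) and reassociating repeatedly at each step of the chase. Once braiding compatibility is established, symmetry of the composite is immediate: since $\beta$, $\beta'$ and $\beta''$ are each symmetric, $\tilde{\mathsf{Z}}_G$ is a braided monoidal functor between symmetric monoidal categories, hence a symmetric monoidal functor. Therefore $\tilde{\mathsf{Z}}_G$ is a topological quantum field theory for every $n\in\N$.
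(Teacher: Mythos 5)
Your proposal is correct and follows essentially the same route as the paper, which likewise proves the lemma simply by composing the symmetric monoidal functors of Proposition~\ref{cob_to_HomCob} and Lemma~\ref{le:FG_symm_mon}. The only difference is that you explicitly verify the braiding compatibility of the composite (the paper's composition proposition in Section~\ref{sec:moncats} is stated only for monoidal functors, so this step is left implicit there), and your diagram chase for it is accurate.
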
 

\begin{proof}
	We have from Propositions~\ref{pr:fun_cob} and \ref{cob_to_HomCob} that $\Cob{n}$ is a symmetric monoidal functor into $\HomCob$ and from Theorem~\ref{th:FG_functor} and Lemma~\ref{le:FG_symm_mon} that $\FG$ is a symmetric monoidal functor $\HomCob\to \VectC$.
\end{proof}

\appendix

\section{Appendix}

\subsection{Colimits in \texorpdfstring{$\Grpd$}{Grpd}
}

Here we give sufficient results to prove Theorem~\ref{th:grpd_pushoutfg}, that the pushout of finitely generated groupoids is finitely generated.
We do this by explicitly constructing coequalisers in $\Grpd$. 
Our main reference for this is \cite{higgins}, although the parts on universal morphisms are also covered in \cite[Ch.8]{brownt+g}. We note that everything done here can also be done in $\Cat$, the category of small categories.

We proceed towards explicitly constructing coequalisers in $\Grpd$ by first introducing universal morphisms.

\medskip

Let $X$ be a set. Throughout this section we will also use $X$ to denote the trivial groupoid with the set $X$ as objects and only identity morphisms.
The meaning will be clear from context.

\begin{definition}
	Let $F \colon \GG \to \mathcal{H}$ be a functor and denote by $Ob(F)$ the unique functor making the following square, where the vertical maps are inclusions, commute.
	\[
	\begin{tikzcd}[ampersand replacement=\&]
		Ob(\GG) \ar[r,"Ob(F)"]\ar[d,"\iota_\GG"'] \& Ob(\mathcal{H}) \ar[d,"\iota_\mathcal{H}"] \\
		\GG \ar[r,"F"] \& \mathcal{H}  
	\end{tikzcd}
	\]
	Then $F$ is called {\em universal} if this square is a pushout.
\end{definition}

\lemm{\label{le:construct_universalgroupoid}
	Let $X$ be a set, $\GG$ a groupoid and $\sigma \colon Ob(\GG) \to X$ a function.
	There is a groupoid $U_\sigma(\GG)=(X,U_\sigma(\GG)(-,-), *_{U_\sigma(\GG)}, 1_{-}, (-)\mapsto (-)^{-1} )$ where:
	\begin{itemize}
		\item[(II)] For a pair $x,y \in X$ a word of length $n$ from $x$ to $y$ is a sequence 
		\[
		a=a_n...a_1
		\]
		of morphisms $a_i\colon g_i\to g_i'$ in $\GG$ such that
		\begin{enumerate}[noitemsep, label={},topsep=0pt,label=(\roman*)]
			\item for all $ \, i=\{1,\ldots,n-1\}$, $g_i'\neq g_{i+1}$,
			\item for all $i=\{1,\ldots,n-1\}$, $\sigma(g_i') = \sigma(g_{i+1})$,
			\item $\sigma(g_1)=x$ and $\sigma(g_n')=y$,
			\item for all $i\in\{1,\dots,n\}$,  $a_i\neq 1_{g_i}$.
		\end{enumerate}
		For a pair $x,y\in X$ the set $U_\sigma(\GG)(x,y)$ is the set all words from $x$ to $y$ when $x\neq y$ and in the case $x=y$ we also add the empty word which we will denote $1_x$.
		(Notice that $U_\sigma(\GG)(x,y)$ may well be empty, and certainly will be in the case $x\neq y$ and $x$ and $y$ are not in the image of $\sigma$.)
		\item[(III)] The composition $*_{U_\sigma(\GG)}$ is given by concatenating words and, where possible, evaluating compositions in $\GG$ cancelling identities.
		\item[(IV)] For $x\in X$, the identity morphism is the empty word $1_x$.
		\item[(V)] Suppose $a=a_n\dots a_1$ is a word in $U_\sigma(\GG)(x,y)$, then it has inverse $a^{-1}=a_1^{-1}...a_n^{-1}$, where $a_i^{-1}$ is the inverse in $\GG$. Notice that this is in $U_\sigma(\GG)(y,x)$.
	\end{itemize}
}
\begin{proof}
	($\CC 1$) It is immediate from the construction that the empty word acts as an identity under concatenation.\\
	($\CC 2$) Evaluating compositions is associative because concatenation is associative and the composition in $\GG$ is associative. \\
	($\GG 1$) It is immediate from the construction that the described word is an inverse.
\end{proof}

\begin{lemma}\label{le:construct_universalfunctor}
	Let $\GG$ be a groupoid, $X$ a set, $\sigma \colon Ob(\GG) \to X$ a function and $U_\sigma(\GG)$ as constructed in Lemma~\ref{le:construct_universalgroupoid}.
	There is a functor $\sigma' \colon \GG \to U_\sigma(\GG)$, defined as follows. On objects $\sigma'=\sigma$.
	For a morphism $a\colon g\to g'$ in $\GG$ we have $\sigma'(a)=1_{\sigma(g)}$ if $a=1_g$ and $\sigma'(a)=a$, considered as a length one word in $U_\sigma(\GG)$, otherwise. Note that $a$ is a word from $g$ to $g'$.
\end{lemma}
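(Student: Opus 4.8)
The plan is to verify that the map $\sigma'$ defined in the statement is indeed a functor, which amounts to checking three things: that it is well-defined, that it preserves composition, and that it preserves identities. Since the object-level assignment is just $\sigma$ itself, and the morphism-level assignment is given by an explicit case split, the verification is essentially bookkeeping against the construction of $U_\sigma(\GG)$ from Lemma~\ref{le:construct_universalgroupoid}.

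First I would confirm well-definedness on morphisms. For a non-identity $a\colon g\to g'$ in $\GG$, the single symbol $a$ is a length-one word: conditions (i) and (ii) in the definition of a word are vacuous for $n=1$, condition (iv) holds precisely because $a\neq 1_g$, and condition (iii) reads $\sigma(g)=\sigma'(g)$ as source and $\sigma(g')$ as target, so $a\in U_\sigma(\GG)(\sigma(g),\sigma(g'))$ as required. For an identity $a=1_g$, we send it to the empty word $1_{\sigma(g)}\in U_\sigma(\GG)(\sigma(g),\sigma(g))$, which is well-defined. Thus $\sigma'$ maps each morphism of $\GG$ into the correct hom-set of $U_\sigma(\GG)$, matching the source and target prescribed by $\sigma'=\sigma$ on objects.

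Next I would check preservation of composition, i.e.\ $\sigma'(b)\ast_{U_\sigma(\GG)}\sigma'(a)=\sigma'(b\ast_\GG a)$ for composable $a\colon g\to g'$ and $b\colon g'\to g''$. The natural approach is a case analysis. If either $a$ or $b$ is an identity, both sides reduce using the empty-word identity law, and functoriality is immediate. If both $a$ and $b$ are non-identities, then $\sigma'(b)\ast_{U_\sigma(\GG)}\sigma'(a)$ is the concatenation $b\,a$ of two length-one words sharing the middle object $g'$; by the composition rule in $U_\sigma(\GG)$, which evaluates adjacent morphisms in $\GG$ where possible, this concatenation collapses to the single morphism $b\ast_\GG a$ of $\GG$, which is exactly $\sigma'(b\ast_\GG a)$ (and if $b\ast_\GG a=1_{g}$ this further reduces to the empty word, still consistent). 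Finally, preservation of identities, $\sigma'(1_g)=1_{\sigma(g)}$, holds by the very definition of the identity case.

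I do not expect any genuine obstacle here; the statement is a routine verification and the only mild subtlety is making sure the collapsing behaviour of the composition in $U_\sigma(\GG)$ correctly realises $\GG$-composition on adjacent length-one words, including the degenerate subcase where the $\GG$-composite is an identity and hence corresponds to the empty word. I would present this as a short proof organised by the functor axioms $(\CC1)$ and $(\CC2)$, each handled by the case split above, referencing the composition and identity conventions established in Lemma~\ref{le:construct_universalgroupoid}.
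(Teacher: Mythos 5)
Your proof is correct and follows essentially the same route as the paper's: identities are preserved by construction, and composition is verified by the same three-way case split (one of the morphisms an identity, or both non-identities, with the concatenation of length-one words collapsing via evaluation in $\GG$). You are in fact slightly more careful than the paper in two places---explicitly checking that a non-identity morphism satisfies the word conditions of Lemma~\ref{le:construct_universalgroupoid} (the paper relegates this to a remark in the statement), and handling the degenerate subcase where the $\GG$-composite of two non-identities is itself an identity and so reduces to the empty word, which the paper leaves implicit in the phrase ``cancelling identities.''
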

\begin{proof}
	First note that the all identities in $\GG$ are mapped to identities in $ U_\sigma(\GG)$ by construction.
	
	Suppose $a\colon g\to g'$ and $a'\colon g'\to g''$ are morphisms in $\GG$. 
	If $a=1_g$ then $\sigma'(1_g*_\GG a')=\sigma'(a')=a'$ which is the concatenation of $a'$ with the empty word.
	If $a'=1_g$, then similarly $\sigma(a*_\GG a')$ is precisely the concatenation $\sigma'(a)\sigma'(a')$.
	If neither $a$, nor $a'$ is an identity, then
	\[
	\sigma'(a*_\GG a')=a*_\GG a'
	\]
	which is precisely the concatenation $\sigma'(a)\sigma'(a')$ with all possible compositions in $\GG$ evaluated.
\end{proof}

\begin{lemma}\label{le:universal_map}
	Let $\GG$ be a groupoid, $X$ a set and $\sigma \colon Ob(\GG) \to X$ a function.
	The functor $\sigma' \colon \GG \to U_\sigma(\GG)$ as constructed in Lemma~\ref{le:construct_universalfunctor} is universal.	
\end{lemma}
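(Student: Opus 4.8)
The plan is to verify the pushout property directly. Unwinding the definition of universal, I must show that for every groupoid $\mathcal{K}$ together with functors $F\colon \GG \to \mathcal{K}$ and $H\colon X \to \mathcal{K}$ — where $X$ denotes the discrete groupoid on the set $X=Ob(U_\sigma(\GG))$ — satisfying $F\circ \iota_\GG = H \circ Ob(\sigma')$, that is $F(g)=H(\sigma(g))$ for every object $g$ of $\GG$, there exists a unique functor $\Phi \colon U_\sigma(\GG) \to \mathcal{K}$ with $\Phi \circ \sigma' = F$ and $\Phi \circ \iota_{U_\sigma(\GG)} = H$.

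First I would pin down $\Phi$. The condition $\Phi\circ \iota_{U_\sigma(\GG)}=H$ forces $\Phi(x)=H(x)$ on objects. For morphisms, $\Phi\circ\sigma'=F$ forces $\Phi$ on length-one words: by Lemma~\ref{le:construct_universalfunctor} a non-identity $a\colon g\to g'$ in $\GG$ has $\sigma'(a)=a$, so necessarily $\Phi(a)=F(a)$. Since every word $a=a_n\dots a_1$ in $U_\sigma(\GG)$ is the $*_{U_\sigma(\GG)}$-composite of its length-one letters, functoriality forces $\Phi(a_n\dots a_1)=F(a_n)*_{\mathcal{K}}\dots *_{\mathcal{K}}F(a_1)$ together with $\Phi(1_x)=1_{H(x)}$. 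This determines $\Phi$ completely, giving uniqueness; the substance of the proof is that this formula really defines a functor.

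I would then check that the composite in $\mathcal{K}$ is defined: for consecutive letters $a_i\colon g_i\to g_i'$ and $a_{i+1}\colon g_{i+1}\to g_{i+1}'$ of a word, condition (ii) of Lemma~\ref{le:construct_universalgroupoid} gives $\sigma(g_i')=\sigma(g_{i+1})$, whence $F(g_i')=H(\sigma(g_i'))=H(\sigma(g_{i+1}))=F(g_{i+1})$ by the compatibility hypothesis, so the target of $F(a_i)$ equals the source of $F(a_{i+1})$ in $\mathcal{K}$. Identity preservation is immediate since $\Phi(1_x)=1_{H(x)}=1_{\Phi(x)}$. The remaining point is preservation of composition: the composition in $U_\sigma(\GG)$ concatenates words and then evaluates any compositions that become available in $\GG$, cancelling identities (Lemma~\ref{le:construct_universalgroupoid}, item (III)). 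Because $F$ is a functor, $F(b*_\GG a)=F(b)*_{\mathcal{K}}F(a)$ and $F(1_g)=1_{F(g)}$, so each evaluation-and-cancellation step inside $U_\sigma(\GG)$ is matched by the corresponding evaluation and identity-cancellation inside $\mathcal{K}$; hence $\Phi(w'*_{U_\sigma(\GG)}w)=\Phi(w')*_{\mathcal{K}}\Phi(w)$.

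The main obstacle I anticipate is precisely this composition-preservation step: one must argue that normalising a concatenated word (repeated evaluation and identity-cancellation in $\GG$) leaves its image under $\Phi$ unchanged, which I would handle by induction on the number of cancellations, each step justified by a single application of functoriality of $F$. Everything else is bookkeeping forced by the two commutation constraints.
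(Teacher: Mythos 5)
Your proposal is correct and takes essentially the same route as the paper's own proof: both force the mediating functor on objects and on length-one words, extend multiplicatively to arbitrary words (which yields uniqueness), verify composability of the images of consecutive letters via $\sigma(g_i')=\sigma(g_{i+1})$ combined with the cocone compatibility, and reduce preservation of composition under concatenate-then-normalise to functoriality of the given functor on $\GG$. Your explicit induction on cancellation steps merely spells out what the paper compresses into a single sentence.
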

\begin{proof}
	To prove $\sigma'$ is universal we construct, for any groupoid $\mathcal{K}$ and functors $\tau$ and $\phi$ with $\tau \circ\sigma=\phi\circ\iota_\GG$, a unique map $\phi^*$ making the following diagram commute.
	\[\begin{tikzcd}[ampersand replacement=\&]
		Ob(\GG) \ar[r,"\sigma"] \ar[d,"\iota_\GG"'] \& X \ar[d,"\iota_X"] \ar[ddr,"\tau",bend left] \& \\
		\GG \ar[r,"\sigma'"]\ar[drr,"\phi"',bend right] \& U_\sigma(\GG) \ar[rd,"\phi^*",dashed]\& \\
		\& \& \mathcal{K} \\	
	\end{tikzcd}\]
	We must have that, on objects $x\in Ob(U_\sigma(\GG))=X$, $\phi^*(x)=\tau(x)$, and that $\phi^*(1_x)=1_\tau{x}$.
	Now let $a_1$ be a word of length $1$ in $U_\sigma(\GG)$, then $a_1$ is a morphism in $\GG$ and, by commutativity, we must have  $\phi^*(a_1)=\phi(a_1)$.
	For words of length $n$, $a=a_n\ldots a_1$ in $U_\sigma(\GG)$, by functoriality we must have
	\begin{align*}
		\phi^*(a)&= \phi^*(a_n)*_\mathcal{K}\ldots *_\mathcal{K}\phi^*(a_1) \\
		&=\phi(a_n)*_\mathcal{K}\ldots *_\mathcal{K}\phi(a_1).
	\end{align*}
	Notice
	for any $a_{i}\colon g_i\to g_i'$ and $a_{i+1}\colon g_{i+1}\to g_{i+1}'$ we have $\sigma(g_i')=\sigma(g_{i+1})$, hence $\tau (\sigma(g_i'))=\tau(\sigma(g_{i+1}))$ and so, by commutativity of the diagram, $\phi(g_i')=\phi(g_{i+1})$, so we have that $\phi^*$ is well defined.
	By construction composition is preserved on word concatenations. Composition is preserved also by evaluating compositions and removing identities because $\phi$ preserves composition.
	We have that $\phi^*$ is unique by construction.
\end{proof}

We now construct coequalisers in $\Grpd$.

\medskip

There is a forgetful functor $\Grpdforget\colon \Grpd \to \Set$ which sends a groupoid $\GG$ to the set $Ob(\GG)$ and a functor to the corresponding set map determined by its action on objects.
 The functor $\Grpdforget\colon \Grpd \to \Set$ 
has right adjoint $\Delta \colon \Set \to \Grpd$ constructed as follows.
Let $X$ be a set, there is a groupoid which has object set $X$ and exactly one morphism $(x,y)$ from $x$ to $y$ for any pair $x,y\in X$, with
		the composition defined by $(x,y)(y,z)=(x,z)$. The identity morphisms are $(x,x)$ and $(x,y)^{-1}=(y,x)$.
		This is called the {\em indiscrete groupoid} and denoted $\Delta(X)$.
 The functor $\Delta\colon \Set \to \Grpd$ sends a set $X$ to $\Delta(X)$,
		and sends a function $f\colon X\to Y$ to the unique functor from $\Delta(f)\colon \Delta(X)\to \Delta(Y)$ which acts as $f$ on objects.

Since left adjoints preserve colimits (Theorem~\ref{th:lapcl}), and the forgetful functor $\Grpdforget\colon \Grpd \to \Set$ is a left adjoint, we can find the object set of a coequaliser of a diagram $D$ in $\Grpd$ by evaluating the
set coequaliser 
of $\Grpdforget\circ D$ in $\Set$.

\medskip

It is straightforward to prove the following proposition.

\begin{proposition}
	Let $f,g\colon X\to Y$ be functions in  $\Set$. Then  there exists a coequaliser
	\[
	\begin{tikzcd}[ampersand replacement = \&]
		X \ar[r,"f",shift left]\ar[r," g"',shift right] \& Y \ar[r,"p"] \& Y/\sim,
	\end{tikzcd}
	\]
	where $\sim$ is the reflexive, symmetric and transitive closure of the relation $$\{(f(x),g(x))\; \vert\; x\in N\} $$ on $Y$, and $p$ is the canonical map sending $y\in Y$ to its equivalence class $[y]\in Y/\sim$. \qed
\end{proposition}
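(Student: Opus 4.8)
The plan is to verify directly that the pair $(Y/\!\sim, p)$ satisfies the defining universal property of a colimit of shape $\mathbf{E}$, i.e.\ of a coequaliser of $f$ and $g$. Recall that for the shape $\mathbf{E}$ a cocone on the diagram sending the two parallel arrows to $f,g\colon X\to Y$ is completely determined by a single map $q\colon Y\to Z$ satisfying $q\circ f=q\circ g$ (the component at $X$ is then forced to be $q\circ f$, and the two commuting-triangle conditions collapse to this one equation). So the statement to check reduces to: $p$ is such a map, and it is initial among all such maps.

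First I would check that $(Y/\!\sim,p)$ is itself a cocone, that is, that $p\circ f=p\circ g$. For each $x\in X$ the pair $(f(x),g(x))$ lies in the generating relation $R=\{(f(x),g(x))\mid x\in X\}$, hence $f(x)\sim g(x)$ and therefore $p(f(x))=[f(x)]=[g(x)]=p(g(x))$. Since this holds for every $x$, we get $p\circ f=p\circ g$.

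Next I would establish the universal property. Given any cocone, presented as a set $Z$ and a map $q\colon Y\to Z$ with $q\circ f=q\circ g$, I define $u\colon Y/\!\sim\;\to Z$ on equivalence classes by $u([y])=q(y)$, and I must show $u$ is well defined, that $u\circ p=q$, and that $u$ is the unique such map. For well-definedness, consider the \emph{kernel relation} of $q$, namely $\{(y,y')\mid q(y)=q(y')\}$; because equality in $Z$ is reflexive, symmetric and transitive, this kernel relation is an equivalence relation on $Y$, and by the hypothesis $q\circ f=q\circ g$ it contains every generating pair $(f(x),g(x))$. Since $\sim$ is by definition the smallest equivalence relation containing $R$, it is contained in the kernel relation; hence $y\sim y'$ implies $q(y)=q(y')$, so $u$ is well defined. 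The identity $u\circ p=q$ is immediate from the definition $u([y])=q(y)$, and uniqueness follows because $p$ is surjective: any $u'$ with $u'\circ p=q$ must satisfy $u'([y])=u'(p(y))=q(y)=u([y])$ for all $y$.

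The only genuinely substantive point is the well-definedness argument, and the main obstacle there—such as it is—is resisting the temptation to do an induction over the inductive construction of the closure of $R$; the clean route is the kernel-relation observation above, which side-steps the induction entirely by using minimality of $\sim$. Everything else is bookkeeping, which is why the author labels the result straightforward.
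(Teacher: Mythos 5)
Your proof is correct and is exactly the standard argument the paper leaves implicit (the proposition is stated without proof, being labelled straightforward): checking $p\circ f=p\circ g$ on the generating pairs, obtaining well-definedness of the induced map from the observation that the kernel relation of $q$ is an equivalence relation containing the generating relation and hence contains $\sim$ by minimality of the closure, and deducing uniqueness from surjectivity of $p$. You also correctly treated the ``$x\in N$'' in the stated relation as a typo for ``$x\in X$''.
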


\lemm{\label{le:universalcoeq}
	Let $f,g\colon \GG_0\to\GG_1$ be functors of groupoids and let $f',g'\colon Ob(\GG_0)\to Ob(\GG_1)$ denote the images under the forgetful functor, $\Grpdforget(f)$ and $\Grpdforget(g)$ respectively.
	Let $\sigma \colon Ob(\GG_1) \to Ob(\GG_1)/\sim_{Ob} $ be the coequaliser of $f'$ and $g'$ in $\Set$, and let $\sigma'\colon \GG_1\to U_\sigma(\GG_1)$ denote the universal map constructed from $\sigma$ as in Lemma~\ref{le:construct_universalfunctor}.
	
	For each pair $x,y\in Ob(\GG_1)/\sim_{Ob} $ let $R_{x,y}$ be the relation on $U_\sigma(\GG_1)(x,y)$ with
	\[
	(a_n\ldots a_1, a_n'\ldots a_1')\in R_{x,y}
	\]
	if there exists a morphism $b\in \GG_0$ such that for some $i\in\{1,\ldots,n\}$,  $\sigma' f(b)=a_i$ and $\sigma' g(b) =a_i'$ and for all other $j\neq i$, $a_j=a_j'$.
	(I)
	The collection of equivalence relations $\bar{R}=(U_\sigma(\GG_1)(x,y),\bar{R}_{x,y})$ is a congruence, hence there is a quotient groupoid $U_\sigma(\GG_1)/\bar{R}$.
	
	(II) The following diagram is a coequaliser
	\begin{align} 
		\begin{tikzcd}[ampersand replacement = \&]
			\GG_0 \ar[r,"\sigma' f",shift left]\ar[r,"\sigma' g"',shift right] \& U_\sigma(\GG_1)  \ar[r,"\gamma^*"] \& \nicefrac{U_\sigma(\GG_1)}{\bar{R}},
		\end{tikzcd}
		\label{eq:universalcoeq}
	\end{align}
	where $\gamma^*$ is the quotient functor induced by $\bar{R}$. 
}
\begin{proof}
	(I) This is straightforward to check.\\
	(II) It is immediate from the construction that \ref{eq:universalcoeq} commutes.
	Suppose we have a groupoid $\HH$ and a map $\psi\colon U_\sigma(\GG_1)\to \HH$ with $\psi\circ \sigma'\circ f = \psi\circ \sigma'\circ g$.	
	Let $a_1$ and $a_1'$ be words of length $1$ in $U_\sigma(\GG_1)$ and suppose there exists $b\in \GG_0$ such that $\sigma' \circ f(b)=a_1$ and $\sigma' \circ g(b) =a_1'$.
	Then, by assumption, we must have $\psi (a_1)=\psi(a_1')$.

	Now suppose there exists words $
	a= a_n\ldots a_1$ and $a'=a_n'\ldots a_1'$ in $U_\sigma(\GG_1)$
	and a morphism $b\in \GG_0$ such that for some $i\in\{1,\ldots,n\}$,  $\sigma' \circ f(b)=a_i$ and $\sigma'\circ  g(b) =a_i'$, and for all other $j\neq i$, $a_j=a_j'$. Then by functoriality we must have $\psi(a)=\psi(a')$.

	Thus we arrive at precisely the relation described in the Lemma. So $\psi$ must factor through $U_\sigma(\GG_1)/\bar{R}$, and the map $U_\sigma(\GG_1)/\bar{R}\to \mathcal{H}$ is uniquely determined by considering the action of $\psi$ on preimages of elements in $U_\sigma(\GG_1)/\bar{R}$. Hence \ref{eq:universalcoeq}
	is a coequaliser.
\end{proof}

\lemm{\label{le:grpd_coeq}
	Let $f,g\colon \GG_0\to\GG_1$ be functors of groupoids.
	Then
	\begin{align}
		\begin{tikzcd}[ampersand replacement = \&]
			\GG_0 \ar[r,"f",shift left]\ar[r,"g"',shift right] \& \GG_1 \ar[r,"\gamma^*\sigma'"] \& \nicefrac{U_\sigma(\GG_1)}{\bar{R}}
		\end{tikzcd}
		\label{eq:grpdcoeq}
	\end{align}
	is a coequaliser, where we use the notation of the previous Lemma.
}
\begin{proof}
	We have from the Lemma~\ref{le:universalcoeq}
	that \eqref{eq:universalcoeq}
	is a coequaliser. Suppose we have a functor $\psi\colon \GG_1\to \HH $ with $\psi \circ f=\psi \circ g$.
	Using the universal property of the coequaliser in $\Set$, this implies the existence of a unique functor $\psi_{Ob}\colon Ob(\GG_1)/\sim_{Ob} \to \mathcal{H}$ with $\psi_{Ob} \circ \sigma = \psi\circ \iota_\GG $.
	Then, since $\sigma'$ is universal, there exists a unique map $\psi'\colon U_\sigma(\GG_1)\to \HH$ with $\psi=\psi'\circ \sigma'$. Hence using the universal property of the coequaliser there exists a unique map $\Psi\colon \nicefrac{U_\sigma(\GG_1)}{\bar{R}}\to \HH$ making \eqref{eq:universalcoeq} commute.
	Then $\Psi\circ \gamma^* \circ \sigma'=\psi$ and so $\Psi$ makes \eqref{eq:grpdcoeq} commute.
	
	Note that this is unique, since any $\Psi'\colon \nicefrac{U_\sigma(\GG_1)}{\bar{R}}\to \HH$
	making \eqref{eq:grpdcoeq} commute will also commute with $\psi'$ and \eqref{eq:universalcoeq}, and by the universal property of the coequaliser, this map is unique.
\end{proof}

We have now shown that we can construct a coequaliser in $\Grpd$ of any pair of maps.
The following result gives a way to construct a pushout in terms of a coequaliser. 

\lemm{\label{le:pushout_coeq}
	Let $f_1\colon C_0\to C_1$ and $f_2\colon C_0\to C_2$ be morphisms in a category $\CC$. Then
	\[
	\begin{tikzcd}
		C_0 \ar[r,"f_1"]\ar[d,"f_2"'] \& C_1 \ar[d,"p_1"]  \\
		C_2 \ar[r,"p_2"'] \& C
	\end{tikzcd}
	\]
	is a pushout of $f_1$ and $f_2$ if and only if 
	\[
	\begin{tikzcd}[column sep=large]
		C_0 \ar[r,"{i_1*_\CC f_1}",shift left]\ar[r,"{i_2*_\CC f_2}"',shift right] \& C_1\amalg C_2 \ar[r,"p"] \& C
	\end{tikzcd}
	\]
	is a coequaliser,
	where $C_1 \xrightarrow{i_1} C_1\amalg C_2 \xleftarrow{i_2}C_2$ is a coproduct, and $p$ is the map obtained from applying the universal property of the coproduct to the maps $C_1\xrightarrow{p_1} C \xleftarrow{p_2} C_2$.
	.}
\begin{proof}
	Let $h\colon C_1\amalg C_2\to H$ be a morphism with $h*_\CC i_1*_\CC f_1=h*_\CC i_2*_\CC f_2 $.
	Then, using the universal property of the coproduct, $h$ uniquely determines a pair of maps $h_1\colon C_1 \to H$ and $h_2\colon C_2\to H$ with $h*_\CC i_1=h_1$ and $h*_\CC i_2=h_2$, and hence $h_1*_\CC f_1=h_2*_\CC f_2 $.
	Then there exists a unique map $h'\colon C\to H$, with 
	$p_1*_\CC h'= h_1$ and $p_2*_\CC h'= h_2$ if and only if $h'$ is also the unique map satisfying $p*_\CC h'=h$.
\end{proof}

It is, in fact, possible to obtain all colimits in terms of
coproducts and coequalisers, although we won't need that level of generality here.
Thus, having constructed the coequaliser, we now know that $\Grpd$ has all colimits.

\setlength\emergencystretch{2em} 
\normalem 
\printbibliography

\end{document}